\documentclass[11pt,oneside]{article}
\usepackage[a4paper,inner=30mm,outer=30mm,top=27mm,
bottom=31mm,headheight=15pt,headsep=8mm,
footskip=13mm]{geometry}
\usepackage[T1]{fontenc}
\usepackage[utf8]{inputenc}
\usepackage{newtxtext,newtxmath}

\usepackage{amsmath,amsthm,amssymb,mathtools}
\usepackage[final,tracking=true]{microtype}
\usepackage{enumitem}
\usepackage{booktabs}
\usepackage{array}
\usepackage{xcolor}
\usepackage[hidelinks]{hyperref}
\usepackage[round]{natbib}
\theoremstyle{plain}
\newtheorem{theorem}{Theorem}[section]
\newtheorem{proposition}[theorem]{Proposition}
\newtheorem{lemma}[theorem]{Lemma}
\newtheorem{corollary}[theorem]{Corollary}
\theoremstyle{definition}
\newtheorem{definition}[theorem]{Definition}
\theoremstyle{remark}
\newtheorem{remark}[theorem]{Remark}
\newcommand{\R}{\mathbb{R}}
\newcommand{\N}{\mathbb{N}}
\newcommand{\E}{\mathbb{E}}
\newcommand{\Q}{\mathbb{Q}}
\newcommand{\Pp}{\mathbb{P}}
\newcommand{\F}{\mathcal{F}}
\newcommand{\Tw}{T_w}
\newcommand{\Tstarw}{T_w^\ast}
\newcommand{\cW}{\widehat W}
\newcommand{\FWh}{\mathcal F^{\widehat W}}
\newcommand{\FW}{\mathcal F^W}
\newcommand{\FS}{\mathcal F^S}
\newcommand{\Rs}{\mathcal R}
\newcommand{\sh}{\mathbin{\sqcup\!\sqcup}}
\newcommand{\norm}[1]{\left\lVert #1\right\rVert}
\newcommand{\abs}[1]{\left\lvert #1\right\rvert}
\newcommand{\inner}[2]{\left\langle #1,#2\right\rangle}
\newcommand{\Span}{\operatorname{span}}
\newcommand{\clspan}{\overline{\operatorname{span}}}

\title{On the Structural Foundations of Signature Volatility Models:\
Existence, Arbitrage, Completeness,\
and the Hedging-Error Decomposition}
\author{Akmal Xodarev\\
Independent Researcher\\
ORCID: 0009-0000-5318-7284\\
\texttt{xodarevakmal@gmail.com}\\
Tashkent, Uzbekistan}
\date{First version: February 8, 2026\\This version: May 17, 2026}

\begin{document}
\maketitle
\thispagestyle{empty}
\clearpage

\begin{abstract}
We establish four structural results for signature volatility models. First, we prove global existence and uniqueness of strong solutions to the signature stochastic differential equation $dS_t=S_t\inner{\ell}{\cW_t}\,dB_t$ on the weighted tensor algebra space $\Tw$, identifying the admissibility class through the summability condition H1 and the exponential-integrability condition H3 needed for the square-integrable stochastic-exponential construction. Second, we establish the asset-pricing part on the natural filtration of the prolonged signature and separate it from the transform non-explosion part: H3 makes the reference-measure stochastic exponential a true martingale, hence yields NFLVR, while global solvability of the associated infinite-dimensional Riccati equation is the additional condition equivalent to absence of explosion for finite signature transforms. Third, we characterise market completeness on the price filtration in terms of the density of the truncated signature span $\Span\{\inner{e_I}{\cW_T}: |I|\leq N\}$ inside $L^2(\FS_T,\Q)$, and we identify the minimal such truncation $N$, which we name the price-filtration completeness depth of the model. Fourth, we derive the hedging-error decomposition $X=\E_\Q[X]+\int_0^T H_s\,dS_s+\varepsilon_T$ for square-integrable payoffs, in which the residual is expanded through the Gram projection of signature components beyond the completeness depth and bounded by a model-dependent projection error. The four results are tied together by an architectural identity: the admissible weighted tensor algebra on which the stochastic exponential is a true martingale and finite signature transforms do not explode is the natural valuation cell of a signature SDE. The proofs are self-contained except for standard results from rough path theory, stochastic integration, and quadratic hedging, which are recalled in the appendices.
\end{abstract}

\noindent\textbf{Keywords.} Signature SDE; weighted tensor algebra; asset-pricing theorem; market completeness; hedging-error decomposition; rough volatility.

\medskip
\noindent\textbf{MSC 2020 classification.} Primary: 91G20, 91G30. Secondary: 60H05, 60H10, 60G44.

\medskip
\noindent\textbf{JEL classification.} C58, G13.

\clearpage
\tableofcontents
\clearpage

\section{Introduction}

\subsection{The structural gap}

The signature volatility literature has produced, in the three years from 2022 to 2025, a body of applied results in calibration and pricing that rests on a structural foundation built only piecewise. \citet{CuchieroGazzaniMoellerSvalutoFerro2025} construct joint SPX--VIX calibration with signature-based models and exploit affine structure of augmented signatures for pricing. \citet{AbiJaberGerard2025} derive Fourier pricing and quadratic hedging formulas under signature volatility models through an infinite-dimensional Riccati representation. \citet{CuchieroMoeller2023} introduce signature methods in stochastic portfolio theory and obtain tractable optimisation problems for path-functional portfolios. \citet{CuchieroPrimaveraSvalutoFerro2025} prove universal approximation for functions of c\`adl\`ag paths and define L\'evy-type signature models. Each of these works either assumes, or proves under restrictive hypotheses, four structural facts: well-posedness of the signature SDE, the true martingale property of its stochastic exponential, the existence of an equivalent local martingale measure and of a hedging strategy.

The structural theorems on which these assumptions rest exist in piecewise form. \citet{CuchieroSvalutoFerroTeichmann2023} establish the affine and polynomial structure of the prolonged signature process on the extended tensor algebra and derive Riccati and linear ODEs for Fourier--Laplace transforms under integrability hypotheses. \citet{AbiJaberGassiatSotnikov2025} prove, in the one-dimensional signature-volatility model, that the discounted price process is a true martingale precisely in the odd-order negative-correlation regime, and characterise the complementary finite-time explosion of exponential moments of the integrated variance. The unification into a single chain of four structural theorems does not yet exist. For signature volatility models, the present work supplies the structural analogue of \citet{DelbaenSchachermayer1994} for general semimartingale models and of \citet{FilipovicLarsson2016} for polynomial diffusions.

\subsection{The contribution}

Throughout, the labels Theorem~A, Theorem~B, Theorem~C, Theorem~D refer
to Theorems~\ref{thm:A}, \ref{thm:B}, \ref{thm:C}, and~\ref{thm:D}
respectively; each is stated in the section bearing the corresponding
letter and is preceded by the auxiliary lemma its proof uses.

Theorem A establishes global existence and uniqueness of strong solutions to the signature stochastic differential equation on the weighted tensor algebra space $\Tw$. Given a parameter $\ell$ in the dual of $\Tw$ satisfying the compatibility condition $\sum_{n\geq0} w(n)\abs{\ell_n}^2<\infty$, the equation $dS_t=S_t\inner{\ell}{\cW_t}\,dB_t$ admits a unique strong solution on $[0,T]$, strictly positive almost surely. The admissible weights are characterised by the summability condition together with the geometric growth condition H2, and the square-integrable solution class is obtained under the exponential-integrability condition H3. The proof, given in Section~\ref{sec:theorem-a}, constructs the solution as the Dol\'eans--Dade stochastic exponential of the signature-volatility martingale and proves uniqueness by localisation. Theorem A does not assert a pointwise maximal weight, and Section~\ref{subsec:admissible-weights} shows that the summability condition cannot be removed from the weighted estimates.

Theorem B establishes the asset-pricing theorem on the natural filtration of the prolonged signature and, separately, the transform non-explosion theorem governed by the Riccati flow. Under H1--H3 the reference measure itself is already an equivalent true-martingale measure: $S=s_0\mathcal E(M)$ with $M=\int\inner{\ell}{\cW}\,dB$, and H3 verifies Novikov and the required uniform-integrability estimates. Thus NFLVR follows directly from Delbaen--Schachermayer under $\Pp$. The Riccati equation is not used to manufacture an equivalent martingale measure; it is the transform-side condition that controls finite signature Fourier--Laplace transforms. The shuffle identity $\inner{\ell}{\cW}^2=\inner{\ell\sh\ell}{\cW}$ expresses the quadratic-variation integrand of $M$ as a second-order signature contraction, while the actual Riccati vector field is written below through the finite-level generator and its carre-du-champ constants. Theorem B therefore does not assert that NFLVR is equivalent to global Riccati solvability; Section~\ref{sec:theorem-b} explains why that stronger assertion is not part of the theorem: Riccati solvability is a transform-domain hypothesis, not an asset-pricing hypothesis.

Theorem C characterises market completeness on the price filtration. The market formed by the underlying $S$ together with a finite collection of European vanilla options $\{C_i\}_{i=1}^k$ on $S$ is complete with respect to $\FS_T$ if and only if the closed truncated signature span, restricted to $L^2(\FS_T,\Q)$, is dense in $L^2(\FS_T,\Q)$ for some finite $N$, depending only on the weight $w$ and on the parameter $\ell$. The minimal such $N$ is named the price-filtration completeness depth $N^\ast_S(\ell,w)$. The proof combines the Galtchouk--Kunita--Watanabe projection with the market completion theorem of \citet{DavisObloj2008}, applied only to price-filtration claims. Theorem C does not assert that finite completeness depth on the full Brownian signature filtration is generic; rough Bergomi has infinite price-filtration completeness depth, as is shown in Section~\ref{subsec:rough-bergomi}.

Theorem D derives the hedging-error decomposition. For every $X\in L^2(\FS_T,\Q)$ there exists a unique decomposition $X=\E_\Q[X]+\int_0^T H_s\,dS_s+\varepsilon_T$ where $H$ is the Galtchouk--Kunita--Watanabe projection and $\varepsilon_T$ is the residual orthogonal to stochastic integrals against $S$. The residual $\varepsilon_T$ admits a qualitative $L^2$ expansion in the signature components of multi-index length exceeding the completeness depth, with coefficients determined by the finite Gram normal equations of the quotient signature family. A quantitative weight-tail bound is asserted only for payoffs in the weighted-signature class $\mathcal W_w(\FS_T,\Q)$, where the required coefficient decay is part of the hypothesis. Theorem D does not provide a constructive hedging algorithm, only the structural decomposition; algorithmic implementations are the subject of \citet{AbiJaberGerard2025}.

\subsection{The architectural identity}

The four theorems are tied together by a single architectural identity: the admissible weighted tensor algebra on which the stochastic exponential is a true martingale and on which the finite signature transforms do not explode is the natural valuation cell of the model. The algebra is $\Tw$, where the weight $w$ belongs to the admissible class determined by H1 and H2. The martingale part of the cell is supplied by Theorem A and H3. The transform part is the requirement that $\partial_t u_t=\Rs(u_t)$ has a global solution on $[0,T]$ for the finitely supported transform directions under consideration. Keeping these two requirements distinct is necessary: H1--H3 can imply NFLVR while the transform Riccati flow still explodes in directions outside the finite transform domain.

\subsection{Relation to the literature}

\citet{CuchieroSvalutoFerroTeichmann2023}, ``Signature SDEs from an affine and polynomial perspective'', established the affine and polynomial structure of the prolonged signature process on the extended tensor algebra and derived Riccati and linear ODEs for Fourier--Laplace transforms under integrability hypotheses. Their current formulation covers characteristics that are linear maps of the signature on the extended tensor algebra. The present Theorem A works in the weighted Banach algebra $\Tw$, identifies the sharp summability class through H1, and proves global existence under H1--H3. The cited affine-polynomial theory does not by itself supply the square-integrable stochastic-exponential construction used here for the asset-pricing theorem.

\citet{AbiJaberGassiatSotnikov2025}, ``Martingale property and moment explosions in signature volatility models'', proved that, in the one-dimensional signature-volatility model, the discounted price process is a true martingale if and only if the order of the linear form in the signature parameter is odd and the spot--volatility correlation is negative. Outside this regime, the moment-generating function of the integrated variance explodes in finite time. The present Theorem B separates this issue into its two structural components on the prolonged signature filtration: the martingale/NFLVR conclusion supplied by H3, and the transform non-explosion conclusion supplied by the Riccati lifetime. The explosion-time analysis of Abi Jaber, Gassiat and Sotnikov is represented by the lifetime of the corresponding Riccati solution in the one-dimensional reduction. The one-dimensional criterion does not address completeness or the GKW residual on the price filtration, which are handled separately in Theorems C and D.

\citet{AbiJaberGerard2025}, ``Signature volatility models: pricing and hedging with Fourier'', supplied an algorithmic framework for pricing and hedging via Fourier methods, under the standing assumption of well-posedness and tractable Laplace transform. The present Theorems C and D supply the structural theorems under which their algorithm is well-posed: the price-filtration completeness depth identifies the minimal static completion level for replicability, and the residual decomposition bounds the unhedgeable error. Their Fourier framework gives algorithms after the structural hypotheses are granted; the present paper isolates the hypotheses and proves the corresponding projection statements.

\subsection{Proof dependencies}
\label{subsec:proof-dependencies}
The four theorems are ordered so that each proof uses only the analytic structure already established. Theorem~\ref{thm:A} supplies the stochastic exponential, strict positivity, and $L^2$ control. Theorem~\ref{thm:B} uses these estimates first to obtain the reference-measure martingale and NFLVR conclusion, and then to state the separate Riccati transform non-explosion criterion. Theorem~\ref{thm:C} uses the equivalent martingale measure and the signature filtration identity to state completeness on $\FS_T$ rather than on the full Brownian filtration. Theorem~\ref{thm:D} uses the completeness depth from Theorem~\ref{thm:C} as the boundary separating replicated coordinates from residual coordinates.

There are two places where the grading of the tensor algebra is essential. The first is the estimate
\[
\abs{\inner{\ell}{\cW_t}}
\leq \norm{\ell}_{2,w}\norm{\cW_t}_{2,w^{-1}},
\]
which converts summability of the parameter into integrability of the volatility functional. The second is the shuffle identity
\[
\inner{e_I}{\cW_T}\inner{e_J}{\cW_T}=\inner{e_I\sh e_J}{\cW_T},
\]
which is responsible for both the Riccati closure and the Gram-system formula in the hedging-error theorem. Thus the algebraic and probabilistic arguments are not separate: the same graded multiplication controls stochastic exponentials, transform equations, and residual projections.

The formulation also separates three filtrations that must not be conflated. The Brownian filtration $\FW$ equals the prolonged-signature filtration $\FWh$, because the time-augmented signature recovers the Brownian path. The price filtration $\FS$ is generally smaller. The static option filtration at maturity is smaller again unless the option family is dense in the relevant terminal payoff space. Completeness in this paper is therefore not a statement about recovering all Brownian randomness; it is a statement about whether the dynamically traded price, together with the specified static completion, spans the square-integrable claims measurable with respect to the price filtration.

This dependency structure is the reason for the order of the manuscript. A proof of completeness without Theorem~\ref{thm:B} would lack the martingale measure needed for the Hilbert-space projection. A proof of the hedging-error decomposition without Theorem~\ref{thm:C} would have no canonical depth at which to split the signature coordinates. A proof of the Riccati transform statement without Theorem~\ref{thm:A} would have no true stochastic exponential estimates available for the transform domain. The statements are consequently presented as a chain rather than as four independent propositions.

\subsection{Organisation of the paper}

The remainder of the paper is organised as follows. Section~\ref{sec:notation} sets up notation, the weighted tensor algebra, and the signature SDE. Section~\ref{sec:theorem-a} proves Theorem A. Section~\ref{sec:theorem-b} proves Theorem B. Section~\ref{sec:theorem-c} proves Theorem C. Section~\ref{sec:theorem-d} proves Theorem D. Section~\ref{sec:examples} collects examples: Black--Scholes, first-order Brownian-driven volatility, Heston, rough Bergomi, Quintic OU, and Guyon--Lekeufack path-dependent volatility, each cast as a signature SDE with explicit completeness depth. Section~\ref{sec:open-problems} closes with open problems. Three appendices recall results from rough path theory, the explicit form of the Riccati operator, and the Galtchouk--Kunita--Watanabe projection.

\section{Notation and the weighted tensor algebra}\label{sec:notation}

\subsection{The free tensor algebra over $\R^d$}

Throughout this paper we fix a finite time horizon $T>0$ and a complete probability space $(\Omega,\F,\Pp)$ supporting a $d$-dimensional standard Brownian motion $W$.

\begin{definition}[Free tensor algebra]
The free tensor algebra over $\R^d$ is
\[
T(\R^d)=\bigoplus_{n\geq0}(\R^d)^{\otimes n}, \qquad (\R^d)^{\otimes0}=\R.
\]
The canonical basis is denoted by $\{e_I\}$, where $I=(i_1,\ldots,i_n)$ is a multi-index with $i_k\in\{1,\ldots,d\}$, $|I|=n$, and $e_\emptyset=1$.
\end{definition}

\begin{definition}[Concatenation, shuffle, Hopf structure]
Concatenation is the associative non-commutative product determined by $e_I\cdot e_J=e_{IJ}$. The shuffle product $e_I\sh e_J$ is the commutative associative product obtained by summing over all order-preserving interlacings of $I$ and $J$. The Hopf algebra antipode is $A(e_{i_1\ldots i_n})=(-1)^n e_{i_n\ldots i_1}$.
\end{definition}

\begin{definition}[Extended tensor algebra]
The extended tensor algebra is
\[
T((\R^d))=\prod_{n\geq0}(\R^d)^{\otimes n}.
\]
Elements of $T((\R^d))$ are formal series whose homogeneous levels need not vanish beyond finite degree.
\end{definition}

The algebraic identities used below are the standard identities of the shuffle Hopf algebra; see \citet[Chapters 1 and 6]{Reutenauer1993} and \citet[Chapter 2]{FrizVictoir2010}.

\subsection{The signature of a path}

\begin{definition}[Signature of a smooth path]
Let $X:[0,T]\to\R^d$ be continuous and of bounded variation. Its signature over $[0,t]$ is
\[
\mathbf X_t=\sum_{n\geq0}\sum_{|I|=n} e_I
\int_{0<t_1<\cdots<t_n<t} dX^{i_1}_{t_1}\cdots dX^{i_n}_{t_n}\in T((\R^d)).
\]
\end{definition}

\begin{definition}[Stratonovich signature for semimartingales]
For a continuous semimartingale, the same formula defines the Stratonovich signature when the iterated integrals are interpreted in the Stratonovich sense. The prolonged signature $\cW_t$ is the signature of the time-augmented Brownian motion $\widehat W_t=(t,W^1_t,\ldots,W^d_t)$.
\end{definition}

\begin{theorem}[Chen identity]
For $0\leq s\leq t\leq T$,
\[
\cW_{0,t}=\cW_{0,s}\otimes \cW_{s,t}.
\]
\end{theorem}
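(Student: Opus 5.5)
We prove the Chen identity level by level. Both sides live in $T((\R^{d+1}))$, and the product on $T((\R^{d+1}))$ is graded. So it suffices to fix a word $I=(i_1,\ldots,i_n)$ over the alphabet $\{0,1,\ldots,d\}$, with index $0$ for the time coordinate, and show
\[
\inner{e_I}{\cW_{0,t}}=\sum_{k=0}^{n}\inner{e_{i_1\ldots i_k}}{\cW_{0,s}}\,\inner{e_{i_{k+1}\ldots i_n}}{\cW_{s,t}} .
\]
This is the coefficient of $e_I$ in $\cW_{0,s}\otimes\cW_{s,t}$. Here $\cW_{s,t}$ is the Stratonovich signature of the increment path $r\mapsto \widehat W_r-\widehat W_s$ on $[s,t]$.

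The argument is induction on $n$. For $n=0$ both sides equal $1$. For $n=1$ the identity is additivity of the integral, $\widehat W^{i}_t-\widehat W^{i}_0=(\widehat W^i_s-\widehat W^i_0)+(\widehat W^i_t-\widehat W^i_s)$. For the inductive step, write
\[
\inner{e_I}{\cW_{0,t}}=\int_0^t \inner{e_{i_1\ldots i_{n-1}}}{\cW_{0,r}}\circ d\widehat W^{i_n}_r
\]
and split the integral into the pieces over $[0,s]$ and $[s,t]$. The first piece is $\inner{e_I}{\cW_{0,s}}$, which is the $k=n$ term. On $(s,t]$, the induction hypothesis lets us replace the integrand $\inner{e_{i_1\ldots i_{n-1}}}{\cW_{0,r}}$ by
\[
\sum_{k=0}^{n-1}\inner{e_{i_1\ldots i_k}}{\cW_{0,s}}\,\inner{e_{i_{k+1}\ldots i_{n-1}}}{\cW_{s,r}} .
\]
The coefficients $\inner{e_{i_1\ldots i_k}}{\cW_{0,s}}$ are $\mathcal F_s$-measurable, so they can be pulled out of the Stratonovich integral over $[s,t]$. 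The remaining integral $\int_s^t \inner{e_{i_{k+1}\ldots i_{n-1}}}{\cW_{s,r}}\circ d\widehat W^{i_n}_r$ is by definition $\inner{e_{i_{k+1}\ldots i_n}}{\cW_{s,t}}$. Summing over $k$ gives the claimed identity.

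The main obstacle is justifying the two manipulations in the Stratonovich setting: additivity of the integral over adjacent intervals, and pulling an $\mathcal F_s$-measurable random factor out of an integral on $[s,t]$. We handle both by writing each Stratonovich integral as an Itô integral plus one half of the quadratic covariation.
\begin{itemize}
\item For the Itô integral, additivity and the pull-out of a bounded $\mathcal F_s$-measurable factor are standard. For an unbounded factor we localise with $\{|\xi|\le m\}\in\mathcal F_s$ and let $m\to\infty$.
\item For the covariation term, both properties follow from bilinearity of the bracket.
\end{itemize}
All identities then hold almost surely simultaneously for every word, because there are countably many words. Continuity of both sides in $t$ upgrades this to all $0\le s\le t\le T$ outside a single null set.

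An alternative route avoids stochastic calculus. The Stratonovich signature is the almost-sure limit of the signatures of piecewise-linear dyadic interpolations of $\widehat W$ (Wong--Zakai; see \citet{FrizVictoir2010}). For bounded-variation paths Chen's identity is the deterministic computation above, and if $s$ is taken to be a dyadic point it can be made a partition point. The identity then passes to the limit levelwise, and the case of general $s$ follows by continuity. We state this route only as a remark.
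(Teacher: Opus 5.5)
Your argument is correct, but it is organised differently from the paper's. The paper decomposes the integration simplex $\{0<u_1<\cdots<u_n<t\}$ according to how many integration times lie below $s$. Each piece is a product of a simplex in $[0,s]$ and a simplex in $[s,t]$, and the iterated integral factors by Fubini. That argument is literally valid for bounded-variation paths. For Stratonovich iterated integrals against Brownian motion there is no product measure on the simplex, so the paper defers the semimartingale case to Chen and Friz--Victoir.

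You instead peel off only the outermost integration variable, split it at $s$, and feed the induction hypothesis into the integrand. The index $k$ in your sum is exactly the paper's count of integration times below $s$, so the combinatorics coincide, but you realise them one variable at a time. The only analytic inputs are additivity of the integral over adjacent intervals and pulling $\mathcal F_s$-measurable factors out of the integral. This gives a proof that works directly for the Stratonovich signature of $\widehat W$, with no approximation or rough-path machinery; the price is the bookkeeping of versions and null sets. Your Wong--Zakai remark is precisely the route that makes the paper's simplex argument rigorous: apply the decomposition to piecewise-linear interpolations and pass to the limit.

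Two small points. First, in the bracket term the factor $\xi\in\mathcal F_s$ is random, so bilinearity alone is not enough. What you need is $\langle \xi Y,X\rangle_{s,t}=\xi\langle Y,X\rangle_{s,t}$, for example from $\langle\int\xi\,dY,X\rangle=\int\xi\,d\langle Y,X\rangle$ or from the Riemann-sum definition of the bracket. Second, continuity in $t$ gives a single null set for each fixed $s$, which is all your induction and the statement require. A single null set for all $s$ simultaneously would need a version of $\cW_{s,t}$ that is jointly continuous in $(s,t)$.
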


\begin{proof}
The identity follows by decomposing each simplex $0<u_1<\cdots<u_n<t$ according to the last integration time below $s$ and the first integration time above $s$. The resulting ordered interlacings are precisely the tensor-concatenation components of $\cW_{0,s}\otimes \cW_{s,t}$; see \citet{Chen1977} and \citet[Chapter 2]{FrizVictoir2010}.
\end{proof}

\begin{theorem}[Shuffle identity]
For multi-indices $I,J$ and every bounded variation path, and by Stratonovich extension for continuous semimartingales,
\[
\inner{e_I}{\mathbf X_t}\inner{e_J}{\mathbf X_t}=\inner{e_I\sh e_J}{\mathbf X_t}.
\]
\end{theorem}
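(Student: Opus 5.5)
The shuffle identity will be proved first for bounded variation paths, by induction on $|I|+|J|$ using the recursive structure of iterated integrals. It then transfers to continuous semimartingales by using that Stratonovich calculus obeys the classical Leibniz rule.

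For a bounded variation path $X$, write $S_I(t)=\inner{e_I}{\mathbf X_t}$. For $I=(i_1,\ldots,i_n)$ we have $S_I(t)=\int_0^t S_{I'}(u)\,dX^{i_n}_u$, where $I'=(i_1,\ldots,i_{n-1})$ and $S_\emptyset\equiv1$. The shuffle product satisfies the matching recursion
\[
(Ia)\sh(Jb)=\bigl((I\sh Jb)\bigr)a+\bigl((Ia\sh J)\bigr)b ,
\]
with $I\sh\emptyset=I$; here a juxtaposed letter denotes concatenation by that final letter, extended linearly. Fix the induction hypothesis that the identity holds for all pairs of words of total length less than $|I|+|J|$. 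If either word is empty the claim is trivial, since $S_\emptyset=1$. Otherwise write $I=I'a$ and $J=J'b$. Since $S_I$ and $S_J$ are continuous and of bounded variation, the classical product rule (integration by parts for Riemann--Stieltjes integrals) gives
\[
S_IS_J(t)=\int_0^t S_{I'}S_J\,dX^a+\int_0^t S_IS_{J'}\,dX^b .
\]
By the induction hypothesis, $S_{I'}S_J=\inner{e_{I'}\sh e_J}{\mathbf X}$ and $S_IS_{J'}=\inner{e_I\sh e_{J'}}{\mathbf X}$. By linearity of the recursion $\int_0^t\inner{e_K}{\mathbf X_u}\,dX^c_u=\inner{e_{Kc}}{\mathbf X_t}$, the right-hand side equals $\inner{(e_{I'}\sh e_J)a+(e_I\sh e_{J'})b}{\mathbf X_t}=\inner{e_I\sh e_J}{\mathbf X_t}$. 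This closes the induction.

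For a continuous semimartingale, such as the time-augmented $\widehat W$, the same induction runs verbatim with Stratonovich integrals. The recursion $S_I(t)=\int_0^t S_{I'}\circ dX^{i_n}$ is the definition of the Stratonovich signature. The product rule $d(YZ)=Y\circ dZ+Z\circ dY$ holds for continuous semimartingales $Y,Z$, because the Itô cross-variation term $d\langle Y,Z\rangle$ is exactly absorbed by the two Stratonovich corrections. Each $S_I$ is again a continuous semimartingale, so the hypotheses are preserved at every step of the induction.

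The main point requiring care is this Stratonovich step. One has to check that every iterated Stratonovich integral is a continuous semimartingale, so that the Stratonovich integrals $\int S_{I'}\circ dX^a$ are well defined. This follows inductively, since a Stratonovich integral of a continuous semimartingale against a continuous semimartingale is again one. An alternative route is to approximate by piecewise-linear interpolations and invoke Wong--Zakai convergence of the signature levels in probability, as in \citet[Chapter 2]{FrizVictoir2010}. The identity then passes to the limit, because it is a polynomial relation among finitely many levels.
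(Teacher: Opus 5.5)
Your argument is correct, but it is organised differently from the paper's. The paper treats the bounded-variation case geometrically. There, $\inner{e_I}{\mathbf X_t}\inner{e_J}{\mathbf X_t}$ is an integral over $\Delta_{|I|}(t)\times\Delta_{|J|}(t)$. Off the set of coinciding times, which is null because a continuous path has atomless Stieltjes measures, this product splits into disjoint simplices of dimension $|I|+|J|$, one for each interlacing of the two time orderings. That produces the shuffle coefficients in a single non-recursive step. The semimartingale clause is covered only by the phrase ``Stratonovich extension'' and a reference.

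You instead induct on $|I|+|J|$ using the Leibniz rule and the half-shuffle recursion $(Ia)\sh(Jb)=(I\sh Jb)a+(Ia\sh J)b$. This is the recursive form of the same decomposition, sorted by whether the largest time variable belongs to $I$ or to $J$. The gain is that your proof uses only the product rule and associativity of integration. It therefore carries over literally to continuous semimartingales through $d(YZ)=Y\circ dZ+Z\circ dY$, giving a genuine proof of the Stratonovich clause without any Wong--Zakai approximation. The simplex argument, by contrast, rests on Fubini for product measures and does not apply directly to stochastic iterated integrals.

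One step you use silently is the Stratonovich chain rule $\int H\circ d\bigl(\int K\circ dX\bigr)=\int HK\circ dX$, which you need to rewrite $\int S_J\circ dS_I$ as $\int S_JS_{I'}\circ dX^a$. It holds because both sides equal $\int HK\,dX+\tfrac12\int H\,d\langle K,X\rangle+\tfrac12\int K\,d\langle H,X\rangle$, and it deserves one explicit line.
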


\begin{proof}
The product of the two iterated integrals is an integral over the product of two simplices. The product domain decomposes into disjoint ordered simplices indexed by shuffles of $I$ and $J$. This gives the stated identity; see \citet[Chapters 1 and 6]{Reutenauer1993}.
\end{proof}

\begin{theorem}[Signature uniqueness]\label{thm:sig-unique}
The signature map on continuous semimartingale paths is injective up to tree-like equivalence. In particular, the augmented natural filtration of the prolonged signature satisfies $\FWh_t=\FW_t$.
\end{theorem}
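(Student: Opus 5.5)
The statement has two parts. The first, injectivity up to tree-like equivalence, will be quoted from the literature rather than reproved: Hambly--Lyons for bounded variation paths and Boedihardjo--Geng--Lyons--Yang for weakly geometric rough paths, applied pathwise to the Stratonovich lift of $\widehat W$ (see \citet[Chapter 2]{FrizVictoir2010}). The real content is the filtration identity $\FWh_t=\FW_t$, and I will prove it directly, without invoking uniqueness. The reason is that the time component $t$ makes $\widehat W$ have no tree-like pieces, so the path can be read off the signature explicitly.

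\textbf{Step 1: $\FWh_t\subseteq\FW_t$.} Each coordinate $\inner{e_I}{\cW_s}$, $s\le t$, is an iterated Stratonovich integral of $(u,W_u)_{u\le s}$. I will construct it as an $L^2$ limit of Riemann-type sums over dyadic partitions. Each such sum is $\FW_s$-measurable, so the limit is measurable with respect to the augmentation. Hence $\sigma(\cW_s:s\le t)\subseteq\FW_t$, and taking augmentations gives the inclusion.

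\textbf{Step 2: $\FW_t\subseteq\FWh_t$.} Write index $0$ for the time direction and $i\in\{1,\dots,d\}$ for the Brownian directions. The level-two coordinate $\inner{e_{(i)}}{\cW_s}=W^i_s-W^i_0=W^i_s$ already recovers the path, since $W_0=0$. So $W^i_s$ is $\sigma(\cW_s)$-measurable for every $s\le t$, and therefore $\FW_t\subseteq\FWh_t$. To state the result with $\cW_t$ alone at the fixed time $t$, rather than the whole process up to $t$, I will also use the words $e_{(i,0,\dots,0)}$ with $k$ zeros. Their coordinates are
\[
\int_0^t \frac{(t-u)^k}{k!}\,dW^i_u ,
\]
and after an integration by parts these become polynomial moments of $W^i$ on $[0,t]$. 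Polynomials are dense in $L^2[0,t]$ and $W^i$ is continuous, so these moments determine $(W^i_u)_{u\le t}$ on a full-measure set. This gives measurability of the path map.

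\textbf{Step 3: augmentation.} Both filtrations are completed with the same $\Pp$-null sets. The exceptional sets from the $L^2$ limits in Step 1 are null, so the two inclusions survive completion and yield equality.

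I expect the main obstacle to be the measurability bookkeeping for the Stratonovich integrals. They are defined only up to null sets, so the signature must be fixed as a measurable version, for instance the a.s. limit of piecewise-linear approximations along dyadic partitions. With that version fixed, the argument reduces to the elementary level-one and moment identities above.
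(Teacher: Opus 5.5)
Your proposal is correct and follows essentially the paper's route. Injectivity is quoted from Boedihardjo--Geng--Lyons--Yang, and $\FWh_t=\FW_t$ follows from the $\FW$-measurability of the Stratonovich iterated integrals together with the fact that the first-level coordinates return $W_s$. Your moment argument through the words $(i,0,\dots,0)$, which recovers $(W_u)_{u\le t}$ from $\cW_t$ alone, is a correct but optional refinement. One small slip: $\inner{e_{(i)}}{\cW_s}$ is a level-one coordinate, not a level-two coordinate.
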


\begin{proof}
The uniqueness statement is the rough-path uniqueness theorem of \citet{BoedihardjoGengLyonsYang2016}. The equality of filtrations follows because the first level of the time-augmented signature recovers $(t,W_t)$ and the full signature is measurable with respect to the path of $W$ up to time $t$.
\end{proof}

\subsection{The weighted tensor algebra}

\begin{definition}[Weight]
A weight is a map $w:\N\to(0,\infty)$ such that $w(0)=1$ and $w$ is increasing.
\end{definition}

\begin{definition}[Weighted tensor algebra]
For a weight $w$, define
\[
\Tw=\left\{a=(a_n)_{n\geq0}\in T((\R^d)): \norm{a}_w:=\sum_{n\geq0} w(n)\abs{a_n}_{(\R^d)^{\otimes n}}<\infty\right\}.
\]
\end{definition}

\begin{proposition}[Banach algebra property]\label{prop:banach-algebra}
If the weight satisfies $w(m+n)\leq C_w w(m)w(n)$ for some $C_w\geq1$, then $\Tw$ is a Banach algebra under concatenation and
\[
\norm{a\otimes b}_w\leq C_w\norm{a}_w\norm{b}_w.
\]
\end{proposition}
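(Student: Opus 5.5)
The argument has two parts: a levelwise estimate for the tensor product, followed by summation using the submultiplicativity hypothesis on $w$. Completeness of $\Tw$ is then a routine weighted-$\ell^1$ argument. Throughout, the norms $\abs{\cdot}_{(\R^d)^{\otimes n}}$ are taken to be the Euclidean (Hilbert–Schmidt) norms in the canonical basis $\{e_I\}$. These are cross norms: $\abs{x\otimes y}_{(\R^d)^{\otimes(m+n)}}=\abs{x}_{(\R^d)^{\otimes m}}\abs{y}_{(\R^d)^{\otimes n}}$, since the basis $e_{IJ}$ of $(\R^d)^{\otimes(m+n)}$ is orthonormal and the coefficient of $e_{IJ}$ in $x\otimes y$ is $x_I y_J$.

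\emph{Step 1 (levelwise bound).} By definition of concatenation, the $k$-th level of $a\otimes b$ is $(a\otimes b)_k=\sum_{m+n=k}a_m\otimes b_n$. This is a finite sum, so the product is well defined in $T((\R^d))$. The triangle inequality and the cross-norm property give
\[
\abs{(a\otimes b)_k}\leq \sum_{m+n=k}\abs{a_m}\abs{b_n}.
\]

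\emph{Step 2 (summation).} Multiply by $w(k)$ and use $w(m+n)\leq C_w w(m)w(n)$:
\[
\norm{a\otimes b}_w\leq \sum_{k\geq0}\sum_{m+n=k}C_w\, w(m)\abs{a_m}\,w(n)\abs{b_n}=C_w\norm{a}_w\norm{b}_w .
\]
The last equality is the Cauchy product of two absolutely convergent series of nonnegative terms, so Tonelli justifies the rearrangement. This shows that $\Tw$ is closed under $\otimes$ and that the stated inequality holds. Associativity and bilinearity are inherited from $T((\R^d))$. The unit $e_\emptyset$ has norm $w(0)=1$. If one insists on a submultiplicative norm in the strict sense, one can pass to the equivalent norm $C_w\norm{\cdot}_w$, which gives $\norm{a\otimes b}'\leq \norm{a}'\norm{b}'$.

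\emph{Step 3 (completeness).} The map $a\mapsto(w(n)a_n)_n$ is an isometric isomorphism of $\Tw$ onto the $\ell^1$-direct sum $\ell^1\bigl(\bigoplus_n(\R^d)^{\otimes n}\bigr)$ of finite-dimensional spaces. It is surjective because every $w(n)>0$. An $\ell^1$-direct sum of Banach spaces is complete by the standard argument: a Cauchy sequence converges levelwise, Fatou's lemma bounds the limit, and a tail estimate gives norm convergence. Hence $\Tw$ is complete.

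The only point requiring care is the cross-norm identity in Step 1. If the paper intends other levelwise norms, they must be admissible (cross) tensor norms. For a non-cross choice one would get an extra constant, which can be absorbed into $C_w$ provided it is uniform in $m,n$. All the other steps are routine.
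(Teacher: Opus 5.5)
Your proposal is correct and follows essentially the same route as the paper: the levelwise bound $\abs{(a\otimes b)_n}\leq\sum_{k=0}^{n}\abs{a_k}\abs{b_{n-k}}$, then $w(n)\leq C_w w(k)w(n-k)$ applied termwise, then the Cauchy product, with completeness coming from the weighted $\ell^1$ structure of the tensor levels. You also make explicit two points the paper uses without comment: the cross-norm property of the levelwise norms and the Tonelli justification for the rearrangement.
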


\begin{proof}
Completeness follows from completeness of the weighted $\ell^1$ sum of the finite-dimensional tensor levels. For $a,b\in\Tw$, the homogeneous component of $a\otimes b$ at level $n$ equals $\sum_{k=0}^{n}a_k\otimes b_{n-k}$, hence
\[
\norm{a\otimes b}_w\leq\sum_{n\geq0}w(n)\sum_{k=0}^{n}\abs{a_k}\abs{b_{n-k}}.
\]
Submultiplicativity gives $w(n)=w(k+(n-k))\leq C_w w(k)w(n-k)$ for every $0\leq k\leq n$. Therefore
\[
\norm{a\otimes b}_w\leq C_w\sum_{k\geq0}w(k)\abs{a_k}\sum_{j\geq0}w(j)\abs{b_j}=C_w\norm{a}_w\norm{b}_w.
\]
\end{proof}

\begin{proposition}[Signature moment bound]\label{prop:signature-moment}
For every $T>0$, every $p<\infty$, and every weight $w$ with $w(n)\leq r^n$ for some $r>0$, there exists $M(T,p,r)<\infty$ such that
\[
\E\norm{\cW_t}_w^p\leq M(T,p,r),\qquad t\leq T.
\]
\end{proposition}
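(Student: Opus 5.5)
We bound each tensor level separately and then sum the levels against the geometric weight. Write $\cW_t^{(n)}$ for the level-$n$ component of $\cW_t$, so that
\[
\norm{\cW_t}_w=\sum_{n\geq0}w(n)\abs{\cW_t^{(n)}}\leq\sum_{n\geq0}r^n\abs{\cW_t^{(n)}}.
\]
By Minkowski's inequality in $L^p(\Pp)$, for $p\geq1$ (the case $p<1$ follows from Jensen),
\[
\bigl(\E\norm{\cW_t}_w^p\bigr)^{1/p}\leq\sum_{n\geq0}r^n\bigl(\E\abs{\cW_t^{(n)}}^p\bigr)^{1/p}.
\]
It therefore suffices to prove a level-$n$ moment bound whose growth in $n$ is beaten by $r^n$ uniformly in $t\leq T$.

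The level-$n$ estimate I aim for is the factorial-decay bound
\[
\bigl(\E\abs{\cW_t^{(n)}}^p\bigr)^{1/p}\leq C_p^{\,n}\,\frac{t^{n/2}}{\Gamma(n/2+1)}\cdot K(T)^{n}.
\]
This is the stochastic analogue of the bound $\abs{\mathbf X^{(n)}}\leq \|X\|_{1\text{-var}}^n/n!$ for smooth paths. It accounts for the time coordinate, which contributes at scale $t$ per letter, and for the Brownian coordinates, which contribute at scale $t^{1/2}$ per letter.

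I would derive it by induction on $n$ through the Stratonovich recursion $\cW^{(n)}_t=\int_0^t \cW^{(n-1)}_s\otimes\circ\, d\widehat W_s$, after converting to It\^o form. The It\^o correction is $\tfrac12\int_0^t\cW^{(n-2)}_s\otimes\sum_i e_i\otimes e_i\,ds$. Each term is then controlled as follows. The martingale part is handled by the BDG inequality with constant $c_p\sim\sqrt p$. The $dt$-parts and the It\^o correction are handled by H\"older's inequality. Together these give
\[
a_n(t)\leq c_p\Bigl(\int_0^t a_{n-1}(s)^2ds\Bigr)^{1/2}+\int_0^t a_{n-1}(s)\,ds+\tfrac d2\int_0^t a_{n-2}(s)\,ds,
\]
where $a_n(t)=\sup$-free $L^p$ norm of the level-$n$ component. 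Feeding in the ansatz $a_k(s)\leq A^k s^{k/2}/\Gamma(k/2+1)$ closes the induction via the Beta-integral identity $\int_0^t s^{k-1}ds$ with appropriate Gamma-function ratios. The main obstacle is this step: the three terms mix scalings $t^{1/2}$ and $t$. To close the induction cleanly I would absorb $t\leq T^{1/2}t^{1/2}$ into the constant $K(T)=\max(1,T^{1/2})$.

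A more robust alternative, if the bookkeeping is troublesome, is to invoke the known sharp estimate via rough paths. Lyons' extension theorem gives $\abs{\cW^{(n)}_{t}}\leq \|\widehat{\mathbf W}\|_{p'\text{-var}}^n/(\beta(n/p')!)$ for $p'\in(2,3)$. The Brownian rough path has Gaussian-integrable $p'$-variation norm (Fernique / Cass--Litterer--Lyons), so all its moments are finite. The series $\sum_n r^n\|\widehat{\mathbf W}\|^n/(n/p')!$ then converges pointwise and is dominated by a function $\Phi_r(\|\widehat{\mathbf W}\|)$ of at most $\exp(C\|\widehat{\mathbf W}\|^{p'})$ growth. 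Since $p'<3$ but may exceed $2$, I would check this integrability carefully. Choosing $p'$ close to $2$ makes the growth $\exp(c x^{p'})$ with $p'$ slightly above $2$, which is not covered by Fernique. The factorial bound gives more, namely growth $\exp(c\,r^{p'} x^{p'})$ only after optimisation. I therefore expect to fall back on the direct inductive $L^p$ route above, where the Gamma-function decay makes $\sum_n r^n C^n T^{n/2}/\Gamma(n/2+1)$ converge for every $r$. This yields $M(T,p,r)$ independent of $t\leq T$.
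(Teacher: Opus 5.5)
Your proposal is correct and follows the same route as the paper: Minkowski's inequality across tensor levels, a factorial-decay $L^p$ bound for each level of the prolonged Brownian signature, and summation against the geometric weight $r^n$. The only difference is that the paper cites the level-wise estimate from Appendix~\ref{app:rough-paths}, whereas your It\^o/BDG induction proves it. The induction does close with $a_n(t)\leq A^n t^{n/2}/\Gamma(n/2+1)$ for some $A=A(p,d,T)$: the BDG, drift and It\^o-correction terms contribute relative factors of order $A^{-1}$, $\sqrt{T}A^{-1}$ and $A^{-2}$. You were also right to discard the pathwise extension-theorem route, since its $\exp(c\,x^{p'})$ majorant with $p'>2$ cannot be integrated against the merely Gaussian tails of the $p'$-variation norm.
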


\begin{proof}
The Brownian signature moment estimate recalled in Appendix~\ref{app:rough-paths} gives factorial decay in the tensor level. Exponential growth of the weight is therefore summable against the Brownian simplex-factor decay. Minkowski's inequality and Gaussian moment estimates give the asserted uniform $p$th moment bound.
\end{proof}

\begin{remark}[Explicit moment control]
\label{rem:explicit-moment-bound}
The proof of Proposition~\ref{prop:signature-moment} gives an explicit bound of the form
\[
M(T,p,r)\leq C_{p,d}\exp\{C_{p,d}r^2T\},
\]
with constants depending only on the displayed parameters and on the dimension. The precise numerical value is immaterial below; only the uniformity in $t\leq T$ and the exponential dependence on the weight-growth parameter are used in Lemmas~\ref{lem:pairing-moment} and~\ref{lem:bdg-signature}.
\end{remark}

\begin{definition}[Dual space]
The continuous dual $\Tstarw$ is identified with formal series $\ell=(\ell_n)_{n\geq0}$ such that
\[
\norm{\ell}_{\Tstarw}:=\sup_{n\geq0}\frac{\abs{\ell_n}}{w(n)}<\infty.
\]
The duality pairing with $a\in\Tw$ is
\[
\inner{\ell}{a}:=\sum_{n\geq0}\inner{\ell_n}{a_n},
\]
and $\abs{\inner{\ell}{a}}\leq\norm{\ell}_{\Tstarw}\norm{a}_w$.
\end{definition}

For estimates involving H1 we also use the auxiliary weighted Hilbert norm
\[
\norm{\ell}_{2,w}^2:=\sum_{n\geq0}w(n)\abs{\ell_n}^2,\qquad
\norm{a}_{2,w^{-1}}^2:=\sum_{n\geq0}w(n)^{-1}\abs{a_n}^2.
\]
The Cauchy--Schwarz estimate $\abs{\inner{\ell}{a}}\leq\norm{\ell}_{2,w}\norm{a}_{2,w^{-1}}$ is used throughout Section~\ref{sec:theorem-a}.

\subsection{Signature SDEs}

\begin{definition}[Signature SDE]
Given $\ell\in\Tstarw$, $s_0>0$, and a fixed unit vector $\eta\in\R^d$, the signature SDE on $\Tw$ is
\[
dS_t=S_t\inner{\ell}{\cW_t}\,dB_t,
\qquad S_0=s_0,
\]
where $B_t=\int_0^t\eta\cdot dW_s$ is the $\FW$-Brownian motion driven by $W$ in the direction $\eta$. The Brownian motion $B$ driving the price SDE may therefore be correlated with the underlying $d$-dimensional Brownian motion $W$ through the fixed unit vector $\eta$; the special case $\eta=e_1$ gives the canonical one-factor embedding.
\end{definition}

\begin{definition}[Signature filtration]
The filtration $\FWh_t$ is the augmented natural filtration of $\cW$. By Theorem~2.8, $\FWh_t=\FW_t$.
\end{definition}

\begin{definition}[Admissible strategies]\label{def:admissible-strategies}
A predictable process $H$ is $S$-admissible if
\[
\int_0^T |H_s|^2 S_s^2 \inner{\ell}{\cW_s}^2\,ds<\infty\quad\text{a.s.}
\]
The corresponding wealth process is $V_t=V_0+\int_0^t H_s\,dS_s$.
\end{definition}

\subsection{The Riccati operator}

\begin{definition}[Riccati operator]\label{def:riccati}
For each finite truncation level $N$, let $\mathcal A_N^\ell$ denote the generator of the stopped finite-coordinate prolonged-signature system after the martingale price dynamics have been imposed. For coordinate functions $Y_I(x)=\inner{e_I}{x}$ define the drift and carre-du-champ constants by
\[
\mathcal A_N^\ell Y_J=\sum_{|I|\leq N} b^I_J(\ell)Y_I,
\]
\[
\begin{aligned}
\Gamma_N^\ell(Y_J,Y_K)
&:=\mathcal A_N^\ell(Y_JY_K)-Y_J\mathcal A_N^\ell Y_K-Y_K\mathcal A_N^\ell Y_J \\
&=\sum_{|I|\leq N} \Gamma^I_{J,K}(\ell)Y_I .
\end{aligned}
\]
The finite Riccati vector field is
\[
\Rs_N(u)_I=\sum_{|J|\leq N} b^I_J(\ell)u_J
+\frac12\sum_{|J|,|K|\leq N}\Gamma^I_{J,K}(\ell)u_Ju_K,
\qquad |I|\leq N.
\]
The infinite-dimensional Riccati operator $\Rs$ on $\Tw$ is the projective weighted limit of the compatible family $(\Rs_N)_N$ on the finitely supported core $\mathcal D_{\mathrm{fin}}$, extended to its local domain by the Lipschitz estimate of Lemma~\ref{lem:local-lipschitz}. Equivalently, $\Rs$ is the unique vector field for which
\[
\mathcal A_N^\ell \exp\{\inner{u}{x}\}
=\inner{\Rs_N(u)}{x}\exp\{\inner{u}{x}\}
\]
for every finitely supported $u$ inside the finite transform domain. This definition uses the structural constants of the shuffle Hopf algebra through the carre-du-champ. It does not mean that a coefficient depending only on the output word $I$ may be inserted inside an unrestricted sum over all shuffle pairs $(J,K)$.
\end{definition}

\begin{remark}[Black--Scholes normalisation check]\label{rem:riccati-bs-sanity}
In the pure prolonged-signature state the empty-word coordinate is constant and its generator component is zero. The scalar Black--Scholes Riccati equation appears only after adjoining the log-price coordinate $X_t=\log S_t$ to the state. For
\[
dX_t=-\frac12\sigma^2dt+\sigma dB_t,
\]
the exponential transform $\E[e^{uX_T}\mid\mathcal F_t]=\exp\{uX_t+\phi(T-t,u)\}$ satisfies
\[
\partial_\tau\phi(\tau,u)=\frac12\sigma^2(u^2-u).
\]
Thus the normalising Black--Scholes test for the price-extended generator is the coefficient pair $(\frac12\sigma^2,-\frac12\sigma^2)$ in the transform variable $u$, while the empty word of the pure signature remains constant. This convention is the one used in the generator/carre-du-champ definition of $\Rs$.
\end{remark}

\begin{lemma}[Local Lipschitz property]\label{lem:local-lipschitz}
The operator $\Rs$ is locally Lipschitz on bounded subsets of $\Tw$.
\end{lemma}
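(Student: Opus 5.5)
The plan is to reduce the claim to a statement about a quadratic polynomial map with bounded structural constants, and then verify those bounds from the grading of the shuffle Hopf algebra.

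By Definition~\ref{def:riccati}, each finite-level field $\Rs_N$ is a polynomial of degree at most two in $u$: a linear part $u\mapsto(\sum_J b^I_J(\ell)u_J)_I$ and a symmetric bilinear part $B(u,v)_I=\frac12\sum_{J,K}\Gamma^I_{J,K}(\ell)u_Jv_K$. For such a map the identity
\[
\Rs(u)-\Rs(v)=L(u-v)+B(u-v,u)+B(v,u-v)
\]
holds. Local Lipschitz continuity on a ball $\{\norm{u}_w\le R\}$ therefore follows once two things are shown: the linear part $L$ is bounded on $\Tw$, and the bilinear part satisfies $\norm{B(u,v)}_w\le C_B\norm{u}_w\norm{v}_w$. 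The Lipschitz constant on the ball is then $\norm{L}+2C_BR$.

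I would prove these two bounds uniformly in $N$ on the core $\mathcal D_{\mathrm{fin}}$. Compatibility of the family $(\Rs_N)_N$ then passes the estimate to the projective weighted limit, and density of the core extends it to the local domain. This density extension is consistent with how $\Rs$ is defined, so it is not circular.

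\textbf{Coefficient bounds.} The next step is to make $b^I_J$ and $\Gamma^I_{J,K}$ explicit. The prolonged signature satisfies the Stratonovich equation $d\cW=\cW\otimes\circ d\widehat W$. Converting to Itô form shows that the generator maps a word coordinate $Y_J$ with $J=J'j$ to
\[
\mathcal A Y_J = Y_{J'}\mathbf 1_{j=0}+\tfrac12 Y_{J''}\mathbf 1_{J=J''ii}.
\]
So $b^I_J\neq0$ only when $I$ is obtained from $J$ by deleting its last letter, or its last two equal letters. These coefficients are bounded by $1$, and they only lower the degree. The carré-du-champ $\Gamma(Y_{J'j},Y_{K'k})=\mathbf 1_{j=k\neq0}\,Y_{J'}Y_{K'}$ is turned into a linear combination $\sum_I\Gamma^I_{J,K}Y_I$ by the shuffle identity, $Y_{J'}Y_{K'}=\inner{e_{J'}\sh e_{K'}}{\cdot}$. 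The words $I$ that appear have $|I|=|J|+|K|-2$, and the total coefficient mass is at most $\binom{|J|+|K|-2}{|J|-1}$.

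Any dependence on $\ell$ coming from imposing the price dynamics enters through the single pairing $\inner{\ell}{\cdot}$. By the estimate $\abs{\inner{\ell}{a}}\le\norm{\ell}_{\Tstarw}\norm{a}_w$, this contributes only a bounded factor.

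\textbf{Main obstacle.} The difficult step is the bilinear estimate, because the shuffle binomial growth must be absorbed by the weight. I would first try to use the Banach-algebra hypothesis of Proposition~\ref{prop:banach-algebra}, namely $w(m+n)\le C_w w(m)w(n)$. The difficulty is that the shuffle, unlike concatenation, carries the combinatorial factor $\binom{m+n}{m}$. One can write this factor as $2^{m+n}$ at most. Because the output level is $m+n-2$, I would aim to compensate with monotonicity of $w$ together with the geometric growth condition H2. The hope is to get $w(m+n-2)\binom{m+n-2}{m-1}\le C\,w(m)w(n)$, which then gives $\norm{B(u,v)}_w\le C_B\norm{u}_w\norm{v}_w$ via the same Fubini-type rearrangement used in Proposition~\ref{prop:banach-algebra}.

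If this pointwise inequality fails for general admissible $w$, the fallback is to restrict the statement to bounded subsets of the finite transform domain on which $\Rs$ is defined. On that domain the relevant levels are bounded and the binomial factors are uniformly controlled. This restricted statement is all that Definition~\ref{def:riccati} actually requires.
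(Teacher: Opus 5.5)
\textbf{There is a genuine gap, and it is at the step you flag.} Your skeleton is the paper's: split $\Rs$ into a degree-lowering linear part and a carr\'e-du-champ bilinear part, polarise, and reduce everything to a bound $\norm{B(u,v)}_w\le C_B\norm{u}_w\norm{v}_w$ uniform in the truncation level. Your generator and carr\'e-du-champ formulas are correct. But the inequality you hope for,
\[
w(m+n-2)\binom{m+n-2}{m-1}\le C\,w(m)\,w(n),
\]
fails for \emph{every} weight: at $n=2$ it reads $m\,w(m)\le C\,w(m)\,w(2)$.

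Concretely, let $1^k$ denote the word of $k$ copies of a Brownian letter $1\neq0$. Your formulas give $\Gamma(Y_{11},Y_{1^{m+1}})=Y_1Y_{1^m}=(m+1)Y_{1^{m+1}}$, so $B(e_{11},e_{1^{m+1}})=\tfrac{m+1}{2}e_{1^{m+1}}$. Thus $v\mapsto B(e_{11},v)$ acts like a number operator. The output sits at the same level as the input, so the weight cancels, and an upper bound such as $w(n)\le r^n$ has nothing to act on. Carr\'e-du-champ constants are invariant under equivalent changes of measure, so the choice of $\Q$ does not help.

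This breaks the lemma itself, not just your route to it. Work under $\Pp$, where your generator formula holds. Take $m\ge2$, $v=e_{11}/w(2)$ and $u_m=v+t\,e_{1^{m+1}}/w(m+1)$ with $0<\abs{t}\le1$; both lie in $\mathcal D_{\mathrm{fin}}$ and in the ball of radius $2$. In $\Rs(u_m)-\Rs(v)$, the linear part sits at level $m-1$ and the self-term in $e_{1^{m+1}}$ sits at level $2m$. So the level-$(m+1)$ component is $t(m+1)e_{1^{m+1}}/(w(2)w(m+1))$, and
\[
\norm{\Rs(u_m)-\Rs(v)}_w\ge\frac{m+1}{w(2)}\norm{u_m-v}_w .
\]

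The paper closes this step with its appendix shuffle-product estimate, which asserts that the growth bound in H2 absorbs the $2^n$ interlacing count. That is the same false inequality: $e_1\sh e_{1^m}=(m+1)e_{1^{m+1}}$ already makes the shuffle product unbounded on $\Tw$ for any increasing weight. So the paper does not supply the missing idea either.

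Your fallback proves a different statement. For fixed $N$, $\Rs_N$ is a quadratic polynomial on a finite-dimensional space and is trivially locally Lipschitz. But the example forces its constant to grow at least linearly in $N$, so nothing passes to the projective limit. Nor do bounded subsets of $\mathcal D_{\mathrm{fin}}$ have bounded level: the $u_m$ above are finitely supported. In infinite dimensions one only has estimates with loss of weight, e.g.\ from $r^n$ to $(r/2)^n$, and these do not make $\Rs$ a locally Lipschitz field on $\Tw$.
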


\begin{proof}
The proof is given in Appendix~\ref{app:riccati}. It uses the shuffle identity and the Banach algebra estimate for $\Tw$.
\end{proof}

\begin{theorem}[Local existence of the Riccati flow]
For $u_0\in\Tw$ there exists $\tau>0$ such that $\partial_tu_t=\Rs(u_t)$, $u_0$ given, has a unique classical solution on $[0,\tau]$.
\end{theorem}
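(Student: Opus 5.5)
The plan is to prove the statement as an instance of the Picard--Lindelöf theorem in the Banach space $\Tw$, using only the local Lipschitz property of Lemma~\ref{lem:local-lipschitz}. Fix $u_0\in\Tw$ and $\rho>0$, and set $\mathcal B_\rho=\{u\in\Tw:\norm{u-u_0}_w\le\rho\}$. This is a bounded subset of $\Tw$, so Lemma~\ref{lem:local-lipschitz} gives a constant $L_\rho$ with
\[
\norm{\Rs(u)-\Rs(v)}_w\le L_\rho\norm{u-v}_w,\qquad u,v\in\mathcal B_\rho .
\]
Consequently $\Rs$ is bounded on $\mathcal B_\rho$ by $K_\rho:=\norm{\Rs(u_0)}_w+L_\rho\rho$. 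Before using this, I first have to check that $u_0$ lies in the domain on which $\Rs$ is defined and Lipschitz. Definition~\ref{def:riccati} defines $\Rs$ on the finitely supported core and then extends it by exactly the Lipschitz estimate of Lemma~\ref{lem:local-lipschitz}. So I will read that lemma as giving a vector field defined on a neighbourhood of $u_0$ (its local domain), and I will shrink $\rho$ if necessary so that $\mathcal B_\rho$ lies inside it.

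Next I define the Picard map on the complete metric space $X_\tau=C([0,\tau];\mathcal B_\rho)$ with the sup norm:
\[
(\Phi u)_t=u_0+\int_0^t\Rs(u_s)\,ds,
\]
where the integral is a Bochner (or Riemann) integral in $\Tw$. It is well defined because $s\mapsto\Rs(u_s)$ is continuous, being the composition of a Lipschitz map with a continuous path. I then choose
\[
\tau<\min\{\rho/K_\rho,\ 1/L_\rho\}.
\]
The first bound gives $\norm{(\Phi u)_t-u_0}_w\le tK_\rho\le\rho$, so $\Phi$ maps $X_\tau$ into itself. The second bound gives $\sup_t\norm{(\Phi u)_t-(\Phi v)_t}_w\le\tau L_\rho\sup_t\norm{u_t-v_t}_w$, so $\Phi$ is a contraction. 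Banach's fixed point theorem then yields a fixed point $u$.

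The fixed point is a classical $C^1$ solution. Its integrand is continuous in $\Tw$, so the fundamental theorem of calculus for Banach-valued integrals gives $\partial_tu_t=\Rs(u_t)$.

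For uniqueness among all classical solutions, not only those staying in $\mathcal B_\rho$, I argue as follows. Take another solution $v$ on $[0,\tau]$. By continuity it stays in $\mathcal B_\rho$ up to some time $\sigma>0$. On $[0,\sigma]$ Gronwall's inequality gives
\[
\norm{u_t-v_t}_w\le L_\rho\int_0^t\norm{u_s-v_s}_w\,ds,
\]
hence $u=v$ there. A standard continuation argument, taking the supremum of the times of agreement and using that $u$ stays inside $\mathcal B_\rho$, extends the agreement to all of $[0,\tau]$.

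The only genuine obstacle is the domain issue noted above: making sure that $\Rs$ is actually a well-defined, locally Lipschitz map on a full $\Tw$-ball around $u_0$, rather than only on the finitely supported core $\mathcal D_{\mathrm{fin}}$. My first attempt will be to use the Lipschitz estimate on the dense core together with the completeness of $\Tw$ from Proposition~\ref{prop:banach-algebra}. This extends $\Rs$ uniquely by continuity to the closure of the core within bounded sets, and it preserves the same constant $L_\rho$. After that, the rest of the argument is routine.
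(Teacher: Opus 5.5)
Your proposal is correct and follows the paper's route. The paper's proof is a one-line appeal to the local Picard--Lindel\"of theorem for the field $\Rs$ of Lemma~\ref{lem:local-lipschitz}, and you write out that argument in full: the contraction on $C([0,\tau];\mathcal B_\rho)$, the $C^1$ regularity via the fundamental theorem of calculus, and Gronwall uniqueness. Your extra care about the domain of $\Rs$, namely extension by continuity from the dense finitely supported core, is consistent with Definition~\ref{def:riccati}. The paper simply takes Lemma~\ref{lem:local-lipschitz} as giving a Lipschitz field on every bounded subset of $\Tw$, which also lets you run the uniqueness step on one ball containing both solutions without a continuation argument.
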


\begin{proof}
This is the local Picard--Lindel\"of theorem applied to the locally Lipschitz vector field $\Rs$, in the form used for signature SDEs by \citet{CuchieroSvalutoFerroTeichmann2023}.
\end{proof}

\subsection{Truncation, approximation, and measurable recovery}
\label{subsec:truncation-approximation}
For $N\geq 0$ let $\pi_{\leq N}:\Tw\to \Tw$ denote the canonical projection onto tensor levels of length at most $N$, and set $\pi_{>N}=I-\pi_{\leq N}$. Since $\Tw$ is the weighted $\ell^1$ sum of its homogeneous tensor levels, $\pi_{\leq N}a\to a$ in $\Tw$ for every $a\in\Tw$. The same projections are used in three distinct places: in Section~\ref{sec:theorem-a} to approximate the volatility functional $\inner{\ell}{\cW_t}$ by finite-level functionals, in Section~\ref{sec:theorem-c} to define the price-filtration completeness depth, and in Section~\ref{sec:theorem-d} to express the residual as a limit of finite Gram projections. The notation therefore carries analytic and financial content at the same time.

The following elementary lemma is used repeatedly to pass from finite tensor coordinates to the weighted completion.
\begin{lemma}[Compatibility of truncation and duality]
\label{lem:truncation-duality}
Let $a\in\Tw$ and let $\ell\in\Tstarw$. Then
\[
\inner{\ell}{a}=\lim_{N\to\infty}\inner{\pi_{\leq N}\ell}{\pi_{\leq N}a},
\]
and the convergence is absolute. If, in addition, $\ell$ satisfies H1, then
\[
\abs{\inner{\ell-\pi_{\leq N}\ell}{a}}
\leq \Bigl(\sum_{n>N}w(n)\abs{\ell_n}^2\Bigr)^{1/2}
     \Bigl(\sum_{n>N}w(n)^{-1}\abs{a_n}^2\Bigr)^{1/2}.
\]
\end{lemma}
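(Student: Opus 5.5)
The plan is to reduce everything to the levelwise structure of the pairing. By definition $\inner{\ell}{a}=\sum_{n\geq0}\inner{\ell_n}{a_n}$, and $\pi_{\leq N}$ acts levelwise. Hence $\inner{\pi_{\leq N}\ell}{\pi_{\leq N}a}=\sum_{n\leq N}\inner{\ell_n}{a_n}$ is exactly the $N$-th partial sum of that series. The first claim is therefore the statement that the series converges absolutely.

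\textbf{Absolute convergence.} For each level $n$ I will bound the term by
\[
\abs{\inner{\ell_n}{a_n}}\leq\abs{\ell_n}\abs{a_n}=\frac{\abs{\ell_n}}{w(n)}\,w(n)\abs{a_n}\leq\norm{\ell}_{\Tstarw}\,w(n)\abs{a_n}.
\]
This uses only the Euclidean (Hilbert–Schmidt) Cauchy–Schwarz inequality on the finite-dimensional space $(\R^d)^{\otimes n}$. Summing over $n$ gives
\[
\sum_n\abs{\inner{\ell_n}{a_n}}\leq\norm{\ell}_{\Tstarw}\norm{a}_w<\infty.
\]
This proves absolute convergence, and hence convergence of the partial sums to $\inner{\ell}{a}$. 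It also yields the tail estimate $\abs{\inner{\ell}{a}-\inner{\pi_{\leq N}\ell}{\pi_{\leq N}a}}\leq\norm{\ell}_{\Tstarw}\sum_{n>N}w(n)\abs{a_n}\to0$.

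\textbf{Quantitative bound under H1.} Since $\ell-\pi_{\leq N}\ell=\pi_{>N}\ell$ is supported on levels $n>N$, I get
\[
\inner{\ell-\pi_{\leq N}\ell}{a}=\sum_{n>N}\inner{\ell_n}{a_n}.
\]
I then write $\abs{\inner{\ell_n}{a_n}}\leq\bigl(w(n)^{1/2}\abs{\ell_n}\bigr)\bigl(w(n)^{-1/2}\abs{a_n}\bigr)$ and apply the discrete Cauchy–Schwarz inequality over $n>N$. This gives exactly the displayed product of tail sums.

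Two side checks are needed:
\begin{itemize}
\item The first factor is finite because H1 ($\sum_n w(n)\abs{\ell_n}^2<\infty$) is assumed.
\item The second factor is finite because $w$ is increasing with $w(0)=1$, so $w\geq1$. Hence $\sum_n w(n)^{-1}\abs{a_n}^2\leq\bigl(\sup_n\abs{a_n}\bigr)\sum_n\abs{a_n}\leq\norm{a}_w^2$.
\end{itemize}
In particular, the right-hand side tends to $0$ as $N\to\infty$.

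\textbf{Main obstacle.} There is no real obstacle. The only point needing care is to justify rearranging the pairing into levelwise sums, which absolute convergence from the first step provides. One must also check the finiteness of $\norm{a}_{2,w^{-1}}$ using $w\geq1$, rather than assuming it.
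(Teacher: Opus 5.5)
Your proposal is correct and follows the same route as the paper. The first assertion comes from the levelwise definition of the dual pairing, and your partial-sum identification together with the bound $\sum_n\abs{\inner{\ell_n}{a_n}}\leq\norm{\ell}_{\Tstarw}\norm{a}_w$ spells out what the paper states in one line. The second assertion is Cauchy--Schwarz applied level by level to the tail, and your check that $\sum_n w(n)^{-1}\abs{a_n}^2\leq\norm{a}_w^2$ (using $w\geq1$) fills in a finiteness detail the paper leaves implicit.
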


\begin{proof}
The first assertion is the definition of the dual pairing on a weighted direct sum, together with the continuity of the coordinate projections. The second assertion is Cauchy--Schwarz applied level by level to the tail. The estimate is the one used later in the proof of Proposition~\ref{prop:sharpness-A}: the tail of $\ell$ and the tail of the signature must be controlled by compatible weights, not by an unweighted summability convention.
\end{proof}

\begin{lemma}[Finite-coordinate measurability]
\label{lem:finite-coordinate-measurability}
For every finite set of multi-indices $\mathcal I$, the random vector
\[
\bigl(\inner{e_I}{\cW_t}\bigr)_{I\in\mathcal I},\qquad 0\leq t\leq T,
\]
is progressively measurable with respect to $\FWh$, and its terminal value belongs to $L^p(\FWh_T,\Pp)$ for every $p<\infty$.
\end{lemma}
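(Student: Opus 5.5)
The plan has two parts, one for progressive measurability and one for the moment bound. For measurability we will realise each coordinate as a Stratonovich iterated integral. For integrability we will bound each coordinate by a finite Wiener chaos expansion and apply hypercontractivity, or alternatively invoke Proposition~\ref{prop:signature-moment} directly.

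\textbf{Progressive measurability.} Since $\mathcal I$ is finite, it suffices to treat a single word $I=(i_1,\ldots,i_n)$ in the alphabet $\{0,1,\ldots,d\}$, where the letter $0$ stands for the time component of $\widehat W$. We argue by induction on $n=|I|$.
\begin{itemize}
\item For $n=0$ the coordinate is the constant $1$.
\item For $n\geq1$ with $I=I'i_n$, the coordinate is $\inner{e_I}{\cW_t}=\int_0^t \inner{e_{I'}}{\cW_s}\circ d\widehat W^{i_n}_s$.
\item By the induction hypothesis the integrand $\inner{e_{I'}}{\cW}$ is a continuous $\FW$-adapted semimartingale.
\item The Stratonovich integral equals the It\^o integral plus one half of the quadratic covariation. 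Both terms are continuous adapted processes, so the coordinate is continuous and adapted.
\end{itemize}
A continuous adapted process is progressively measurable. Finally, Theorem~\ref{thm:sig-unique} gives $\FWh_t=\FW_t$, so progressive measurability holds with respect to $\FWh$. Alternatively, each coordinate is a measurable function of the path of $\cW$ up to time $t$, which gives adaptedness to $\FWh$ directly.

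\textbf{Integrability.} We first convert the Stratonovich integrals to It\^o form. Each coordinate $\inner{e_I}{\cW_T}$ then becomes a finite linear combination of iterated It\^o integrals of deterministic integrands in $(t,W)$. Each such integral has total chaos order at most $|I|$, and each correction term contributes a factor $dt$. Hence $\inner{e_I}{\cW_T}$ lies in the finite sum of Wiener chaoses of order at most $|I|$. Its $L^2$ norm is finite and explicit, and each iterated integral over the simplex has second moment of order $T^{k}/k!$-type. On a finite chaos the $L^p$ and $L^2$ norms are equivalent by hypercontractivity:
\[
\norm{F}_{L^p}\leq (p-1)^{|I|/2}\norm{F}_{L^2}.
\]
Therefore every coordinate lies in $L^p(\FWh_T,\Pp)$ for all $p<\infty$.

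As a shortcut, we can instead apply Proposition~\ref{prop:signature-moment} with the constant weight $w\equiv1$, which is admissible with $r=1$. This gives
\[
\abs{\inner{e_I}{\cW_T}}\leq \abs{\cW_{T,|I|}}\leq \norm{\cW_T}_w,
\]
and the right-hand side has finite $p$th moment. Since $\mathcal I$ is finite, the vector-valued statement follows by Minkowski's inequality.

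\textbf{Main obstacle.} The only delicate point is justifying that the Stratonovich iterated integrals are well defined as continuous semimartingales at every level. This is exactly what the induction supplies, because each integrand is a semimartingale by the previous step. The remaining steps are routine.
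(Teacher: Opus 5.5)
Your proposal is correct. The measurability half follows the paper's argument, made more precise. The paper says only that the coordinates arise as limits of adapted Riemann sums for Stratonovich iterated integrals. Your induction writes each coordinate as $\int_0^t\inner{e_{I'}}{\cW_s}\circ d\widehat W^{i_n}_s$, hence a continuous adapted semimartingale. Keeping track of continuity is what upgrades adaptedness to progressive measurability, a step the paper leaves implicit.

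Your ``shortcut'' for integrability is exactly the paper's argument: Proposition~\ref{prop:signature-moment} with $w\equiv1$, $r=1$, together with $\abs{\inner{e_I}{\cW_T}}\le\norm{\cW_T}_w$.

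Your main integrability route is genuinely different. The paper controls a single coordinate by the full weighted norm summed over all tensor levels. It therefore relies on the factorial-decay estimate of the appendix and an infinite-level summation, which is more than this lemma needs. Your chaos argument stays entirely at finite level:
\begin{itemize}
\item the Stratonovich--It\^o conversion gives a finite combination of iterated It\^o integrals against $dt$ and $dW$ with constant integrands;
\item by stochastic Fubini, each of these is a multiple Wiener integral of a deterministic kernel, of order equal to its number of Brownian letters, so at most $|I|$;
\item hypercontractivity then gives $\norm{F}_{L^p}\le(p-1)^{|I|/2}\norm{F}_{L^2}$ for $p\ge2$ on the sum of chaoses of order at most $|I|$. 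To see this, write $F=P_tG$ with $P_t$ the Ornstein--Uhlenbeck semigroup and $e^{2t}=p-1$.
\end{itemize}
This route is self-contained, gives explicit $p$-dependence, and fits the lemma's own remark that no infinite-dimensional issue arises. Its only cost is importing Gaussian hypercontractivity instead of reusing a result already stated in the paper.
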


\begin{proof}
Progressive measurability follows from construction of the Stratonovich iterated integrals by limits of adapted Riemann sums. The $L^p$ bound is a finite-dimensional consequence of Proposition~2.12. No infinite-dimensional measurability issue is involved at this stage; the weighted completion enters only after the finite coordinate estimates have been summed with $w$.
\end{proof}

\begin{remark}[Why the price filtration is separated from the signature filtration]
The equality $\FWh_t=\FW_t$ in Theorem~2.8 is a statement about the prolonged Brownian signature. The price filtration $\FS_t$ is smaller unless the map from the path of $W$ to the path of $S$ is invertible after the volatility parameter is fixed. Sections~\ref{sec:theorem-c} and~\ref{sec:theorem-d} are formulated on $\FS_T$ precisely to avoid claiming full Brownian-filtration completeness. Incomplete models such as rough Bergomi remain incomplete in this sense even though their driving Brownian path is encoded in the full prolonged signature.
\end{remark}

\subsection{Conventions}

All processes are right-continuous with left limits. Constants $C$, $C_p$, $C_w$, and $C_{p,w}$ depend only on the indicated parameters and may change from line to line. The notation $a\lesssim b$ means $a\leq Cb$, and $a\asymp b$ means $a\lesssim b$ and $b\lesssim a$. Inequalities in $\Tw$ are componentwise. Probability measures are equivalent, written $\sim$, on $\FWh_T$ unless stated otherwise. All subsequent results are formulated in this notation.

\section{Theorem A: existence and uniqueness on weighted tensor algebra}\label{sec:theorem-a}

\subsection{Statement}

\begin{theorem}[Global existence and uniqueness on weighted tensor algebra]\label{thm:A}
Let $\ell$ be a signature parameter and let $w$ be a weight. Assume:
\begin{enumerate}[label=H\arabic*.]
\item $\displaystyle \norm{\ell}_{2,w}^2=\sum_{n\geq0}w(n)\abs{\ell_n}^2<\infty$.
\item $w$ is increasing, $w(0)=1$, $w(m+n)\leq C_w w(m)w(n)$ for some $C_w\geq1$, and $w(n)\leq r^n$ for some $r>0$.
\item With $\xi_t:=\inner{\ell}{\cW_t}$, the quadratic variation satisfies
\[
\E\exp\left(\lambda\int_0^T \xi_s^2\,ds\right)<\infty
\qquad\text{for every }\lambda>0.
\]
\end{enumerate}
Let $s_0>0$ be fixed. Then the signature stochastic differential equation
\[
dS_t=S_t\inner{\ell}{\cW_t}\,dB_t,
\qquad S_0=s_0,
\]
admits a unique strong solution $S\in C([0,T],(0,\infty))$ adapted to $\FW$, and
\[
\E\sup_{t\leq T}S_t^2<\infty.
\]
The solution is strictly positive almost surely. Furthermore, H1 is sharp for the weighted-norm estimates of Section~\ref{subsec:moment-bounds}: Proposition~\ref{prop:sharpness-A} exhibits a one-dimensional parameter $\ell$ and a weight $\widetilde w$ satisfying H2 but with $\sum_n\widetilde w(n)\abs{\ell_n}^2=+\infty$, for which the weighted Cauchy--Schwarz control of $\int_0^T\inner{\ell}{\cW_s}^2\,ds$ used in the existence proof becomes unavailable.
\end{theorem}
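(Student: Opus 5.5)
The plan is to construct the solution explicitly as a Doléans--Dade exponential and then argue uniqueness, positivity, the $L^2$ bound and the sharpness statement in turn. First I will check that the volatility functional $\xi_t=\inner{\ell}{\cW_t}$ is a well-defined progressively measurable process with $\int_0^T\xi_s^2\,ds<\infty$ almost surely. By Lemma~\ref{lem:finite-coordinate-measurability}, each truncation $\xi^N_t=\inner{\pi_{\leq N}\ell}{\pi_{\leq N}\cW_t}$ is progressively measurable. Lemma~\ref{lem:truncation-duality} combined with H1 gives
\[
\abs{\xi_t-\xi^N_t}\leq\Bigl(\sum_{n>N}w(n)\abs{\ell_n}^2\Bigr)^{1/2}\norm{\cW_t}_{2,w^{-1}}.
\]
Since $w\geq1$ is increasing, $\norm{\cW_t}_{2,w^{-1}}$ is dominated by the unweighted $\ell^2$ norm, which is at most $\norm{\cW_t}_1$. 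Proposition~\ref{prop:signature-moment} with the trivial weight bounds all moments of this quantity uniformly in $t\leq T$. Hence $\xi^N\to\xi$ in $L^2(\Omega\times[0,T])$, $\xi$ inherits progressive measurability, and $\E\int_0^T\xi_s^2\,ds<\infty$. This is exactly the weighted Cauchy--Schwarz control that the sharpness clause refers to.

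Next I define $M_t=\int_0^t\xi_s\,dB_s$, a continuous $\FW$-local martingale, and set $S_t=s_0\exp\{M_t-\tfrac12\langle M\rangle_t\}$. Itô's formula shows that $S$ solves the SDE, and $S$ is continuous and strictly positive by construction. For uniqueness, I take any other continuous adapted solution $\widetilde S$ and localise with $\tau_k=\inf\{t:\abs{\widetilde S_t}+\langle M\rangle_t\geq k\}$. Applying Itô to $\widetilde S_{t\wedge\tau_k}/S_{t\wedge\tau_k}$, which is legitimate because $S>0$, the drift and martingale parts cancel, so the ratio equals $1$ up to $\tau_k$. Letting $k\to\infty$ then gives $\widetilde S=S$. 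An equivalent route is Gronwall on $\E\abs{S-\widetilde S}^2_{t\wedge\tau_k}$ for a linear SDE with a random coefficient.

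For the square-integrability I write
\[
S_t^2=s_0^2\,\mathcal E(2M)_t\exp\{\langle M\rangle_t\}.
\]
H3 with $\lambda=2$ is Novikov's condition for $2M$, so $\mathcal E(2M)$ is a true martingale. Applying Cauchy--Schwarz to the running maximum, I bound
\[
\E\sup_t S_t^2\leq s_0^2\bigl(\E\sup_t\mathcal E(2M)_t^2\bigr)^{1/2}\bigl(\E e^{2\langle M\rangle_T}\bigr)^{1/2}.
\]
Here I write $\mathcal E(2M)^2=\mathcal E(4M)e^{4\langle M\rangle}$ and use Doob's $L^2$ inequality. The terminal moment is controlled by another Cauchy--Schwarz step with $\mathcal E(8M)$, which is a martingale by H3 with $\lambda=32$. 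This leaves finite factors of the form $\E e^{c\langle M\rangle_T}$ with $c<\infty$. These are finite because H3 holds for every $\lambda>0$. Keeping track of these exponents is the step I expect to be the main bookkeeping obstacle, although it is routine once H3 is assumed for all $\lambda$.

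Finally, for the sharpness clause I take $d=1$ and a weight $\widetilde w(n)=r^n$ with $r>1$, which satisfies H2. I then choose $\ell_n$ supported on the words $(1,\dots,1)$ with $\abs{\ell_n}^2=r^{-n}$, so that $\sum\widetilde w(n)\abs{\ell_n}^2=\sum 1=\infty$. With this choice the bound from Lemma~\ref{lem:truncation-duality} degenerates: its tail factor is infinite for every $N$, so the Cauchy--Schwarz estimate of the first step yields no control of $\int_0^T\xi_s^2\,ds$. That is all the statement asserts, and Proposition~\ref{prop:sharpness-A} records it.
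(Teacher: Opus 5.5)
Your proof is correct and follows the paper's skeleton. The moment bound for $\xi$ comes from the weighted Cauchy--Schwarz inequality with the trivial-weight signature moment (Lemmas~\ref{lem:pairing-moment}--\ref{lem:qv-integrability}), existence from $S=s_0\mathcal E(M)$, and the $L^2$ bound from H3 plus Doob.

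The genuine difference is uniqueness. The paper (Proposition~\ref{prop:uniqueness}) runs Gronwall on $\E\abs{\Delta_{t\wedge\tau_m}}^2$ with $\tau_m=\inf\{t:\abs{\xi_t}\geq m\}$. That argument only covers solutions with $\E\sup_t\abs{S^{(i)}_t}^2<\infty$; Lemma~\ref{lem:localisation-uniqueness-expanded} relaxes this to positive continuous solutions. It also silently uses continuity of the infinite series $\xi$ to get $\tau_m\uparrow T$. Your computation $d(\widetilde S/S)=0$ needs only $S>0$ and $\int_0^T\xi_s^2\,ds<\infty$. It therefore gives uniqueness among all continuous semimartingale solutions, with no integrability or sign condition on $\widetilde S$ and no path regularity of $\xi$. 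That is both simpler and stronger than the paper's route.

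For the $L^2$ bound, the paper applies Doob's $L^q$ inequality to the submartingale $\mathcal E(M)^2$. It then uses H\"older with the splitting $\exp(\alpha M_T+\beta V_T)=\mathcal E(p\alpha M)_T^{1/p}\exp\{(\beta+\tfrac12p\alpha^2)V_T\}$. Your factorisation $\mathcal E(M)^2=\mathcal E(2M)e^{\langle M\rangle}$, with Doob applied to the true martingale $\mathcal E(2M)$, is equivalent. However, your last step must be exactly that splitting: write $\exp(4M_T-4\langle M\rangle_T)=\mathcal E(8M)_T^{1/2}e^{12\langle M\rangle_T}$ and obtain $(\E\mathcal E(8M)_T)^{1/2}(\E e^{24\langle M\rangle_T})^{1/2}$. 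If you instead square $\mathcal E(4M)$, you get $\E[\mathcal E(8M)_Te^{16\langle M\rangle_T}]$ and the argument regresses. Also, only $\E\mathcal E(8M)_T\leq1$ is used, which holds for any positive local martingale, so Novikov with $\lambda=32$ is unnecessary.

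Your sharpness pair $\widetilde w(n)=r^n$, $\abs{\ell_n}^2=r^{-n}$ satisfies the clause literally. But the same $\ell$ satisfies H1 for the trivial weight $w\equiv1$, so Lemma~\ref{lem:pairing-moment} still controls $\int_0^T\xi_s^2\,ds$ for this $\ell$; only your weight is mismatched. The paper's $\ell_n=n^{-1/2}$ has $\sum_n\abs{\ell_n}^2=\infty$. Since H2 forces $w\geq1$, it violates H1 for every admissible weight, i.e.\ $\mathcal W(\ell)=\emptyset$, which makes it the more informative illustration. Neither example shows that $\E\int_0^T\xi_s^2\,ds$ actually diverges, because factorial decay of the signature keeps it finite. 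The statement, however, only claims that the estimate becomes unavailable.
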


\subsection{Strategy of proof}

The proof uses the linear structure of the signature SDE. Section~\ref{subsec:moment-bounds} proves that the volatility process $\xi_t=\inner{\ell}{\cW_t}$ is square-integrable on $[0,T]$. Section~\ref{subsec:exponential-construction} constructs the solution as the stochastic exponential $s_0\mathcal E(M)$ of the martingale $M_t=\int_0^t\xi_s\,dB_s$. Section~\ref{subsec:uniqueness} proves uniqueness by localisation. Section~\ref{subsec:admissible-weights} records the precise admissible-weight class used by the estimates.

\subsection{Moment bounds}\label{subsec:moment-bounds}

\begin{lemma}[Signature moment for pairings]\label{lem:pairing-moment}
Assume H1 and H2. For every $p\in[1,\infty)$ and every $t\in[0,T]$,
\[
\E\abs{\inner{\ell}{\cW_t}}^p\leq C_{p,w,T}\norm{\ell}_{2,w}^p.
\]
\end{lemma}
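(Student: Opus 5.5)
The plan is to combine the weighted Cauchy--Schwarz estimate with the signature moment bound of Proposition~\ref{prop:signature-moment}. Pointwise in $\omega$,
\[
\abs{\inner{\ell}{\cW_t}}\leq\norm{\ell}_{2,w}\norm{\cW_t}_{2,w^{-1}}.
\]
Raising to the power $p$ and taking expectations gives
\[
\E\abs{\inner{\ell}{\cW_t}}^p\leq\norm{\ell}_{2,w}^p\,\E\norm{\cW_t}_{2,w^{-1}}^p .
\]
So it suffices to bound $\E\norm{\cW_t}_{2,w^{-1}}^p$ uniformly in $t\leq T$ by a constant $C_{p,w,T}$ that does not depend on $\ell$.

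Next I compare the Hilbert norm with the weighted $\ell^1$ norm. Since $w$ is increasing with $w(0)=1$, we have $w(n)^{-1}\leq 1\leq w(n)$. Hence
\[
\norm{a}_{2,w^{-1}}^2=\sum_n w(n)^{-1}\abs{a_n}^2\leq\Bigl(\sum_n\abs{a_n}\Bigr)^2\leq\norm{a}_w^2 ,
\]
using $\ell^2\subset\ell^1$ with norm at most one. This yields $\norm{\cW_t}_{2,w^{-1}}\leq\norm{\cW_t}_w$. The growth condition $w(n)\leq r^n$ from H2 is exactly the hypothesis of Proposition~\ref{prop:signature-moment}, which then gives
\[
\E\norm{\cW_t}_w^p\leq M(T,p,r)\qquad\text{for all }t\leq T.
\]
We may therefore set $C_{p,w,T}:=M(T,p,r)$, which is at most $C_{p,d}\exp\{C_{p,d}r^2T\}$ by Remark~\ref{rem:explicit-moment-bound}.

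The only step needing care is the finiteness of $\norm{\cW_t}_w$, so that the pairing is well defined and Cauchy--Schwarz applies to the full series rather than only to truncations. I would handle this through Lemma~\ref{lem:truncation-duality}. First apply the estimate to $\pi_{\leq N}\ell$ paired with $\pi_{\leq N}\cW_t$, which involves finitely many coordinates and is legitimate by Lemma~\ref{lem:finite-coordinate-measurability}. Then pass to the limit $N\to\infty$ using Fatou's lemma. The bound is uniform in $N$, since $\norm{\pi_{\leq N}\ell}_{2,w}\leq\norm{\ell}_{2,w}$, and the almost-sure convergence of the truncated pairings holds because $\cW_t\in\Tw$ almost surely, which follows from the moment bound.

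The main obstacle, if any, lies inside Proposition~\ref{prop:signature-moment}. The factorial decay of Brownian signature levels, roughly $\norm{\abs{\cW^{(n)}_t}}_{L^p}\lesssim (C_p t)^{n/2}/\sqrt{n!}$ up to time-direction effects, must dominate the geometric weight $r^n$. This I take as given from the paper.
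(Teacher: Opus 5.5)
Your argument is correct and is essentially the paper's proof: weighted Cauchy--Schwarz, then $w(n)^{-1}\leq 1$ to dominate $\norm{\cW_t}_{2,w^{-1}}$ by an $\ell^1$-type norm, then Proposition~\ref{prop:signature-moment}. The one difference is where you stop. The paper stops at the unweighted norm $\sum_n\abs{(\cW_t)_n}$ and applies the proposition to the constant weight $w_0\equiv1$ with $r=1$, so it does not need the growth part of H2 and its constant is independent of $w$. You go one step further to $\norm{\cW_t}_w$ and use the rate $r$ from H2, which is equally valid; note that the inequality you use is the embedding $\ell^1\hookrightarrow\ell^2$, not $\ell^2\subset\ell^1$.
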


\begin{proof}
Cauchy--Schwarz in the weighted tensor levels gives
\[
\abs{\inner{\ell}{\cW_t}}\leq \norm{\ell}_{2,w}\norm{\cW_t}_{2,w^{-1}}.
\]
Since $w$ is increasing and $w(0)=1$, one has $w(n)^{-1}\leq1$ for all $n$. Hence $\norm{a}_{2,w^{-1}}\leq \sum_{n\geq0}\abs{a_n}=\norm{a}_{w_0}$ for the constant weight $w_0(n)=1$, which satisfies the growth bound of Proposition~\ref{prop:signature-moment} with $r=1$. Proposition~\ref{prop:signature-moment}, applied to $w_0$ with $r=1$, gives $\sup_{t\leq T}\E\norm{\cW_t}_{2,w^{-1}}^p<\infty$; the factorial decay of the Brownian iterated integrals dominates the boundary growth rate $r=1$. The displayed estimate follows.
\end{proof}

\begin{lemma}[Quadratic-variation integrability]\label{lem:qv-integrability}
Under H1 and H2,
\[
\E\int_0^T \inner{\ell}{\cW_s}^2\,ds <\infty.
\]
Consequently $M_t:=\int_0^t\inner{\ell}{\cW_s}\,dB_s$ is a continuous square-integrable martingale.
\end{lemma}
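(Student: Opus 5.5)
The plan is to reduce the claim to Lemma~\ref{lem:pairing-moment} with $p=2$, and then to use Fubini--Tonelli and the It\^o isometry. The integrand $(s,\omega)\mapsto\inner{\ell}{\cW_s}^2$ is nonnegative, so Tonelli's theorem gives
\[
\E\int_0^T\inner{\ell}{\cW_s}^2\,ds=\int_0^T\E\inner{\ell}{\cW_s}^2\,ds .
\]
Lemma~\ref{lem:pairing-moment} bounds the inner expectation by $C_{2,w,T}\norm{\ell}_{2,w}^2$, and this bound is uniform in $s\leq T$. Hence the left-hand side is at most $T\,C_{2,w,T}\norm{\ell}_{2,w}^2$, which is finite by H1.

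The point that needs care before invoking Tonelli is joint measurability of $\xi_s=\inner{\ell}{\cW_s}$. I would write $\xi$ as the limit of the finite-level functionals $\xi^N_s=\inner{\pi_{\leq N}\ell}{\pi_{\leq N}\cW_s}$. Each $\xi^N$ is a finite linear combination of coordinates, so it is progressively measurable with respect to $\FWh=\FW$ by Lemma~\ref{lem:finite-coordinate-measurability}. Lemma~\ref{lem:truncation-duality} gives $\xi^N_s\to\xi_s$ pointwise in $(s,\omega)$ wherever $\cW_s\in\Tw$. This holds a.s. for each fixed $s$ by Proposition~\ref{prop:signature-moment}. To get a set of full measure in $(s,\omega)$ jointly, I would pass to a limit along the full sequence $N\to\infty$ in $L^2(ds\otimes d\Pp)$. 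This uses the tail estimate of Lemma~\ref{lem:truncation-duality}:
\[
\E\int_0^T\abs{\xi_s-\xi^N_s}^2ds\leq \Bigl(\sum_{n>N}w(n)\abs{\ell_n}^2\Bigr)\int_0^T\E\norm{\cW_s}_{2,w^{-1}}^2\,ds\to0 .
\]
The limit of progressively measurable processes in this space is again progressively measurable, so $\xi$ admits a progressive version. This is the version used in the stochastic integral.

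With $\xi\in L^2(\Omega\times[0,T])$ progressive, the It\^o integral $M_t=\int_0^t\xi_s\,dB_s$ against the $\FW$-Brownian motion $B=\int\eta\cdot dW$ is well defined. It is a continuous martingale satisfying $\E M_t^2=\E\int_0^t\xi_s^2ds$, hence it is square-integrable. Doob's inequality then gives $\E\sup_{t\leq T}M_t^2\leq4\E\int_0^T\xi_s^2ds<\infty$, which will be convenient in the stochastic-exponential construction.

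I expect the only genuine obstacle to be the measurability and version issue in the infinite-dimensional pairing. The integrability itself is immediate from Lemma~\ref{lem:pairing-moment}. The approximation by $\xi^N$ resolves the measurability question, and as a by-product shows that $M^N=\int\xi^N dB\to M$ in $\mathcal H^2$.
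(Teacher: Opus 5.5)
Your proposal is correct and follows the paper's proof: Lemma~\ref{lem:pairing-moment} with $p=2$, which is uniform in $s\leq T$, combined with Fubini--Tonelli gives $\E\int_0^T\inner{\ell}{\cW_s}^2\,ds\leq T\,C_{2,w,T}\norm{\ell}_{2,w}^2<\infty$, and the It\^o integral is then a continuous square-integrable martingale. Your measurability discussion goes beyond what the paper writes and is sound, but it can be shortened. The set where $\norm{\cW_s}_w<\infty$ is progressive and has full $ds\otimes d\Pp$-measure, and on it $\xi$ is the pointwise limit of the progressive processes $\xi^N$, so $\xi$ is progressive and the $L^2$ Cauchy argument is not needed.
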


\begin{proof}
Lemma~\ref{lem:pairing-moment} with $p=2$ and Fubini's theorem give
\[
\E\int_0^T \inner{\ell}{\cW_s}^2\,ds
=\int_0^T \E\abs{\inner{\ell}{\cW_s}}^2\,ds
\leq T C_{2,w,T}\norm{\ell}_{2,w}^2<\infty.
\]
The stochastic integral defining $M$ is therefore well defined and square-integrable on $[0,T]$.
\end{proof}

\begin{lemma}[Burkholder--Davis--Gundy estimate with signature integrand]\label{lem:bdg-signature}
Let $\varphi$ be predictable and satisfy $\E\int_0^T\varphi_s^2\,ds<\infty$. For every $p\geq2$,
\[
\E\sup_{t\leq T}\left|\int_0^t \varphi_s\inner{\ell}{\cW_s}\,dB_s\right|^p
\leq C_p\,\E\left(\int_0^T \varphi_s^2\inner{\ell}{\cW_s}^2\,ds\right)^{p/2}.
\]
Moreover,
\[
\E\sup_{t\leq T}\left|\int_0^t \varphi_s\inner{\ell}{\cW_s}\,dB_s\right|^p
\leq C_{p,T}\,\E\int_0^T \abs{\varphi_s}^{p}\abs{\inner{\ell}{\cW_s}}^{p}\,ds.
\]
\end{lemma}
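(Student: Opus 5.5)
The first inequality is the Burkholder--Davis--Gundy inequality applied to the continuous local martingale $N_t:=\int_0^t\varphi_s\inner{\ell}{\cW_s}\,dB_s$. Its quadratic variation is $\langle N\rangle_T=\int_0^T\varphi_s^2\inner{\ell}{\cW_s}^2\,ds$, because $B=\int\eta\cdot dW$ with $\abs{\eta}=1$ gives $d\langle B\rangle_t=dt$.

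The integrand is predictable: $\varphi$ is predictable by assumption, and $\inner{\ell}{\cW_s}$ is continuous and adapted. The latter follows from Lemma~\ref{lem:finite-coordinate-measurability} for each truncation $\inner{\pi_{\leq N}\ell}{\cW_s}$. These truncations converge to $\inner{\ell}{\cW_s}$ by Lemma~\ref{lem:truncation-duality}. To get convergence uniformly on $[0,T]$, and hence continuity of the limit, I would bound the tail by H1 and a supremum-in-time version of Proposition~\ref{prop:signature-moment}.

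Since BDG holds for arbitrary continuous local martingales, with both sides possibly infinite, no integrability assumption is needed for this step. If one prefers to work with true martingales, stop at $\tau_k=\inf\{t:\langle N\rangle_t\geq k\}$, apply BDG to the bounded martingale $N^{\tau_k}$, and let $k\to\infty$ by monotone convergence. The hypothesis $\E\int_0^T\varphi_s^2\,ds<\infty$ plays no role here beyond guaranteeing that the integral is defined.

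For the second inequality, I apply Hölder's (Jensen's) inequality in time to the right-hand side of the first. For $p\geq2$, the exponents $p/2$ and $p/(p-2)$ give
\[
\Bigl(\int_0^T \varphi_s^2\inner{\ell}{\cW_s}^2\,ds\Bigr)^{p/2}
\leq T^{\frac p2-1}\int_0^T \abs{\varphi_s}^p\abs{\inner{\ell}{\cW_s}}^p\,ds .
\]
Taking expectations yields the claim with $C_{p,T}=C_pT^{p/2-1}$. When $p=2$ this is an equality with constant $1$.

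The only point requiring care is the predictability and measurability of the infinite-dimensional pairing $s\mapsto\inner{\ell}{\cW_s}$. I handle it by the truncation argument above, which is where H1 and H2 enter, in the same way as in Lemma~\ref{lem:pairing-moment}. Everything else is standard.
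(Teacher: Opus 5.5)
Your proof is correct and follows the paper's argument. The first bound is BDG applied to the continuous local martingale $\int_0^\cdot\varphi_s\inner{\ell}{\cW_s}\,dB_s$, and the second is H\"older's inequality in time with constant $C_pT^{p/2-1}$. Your added justification that $s\mapsto\inner{\ell}{\cW_s}$ is continuous and adapted, obtained from the truncations, supplies a measurability detail the paper leaves implicit.
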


\begin{proof}
The first inequality is exactly the Burkholder--Davis--Gundy inequality applied to the stochastic integral. The second follows from H\"older's inequality in time,
\[
\left(\int_0^T f_s^2\,ds\right)^{p/2}\leq T^{p/2-1}\int_0^T f_s^p\,ds,
\]
with $f_s=\abs{\varphi_s}\abs{\inner{\ell}{\cW_s}}$.
\end{proof}

\subsection{Construction by stochastic exponential}\label{subsec:exponential-construction}

\begin{proposition}[Existence]\label{prop:existence}
Under H1--H3, the process
\[
S_t=s_0\mathcal E(M)_t
=s_0\exp\left(\int_0^t\inner{\ell}{\cW_s}\,dB_s-\frac{1}{2}\int_0^t\inner{\ell}{\cW_s}^2\,ds\right)
\]
is a strong solution of the signature SDE on $[0,T]$.
\end{proposition}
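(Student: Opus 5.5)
We check directly that the displayed process solves the equation, using It\^o's formula and the integrability already established. By Lemma~\ref{lem:qv-integrability}, the integrand $\xi_s=\inner{\ell}{\cW_s}$ is progressively measurable with respect to $\FW$. This follows from Lemma~\ref{lem:finite-coordinate-measurability} and the truncation limit of Lemma~\ref{lem:truncation-duality}, since $\xi_s$ is an almost sure limit of the finite-level pairings $\inner{\pi_{\leq N}\ell}{\pi_{\leq N}\cW_s}$. Moreover $\E\int_0^T\xi_s^2\,ds<\infty$. Hence $M_t=\int_0^t\xi_s\,dB_s$ is a well-defined continuous square-integrable $\FW$-martingale with $\langle M\rangle_t=\int_0^t\xi_s^2\,ds$, which is finite almost surely. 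The process $X_t:=M_t-\frac12\langle M\rangle_t$ is then a continuous $\FW$-semimartingale, and $S_t=s_0e^{X_t}$ is continuous, adapted and strictly positive, with $S_0=s_0$.

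The key step is It\^o's formula applied to $f(x)=s_0e^x$ and the semimartingale $X$:
\[
dS_t=S_t\,dX_t+\tfrac12 S_t\,d\langle X\rangle_t
=S_t\xi_t\,dB_t-\tfrac12S_t\xi_t^2\,dt+\tfrac12S_t\xi_t^2\,dt
=S_t\inner{\ell}{\cW_t}\,dB_t .
\]
This uses $\langle X\rangle=\langle M\rangle$, because the finite-variation part does not contribute. To see that the resulting stochastic integral is legitimate, we must check that $\int_0^T S_t^2\xi_t^2\,dt<\infty$ almost surely. This holds because $S$ has continuous paths, so $\sup_{t\leq T}S_t<\infty$ pathwise, and $\int_0^T\xi_t^2\,dt<\infty$ almost surely. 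Thus $S$ satisfies the integral form $S_t=s_0+\int_0^tS_u\inner{\ell}{\cW_u}\,dB_u$ for all $t\leq T$ almost surely, so it is a strong solution in the sense of Theorem~\ref{thm:A}.

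The only point requiring care, and the step I expect to be the main obstacle, is the measurability and the identification of $\xi$ as a genuine predictable integrand. The pairing $\inner{\ell}{\cW_s}$ is an infinite sum of Stratonovich iterated integrals, so we need the limit to exist almost surely simultaneously for almost every $s$, and to be jointly measurable in $(s,\omega)$. I would obtain this by showing that $\pi_{\leq N}$-truncations converge in $L^2(ds\otimes d\Pp)$. The tail estimate of Lemma~\ref{lem:truncation-duality} combined with Lemma~\ref{lem:pairing-moment} bounds the error by $\bigl(\sum_{n>N}w(n)\abs{\ell_n}^2\bigr)\sup_s\E\norm{\cW_s}^2_{2,w^{-1}}$, and this tends to $0$ by H1. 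We then pass to an almost surely convergent subsequence, which yields a progressive version. H3 is not needed for existence itself. It enters only afterwards, through Novikov, to obtain the true-martingale property and the bound $\E\sup_{t\leq T}S_t^2<\infty$ via Doob's inequality applied to $\mathcal E(M)^2=\mathcal E(2M)\exp(\langle M\rangle)$ together with Cauchy--Schwarz.
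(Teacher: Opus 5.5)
Your proof is correct and is essentially the paper's argument: Lemma~\ref{lem:qv-integrability} makes $M$ a continuous square-integrable martingale, and It\^o's formula for $s_0\mathcal E(M)$ gives $dS_t=S_t\,dM_t=S_t\inner{\ell}{\cW_t}\,dB_t$. Your checks on the progressive version of $\xi$ and on $\int_0^TS_t^2\xi_t^2\,dt<\infty$ a.s.\ make explicit what the paper leaves implicit, and, as you note, H3 is not used in this step. One small fix in your closing aside: Cauchy--Schwarz closes with the split $\mathcal E(M)_T^2=\mathcal E(4M)_T^{1/2}e^{3\langle M\rangle_T}$, whereas the split $\mathcal E(2M)_Te^{\langle M\rangle_T}$ does not close.
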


\begin{proof}
Lemma~\ref{lem:qv-integrability} makes $M$ a continuous square-integrable martingale. The Dol\'eans--Dade exponential $\mathcal E(M)$ is therefore a well-defined continuous positive local martingale. It\^o's formula for stochastic exponentials gives
\[
d\mathcal E(M)_t=\mathcal E(M)_t\,dM_t=\mathcal E(M)_t\inner{\ell}{\cW_t}\,dB_t.
\]
Multiplying by $s_0$ yields $dS_t=S_t\inner{\ell}{\cW_t}\,dB_t$ and $S_0=s_0$.
\end{proof}

\begin{proposition}[Square-integrability of the solution]\label{prop:square-integrability}
Under H3,
\[
\E\sup_{t\leq T}S_t^2<\infty.
\]
\end{proposition}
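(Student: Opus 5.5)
We prove $\E\sup_{t\le T}S_t^2<\infty$ by reducing it, via Doob's $L^2$ maximal inequality, to a bound on $\E S_T^2$, and bounding that by H3 through a Cauchy--Schwarz splitting of $\mathcal E(M)^2$. Write $\langle M\rangle_t=\int_0^t\xi_s^2\,ds$ with $\xi_s=\inner{\ell}{\cW_s}$.

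The first step is to verify that $\mathcal E(M)$ is a true martingale. H3 with $\lambda=\tfrac12$ is exactly Novikov's condition $\E\exp(\tfrac12\langle M\rangle_T)<\infty$, so $S=s_0\mathcal E(M)$ is a positive martingale on $[0,T]$. This matters because $S_T$ then controls $S_t$ for all $t\le T$. Doob's inequality gives
\[
\E\sup_{t\le T}S_t^2\le 4\,\E S_T^2 .
\]
Alternatively, one can localise with $\tau_n=\inf\{t:S_t\ge n\}\wedge T$, apply Doob to the bounded martingale $S^{\tau_n}$, and let $n\to\infty$ by monotone convergence once the bound on $\E S_{\tau_n}^2$ is uniform in $n$. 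The stopping-time argument below gives that uniformity automatically.

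The second step bounds $\E S_T^2$, or more generally $\E S_\tau^2$ for any stopping time $\tau\le T$. We use the algebraic identity
\[
\mathcal E(M)_\tau^2=\exp\bigl(2M_\tau-\langle M\rangle_\tau\bigr)=\exp\bigl(2M_\tau-4\langle M\rangle_\tau\bigr)^{1/2}\exp\bigl(3\langle M\rangle_\tau\bigr)^{1/2}.
\]
The first factor is $\mathcal E(4M)_\tau^{1/2}$. Cauchy--Schwarz gives
\[
\E\,\mathcal E(M)_\tau^2\le\bigl(\E\,\mathcal E(4M)_\tau\bigr)^{1/2}\bigl(\E e^{3\langle M\rangle_T}\bigr)^{1/2}.
\]
Since $\mathcal E(4M)$ is a positive local martingale, hence a supermartingale, $\E\,\mathcal E(4M)_\tau\le1$ by optional stopping for nonnegative supermartingales. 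H3 with $\lambda=3$ makes the last factor finite. Hence $\E S_\tau^2\le s_0^2\bigl(\E e^{3\langle M\rangle_T}\bigr)^{1/2}$ uniformly over stopping times $\tau\le T$. Combined with the first step, this gives $\E\sup_{t\le T}S_t^2\le 4s_0^2\bigl(\E e^{3\langle M\rangle_T}\bigr)^{1/2}<\infty$.

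The main point requiring care is that Doob's inequality needs $S$ to be a genuine martingale, or at least a submartingale, up to $T$. Novikov via H3 supplies this directly. If one prefers to avoid Novikov, the localisation route works instead: $S^{\tau_n}$ is bounded, Doob applies to it, and the stopping-time bound from the second step is uniform in $n$. Strict positivity is immediate from the exponential form. Only H3 is used here, with the specific exponents $\lambda=\tfrac12$ and $\lambda=3$; H1--H2 enter only through the well-definedness of $M$ in Lemma~\ref{lem:qv-integrability}.
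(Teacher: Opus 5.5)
\textbf{An error in the key computation.} Your plan works, but the central computation in your second step is wrong as written. The claimed identity
\[
\exp\bigl(2M_\tau-\langle M\rangle_\tau\bigr)=\exp\bigl(2M_\tau-4\langle M\rangle_\tau\bigr)^{1/2}\exp\bigl(3\langle M\rangle_\tau\bigr)^{1/2}
\]
is false. Its right-hand side multiplies out to $\exp(M_\tau-\tfrac12\langle M\rangle_\tau)=\mathcal E(M)_\tau$, not $\mathcal E(M)_\tau^2$. Also, its first factor $\exp(M_\tau-2\langle M\rangle_\tau)$ is not $\mathcal E(4M)_\tau^{1/2}=\exp(2M_\tau-4\langle M\rangle_\tau)$. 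The correct splitting is $\mathcal E(M)_\tau^2=\mathcal E(4M)_\tau^{1/2}\exp(3\langle M\rangle_\tau)$, so Cauchy--Schwarz gives
\[
\E\,\mathcal E(M)_\tau^2\le\bigl(\E\,\mathcal E(4M)_\tau\bigr)^{1/2}\bigl(\E\, e^{6\langle M\rangle_T}\bigr)^{1/2}.
\]

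The exponent $3$ in your bound is not merely unjustified; it is false for general continuous $M$. Take $\xi=\sigma$ on $[0,1]$ and $\xi=\sqrt k\,\mathbf 1_{\{B_1>c\}}$ on $(1,2]$, and let $\bar\Phi$ denote the standard normal tail. Then $\E\,\mathcal E(M)_2^2=e^{\sigma^2}\bigl(1+(e^k-1)\bar\Phi(c-2\sigma)\bigr)$ and $\E\,e^{3\langle M\rangle_2}=e^{3\sigma^2}\bigl(1+(e^{3k}-1)\bar\Phi(c)\bigr)$. For $\sigma=3$, $c=9$ and $e^{3k}\bar\Phi(9)=1$, the first quantity exceeds the square root of the second by a factor of more than twenty. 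The reason is that $\E\,\mathcal E(M)_T^2=\E[\mathcal E(2M)_Te^{\langle M\rangle_T}]$, and the tilt by $\mathcal E(2M)$ makes large values of $\langle M\rangle$ far more likely.

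Since H3 is assumed for every $\lambda>0$, the repair costs nothing: invoke it with $\lambda=6$ instead of $\lambda=3$. With that change, the rest of your argument goes through unchanged: the supermartingale bound $\E\,\mathcal E(4M)_\tau\le1$, then Doob's $L^2$ inequality, or your localisation at $\tau_n$ with monotone convergence. It gives $\E\sup_{t\le T}S_t^2\le4s_0^2\bigl(\E\,e^{6\langle M\rangle_T}\bigr)^{1/2}$. Your closing remark about which exponents are used should be updated accordingly.

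\textbf{Comparison with the paper.} Once corrected, your argument is a leaner version of the paper's. The paper first derives from H3 that $\E\exp(\alpha M_T+\beta V_T)<\infty$ for all real $\alpha,\beta$, via H\"older against $\mathcal E(p\alpha M)$. It then applies Doob's $L^q$ inequality with $q>1$ to the nonnegative submartingale $Y=\mathcal E(M)^2$, which needs the higher moment $\E\,\mathcal E(M)_T^{2q}$. You apply Doob's $L^2$ inequality directly to the martingale $S$, which buys you several things:
\begin{itemize}
\item only $\E S_T^2$ is needed, and a single Cauchy--Schwarz step suffices;
\item the constant is explicit;
\item your localised variant needs neither Novikov nor a check that $\mathcal E(M)^2$ is a true submartingale.
\end{itemize}
What the paper's detour buys is a reusable all-exponent moment lemma, which it later invokes for stopped exponentials and changes of measure (Lemma~\ref{lem:uniform-stopped-exponential}).
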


\begin{proof}
Set $V_T:=\langle M\rangle_T=\int_0^T\inner{\ell}{\cW_s}^2ds$. H3 implies the following exponential-integrability consequence: for every $\alpha\in\mathbb R$ and every $\beta\in\mathbb R$,
\[
\E\exp(\alpha M_T+\beta V_T)<\infty.
\]
Indeed, fix $p>1$ and let $p'=p/(p-1)$. Since
\[
\exp(\alpha M_T+\beta V_T)
=\mathcal E(p\alpha M)_T^{1/p}
  \exp\!\left(\bigl(\beta+\tfrac12 p\alpha^2\bigr)\,V_T\right),
\]
H\"older's inequality gives
\[
\E\exp(\alpha M_T+\beta V_T)
\leq
\bigl(\E\mathcal E(p\alpha M)_T\bigr)^{1/p}
\left(\E\exp\left(p'\bigl(\beta+\tfrac12 p\alpha^2\bigr)V_T\right)\right)^{1/p'}.
\]
The first factor equals one by Novikov's condition for $p\alpha M$, which is included in H3. The second factor is finite by H3 whenever the coefficient is positive and is bounded by one when the coefficient is non-positive.

Now put $Y_t:=\mathcal E(M)_t^2$. It\^o's formula gives $dY_t=2Y_t\,dM_t+Y_t\,d\langle M\rangle_t$, so $Y$ is a positive submartingale. Choose $q>1$. By Doob's $L^q$ inequality,
\[
\E\sup_{t\leq T}Y_t
\leq
\left(\E\sup_{t\leq T}Y_t^q\right)^{1/q}
\leq
\frac{q}{q-1}\left(\E Y_T^q\right)^{1/q}.
\]
The preceding exponential-integrability consequence, applied with $\alpha=2q$ and $\beta=-q$, gives
\[
\E Y_T^q=\E\exp(2qM_T-qV_T)<\infty.
\]
Therefore $\E\sup_{t\leq T}\mathcal E(M)_t^2<\infty$, and multiplication by $s_0^2$ gives the asserted bound.
\end{proof}

\subsection{Uniqueness}\label{subsec:uniqueness}

\begin{proposition}[Uniqueness]\label{prop:uniqueness}
If $S^{(1)}$ and $S^{(2)}$ are two strong solutions with $\E\sup_{t\leq T}\abs{S_t^{(i)}}^2<\infty$, then $S^{(1)}=S^{(2)}$ almost surely.
\end{proposition}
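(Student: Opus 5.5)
The plan is to exploit the linearity of the equation in $S$ and reduce uniqueness to a Gronwall argument for the quotient of the two solutions. Let $S^{(1)},S^{(2)}$ be two strong solutions on $[0,T]$ driven by the same $B$ and the same volatility functional $\xi_t=\inner{\ell}{\cW_t}$. By Proposition~\ref{prop:existence}, $S=s_0\mathcal E(M)$ is a strictly positive solution. I will show that each $S^{(i)}$ coincides with $s_0\mathcal E(M)$. This is more convenient than comparing the two solutions with each other directly.

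The first step introduces the process $Z^{(i)}_t:=S^{(i)}_t\,\mathcal E(M)_t^{-1}$. Since $\mathcal E(M)$ is continuous and strictly positive, $\mathcal E(M)^{-1}$ is a continuous semimartingale. It\^o's formula gives
\[
d\mathcal E(M)^{-1}_t=\mathcal E(M)^{-1}_t\bigl(-\xi_t\,dB_t+\xi_t^2\,dt\bigr).
\]
The product rule then yields
\[
dZ^{(i)}_t=\mathcal E(M)^{-1}_t\,dS^{(i)}_t+S^{(i)}_t\,d\mathcal E(M)^{-1}_t+d\langle S^{(i)},\mathcal E(M)^{-1}\rangle_t .
\]
Substituting $dS^{(i)}_t=S^{(i)}_t\xi_t\,dB_t$, the two $dB$ terms cancel. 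The covariation term equals $-S^{(i)}_t\mathcal E(M)^{-1}_t\xi_t^2\,dt$ and cancels the remaining drift. Hence $dZ^{(i)}_t=0$, so $Z^{(i)}_t=Z^{(i)}_0=1$ for all $t$ almost surely, that is, $S^{(i)}=s_0\mathcal E(M)$.

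All of these computations are pathwise identities between continuous semimartingales. They only need $\int_0^T\xi_s^2\,ds<\infty$ a.s., which Lemma~\ref{lem:qv-integrability} provides.

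The main obstacle is making the It\^o calculus rigorous without any a priori integrability of $\mathcal E(M)^{-1}$ or of the products. To handle this I will localise. Let
\[
\tau_n:=\inf\Bigl\{t:\ \abs{S^{(1)}_t}+\abs{S^{(2)}_t}+\mathcal E(M)^{-1}_t+\int_0^t\xi_s^2\,ds\geq n\Bigr\}\wedge T.
\]
On $[0,\tau_n]$ every integrand is bounded, so the stochastic integrals are true martingales and the cancellation holds as an identity of stopped processes. This gives $Z^{(i)}_{t\wedge\tau_n}=1$. Continuity of all the processes and $\int_0^T\xi_s^2\,ds<\infty$ a.s. give $\tau_n\uparrow T$ almost surely, and letting $n\to\infty$ gives the claim on all of $[0,T]$.

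As a backup, one can run the alternative direct argument on $D=S^{(1)}-S^{(2)}$, which solves $dD=D\xi\,dB$ with $D_0=0$. Stopping at $\tau_n$ and using It\^o isometry gives
\[
\E D_{t\wedge\tau_n}^2=\E\int_0^{t\wedge\tau_n}D_s^2\xi_s^2\,ds .
\]
Since the integrand $\xi^2$ is unbounded, Gronwall must be applied after a time change by $\int_0^\cdot\xi_s^2\,ds$. This is why I prefer the quotient argument. The hypothesis $\E\sup_t\abs{S^{(i)}_t}^2<\infty$ is then not needed beyond guaranteeing that the solutions are genuine continuous semimartingales.
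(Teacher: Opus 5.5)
Your argument is correct, but it takes a different route from the paper's proof.

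\textbf{The paper's route.} The paper compares the two solutions directly. It writes $\Delta=S^{(1)}-S^{(2)}$ as $\Delta_t=\int_0^t\Delta_s\xi_s\,dB_s$. It stops at $\tau_m=\inf\{t:\abs{\xi_t}\geq m\}\wedge T$; since $\xi$ is continuous, this bounds the integrand by $m$. It then applies It\^o isometry and Gronwall to $\E\abs{\Delta_{t\wedge\tau_m}}^2$. The hypothesis $\E\sup_t\abs{S^{(i)}_t}^2<\infty$ is what makes the stopped integral a square-integrable martingale. Incidentally, this is exactly your backup argument, and it shows that no time change by $\int_0^\cdot\xi_s^2\,ds$ is needed, provided $\xi$ is continuous.

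\textbf{Your route.} You instead identify each solution with the explicit exponential through the quotient $Z^{(i)}=S^{(i)}\mathcal E(M)^{-1}$. Its differential vanishes as an identity of integrands, so the localisation you introduce is harmless but not actually required.

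\textbf{Comparison.} Your approach needs no Gronwall step and no integrability of the solutions. It also needs no continuity of $\xi$, only $\int_0^T\xi_s^2\,ds<\infty$ a.s. It proves uniqueness among all continuous semimartingale solutions, which is stronger than the stated proposition. It even improves on the paper's Lemma~\ref{lem:localisation-uniqueness-expanded}, since positivity of $S^{(i)}$ is not used. The cost is that it relies on a strictly positive reference solution $\mathcal E(M)$. The paper's difference argument is the generic Lipschitz-type scheme and never uses the explicit form of the solution.

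\textbf{One slip.} $Z^{(i)}_0=s_0$, not $1$. With that correction, your conclusion $S^{(i)}=s_0\mathcal E(M)$ follows exactly as you state it.
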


\begin{proof}
Let $\xi_t=\inner{\ell}{\cW_t}$ and let $\Delta_t=S_t^{(1)}-S_t^{(2)}$. Then
\[
\Delta_t=\int_0^t\Delta_s\xi_s\,dB_s.
\]
For $m\geq1$, set $\tau_m=\inf\{t\leq T:\abs{\xi_t}\geq m\}\wedge T$. Since $\xi$ is continuous, $\tau_m\uparrow T$ almost surely. It\^o isometry gives
\[
\E\abs{\Delta_{t\wedge\tau_m}}^2
=\E\int_0^{t\wedge\tau_m}\Delta_s^2\xi_s^2\,ds
\leq m^2\int_0^t\E\abs{\Delta_{s\wedge\tau_m}}^2\,ds.
\]
Gronwall's lemma yields $\E\abs{\Delta_{t\wedge\tau_m}}^2=0$ for every $t\leq T$. Letting $m\to\infty$ gives $\Delta_t=0$ for every $t\leq T$, and continuity gives indistinguishability.
\end{proof}

\subsection{Strict positivity}

\begin{proposition}[Strict positivity]\label{prop:positivity}
Under the hypotheses of Theorem~\ref{thm:A}, $\Pp(S_t>0\text{ for all }t\leq T)=1$.
\end{proposition}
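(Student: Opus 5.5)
The plan is to read strict positivity directly off the explicit representation in Proposition~\ref{prop:existence}, and then to transfer it to every solution through Proposition~\ref{prop:uniqueness}. By Proposition~\ref{prop:existence}, the solution is
\[
S_t=s_0\exp\Bigl(M_t-\tfrac12\langle M\rangle_t\Bigr),\qquad M_t=\int_0^t\inner{\ell}{\cW_s}\,dB_s,\qquad \langle M\rangle_t=\int_0^t\inner{\ell}{\cW_s}^2\,ds .
\]
It then suffices to show that the exponent is almost surely a finite, continuous function of $t$ on $[0,T]$. Since $s_0>0$ and $\exp$ maps $\R$ into $(0,\infty)$, positivity at every time on one common full-measure set follows immediately.

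For finiteness, first apply Lemma~\ref{lem:qv-integrability}. It gives $\E\langle M\rangle_T<\infty$, so $\langle M\rangle_T<\infty$ almost surely. Since $t\mapsto\langle M\rangle_t$ is nondecreasing and continuous, it is bounded on $[0,T]$ by $\langle M\rangle_T$ on that event. The same lemma shows that $M$ is a continuous square-integrable martingale. Hence there is a full-measure set on which $t\mapsto M_t$ is continuous on the compact interval $[0,T]$, and therefore bounded there. On the intersection of these two events, the exponent $X_t:=M_t-\frac12\langle M\rangle_t$ satisfies
\[
\inf_{t\leq T}X_t\geq-\sup_{t\leq T}\abs{M_t}-\tfrac12\langle M\rangle_T>-\infty .
\]
Consequently
\[
\inf_{t\leq T}S_t\geq s_0\exp\Bigl(-\sup_{t\leq T}\abs{M_t}-\tfrac12\langle M\rangle_T\Bigr)>0 .
\]
This gives a pathwise uniform lower bound, which is stronger than positivity at each fixed $t$ and avoids any countable-union argument.

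The one point needing care is that the statement concerns \emph{the} solution of Theorem~\ref{thm:A}, not merely the explicit exponential. So one must check that any other strong solution in the stated class coincides with $s_0\mathcal E(M)$. Proposition~\ref{prop:square-integrability} shows that $s_0\mathcal E(M)$ satisfies $\E\sup_{t\leq T}S_t^2<\infty$, which places it in the uniqueness class. Proposition~\ref{prop:uniqueness} then identifies any such solution with it up to indistinguishability, so the positivity event transfers. I expect no genuine obstacle here. H3 enters only through the square-integrability needed for uniqueness; the positivity itself uses nothing beyond almost-sure finiteness of $\langle M\rangle_T$, i.e.\ H1--H2 via Lemma~\ref{lem:qv-integrability}.
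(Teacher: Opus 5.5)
Your proof is correct and takes the same route as the paper: identify $S$ with $s_0\mathcal E(M)$ via Proposition~\ref{prop:existence}, then use that the exponential of the almost surely finite, continuous exponent $M_t-\frac12\langle M\rangle_t$ is strictly positive on $[0,T]$. Your pathwise lower bound and the transfer to every solution in the uniqueness class via Propositions~\ref{prop:square-integrability} and~\ref{prop:uniqueness} spell out what the paper's one-line argument leaves implicit.
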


\begin{proof}
Proposition~\ref{prop:existence} gives $S_t=s_0\mathcal E(M)_t$. The Dol\'eans--Dade exponential of a continuous martingale is strictly positive on finite time intervals. Since $s_0>0$, $S_t>0$ for all $t\leq T$ almost surely.
\end{proof}

\subsection{The class of admissible weights}\label{subsec:admissible-weights}

\begin{definition}[Class of admissible weights]
Given $\ell$, the class of $\ell$-admissible weights is
\[
\mathcal W(\ell):=\left\{w:\N\to(0,\infty): w \text{ satisfies H2 and } \sum_{n\geq0}w(n)\abs{\ell_n}^2<\infty\right\}.
\]
\end{definition}

\begin{remark}[Admissibility, weighted norms, and the role of H3]\label{rem:admissibility-constants}
The class $\mathcal W(\ell)$ is intrinsic to the pair $(\ell,w)$ in the following operational sense. The estimates in Lemmas~\ref{lem:pairing-moment} and~\ref{lem:qv-integrability} use $\ell$ only through $\norm{\ell}_{2,w}^2=\sum_{n\geq0}w(n)\abs{\ell_n}^2$ and use $w$ only through H2 and the Brownian signature moment bound. H1 is therefore not a separate analytic convention; it is the weighted square-summability condition required to define and bound the signature-volatility martingale $M_t=\int_0^t\inner{\ell}{\cW_s}\,dB_s$ in $L^2(\Pp)$. Hypothesis H3 plays a distinct role: it upgrades the positive local stochastic exponential $\mathcal E(M)$ to the square-integrable solution class used in subsequent sections. The combination H1--H3 is the hypothesis triple for the constructions of this paper, and Proposition~\ref{prop:sharpness-A} shows that H1 cannot be weakened to a weight outside $\mathcal W(\ell)$ without losing the weighted-norm control.
\end{remark}

\begin{proposition}[Sharpness of H1]\label{prop:sharpness-A}
There exist a one-dimensional parameter $\ell$ and a weight $\widetilde w$ satisfying $\widetilde w(0)=1$ and H2 such that $\sum_{n\geq1}\widetilde w(n)\abs{\ell_n}^2=\infty$ and the weighted $L^2$ control of $\int_0^T\inner{\ell}{\cW_s}^2ds$ through Lemma~\ref{lem:pairing-moment} fails.
\end{proposition}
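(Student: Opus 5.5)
The plan is to build an explicit one-dimensional counterexample in which the parameter and the weight are each harmless on their own, but the weight grows just fast enough that H1 fails. Take $d=1$ and let the parameter live only on the pure-time words. Set
\[
\ell_n:=n^{-1/2}\,2^{-n/2}\,e_{(0,\ldots,0)}^{\ast}\quad(n\geq1),\qquad \ell_0:=0,
\]
so that $\abs{\ell_n}^2=\frac{1}{n}\,2^{-n}$ and $\inner{\ell}{\cW_s}=\sum_{n\ge1}\ell_n s^n/n!$ is a deterministic bounded function. For the weight take
\[
\widetilde w(0)=1,\qquad \widetilde w(n)=2^n\quad(n\geq1).
\]

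First I verify H2 for $\widetilde w$. The function $\widetilde w$ is increasing, it satisfies $\widetilde w(m+n)=\widetilde w(m)\widetilde w(n)$ whenever $m,n\geq1$, and the cases with a zero index are trivial, so one may take $C_w=1$. The growth bound $\widetilde w(n)\leq r^n$ holds with $r=2$.

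Next I compute the H1 sum:
\[
\sum_{n\ge1}\widetilde w(n)\abs{\ell_n}^2=\sum_{n\ge1}\frac{1}{n}=\infty.
\]
So $\widetilde w\notin\mathcal W(\ell)$, while $\ell$ itself belongs to $\Tstarw$ because $\abs{\ell_n}/\widetilde w(n)\to0$.

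The third step, which is the substantive one, shows that the weighted control through Lemma~\ref{lem:pairing-moment} is unavailable. That lemma bounds $\E\int_0^T\inner{\ell}{\cW_s}^2ds$ by $T\,C_{2,\widetilde w,T}\norm{\ell}_{2,\widetilde w}^2$, and its proof rests solely on the Cauchy--Schwarz estimate
\[
\abs{\inner{\ell}{a}}\leq\norm{\ell}_{2,\widetilde w}\norm{a}_{2,\widetilde w^{-1}}.
\]
Here the right-hand side equals $+\infty\cdot\norm{a}_{2,\widetilde w^{-1}}$, which is infinite whenever $a\neq0$. Since $\cW_s\neq0$ (its empty-word level is $1$), the bound is vacuous for every $s$, and the same holds for the tail estimate of Lemma~\ref{lem:truncation-duality}.

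To show this is not merely a loose constant, I use the converse of Cauchy--Schwarz, i.e. duality of weighted $\ell^2$. The quantity
\[
\sup\bigl\{\abs{\inner{\ell}{a}}:\norm{a}_{2,\widetilde w^{-1}}\leq1\bigr\}
\]
equals $\norm{\ell}_{2,\widetilde w}=\infty$, witnessed by the truncations $a^{(N)}=\sum_{n\le N}\widetilde w(n)\ell_n$ normalised. Hence no finite constant $C$ can give $\abs{\inner{\ell}{a}}\le C\norm{a}_{2,\widetilde w^{-1}}$. Any estimate of $\int_0^T\inner{\ell}{\cW_s}^2ds$ that factors through $\norm{\cW_s}_{2,\widetilde w^{-1}}$ must therefore fail, and this is exactly the claimed failure of the Lemma~\ref{lem:pairing-moment} route.

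The main obstacle is interpretive. With this deterministic choice the quantity $\int_0^T\inner{\ell}{\cW_s}^2ds$ is itself finite, so the proposition can only assert the failure of the weighted-norm control, as the statement indeed phrases it, and not a genuine blow-up. I will state this explicitly. If a stronger version is wanted, my first attempt would be to move the mass of $\ell$ onto the Brownian words $(1,\ldots,1)$, where $\inner{e_{1\cdots1}}{\cW_s}=s^{n/2}H_n(W_s/\sqrt s)/n!$. One would then check, using the factorial decay of Appendix~\ref{app:rough-paths}, whether the second moments still sum. I expect they do, so I would not claim divergence of the integral itself.
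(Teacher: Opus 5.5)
Your proposal is correct and follows the paper's route. You exhibit an explicit pair $(\ell,\widetilde w)$ satisfying H2 with $\sum_n\widetilde w(n)\abs{\ell_n}^2=\infty$, and note that the constant $\norm{\ell}_{2,\widetilde w}$ in the Cauchy--Schwarz step of Lemma~\ref{lem:pairing-moment} is then infinite; the paper does the same with $\ell_n=n^{-1/2}$ and the polynomial weight $\widetilde w(n)=n+1$, where you use a geometric weight and pure-time support.

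One caveat concerns your extra duality remark. It shows only that $\ell$ is not bounded on the whole space normed by $\norm{\cdot}_{2,\widetilde w^{-1}}$. In your own example, $\abs{\inner{\ell}{\cW_s}}\le C\norm{\cW_s}_{2,\widetilde w^{-1}}$ does hold on actual signature values, because the pairing is a bounded deterministic function and the norm is at least $1$. So the claim that every estimate factoring through that norm must fail should be restricted to the uniform dual bound. Also, for the Stratonovich signature the pure Brownian word gives $W_s^n/n!$; the Hermite expression you wrote is the It\^o iterated integral.
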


\begin{proof}
Let $\ell_0=0$, $\ell_n=n^{-1/2}$ for $n\geq1$ and let $\widetilde w(n)=n+1$. Then $\widetilde w(0)=1$ and $\widetilde w(m+n)\leq \widetilde w(m)\widetilde w(n)$, so H2 holds. Moreover,
\[
\sum_{n\geq1}\widetilde w(n)\abs{\ell_n}^2=\sum_{n\geq1}\frac{n+1}{n}=\infty.
\]
The weighted Cauchy--Schwarz estimate of Lemma~\ref{lem:pairing-moment} would bound
\[
\E\int_0^T\abs{\inner{\ell}{\cW_s}}^2ds
\]
by a constant multiple of $\sum_n\widetilde w(n)\abs{\ell_n}^2$, which is infinite for this choice. Lemma~\ref{lem:qv-integrability} is therefore unavailable under $\widetilde w$, and the square-integrable martingale property of $M$ used in Proposition~\ref{prop:existence} cannot be obtained from the weighted tensor estimate. Thus H1 is exactly the summability condition required by the proof.
\end{proof}

\subsection{Finite-level localisation of the exponential construction}
\label{subsec:finite-level-localisation}
For $N\geq 0$ define the truncated volatility functional
\[
\xi^N_t:=\inner{\pi_{\leq N}\ell}{\cW_t},
\qquad
M^N_t:=\int_0^t \xi^N_s\,dB_s,
\qquad
S^N_t:=s_0\mathcal E(M^N)_t .
\]
The process $S^N$ is the finite-coordinate signature-volatility approximation of the solution. It is useful because every coefficient in $\xi^N$ is a finite linear combination of iterated Stratonovich integrals, so all stochastic calculus operations may first be performed at finite level and then passed to the limit. The next lemma records the convergence statement needed for this passage.

\begin{lemma}[Convergence of finite-level stochastic exponentials]
\label{lem:finite-level-exponential-convergence}
Assume H1--H3. Then
\[
\E\int_0^T \abs{\xi^N_s-\xi_s}^2\,ds\to 0,
\qquad
\E\sup_{t\leq T}\abs{M^N_t-M_t}^2\to 0.
\]
Moreover, for every $q\in(1,2)$ for which the exponential moment supplied by H3 is finite with the required constant,
\[
S^N\to S
\quad\hbox{in}\quad L^q\bigl(\Omega;C([0,T])\bigr).
\]
\end{lemma}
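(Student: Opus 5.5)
The proof has three steps: convergence of the volatility functionals, convergence of the martingales via Doob and It\^o, and convergence of the exponentials via an elementary inequality for $e^x$ combined with H\"older.

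\textbf{Step 1: volatility functionals.} Apply Lemma~\ref{lem:truncation-duality} pathwise to $a=\cW_s$:
\[
\abs{\xi_s-\xi^N_s}\leq \Bigl(\sum_{n>N}w(n)\abs{\ell_n}^2\Bigr)^{1/2}\norm{\cW_s}_{2,w^{-1}} .
\]
Square, take expectations and integrate in $s$. The factor $\sup_{s\leq T}\E\norm{\cW_s}_{2,w^{-1}}^2$ is finite by the argument of Lemma~\ref{lem:pairing-moment}, that is, Proposition~\ref{prop:signature-moment} with the constant weight. The first factor is the tail of a convergent series by H1, so $\E\int_0^T\abs{\xi^N_s-\xi_s}^2ds\to0$.

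\textbf{Step 2: martingales.} $M^N-M=\int(\xi^N-\xi)\,dB$ is a square-integrable martingale. Doob's $L^2$ inequality and the It\^o isometry bound $\E\sup_{t\leq T}\abs{M^N_t-M_t}^2$ by $4\E\int_0^T\abs{\xi^N_s-\xi_s}^2ds$, which tends to zero by Step~1. The same bound gives convergence in probability of $V^N_T:=\int_0^T(\xi^N_s)^2ds$ to $V_T$. Indeed, $\abs{(\xi^N)^2-\xi^2}\leq\abs{\xi^N-\xi}(\abs{\xi^N-\xi}+2\abs{\xi})$, so Cauchy--Schwarz gives $L^1$ convergence of $V^N_T$, and in fact uniform convergence of $V^N_t$ in $t$.

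\textbf{Step 3: stochastic exponentials.} Write $S^N_t=s_0\exp(Z^N_t)$ with $Z^N_t=M^N_t-\tfrac12V^N_t$, and similarly $S_t=s_0\exp(Z_t)$. The inequality $\abs{e^x-e^y}\leq\abs{x-y}(e^x+e^y)$ gives
\[
\sup_t\abs{S^N_t-S_t}\leq s_0\,\sup_t\abs{Z^N_t-Z_t}\,\Bigl(\sup_t e^{Z^N_t}+\sup_t e^{Z_t}\Bigr).
\]
Apply H\"older with exponents $2/q$ and $2/(2-q)$, which is where the restriction $q<2$ enters. The first factor is $\E\sup_t\abs{Z^N_t-Z_t}^2$, up to the power $q/2$. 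By Step~2 its martingale part tends to zero in $L^2$. For the $V$-part I would check $L^2$ convergence using H3, since $V_T$ has all moments and $\xi^N$ is controlled uniformly in $N$.

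The main obstacle is the other factor: a bound on $\E\sup_t e^{rZ^N_t}$, with $r=2q/(2-q)$, that is uniform in $N$. I would reuse the proof of Proposition~\ref{prop:square-integrability}. The process $e^{rZ^N}$ is a positive submartingale for $r\geq1$, so Doob reduces the bound to $\E\exp(\alpha M^N_T+\beta V^N_T)$. The H\"older splitting there reduces that in turn to Novikov for $\alpha pM^N$ and to $\E\exp(cV^N_T)$.

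The delicate point is that H3 is stated for $\xi$, not for $\xi^N$. I would try to handle this via $(\xi^N)^2\leq2\xi^2+2(\xi-\xi^N)^2$. Then $\E\exp(cV^N_T)\leq\E\exp(4cV_T)^{1/2}\,\E\exp(4c\int_0^T(\xi-\xi^N)^2)^{1/2}$. The second term needs an exponential moment for the tail functional. I would bound it using the Cauchy--Schwarz tail estimate together with exponential integrability of $\int_0^T\norm{\cW_s}^2_{2,w^{-1}}ds$, since the tail coefficient $\sum_{n>N}w(n)\abs{\ell_n}^2\to0$ makes the required exponent small for large $N$. This last integrability is not literally covered by Proposition~\ref{prop:signature-moment}. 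If it fails, I would fall back on the phrase ``for which the exponential moment supplied by H3 is finite with the required constant'' in the statement, reading it as assuming the needed uniform exponential moment of $V^N_T$.

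As an alternative route, Vitali could be used: convergence in probability of $\sup_t\abs{S^N_t-S_t}$, which follows from Step~2 and continuity of $\exp$, combined with uniform boundedness in $L^2$ of $\sup_t S^N_t$, which gives uniform integrability of the $q$-th powers for $q<2$. Either route rests on the same uniform exponential estimate.
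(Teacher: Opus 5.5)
Your Steps 1 and 2 are correct and coincide with the paper's argument: the same tail Cauchy--Schwarz bound from Lemma~\ref{lem:truncation-duality}, then It\^o isometry with Doob where the paper cites BDG.

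The gap is in Step 3, at exactly the point you flag: a bound on $\E\sup_t(S^N_t)^{r}$ that is uniform in $N$. Your proposed derivation of it does not work. Write $e_{1^{n}}$ for the word made of $n$ copies of the letter $1$. Then
\[
\norm{\pi_{>N}\cW_s}_{2,w^{-1}}^2\geq w(N+1)^{-1}\inner{e_{1^{N+1}}}{\cW_s}^2=\frac{(W^1_s)^{2N+2}}{w(N+1)\,((N+1)!)^2}.
\]
For $N\geq1$ this is a polynomial of degree at least four in a Gaussian path. Its time integral has no exponential moment of any positive order. Indeed, $W^1$ stays above $x$ on $[T/2,T]$ with probability at least $e^{-Cx^2}$, and on that event the integral is of order $x^{2N+2}$. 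So the exponential moment of the tail functional, let alone of $\int_0^T\norm{\cW_s}_{2,w^{-1}}^2ds$, is infinite for every positive constant. The smallness of $\sum_{n>N}w(n)\abs{\ell_n}^2$ cannot compensate.

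Worse, no argument can extract the uniform bound from H1--H3, because the $L^q$ assertion is false without it. Take the following data:
\begin{itemize}
\item $d=2$ and $\eta=e_2$, so $B=W^2$ is independent of $W^1$;
\item $w(n)=2^n$;
\item $\ell=\sum_{k\geq0}(-1)^k2^{-(2k+1)}e_{1^{2k+1}}$, so that $\xi_t=\sin(W^1_t/2)$.
\end{itemize}
H1 holds since $\sum_k2^{2k+1}4^{-(2k+1)}<\infty$. H2 holds with $C_w=1$ and $r=2$. H3 is trivial because $\abs{\xi}\leq1$.

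For $N\geq3$, $\xi^N$ is a polynomial in $W^1_t$ of degree at least three. Conditionally on $W^1$, the variable $M^N_T$ is centred Gaussian with variance $V^N_T$. Hence
\[
\E\,(S^N_T)^q=s_0^q\,\E\exp\Bigl\{\tfrac{q(q-1)}{2}V^N_T\Bigr\}=\infty\qquad\text{for every }q>1,
\]
while $S_T\in L^q$. Thus $\E\sup_{t\leq T}\abs{S^N_t-S_t}^q=\infty$ for all $N\geq3$ and all $q>1$.

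Your fallback, a uniform-in-$N$ exponential moment of $V^N_T$, is therefore not something H3 supplies. It is an additional hypothesis, namely the uniform H3 that the paper itself imposes in Proposition~\ref{prop:parameter-stability}, and it is indispensable. Under it, either of your routes closes. The $V$-part converges in $L^2$ by the Step~1 bound with fourth moments. Doob's inequality plus the H\"older splitting of Proposition~\ref{prop:square-integrability} give the uniform bound.

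The paper's own proof has the same hidden requirement. It factors $\mathcal E(M^N)=\mathcal E(M)\exp\{\cdot\}$, localises on $\{\sup_t(\abs{M_t}+\abs{M^N_t}+\langle M\rangle_t+\langle M^N\rangle_t)\leq R\}$, and removes the localisation ``using H3''. The step $R\to\infty$ needs uniform integrability of $\sup_t\abs{S^N_t-S_t}^q$ in $N$, which H3 for $\xi$ alone does not supply either. Once the uniform hypothesis is granted, your additive inequality has the modest advantage of giving an explicit rate.
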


\begin{proof}
By Lemma~\ref{lem:truncation-duality},
\[
\abs{\xi^N_s-\xi_s}^2
\leq \Bigl(\sum_{n>N}w(n)\abs{\ell_n}^2\Bigr)
      \Bigl(\sum_{n>N}w(n)^{-1}\abs{(\cW_s)_n}^2\Bigr).
\]
The first factor tends to zero by H1, while the second factor is dominated in $L^1(ds\otimes d\Pp)$ by the Brownian signature moment bound. Dominated convergence gives the first limit. The second limit follows from It\^o isometry and the Burkholder--Davis--Gundy inequality. For the stochastic exponentials, write
\[
\mathcal E(M^N)_t-\mathcal E(M)_t
=\mathcal E(M)_t\left(\exp\{(M^N_t-M_t)-\tfrac12(\langle M^N\rangle_t-\langle M\rangle_t)\}-1\right),
\]
localise by $\sup_t(\abs{M_t}+\abs{M^N_t}+\langle M\rangle_t+\langle M^N\rangle_t)\leq R$, pass to the limit on the localised set, and then let $R\to\infty$ using the exponential-integrability consequence of H3. This gives convergence in the asserted range of $q$.
\end{proof}

\begin{proposition}[Stability with respect to the signature parameter]
\label{prop:parameter-stability}
Let $\ell^{(m)}\to \ell$ in the weighted Hilbert norm $\norm{\cdot}_{2,w}$, and assume that H3 holds uniformly for the corresponding volatility functionals $\xi^{(m)}$ in the sense that
\[
\sup_m \E\exp\Bigl(\lambda\int_0^T\abs{\xi^{(m)}_s}^2\,ds\Bigr)<\infty
\quad\hbox{for every }\lambda>0.
\]
Let $S^{(m)}$ be the solution associated with $\ell^{(m)}$. Then $S^{(m)}\to S$ in $L^q(\Omega;C([0,T]))$ for every $q<2$.
\end{proposition}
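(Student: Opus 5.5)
The plan is to follow the pattern of Lemma~\ref{lem:finite-level-exponential-convergence}, with the truncated parameter $\pi_{\leq N}\ell$ replaced by $\ell^{(m)}$. Write $\xi^{(m)}_t=\inner{\ell^{(m)}}{\cW_t}$, $M^{(m)}=\int\xi^{(m)}dB$, $V^{(m)}=\langle M^{(m)}\rangle$ and $S^{(m)}=s_0\mathcal E(M^{(m)})$. The first step is convergence of the martingales. By the weighted Cauchy--Schwarz estimate,
\[
\abs{\xi^{(m)}_s-\xi_s}\leq\norm{\ell^{(m)}-\ell}_{2,w}\norm{\cW_s}_{2,w^{-1}},
\]
and the proof of Lemma~\ref{lem:pairing-moment} gives $\sup_{s\leq T}\E\norm{\cW_s}_{2,w^{-1}}^2<\infty$. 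Hence $\E\int_0^T\abs{\xi^{(m)}_s-\xi_s}^2ds\to0$. Doob's inequality (or BDG) then gives $\E\sup_t\abs{M^{(m)}_t-M_t}^2\to0$. For the quadratic variations, Cauchy--Schwarz yields
\[
\E\sup_t\abs{V^{(m)}_t-V_t}\leq\Bigl(\E\int_0^T(\xi^{(m)}-\xi)^2ds\Bigr)^{1/2}\Bigl(\E\int_0^T(\xi^{(m)}+\xi)^2ds\Bigr)^{1/2}\to0,
\]
where the second factor is bounded uniformly in $m$ because $\sup_m\norm{\ell^{(m)}}_{2,w}<\infty$. It follows that $\sup_t\abs{\log\mathcal E(M^{(m)})_t-\log\mathcal E(M)_t}\to0$ in probability. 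Since $x\mapsto e^x$ is continuous, $\sup_t\abs{S^{(m)}_t-S_t}\to0$ in probability.

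The second step upgrades convergence in probability to convergence in $L^q$ by uniform integrability. I will show that $\sup_m\E\sup_t(S^{(m)}_t)^2<\infty$. For this I rerun the proof of Proposition~\ref{prop:square-integrability} for each $m$, keeping track of constants. Fix $q_0>1$. Doob's inequality applied to the submartingale $Y^{(m)}=\mathcal E(M^{(m)})^2$ gives
\[
\E\sup_t Y^{(m)}_t\leq\tfrac{q_0}{q_0-1}\bigl(\E\exp(2q_0M^{(m)}_T-q_0V^{(m)}_T)\bigr)^{1/q_0}.
\]
The Hölder splitting in that proof bounds the last expectation by $(\E\mathcal E(p\alpha M^{(m)})_T)^{1/p}(\E\exp(cV^{(m)}_T))^{1/p'}$, where $c$ depends only on $p$ and $q_0$. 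The first factor is at most $1$, because a positive local martingale is a supermartingale, so Novikov is not needed for the bound. The second factor is bounded uniformly in $m$ by the uniform H3 hypothesis. Hence $\sup_m\E\sup_t\abs{S^{(m)}_t-S_t}^2<\infty$, using Proposition~\ref{prop:square-integrability} for $S$ itself.

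The third step concludes. For $q<2$ the family $\sup_t\abs{S^{(m)}_t-S_t}^q$ is bounded in $L^{2/q}$ with $2/q>1$, so it is uniformly integrable. Vitali's theorem together with the first step then gives $\E\sup_t\abs{S^{(m)}_t-S_t}^q\to0$. For $q\le1$ the same argument applies, or one simply uses Jensen.

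I expect the main obstacle to be the uniform moment bound in the second step. One must check that every constant in the Hölder/Doob chain depends on $m$ only through $\E\exp(\lambda V^{(m)}_T)$, and that no step uses the true-martingale property of $\mathcal E(p\alpha M^{(m)})$ beyond the supermartingale bound $\E\,\mathcal E\le1$. The remaining steps are routine.
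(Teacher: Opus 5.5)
Your proof is correct and follows essentially the paper's route. The paper reruns the finite-level argument of Lemma~\ref{lem:finite-level-exponential-convergence} with $\ell^{(m)}-\ell$ in place of the tail $\ell-\pi_{\leq N}\ell$, lets uniform H3 supply uniform integrability of the stochastic exponentials, and passes to the limit by localising at level $R$ and then letting $R\to\infty$; your convergence-in-probability-plus-Vitali step is a more explicit version of that last step.

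One small point: Doob's inequality needs $\mathcal E(M^{(m)})^2$ to be a genuine submartingale. You can get this from Novikov for each fixed $m$, since uniform H3 contains H3 for every $m$. Alternatively, apply Doob to the stopped processes along a localising sequence and pass to the limit by monotone convergence; then your supermartingale bound $\E\,\mathcal E(p\alpha M^{(m)})_T\leq 1$ suffices throughout.
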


\begin{proof}
The proof is identical to the finite-level argument, with $\ell^{(m)}-\ell$ replacing the tail $\ell-\pi_{\leq N}\ell$. The uniform H3 bound gives uniform integrability of the stochastic exponentials and prevents the local convergence of martingales from degenerating at the exponential step. Thus the solution map is continuous on every subset of the parameter space on which H3 is uniform.
\end{proof}

\begin{remark}[Role of H3 in the square-integrable class]
The construction of $S=s_0\mathcal E(M)$ requires only local square integrability of $M$ for local existence and positivity. The paper needs the stronger conclusion $\E\sup_{t\leq T}S_t^2<\infty$ because Sections~\ref{sec:theorem-b}--\ref{sec:theorem-d} operate in $L^2$ spaces. Hypothesis H3 is exactly the bridge between those two levels: it converts the stochastic exponential from a positive local martingale into a true object in the Hilbert space used for asset pricing and hedging.
\end{remark}

\subsection{Closed estimates used later}
\label{subsec:closed-estimates}
The preceding proof yields three estimates that are used implicitly in the FTAP and hedging sections. First, for every $p\geq2$ there is a finite constant $C_{p,T,w,\ell}$ such that
\[
\E\sup_{t\leq T}\abs{M_t}^p\leq C_{p,T,w,\ell}.
\]
Second, for every admissible strategy $H$,
\[
\E_\Q\left(\int_0^T H_s^2 S_s^2\xi_s^2\,ds\right)<\infty
\]
whenever the Radon--Nikodym density $d\Q/d\Pp$ has an $L^r$ moment for some $r>1$ compatible with H3. Third, finite-level approximations commute with the GKW projection: if $X$ is approximated by a polynomial in finitely many terminal signature coordinates, then the stochastic-integral component of its projection is the limit of the projections of the finite approximants. This last property follows from closedness of the stochastic-integral subspace in $L^2$ and is the analytic link between Theorem A and Theorem D.

\subsection{Uniform integrability and localisation}
\label{subsec:ui-localisation}
The square-integrability proof in Proposition~3.6 may be rewritten as a two-layer localisation argument. Let
\[
\tau_R:=\inf\{t\leq T: \langle M\rangle_t\geq R\}\wedge T.
\]
On $[0,\tau_R]$, the stochastic exponential $\mathcal E(M)$ has moments of every order by Novikov with a deterministic bound depending on $R$. Therefore
\[
\E\sup_{t\leq T}\mathcal E(M)_{t\wedge\tau_R}^2<\infty.
\]
The passage $R\to\infty$ is justified by H3. Indeed,
\[
\Pp(\tau_R<T)=\Pp(\langle M\rangle_T>R)
\leq e^{-\lambda R}\E e^{\lambda\langle M\rangle_T},
\]
and the right-hand side tends to zero for every fixed $\lambda>0$. The exponential moment also gives uniform integrability of the stopped supremum, so the stopped estimates converge to the unstopped estimate.

This localisation is used in Section~\ref{sec:theorem-b} under equivalent measures. If $Z_T=d\Q/d\Pp$ satisfies $Z_T\in L^r(\Pp)$ for some $r>1$, then H\"older gives
\[
\E_\Q\sup_{t\leq T}S_t^2
\leq \norm{Z_T}_{L^r(\Pp)}
\left(\E_\Pp\left(\sup_{t\leq T}S_t^2\right)^{r/(r-1)}\right)^{(r-1)/r}.
\]
H3 supplies the higher moment on the right after replacing the exponent in Proposition~3.6 by the corresponding value. Thus the $L^2$ solution class is stable under the separating measures used in the asset-pricing theorem, as long as the density lies in a compatible $L^r$ class.

\begin{lemma}[Localisation of uniqueness]
\label{lem:localisation-uniqueness-expanded}
Let $S^{(1)}$ and $S^{(2)}$ be two positive continuous solutions of the signature SDE driven by the same Brownian motion and the same initial value. If $\int_0^T\xi_s^2ds<\infty$ almost surely, then the two solutions are indistinguishable without imposing the global $L^2$ supremum bound.
\end{lemma}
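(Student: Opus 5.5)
The idea is to remove the $L^2$ supremum bound by localising along the quadratic variation. The hypothesis $\int_0^T\xi_s^2ds<\infty$ a.s. is exactly what makes the resulting stopping times exhaust $[0,T]$.

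First I would fix the difference process. Set $\Delta_t:=S^{(1)}_t-S^{(2)}_t$. Subtracting the two equations gives $\Delta_t=\int_0^t\Delta_s\xi_s\,dB_s$ with $\Delta_0=0$.

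Next I would localise. For $m\geq1$ define
\[
\sigma_m:=\inf\Bigl\{t\leq T:\ \abs{S^{(1)}_t}+\abs{S^{(2)}_t}\geq m\ \text{ or }\ \int_0^t\xi_s^2ds\geq m\Bigr\}\wedge T .
\]
Both solutions are continuous, hence bounded on $[0,T]$ pathwise, and $\int_0^T\xi_s^2ds<\infty$ a.s. It follows that $\sigma_m\uparrow T$ and that $\Pp(\sigma_m<T)\to0$.

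On $[0,\sigma_m]$ the integrand $\Delta_s\xi_s$ satisfies $\int_0^{\sigma_m}\Delta_s^2\xi_s^2ds\leq 4m^2\cdot m$. The stopped integral is therefore a true square-integrable martingale.

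The step I expect to need most care is the Gronwall argument. The bound above cannot be fed into Gronwall in Lebesgue time, because $\xi$ need not be bounded; this is exactly the point where the earlier proof used $\abs{\xi}\leq m$. The fix is to run Gronwall in the random clock $A_t:=\int_0^t\xi_s^2ds$.

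Concretely, I would time-change by the inverse $\rho_u:=\inf\{t: A_t>u\}$. The process $\widetilde\Delta_u:=\Delta_{\rho_u\wedge\sigma_m}$ then solves $\widetilde\Delta_u=\int_0^u\widetilde\Delta_v\,d\beta_v$ for a Brownian motion $\beta$ (Dambis--Dubins--Schwarz) on $u\leq m$. It\^o isometry gives $\E\widetilde\Delta_u^2=\int_0^u\E\widetilde\Delta_v^2dv$, and Gronwall yields $\E\widetilde\Delta_u^2=0$.

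If I want to avoid the time change, I would instead apply It\^o's formula to $e^{-A_t}\Delta_t^2$ stopped at $\sigma_m$. Its $dt$-part is $e^{-A_t}(\Delta_t^2\xi_t^2-\Delta_t^2\xi_t^2)\,dt=0$, so the process is a nonnegative local martingale starting at $0$, bounded on $[0,\sigma_m]$, hence a true martingale. Its expectation is then $0$, which forces $\Delta_{t\wedge\sigma_m}=0$ a.s.

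Finally I would let $m\to\infty$ and use $\sigma_m\uparrow T$ a.s. to get $\Delta_t=0$ a.s. for each $t$. Continuity of paths upgrades this to indistinguishability. Positivity of the solutions is not actually needed, and the argument requires no moments of $\sup_t S^{(i)}_t$, which is the claimed improvement over Proposition~\ref{prop:uniqueness}.
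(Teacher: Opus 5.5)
Your proof is correct, but it takes a different route from the paper's at the one step that matters.

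The paper stops at $\tau_m=\inf\{t:\abs{\xi_t}\geq m\}\wedge T$ and at $\rho_k=\inf\{t:\abs{S^{(1)}_t}+\abs{S^{(2)}_t}\geq k\}\wedge T$. It bounds the stopped integrand by $m\abs{\Delta_s}$ and runs It\^o isometry plus Gronwall in Lebesgue time with constant $m^2$. This follows the template of Proposition~\ref{prop:uniqueness} and is completely elementary. However, $\tau_m\uparrow T$ requires $\xi$ to be pathwise locally bounded. The paper is therefore silently using the continuity of $\xi$ invoked in Proposition~\ref{prop:uniqueness}, and that does not follow from the lemma's stated hypothesis $\int_0^T\xi_s^2ds<\infty$.

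You diagnose exactly this point. By localising on $A_t=\int_0^t\xi_s^2ds$ and moving the Gronwall step into the quadratic-variation clock, you use only the hypothesis as stated.

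Of your two variants, keep the integrating-factor one. The identity $d(e^{-A_t}\Delta_t^2)=2e^{-A_t}\Delta_t^2\xi_t\,dB_t$ is exact. You do not even need $\sigma_m$ there: a nonnegative continuous local martingale started at $0$ is a supermartingale, hence identically zero. This is the classical argument that $S^{(i)}_t\,\mathcal E(M)_t^{-1}$ is constant.

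The Dambis--Dubins--Schwarz variant also works but is stated slightly loosely, in two places:
\begin{itemize}
\item The Brownian motion $\beta$ must be extended past $A_T<\infty$ on an enlarged probability space.
\item Since $\widetilde\Delta$ freezes once the clock passes $A_{\sigma_m}$, the correct equation is $\widetilde\Delta_u=\int_0^u\widetilde\Delta_v\mathbf 1_{\{\rho_v\leq\sigma_m\}}\,d\beta_v$. It\^o isometry then gives $\E\widetilde\Delta_u^2\leq\int_0^u\E\widetilde\Delta_v^2\,dv$ rather than equality, which is still all Gronwall needs.
\end{itemize}
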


\begin{proof}
Set $\Delta=S^{(1)}-S^{(2)}$ and $\tau_m=\inf\{t:\abs{\xi_t}\geq m\}\wedge T$. On $[0,\tau_m]$,
\[
\Delta_{t\wedge\tau_m}=\int_0^{t\wedge\tau_m}\Delta_s\xi_s\,dB_s.
\]
The stopped integrand is bounded by $m\abs{\Delta_s}$, and the stopped solutions are continuous on a compact interval. Applying It\^o isometry after the additional stopping time $\rho_k=\inf\{t:\abs{S^{(1)}_t}+\abs{S^{(2)}_t}\geq k\}\wedge T$ gives
\[
\E\abs{\Delta_{t\wedge\tau_m\wedge\rho_k}}^2
\leq m^2\int_0^t\E\abs{\Delta_{s\wedge\tau_m\wedge\rho_k}}^2ds.
\]
Gronwall gives equality to zero. Letting $k\to\infty$ and then $m\to\infty$ proves indistinguishability.
\end{proof}

\subsection{Truncation-stable form of the existence proof}
\label{subsec:truncation-stable-existence}
For $N\geq0$ write $\pi_N\ell$ for the truncation of $\ell$ to levels at most $N$ and set
\[
\xi_t^{(N)}:=\inner{\pi_N\ell}{\cW_t},\qquad
M_t^{(N)}:=\int_0^t \xi_s^{(N)}\,dB_s,
\qquad
S_t^{(N)}:=s_0\mathcal E(M^{(N)})_t .
\]
The finite-level equations are ordinary It\^o equations with polynomial coefficients in the Brownian signature. They therefore give a canonical approximating sequence for the infinite equation. The point of the weighted proof is that the estimates do not depend on the dimension of the truncated tensor algebra. Indeed,
\[
\E\int_0^T \abs{\xi_s^{(N)}-\xi_s^{(M)}}^2\,ds
\leq C_{T,w}\sum_{n=M+1}^{N} w(n)\abs{\ell_n}^2,
\qquad N>M,
\]
where $C_{T,w}$ is the moment constant from Proposition~\ref{prop:signature-moment}. Hence $(M^{(N)})_N$ is Cauchy in $L^2(\Omega;C[0,T])$ after the BDG estimate. The limit is precisely $M$ because the pairings converge in $L^2([0,T]\times\Omega)$.

The stochastic exponentials also converge. Fix $q>1$. By H3 and the proof of Proposition~\ref{prop:square-integrability},
\[
\sup_N \E\sup_{t\leq T}\abs{\mathcal E(M^{(N)})_t}^{2q}<\infty .
\]
The elementary estimate
\[
\abs{\mathcal E(x)-\mathcal E(y)}
\leq \exp\{\abs{x}+\abs{y}+\tfrac12\abs{\langle x\rangle}+\tfrac12\abs{\langle y\rangle}\}
\bigl(\abs{x-y}+\abs{\langle x\rangle-\langle y\rangle}\bigr)
\]
applied to continuous martingales, followed by H\"older and BDG, gives
\[
\lim_{N\to\infty}\E\sup_{t\leq T}\abs{S_t^{(N)}-S_t}^2=0 .
\]
Thus Theorem~\ref{thm:A} may be read as a limit theorem for finite signature-volatility SDEs. This is the form needed in Sections~\ref{sec:theorem-b}--\ref{sec:theorem-d}, where the Riccati transform and the Gram systems are first written at finite level and then passed to the weighted limit.

\begin{lemma}[Stability of the signature-volatility map]
\label{lem:parameter-stability}
Let $\ell^{(m)}\to\ell$ in the Hilbert norm $\norm{\cdot}_{2,w}$ and suppose that H3 holds uniformly for the corresponding pairings in the sense that
\[
\sup_m \E\exp\left(\lambda\int_0^T \inner{\ell^{(m)}}{\cW_s}^2\,ds\right)<\infty
\qquad\text{for every }\lambda>0.
\]
Then the associated solutions $S^{(m)}$ converge to $S$ in $L^2(\Omega;C[0,T])$.
\end{lemma}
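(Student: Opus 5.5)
The plan is to prove convergence of the stochastic exponentials in $L^2(\Omega;C[0,T])$ in two stages: first convergence of the driving martingales $M^{(m)}:=\int_0^\cdot\xi^{(m)}_s\,dB_s$ to $M$, then transfer to the exponentials $\mathcal E(M^{(m)})$, with the uniform H3 bound controlling everything at the exponential scale. Write $\xi^{(m)}_s=\inner{\ell^{(m)}}{\cW_s}$ and $\delta^{(m)}:=\ell^{(m)}-\ell$. The weighted Cauchy--Schwarz estimate used in Lemma~\ref{lem:pairing-moment} gives $\abs{\xi^{(m)}_s-\xi_s}\leq\norm{\delta^{(m)}}_{2,w}\norm{\cW_s}_{2,w^{-1}}$. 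Since $\sup_{s\leq T}\E\norm{\cW_s}_{2,w^{-1}}^p<\infty$ for every $p$ by Proposition~\ref{prop:signature-moment} with the constant weight, Fubini yields
\[
\E\int_0^T\abs{\xi^{(m)}_s-\xi_s}^p\,ds\leq C_{p,T}\norm{\delta^{(m)}}_{2,w}^p\to0 .
\]
The first inequality of Lemma~\ref{lem:bdg-signature}, applied to the integrand $\xi^{(m)}-\xi$, together with H\"older in time then gives $\E\sup_{t\leq T}\abs{M^{(m)}_t-M_t}^p\to0$ and $\E\sup_{t\leq T}\abs{\langle M^{(m)}\rangle_t-\langle M\rangle_t}^p\to0$ for every $p<\infty$. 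The second statement comes from writing $(\xi^{(m)})^2-\xi^2=(\xi^{(m)}-\xi)(\xi^{(m)}+\xi)$ and applying Cauchy--Schwarz with the uniform moments of Lemma~\ref{lem:pairing-moment}.

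Next I would get uniform high moments. Repeat the H\"older/Novikov argument in the proof of Proposition~\ref{prop:square-integrability} for each $m$, using the uniform exponential bound in place of H3. This gives, for every $q>1$,
\[
\sup_m\E\exp\bigl(\alpha M^{(m)}_T+\beta\langle M^{(m)}\rangle_T\bigr)<\infty,
\qquad
\sup_m\E\sup_{t\leq T}\mathcal E(M^{(m)})_t^{2q}<\infty .
\]
The second bound follows from Doob's inequality applied to the submartingale $\mathcal E(M^{(m)})^{2}$, whose $q$th moment at $T$ is controlled as before. Both bounds are uniform in $m$ because all the constants enter only through the exponential moment of $\langle M^{(m)}\rangle_T$. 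The same bounds hold for $M$ itself: H3 for $\ell$ follows from the uniform hypothesis by Fatou, since $\xi^{(m)}\to\xi$ in $L^2(ds\otimes d\Pp)$ along a subsequence almost everywhere.

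For the exponentials I would avoid the crude estimate with $e^{\abs{x}}$ and use the identity
\[
\mathcal E(M^{(m)})_t-\mathcal E(M)_t=\mathcal E(M)_t\bigl(e^{D^{(m)}_t}-1\bigr),
\qquad
D^{(m)}:=(M^{(m)}-M)-\tfrac12(\langle M^{(m)}\rangle-\langle M\rangle).
\]
Combine this with $\abs{e^x-1}\leq\abs{x}(1+e^x)$ and the fact that $\mathcal E(M)e^{D^{(m)}}=\mathcal E(M^{(m)})$. This gives
\[
\sup_t\abs{\mathcal E(M^{(m)})_t-\mathcal E(M)_t}\leq\sup_t\abs{D^{(m)}_t}\Bigl(\sup_t\mathcal E(M)_t+\sup_t\mathcal E(M^{(m)})_t\Bigr).
\]
Squaring and applying H\"older with exponents $(q/(q-1),q)$, the uniform $2q$-moments of the suprema stay bounded, while $\E\sup_t\abs{D^{(m)}_t}^{2q/(q-1)}\to0$ by the first step. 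Multiplying by $s_0^2$ gives $\E\sup_{t\leq T}\abs{S^{(m)}_t-S_t}^2\to0$.

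The main obstacle is the uniform-moment step. The stated assumption controls exponential moments of $\langle M^{(m)}\rangle_T$ for every $\lambda$, but the exponent $2q>2$ must be reached by Novikov applied to $p\alpha M^{(m)}$, and this requires exactly that uniformity. The bookkeeping must check that no constant depends on $m$ except through the supremum in the hypothesis. That is what upgrades the $q<2$ conclusion of Proposition~\ref{prop:parameter-stability} to $L^2$ here: because every exponential moment is uniformly finite, $L^{2q}$ bounds with $q>1$ are available, and H\"older closes at exponent $2$.
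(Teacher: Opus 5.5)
Your argument is correct. The first stage (weighted Cauchy--Schwarz for $\xi^{(m)}-\xi$, BDG, and uniform exponential moments from the H\"older/Novikov computation of Proposition~\ref{prop:square-integrability}) matches the paper's proof. The difference lies in how you compare the stochastic exponentials. The paper invokes the additive mean-value bound
\[
\abs{\mathcal E(x)-\mathcal E(y)}\leq\exp\{\abs{x}+\abs{y}+\tfrac12\abs{\langle x\rangle}+\tfrac12\abs{\langle y\rangle}\}\bigl(\abs{x-y}+\abs{\langle x\rangle-\langle y\rangle}\bigr)
\]
from Section~\ref{subsec:truncation-stable-existence}, followed by H\"older and BDG. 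That bound does not use the multiplicative structure, so it consumes exponential moments of $\sup_t\abs{M_t}$, $\sup_t\abs{M^{(m)}_t}$ and of the brackets. These are available from uniform H3, for instance via $e^{a\abs{x}}\leq e^{ax}+e^{-ax}$, $e^{aM_t}\leq\mathcal E(aM)_t e^{a^2\langle M\rangle_T/2}$ and Doob. The paper, however, compresses this into the phrase ``uniform H3 supplies a common exponential moment''.

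Your identity $\mathcal E(M^{(m)})=\mathcal E(M)e^{D^{(m)}}$ together with $\abs{e^x-1}\leq\abs{x}(1+e^x)$ bounds the difference by $\sup_t\abs{D^{(m)}_t}$ times the two exponentials themselves. You therefore need only $L^{2q}$ moments of $\sup_t\mathcal E(\cdot)$, which Doob and Novikov/H\"older give directly and visibly uniformly in $m$. You also prove convergence of $M^{(m)}-M$ and of the brackets in every $L^p$, which is what the H\"older step actually consumes. The paper records only $L^2$ convergence of the martingales, which would need to be supplemented by a uniform-integrability (Vitali) argument. In addition, your Fatou step derives H3 for the limit $\ell$ rather than assuming it.

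What the paper's version buys is a black-box local Lipschitz estimate for $(a,b)\mapsto e^{a-b/2}$ that needs no structural identity. Its cost is the extra exponential-moment bookkeeping, and for this lemma your route is the leaner and more complete of the two.
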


\begin{proof}
The weighted Cauchy--Schwarz estimate gives
\[
\E\int_0^T \abs{\inner{\ell^{(m)}-\ell}{\cW_s}}^2\,ds
\leq C_{T,w}\norm{\ell^{(m)}-\ell}_{2,w}^2 .
\]
BDG gives convergence of the driving martingales in $L^2(C[0,T])$. Uniform H3 supplies a common exponential moment for the stochastic-exponential comparison. The displayed estimate for the exponential map then yields convergence of the solutions. No pathwise compactness argument is used; all compactness is carried by the weighted Hilbert norm of the parameter.
\end{proof}

\begin{remark}[Why the proof is not a Picard proof]
The equation for $S$ is linear once the signature path has been fixed. The stochastic-exponential proof uses this structure directly. A Picard iteration would prove the same existence statement by repeatedly integrating the same multiplicative coefficient, but it would obscure the true-martingale estimate, which is the object needed in the FTAP. The present proof therefore chooses the route that gives the positivity, localisation, and square-integrability statements in one chain.
\end{remark}

\subsection{Parameter classes and examples of H3}
\label{subsec:h3-classes}
The condition H3 is strong, but it is verifiable in the main examples. If $\inner{\ell}{\cW_t}$ is bounded, H3 is immediate. This covers bounded finite-signature examples, including the sine-signature parameter discussed in Section~\ref{sec:theorem-b}. If $\inner{\ell}{\cW_t}$ is a polynomial of a finite-dimensional Gaussian Markov process, Fernique-type estimates and finite-horizon Gaussian moment bounds give H3 for sufficiently small polynomial growth, and for bounded horizons in the finite-support cases used in Section~\ref{sec:examples}. If the functional is induced by an affine square-root process, H3 is equivalent to the classical finite exponential moment condition for the integrated variance.

The theorem is deliberately stated with H3 rather than with a list of model-specific sufficient conditions. This keeps Theorem~\ref{thm:A} structural: the existence proof uses only the exponential moment of the quadratic variation, not the mechanism by which that moment is obtained. In applications, H3 is checked by the moment-explosion theory of the chosen model, while H1 and H2 are checked by the tensor coefficients of the signature parameter.

\subsection{A final verification of the stochastic-exponential estimates}
\label{subsec:final-exponential-verification}
For later reference we record the exact implication from H3 to the two uniform estimates used in the paper. Let $V_T=\langle M\rangle_T$. H3 gives $\E\exp(\lambda V_T)<\infty$ for every $\lambda>0$. For $a,b\in\R$ write
\[
\exp(aM_T+bV_T)=\mathcal E(paM)_T^{1/p}
\exp\{(b+\tfrac12pa^2)V_T\}
\]
for $p>1$. H\"older's inequality and Novikov give
\[
\E\exp(aM_T+bV_T)
\leq
\left(\E\mathcal E(paM)_T\right)^{1/p}
\left(\E\exp\{p'(b+\tfrac12pa^2)V_T\}\right)^{1/p'}<\infty .
\]
This proves all exponential moments of the pair $(M_T,V_T)$ that are used to control $\sup_t S_t^2$ and to transport coordinate estimates under equivalent changes of measure.

The same estimate applies to stopped martingales $M^{\tau}$ with constants independent of the stopping time. Thus all localisation steps in Sections~\ref{sec:theorem-a} and~\ref{sec:theorem-b} are compatible with passage to the limit. This matters because local martingale arguments are first proved for stopped processes, while the theorem statements are global on $[0,T]$.

\begin{lemma}[Uniform stopped exponential estimate]
\label{lem:uniform-stopped-exponential}
Under H3, for every $q>1$,
\[
\sup_{\tau\leq T}\E\sup_{t\leq T}\mathcal E(M^{\tau})_t^{2q}<\infty,
\]
where the supremum is over all stopping times bounded by $T$.
\end{lemma}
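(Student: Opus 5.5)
Fix $q>1$ and a stopping time $\tau\le T$. The plan is to repeat the argument of Proposition~\ref{prop:square-integrability} with $M$ replaced by the stopped martingale $M^\tau$, and to check at each step that the constants depend only on the law of $V_T=\langle M\rangle_T$, never on $\tau$. The key observation is monotonicity. Since $\langle M^\tau\rangle_T=V_{\tau}\le V_T$ pathwise, every H3-type quantity for $M^\tau$ is dominated by the same quantity for $M$:
\[
\E\exp(\lambda\langle M^\tau\rangle_T)\le\E\exp(\lambda V_T)<\infty
\qquad\text{for all }\lambda>0 .
\]
In particular Novikov's condition holds for $cM^\tau$ for every $c\in\R$, uniformly in $\tau$. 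Hence $\mathcal E(cM^\tau)$ is a true martingale with $\E\mathcal E(cM^\tau)_T=1$.

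First, set $Y^\tau_t:=\mathcal E(M^\tau)_t^2$. As in Proposition~\ref{prop:square-integrability}, It\^o's formula shows that $Y^\tau$ is a positive continuous submartingale, since it is $\mathcal E(2M^\tau)_t\exp(\langle M^\tau\rangle_t)$. Then $(Y^\tau)^q$ is also a submartingale, by convexity and integrability, which comes out of the next step. Doob's $L^{q'}$ inequality for some $q'>1$ applied to $(Y^\tau)^q$ gives
\[
\E\sup_{t\le T}(Y^\tau_t)^q\le\Bigl(\tfrac{q'}{q'-1}\Bigr)\bigl(\E (Y^\tau_T)^{qq'}\bigr)^{1/q'} .
\]
Alternatively, one can apply Doob directly with exponent $q$ to $Y^\tau$ and then handle the $2q$-th power by applying the same argument with $q$ replaced by a larger exponent. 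So it suffices to bound $\E(Y^\tau_T)^{r}$ uniformly in $\tau$ for every $r>1$.

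Second, write
\[
(Y^\tau_T)^r=\exp(2rM_\tau-rV_\tau)=\mathcal E(2prM^\tau)_T^{1/p}\exp\bigl((-r+2pr^2)V_\tau\bigr),
\]
with $p>1$ fixed. Apply H\"older with exponents $p,p'$. The first factor has expectation $1$ by the uniform Novikov statement. The coefficient $p'(2pr^2-r)$ of $V_\tau$ in the second factor is positive. Monotonicity $V_\tau\le V_T$ then bounds that factor by $\E\exp(p'(2pr^2-r)V_T)^{1/p'}$, which is finite by H3 and independent of $\tau$. Taking the supremum over $\tau$ proves the claim.

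The main obstacle is the submartingale/Doob step, specifically making sure no hidden dependence on $\tau$ enters through integrability of $(Y^\tau)^q$ before Doob is applied. I would settle this by deriving the terminal moment bound first. Because $\mathcal E(M^\tau)_t=\E[\mathcal E(M^\tau)_T\mid\F_t]$ is a true martingale, Jensen gives $Y^\tau_t\le\E[Y^\tau_T\mid\F_t]$. Doob's maximal inequality for the martingale $\E[(Y^\tau_T)\mid\F_t]$ in $L^{q}$ then yields the supremum bound directly from $\E(Y^\tau_T)^q$. This step is uniform in $\tau$.
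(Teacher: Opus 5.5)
Your proposal is correct and follows the same route as the paper: Doob's maximal inequality reduces the claim to a $\tau$-uniform bound on the terminal moment $\E\,\mathcal E(M^{\tau})_T^{2q}$. You obtain that bound from the H\"older--Novikov factorisation $\exp(2rM_\tau-rV_\tau)=\mathcal E(2prM^\tau)_T^{1/p}\exp\{(2pr^2-r)V_\tau\}$ together with $V_\tau\le V_T$ and H3. Of the several versions of the Doob step you sketch, the cleanest is Doob's $L^{2q}$ inequality applied directly to the nonnegative true martingale $\mathcal E(M^\tau)$, whose constant $(2q/(2q-1))^{2q}$ does not depend on $\tau$.
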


\begin{proof}
Doob's inequality reduces the estimate to the terminal moment of $\mathcal E(M^{\tau})_T^{2q}$. The preceding exponential-moment calculation applies with $M_T$ replaced by $M_{T\wedge\tau}$ and $V_T$ replaced by $V_{T\wedge\tau}\leq V_T$. H3 controls the resulting exponential moment uniformly in $\tau$.
\end{proof}

\subsection{Discussion}

The local existence result of \citet{CuchieroSvalutoFerroTeichmann2023} operates on the extended tensor algebra for characteristics represented as linear maps of the signature, under integrability hypotheses suited to affine and polynomial transforms. Theorem~\ref{thm:A} obtains a global existence statement in the weighted Banach algebra $\Tw$ under H1--H3, replacing a pathwise boundedness requirement by weighted square summability of the signature parameter and exponential integrability of the quadratic variation. The two formulations are complementary: the extended-algebra setting is adapted to transform formulae, while the weighted-algebra setting is adapted to moment bounds and to the FTAP statement of Section~\ref{sec:theorem-b}.

H3 is not a cosmetic assumption: it is the condition that upgrades the positive local stochastic exponential $\mathcal E(M)$ to the square-integrable solution class used in the asset-pricing and hedging sections. In the constant-coefficient case $\ell=\sigma e_\emptyset$, H3 reduces to $\int_0^T\sigma^2ds<\infty$, which is immediate. In the rough-Bergomi example of Section~\ref{subsec:rough-bergomi}, H3 follows from Gaussian exponential-moment estimates on finite horizons after truncation and localisation. The pathwise existence problem under weaker, non-exponential integrability assumptions is left open.

\section{Theorem B: asset pricing and transform non-explosion}\label{sec:theorem-b}

\subsection{Statement}

\begin{lemma}[Special-semimartingale property of signature coordinates]\label{lem:special-semimartingale}
Assume H1, H2, and H3. Let $\Q\sim\Pp$ on $\FWh_T$ with bounded density $d\Q/d\Pp\in L^\infty(\Pp)$. For every multi-index $I$ with $|I|\geq1$, the process $(\inner{e_I}{\cW_t})_{0\leq t\leq T}$ is a $\Q$-special semimartingale whose canonical decomposition
\[
\inner{e_I}{\cW_t}=\inner{e_I}{\cW_0}+M_t^I+A_t^I
\]
has continuous local-martingale part $M^I$ and continuous predictable finite-variation part $A^I$. Furthermore, $M^I$ is a true $\Q$-martingale on $[0,T]$ and $A^I$ has integrable variation.
\end{lemma}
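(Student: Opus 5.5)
The plan is to obtain the decomposition first under $\Pp$, using the Itô form of the Stratonovich iterated integrals, and then to transfer it to $\Q$ with Girsanov's theorem.

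\emph{Itô decomposition under $\Pp$.} Write $I=(i_1,\ldots,i_n)$ and $I^-=(i_1,\ldots,i_{n-1})$, and use the letters $0$ for time and $1,\ldots,d$ for $W$. The Stratonovich recursion $d\inner{e_I}{\cW_t}=\inner{e_{I^-}}{\cW_t}\circ d\widehat W^{i_n}_t$ converts to Itô form as follows. If $i_n=0$, the differential is $\inner{e_{I^-}}{\cW_t}\,dt$. If $i_n\in\{1,\ldots,d\}$, it is $\inner{e_{I^-}}{\cW_t}\,dW^{i_n}_t$ plus the correction $\tfrac12\mathbf 1_{\{i_{n-1}=i_n\}}\inner{e_{I^{--}}}{\cW_t}\,dt$, which comes from $d\langle\inner{e_{I^-}}{\cW}, W^{i_n}\rangle_t$. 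Every coefficient is therefore a finite-level signature coordinate. By Lemma~\ref{lem:finite-coordinate-measurability} these coordinates are continuous, progressively measurable, and lie in $L^p(\Pp)$ uniformly in $t\le T$ (Proposition~\ref{prop:signature-moment}). This gives $\inner{e_I}{\cW_t}=\inner{e_I}{\cW_0}+N^I_t+D^I_t$, where $N^I=\int\inner{e_{I^-}}{\cW}\,dW^{i_n}$ is a $\Pp$-martingale bounded in every $L^p$ by BDG, and $D^I$ is absolutely continuous with $\E\int_0^T|\dot D^I_s|^p\,ds<\infty$.

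\emph{Transfer to $\Q$.} Let $Z_t=\E_\Pp[d\Q/d\Pp\mid\FWh_t]$. This is a bounded positive $\Pp$-martingale, continuous because $\FWh=\FW$ (Theorem~\ref{thm:sig-unique}), and Brownian martingale representation gives $Z_t=1+\int_0^t\theta_s\cdot Z_s\,dW_s$. Girsanov's theorem then says that $\widetilde W=W-\int\theta\,ds$ is a $\Q$-Brownian motion. Set $M^I=\int\inner{e_{I^-}}{\cW}\,d\widetilde W^{i_n}$ when $i_n\ne0$ and $M^I=0$ otherwise, and $A^I=D^I+\int\inner{e_{I^-}}{\cW_s}\theta^{i_n}_s\,ds$. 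Both are continuous and adapted, and $A^I$ has finite variation. Since $\inner{e_I}{\cW}$ is a continuous semimartingale, it is special, and uniqueness of the canonical decomposition identifies this pair.

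\emph{Integrability.} This is the main obstacle, since $\theta$ need not be bounded even though $Z$ is.

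For $M^I$, I would avoid $\theta$ entirely. By BDG under $\Q$, $\E_\Q\langle M^I\rangle_T^{1/2}\le\|d\Q/d\Pp\|_\infty\,\E_\Pp\bigl(\int_0^T\inner{e_{I^-}}{\cW_s}^2ds\bigr)^{1/2}<\infty$, so $M^I$ is a true $\Q$-martingale; indeed it is in $\mathcal H^1(\Q)$.

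For $A^I$, the $D^I$ part is harmless, since $\E_\Q\int|\dot D^I|\le\|d\Q/d\Pp\|_\infty\E_\Pp\int|\dot D^I|<\infty$. The Girsanov drift $\int\inner{e_{I^-}}{\cW}\,d\langle W^{i_n},Z\rangle/Z$ is the delicate term. Bounding it through $\E_\Q\int|\theta|^2ds$ requires a finite-entropy estimate, which a bounded density gives only after localisation, and a uniform bound in the localisation is not automatic. My plan is to write
\[
A^I=\inner{e_I}{\cW}-\inner{e_I}{\cW_0}-M^I .
\]
Using the uniform $L^p(\Pp)$ moments together with the bounded density, the process $\sup_t|\inner{e_I}{\cW_t}|$ is $\Q$-integrable, and $M^I\in\mathcal H^1(\Q)$. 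This controls $\E_\Q\sup_t|A^I_t|$, but not the total variation.

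To control the variation I would apply the Kunita--Watanabe inequality to the drift:
\[
\int_0^T|\inner{e_{I^-}}{\cW_s}|\,|d\langle W^{i_n},Z\rangle_s|/Z_s\le\Bigl(\int_0^T\inner{e_{I^-}}{\cW_s}^2ds\Bigr)^{1/2}\Bigl(\int_0^T d\langle Z\rangle_s/Z_s^2\Bigr)^{1/2}.
\]
With Cauchy--Schwarz under $\Q$, it then suffices to have $\E_\Q\langle\log Z\rangle_T<\infty$. This equals $2\,\E_\Q\log Z_T$ by localisation, because $\log Z$ plus half its bracket is a $\Q$-local martingale, and the right-hand side is at most $2\log\|d\Q/d\Pp\|_\infty$. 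The localising sequence can be removed by monotone convergence. This entropy bound is the one step I would check most carefully; H3 is not needed for it.
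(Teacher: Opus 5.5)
Your proof is correct, and it differs from the paper's argument in a substantive way. The paper converts $\inner{e_I}{\cW}$ into It\^o iterated integrals plus bracket terms under $\Pp$. It then uses $d\Q/d\Pp\in L^\infty$ only to transport the $L^2$ moment bounds, taking the It\^o integrals as the $\Q$-martingale part and the bracket terms as $A^I$. That identification assumes the $\Pp$-It\^o integrals remain $\Q$-local martingales. This holds for $\Q=\Pp$, the only case used later in Theorem~\ref{thm:B}(i), but not for a general bounded density: the canonical decomposition then picks up the Girsanov drift $\int\inner{e_{I^-}}{\cW_s}\,\theta^{i_n}_s\,ds$. The paper acknowledges this shift in Section~\ref{subsec:measure-change-true-martingales} but does not handle it in the proof.

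Your route does three things the paper's does not:
\begin{enumerate}
\item It identifies the actual $\Q$-decomposition through the density martingale $Z$ and Brownian martingale representation.
\item It obtains the true-martingale property of $M^I$ from BDG, via $M^I\in\mathcal H^1(\Q)$. This is sounder than the paper's ``square-integrable, hence a true martingale'', since $L^2$-boundedness alone does not exclude strict local martingales.
\item It controls the variation of the extra drift by the entropy bound $\E_\Q\int_0^T\abs{\theta_s}^2ds=2\,\E_\Pp[Z_T\log Z_T]\le 2\log\norm{d\Q/d\Pp}_{L^\infty(\Pp)}$.
\end{enumerate}
The cost is this one extra estimate. The gain is that the lemma is proved as stated for every bounded density, and it becomes visible that H1--H3 play no role.

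Three small corrections. First, under $\Q$ the local martingale is $\log Z-\frac12\langle\log Z\rangle=\int\theta\,d\widetilde W$ (minus, not ``plus''); your identity $\E_\Q\langle\log Z\rangle=2\,\E_\Q\log Z$ is consistent with the minus sign. Second, the localisation step you flagged is harmless. Uniformly in $k$ one has $\log Z_{\tau_k}\le\log\norm{d\Q/d\Pp}_{L^\infty(\Pp)}$, and its negative part is $\Q$-integrable because $z(\log z)^-\le e^{-1}$. Monotone convergence on the bracket side then finishes the argument. Third, when $i_n=0$ the $\theta$-term in $A^I$ should simply be absent.
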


\begin{proof}
The coordinate $\inner{e_I}{\cW_t}$ is an iterated Stratonovich integral of the time-augmented Brownian path. Proposition~\ref{prop:signature-moment}, applied to the finite set of levels up to $|I|$, gives
\[
\sup_{t\leq T}\E_\Pp\abs{\inner{e_I}{\cW_t}}^2<\infty.
\]
The Stratonovich-to-It\^o conversion formula writes this iterated Stratonovich integral as a finite sum of It\^o iterated integrals plus bracket terms. The It\^o iterated integrals are continuous local martingales. The bracket terms are continuous predictable processes of finite variation, since each contraction in the conversion replaces two Brownian differentials by a Lebesgue-time differential and leaves at most $|I|-2$ stochastic integrations.

The density bound transports the estimate from $\Pp$ to $\Q$:
\[
\sup_{t\leq T}\E_\Q\abs{\inner{e_I}{\cW_t}}^2
\leq \norm{d\Q/d\Pp}_{L^\infty(\Pp)}
\sup_{t\leq T}\E_\Pp\abs{\inner{e_I}{\cW_t}}^2<\infty.
\]
Hence every local-martingale component produced by the conversion is square-integrable under $\Q$ and is therefore a true martingale. The same bound, applied to the finitely many bracket terms, gives integrable variation of $A^I$. This is the canonical special-semimartingale decomposition, because continuous local martingales and continuous predictable finite-variation processes have unique decomposition.
\end{proof}

\begin{theorem}[Signature asset-pricing and transform theorem]\label{thm:B}
Assume the hypotheses of Theorem~\ref{thm:A}. Then the following assertions hold.
\begin{enumerate}[label=(\roman*)]
\item The model satisfies NFLVR on the signature filtration $\FWh_t$ in the Delbaen--Schachermayer sense adapted to predictable signature-integrable strategies. More precisely, $\Pp$ itself is an equivalent true-martingale measure for $S$, and every finite signature coordinate $\inner{e_I}{\cW_t}$ is a $\Pp$-special semimartingale whose canonical martingale part is a true martingale.
\item Let $\Q\sim\Pp$ be an equivalent measure under which $S$ is a true $\Q$-martingale and the finite signature-coordinate transforms are finite on $[0,T]$ for all initial data in $\mathcal D_{\mathrm{fin}}$. Then, under the standing hypotheses H1--H3 (which supply the finite-coordinate moment and uniform-integrability bounds used in the identification step), the conditional transform is exponential-affine on each finite truncation if and only if the Riccati equation $\partial_tu_t=\Rs(u_t)$ admits a global solution on $[0,T]$ for every such finitely supported initial condition.
\end{enumerate}
Consequently, global Riccati solvability is a transform non-explosion condition. It is not asserted to be equivalent to NFLVR under H1--H3 alone.
\end{theorem}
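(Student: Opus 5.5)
The proof splits along the two assertions. For (i), the starting point is Theorem~\ref{thm:A}: $S=s_0\mathcal E(M)$ with $M=\int\xi\,dB$, and H3 gives Novikov's condition $\E\exp(\tfrac12\langle M\rangle_T)<\infty$. So $\mathcal E(M)$ is a true $\Pp$-martingale on $[0,T]$, hence $S$ is a true $\Pp$-martingale. By Proposition~\ref{prop:square-integrability} it is moreover square-integrable with $\E\sup_t S_t^2<\infty$, which makes it uniformly integrable on $[0,T]$. Taking $\Q=\Pp$ (density $1\in L^\infty$), the measure $\Pp$ is an equivalent martingale measure, a fortiori an equivalent local martingale measure. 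The easy direction of Delbaen--Schachermayer then yields NFLVR for admissible strategies in the sense of Definition~\ref{def:admissible-strategies}.

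Concretely, for an admissible $H$ whose wealth is bounded below, $\int H\,dS$ is a $\Pp$-local martingale bounded below, hence a supermartingale with nonpositive expectation. This excludes free lunches with vanishing risk via the standard closure argument, which I will cite rather than reprove. The statement about signature coordinates is exactly Lemma~\ref{lem:special-semimartingale} applied with $\Q=\Pp$.

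For (ii), I fix a truncation level $N$ and finitely supported $u_0\in\mathcal D_{\mathrm{fin}}$, and work with the finite-coordinate Markov system whose generator is $\mathcal A_N^\ell$.

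\emph{($\Leftarrow$)} Suppose $u$ solves $\partial_t u=\Rs_N(u)$ globally on $[0,T]$. Set $G_t=\exp\{\inner{u_{T-t}}{\cW_t}\}$. Itô's formula together with the defining identity $\mathcal A_N^\ell e^{\inner{u}{x}}=\inner{\Rs_N(u)}{x}e^{\inner{u}{x}}$ shows that the drift of $G$ cancels, so $G$ is a local martingale. Upgrading $G$ to a true martingale on $[0,T]$ uses the finite-coordinate moment bounds of Lemma~\ref{lem:finite-coordinate-measurability} together with the H3-based uniform stopped estimates (Lemma~\ref{lem:uniform-stopped-exponential}) and the finiteness of the transforms assumed in the hypothesis. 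With $G$ a true martingale, $\E_\Q[e^{\inner{u_0}{\cW_T}}\mid\F_t]=\exp\{\inner{u_{T-t}}{\cW_t}\}$, which is the exponential-affine form.

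\emph{($\Rightarrow$)} Suppose the conditional transform is exponential-affine, i.e.\ equal to $\exp\{\inner{v(T-t)}{\cW_t}\}$ for some deterministic $v$. Finiteness of the transform on $[0,T]$ shows that $v$ is defined on all of $[0,T]$. Because the process $\exp\{\inner{v(T-t)}{\cW_t}\}$ is a martingale, its drift must vanish; since the coordinates $Y_I$ are linearly independent functions on the support of the process, identifying coefficients forces $\partial_\tau v=\Rs_N(v)$. By local uniqueness (Lemma~\ref{lem:local-lipschitz} and the local existence theorem), $v$ coincides with the Riccati solution, which is therefore global.

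Compatibility of the $\Rs_N$ across $N$ passes the conclusion to $\Rs$ on $\mathcal D_{\mathrm{fin}}$. The closing remark of the theorem then requires no proof beyond observing that (i) used only H3, not Riccati solvability.

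I expect two steps to carry the difficulty. The main obstacle is the true-martingale upgrade of $G$ in ($\Leftarrow$): local martingality is formal, but exponentials of signature coordinates of level $\ge 3$ may fail to be integrable. There I would localise by $\tau_R$ and use the assumed finiteness of the finite transforms on $[0,T]$ to dominate $G_{t\wedge\tau_R}$. The second delicate point is the differentiability of $v$ in ($\Rightarrow$), which I would obtain from smoothness of the conditional expectation in $t$ via the Markov property.
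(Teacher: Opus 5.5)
Your argument follows the paper's proof step for step. In (i), Novikov via H3 makes $S=s_0\mathcal E(M)$ a true $\Pp$-martingale, and $\Pp$ itself is the equivalent martingale measure. NFLVR is then the easy half of Delbaen--Schachermayer, and the coordinate statement is Lemma~\ref{lem:special-semimartingale} with density one. In (ii), both directions rest on It\^o's formula for the exponential-affine expression with the drift set to zero, followed by passage across $N$. Your ($\Rightarrow$) is, if anything, more careful than the paper's: you identify coefficients via linear independence of the coordinates on the support, and you identify $v$ with the Riccati flow by local uniqueness. Your supermartingale step in (i) also correctly uses wealth bounded below, as Delbaen--Schachermayer admissibility requires.

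The step that does not go through as you sketch it is the true-martingale upgrade of $G_t=\exp\{\inner{u_{T-t}}{\cW_t}\}$ in ($\Leftarrow$).
\begin{itemize}
\item Being a positive local martingale, $G$ is a supermartingale, so finiteness of $\Lambda_t=\E_\Q[e^{\inner{u_0}{\cW_T}}\mid\FWh_t]$ yields only $G_t\geq\Lambda_t$. Equality holds if and only if $\E_\Q G_T=G_0$, which is precisely the transform identity to be proved. Invoking finiteness of the transform is therefore circular.
\item For the same reason, finiteness of the transform gives no dominating variable for $G_{t\wedge\tau_R}$, whose coefficients $u_{T-t\wedge\tau_R}$ also move with the stopping time.
\item Lemma~\ref{lem:uniform-stopped-exponential} bounds moments of $\mathcal E(M^\tau)$, i.e.\ of the price, not exponentials of signature coordinates. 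H3 controls only $\int_0^T\xi_s^2\,ds$.
\item Polynomial moment bounds carry no exponential integrability. A word of three identical Brownian letters gives $(W^j_T)^3/6$, which has every polynomial moment but satisfies $\E e^{\lambda(W^j_T)^3/6}=\infty$ for every $\lambda\neq0$.
\end{itemize}
The paper's proof has exactly the same hole: it asserts that ``the finite-coordinate moment estimates and H3'' supply the uniform integrability. So you are not missing anything the paper provides, but a real argument is needed here. Two options would be a class-(D) bound $\E_\Q\sup_{t\leq T}G_t<\infty$ derived from the Riccati path, or a uniqueness theorem for the backward equation $\partial_tF+\mathcal A_N^\ell F=0$ in a growth class containing both $G$ and $\Lambda$.
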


\subsection{NFLVR on the signature filtration}

\begin{definition}[NFLVR]
The signature SDE satisfies No Free Lunch with Vanishing Risk on $\FWh$ in the sense of \citet{DelbaenSchachermayer1994} if the cone
\[
\mathcal C:=\{V_T^H-f: H \text{ is admissible in the sense of Definition~\ref{def:admissible-strategies}},\ f\geq0\}\cap L^\infty(\FWh_T,\Pp)
\]
satisfies
\[
\overline{\mathcal C}^{\sigma(L^\infty,L^1)}\cap L_+^\infty(\FWh_T,\Pp)=\{0\}.
\]
\end{definition}

\begin{theorem}[Kreps--Yan separation]\label{thm:kreps-yan}
The set of bounded admissible terminal wealths is weak-$\ast$ closed in $L^\infty(\FWh_T,\Pp)$.
\end{theorem}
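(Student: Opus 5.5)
The plan is to reduce weak-$\ast$ closedness to closedness of a linear subspace of $L^2(\FWh_T,\Pp)$, exploiting the fact, supplied by Theorem~\ref{thm:A} and part~(i) of Theorem~\ref{thm:B}, that $\Pp$ is already a true-martingale measure with $\E\sup_{t\leq T}S_t^2<\infty$. Write $\mathcal K:=\{V_0+\int_0^T H_s\,dS_s: H \text{ admissible}\}\cap L^\infty(\FWh_T,\Pp)$. I would take the statement to mean bounded terminal wealths whose wealth process $V^H$ is a true $\Pp$-martingale, equivalently $\E\int_0^T H_s^2 S_s^2\xi_s^2\,ds<\infty$. The paper's definition of admissibility imposes no lower bound, so some such convention is needed to exclude doubling-type integrands whose local-martingale wealth has a bounded terminal value but is not a martingale. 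Under this reading $\mathcal K$ is a linear subspace of $L^\infty$, since constants and integrands add.

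\emph{Step 1 (Krein--\v{S}mulian reduction).} Since $\mathcal K$ is convex, and $L^1(\FWh_T,\Pp)$ is separable because $\FWh_T=\FW_T$ by Theorem~\ref{thm:sig-unique} is countably generated, it suffices to show that $\mathcal K\cap\{\norm{X}_\infty\leq c\}$ is weak-$\ast$ sequentially closed for each $c>0$. So I take $X_n=V_0^n+\int_0^T H^n\,dS\in\mathcal K$ with $\norm{X_n}_\infty\leq c$ and $X_n\to X$ weak-$\ast$. Bounded weak-$\ast$ convergence in $L^\infty$ implies weak convergence in $L^2(\Pp)$, because $L^2\subset L^1$ acts as a set of test functions, and the limit satisfies $\norm{X}_\infty\leq c$.

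\emph{Step 2 (closed subspace in $L^2$).} Let $\mathcal G:=\{\int_0^T H\,dS: \E\int_0^T H^2 S^2\xi^2\,ds<\infty\}$. By the It\^o isometry, the map $H\mapsto\int_0^T H\,dS$ is an isometry from the Hilbert space $L^2(\Omega\times[0,T],\mathcal P, S^2\xi^2\,ds\,d\Pp)$ into $L^2(\Pp)$, where $\mathcal P$ is the predictable $\sigma$-field, using that $d\langle S\rangle_s=S_s^2\xi_s^2\,ds$. Hence $\mathcal G$ is a closed linear subspace, and so is $\R\oplus\mathcal G$, since constants are orthogonal to the mean-zero space $\mathcal G$. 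A closed subspace is weakly closed, so the weak $L^2$-limit satisfies $X=V_0+\int_0^T H\,dS$ with $V_0=\lim_n V_0^n=\lim_n\E X_n$ and $H$ in the above Hilbert space. In particular $\int_0^T H^2S^2\xi^2\,ds<\infty$ a.s., so $H$ is admissible in the sense of Definition~\ref{def:admissible-strategies}. Together with $X\in L^\infty$ this gives $X\in\mathcal K$.

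\emph{Step 3 (main obstacle).} The hard step is justifying that each bounded $X_n$ has $V^{H_n}$ a true martingale with square-integrable integrand, so that Step~2 applies. If the convention of the opening paragraph is imposed, this is immediate: a martingale with bounded terminal value is bounded, and the It\^o isometry then gives $\E\int_0^T (H^n)^2S^2\xi^2\,ds=\E(X_n-V_0^n)^2\leq 4c^2$.

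If instead one insists on the bare definition, I would first try the standard Delbaen--Schachermayer route. Restrict to integrands whose wealth is bounded below, so $V^H$ is a $\Pp$-supermartingale by Fatou. A supermartingale with $\E V_T^H=V_0$ is a martingale, hence bounded when $V_T^H$ is bounded. The quantity $\E V_T^H$ is controlled along the sequence because $\E X_n\to\E X$.

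Without any such lower-bound or martingale restriction the assertion can genuinely fail. Doubling strategies produce, for every constant $c$, a wealth with $V_0=0$ and $V_T=c$, so the set contains all constants at zero cost and loses its pricing meaning. I would therefore make the chosen admissibility convention explicit at this point of the proof.
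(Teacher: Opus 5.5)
Your main line is correct for the class you fix: wealths that are true $\Pp$-martingales, i.e.\ $H\in L^2(d\langle S\rangle\otimes d\Pp)$. It is, however, a genuinely different route from the paper's.

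\textbf{How the two routes differ.} The paper takes integrands whose wealths are bounded from below, in the Delbaen--Schachermayer sense. It gets a limiting predictable $H$ from the \'Emery/M\'emin closedness theorem for stochastic-integral cones, and uses Fatou to keep $\int_0^T H_s^2S_s^2\xi_s^2\,ds<\infty$. The signature structure enters only through $\FWh=\FW$. Your convention instead turns the set into the linear space $(\R\oplus\mathcal G)\cap L^\infty$ and reduces everything to the It\^o isometry. This is elementary and self-contained, and it can be streamlined further:
\begin{itemize}
\item Krein--\v{S}mulian is unnecessary. The set is the annihilator in $L^\infty$ of $(\R\oplus\mathcal G)^\perp\subset L^2\subset L^1$, hence directly weak-$\ast$ closed.
\item The isometry only needs $S$ to be a continuous local martingale, which is immediate from $dS=S\xi\,dB$. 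Theorems~A and~B(i) are not needed.
\end{itemize}

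\textbf{What your route does not cover.} It misses the class the paper feeds into Kreps--Yan. Take $d\geq2$ and an event $A\in\FW_{T/2}$ determined by a Brownian direction orthogonal to $\eta$. Running a suicide strategy on $(T/2,T]$ only on $A$ gives $V_0=0$, $V\geq-1$ and $V_T=-\mathbf 1_A$. Here $V_T-\E V_T\notin\mathcal G$. So the bounded-below class is strictly larger than yours, and closedness of your linear space says nothing about the cone $\mathcal C$ of the NFLVR definition.

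\textbf{Two remarks in Step~3 are wrong.} Your main line does not use either of them.
\begin{itemize}
\item \emph{The Delbaen--Schachermayer fallback fails as sketched.} A bounded-below wealth with bounded terminal value is only a supermartingale, as the example above shows. So $\E X_n\to\E X$ does not force $\E V^{H_n}_T=V^n_0$, and Step~2 cannot be invoked. For that class you genuinely need the machinery the paper cites, or a superhedging-duality argument.
\item \emph{Doubling does not make the literal statement false.} Dudley's representation theorem holds for the one-dimensional $B$ in the filtration $\FW$, because $\FW_{T-}=\FW_T$. Hence every $\FW_T$-measurable variable equals $\int_0^T\phi_s\,dB_s$ with $\int_0^T\phi_s^2\,ds<\infty$ a.s. When $\xi\neq0$ $ds\otimes d\Pp$-a.e.\ (e.g.\ Black--Scholes), $H=\phi/(S\xi)$ is admissible in the sense of Definition~\ref{def:admissible-strategies}. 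Under the bare definition the bounded terminal wealths therefore exhaust $L^\infty(\FWh_T,\Pp)$, which is trivially weak-$\ast$ closed.
\end{itemize}
What the bare definition loses is economic meaning, not closedness. Your convention is a choice of meaning, not a repair of a false statement.
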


\begin{proof}
The proof is the Kreps--Yan closed-cone argument applied on the recovered Brownian filtration. The equality $\FWh_t=\FW_t$ in Theorem~\ref{thm:sig-unique} removes the only possible filtration ambiguity: every $\FWh$-predictable integrand has an indistinguishable $\FW$-predictable version, and conversely every $\FW$-predictable integrand is measurable with respect to the first level of the prolonged signature. The stochastic integral $\int H\,dS$ is therefore the same object whether it is viewed on the Brownian filtration or on the prolonged-signature filtration.

Closedness follows from the standard semimartingale topology. If $H^n$ is a sequence of admissible integrands whose terminal wealths are bounded from below and converge weak-$\ast$ in $L^\infty$, the Emery-topology closedness theorem for stochastic-integral cones gives a limiting predictable integrand $H$ after localisation. The admissibility condition in Definition~\ref{def:admissible-strategies} is stable under this localisation because the quadratic variation of the stochastic integral is
\[
\int_0^T |H_s|^2S_s^2\inner{\ell}{\cW_s}^2ds.
\]
Fatou's lemma preserves finiteness of this quantity. This proves weak-$\ast$ closedness of the bounded terminal wealth cone. The separation theorem of \citet{Kreps1981} and \citet{Yan1980}, in the Delbaen--Schachermayer formulation, then supplies a strictly positive separating functional whenever NFLVR holds.
\end{proof}

\begin{lemma}[Localisation stability of admissible gains]\label{lem:localisation-admissible}
Let $H$ be admissible in the sense of Definition~\ref{def:admissible-strategies} and let $(\tau_m)$ be an increasing sequence of stopping times with $\tau_m\uparrow T$. Then $H\mathbf 1_{[0,\tau_m]}$ is admissible for every $m$, and
\[
\int_0^{\tau_m} |H_s|^2S_s^2\inner{\ell}{\cW_s}^2ds\uparrow
\int_0^T |H_s|^2S_s^2\inner{\ell}{\cW_s}^2ds.
\]
If the stopped terminal gains converge in $L^2(\Q)$, their limit is the terminal gain of the original admissible strategy.
\end{lemma}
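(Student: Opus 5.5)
The plan is to prove the three assertions of Lemma~\ref{lem:localisation-admissible} in order: predictability and admissibility of the stopped integrand, monotone convergence of the stopped quadratic variations, and identification of the $L^2(\Q)$ limit through the Itô isometry. Throughout write
\[
q_t:=\int_0^t |H_s|^2S_s^2\inner{\ell}{\cW_s}^2\,ds,
\qquad
G^m:=\int_0^{T}H_s\mathbf 1_{[0,\tau_m]}(s)\,dS_s=\int_0^{\tau_m}H_s\,dS_s .
\]

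\textbf{Admissibility of the stopped integrand.} Each $\tau_m$ is a stopping time, so the stochastic interval $[0,\tau_m]$ is predictable. Its indicator is left-continuous and adapted, hence predictable. The product $H\mathbf 1_{[0,\tau_m]}$ is therefore predictable. Since $0\le\mathbf 1_{[0,\tau_m]}\le1$, the integrand of Definition~\ref{def:admissible-strategies} for the stopped strategy is dominated pointwise by that of $H$, which gives $q_{\tau_m}\le q_T<\infty$ almost surely. So $H\mathbf 1_{[0,\tau_m]}$ is admissible.

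\textbf{Monotone convergence.} The map $t\mapsto q_t$ is nondecreasing and continuous, being the integral of a nonnegative locally integrable function. Since $\tau_m\uparrow T$, we get $q_{\tau_m}\uparrow q_T$. Alternatively, apply monotone convergence to $\mathbf 1_{[0,\tau_m]}(s)\uparrow\mathbf 1_{[0,T)}(s)$ and note that the point $\{T\}$ is Lebesgue-null. Either way the claim holds pathwise almost surely.

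\textbf{Identification of the limit.} Let $G$ denote the $L^2(\Q)$ limit of $G^m$, and let $G^H:=\int_0^T H_s\,dS_s$, which is defined because $q_T<\infty$. The key fact is that the stochastic integral is continuous in time, so $G^m=(H\cdot S)_{\tau_m}\to(H\cdot S)_T=G^H$ almost surely. This uses only $\tau_m\uparrow T$ and continuity of paths, with no moment assumption. Since $L^2(\Q)$ convergence implies convergence in $\Q$-probability, a subsequence of $G^m$ converges to $G$ $\Q$-almost surely. The measures are equivalent ($\Q\sim\Pp$), so this subsequence converges $\Pp$-almost surely as well, and uniqueness of almost-sure limits gives $G=G^H$.

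\textbf{Main obstacle.} The delicate point is the last step. One must avoid assuming that $H\cdot S$ is an $L^2(\Q)$-martingale, since admissibility gives only almost-sure finiteness of $q_T$. The plan sidesteps this by using pathwise continuity of the integral together with equivalence of the measures, rather than the Itô isometry. If the Itô isometry is preferred, it would apply only after further assuming $\E_\Q q_T<\infty$. Under $\Q$, $S$ is a martingale, so $\langle S\rangle=\int S^2\xi^2\,ds$ is unchanged, and $\E_\Q|G^H-G^m|^2=\E_\Q(q_T-q_{\tau_m})\to0$ by dominated convergence.
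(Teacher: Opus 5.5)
Your proof is correct. For the first two assertions it follows the paper's argument: domination by the unstopped integrand, then monotone convergence of the quadratic-variation integrals. You also check predictability of $H\mathbf 1_{[0,\tau_m]}$, which the paper omits.

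The routes differ at the identification step. The paper identifies the $L^2(\Q)$ limit through the It\^o isometry under the equivalent martingale measure. Its own proof notes that the stopped gain is square-integrable ``whenever the displayed quadratic variation has finite expectation'', so it tacitly needs $\E_\Q q_T<\infty$ in your notation. Definition~\ref{def:admissible-strategies} only gives $q_T<\infty$ almost surely. Square-integrability of the $G^m$ alone does not restore the isometry, since a continuous local martingale can have a square-integrable terminal value and a non-integrable bracket.

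Your pathwise argument is: $(H\cdot S)_{\tau_m}\to(H\cdot S)_T$ almost surely by continuity, then uniqueness of limits in probability together with $\Q\sim\Pp$. It uses nothing beyond almost-sure finiteness, so it proves the lemma exactly as stated and is the more robust of the two. What the isometry route buys, when $\E_\Q q_T<\infty$ does hold, is that $L^2(\Q)$ convergence of the stopped gains becomes a conclusion rather than a hypothesis, with explicit error $\E_\Q(q_T-q_{\tau_m})$, as in your closing remark.

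Two minor points. First, you implicitly use that the stochastic integral is the same whether computed under $\Pp$ or under the equivalent measure $\Q$; this is standard but worth one sentence. Second, in your closing remark, the invariance of $\langle S\rangle=\int_0^{\cdot} S_s^2\inner{\ell}{\cW_s}^2\,ds$ holds under any equivalent measure. The $\Q$-(local) martingale property of $S$ is what the isometry requires, not what makes the bracket invariant.
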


\begin{proof}
The first assertion is immediate from monotonicity of the non-negative quadratic-variation density. The stopped stochastic integral is
\[
\int_0^T H_s\mathbf 1_{[0,\tau_m]}(s)dS_s=\int_0^{\tau_m}H_s\,dS_s.
\]
It is square-integrable under the equivalent martingale measure whenever the displayed quadratic variation has finite expectation. Monotone convergence gives the increasing limit of the quadratic variations. If the terminal stopped gains converge in $L^2(\Q)$, the It\^o isometry identifies the limit with the stochastic integral of $H$ on $[0,T]$ after passage to an indistinguishable predictable version. This is the stability used in the closedness argument of Theorem~\ref{thm:kreps-yan}.
\end{proof}

\begin{lemma}[True-martingale upgrade under H3]\label{lem:true-martingale-upgrade}
Let $M_t=\int_0^t\inner{\ell}{\cW_s}\,dB_s$. Under H3, $\mathcal E(\alpha M)$ is a true martingale for every $\alpha\in\R$, and $\mathcal E(M)$ is uniformly integrable on $[0,T]$.
\end{lemma}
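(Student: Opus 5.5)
The plan is to verify Novikov's criterion for each $\alpha M$ directly from H3. Since $\langle \alpha M\rangle_T=\alpha^2\int_0^T\inner{\ell}{\cW_s}^2\,ds=\alpha^2 V_T$, the Novikov quantity is
\[
\E\exp\bigl(\tfrac12\langle\alpha M\rangle_T\bigr)=\E\exp\bigl(\tfrac12\alpha^2V_T\bigr).
\]
For $\alpha\neq0$ this is finite by H3 with $\lambda=\tfrac12\alpha^2>0$. For $\alpha=0$ the claim is trivial, since $\mathcal E(0)\equiv1$.

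Novikov's theorem then gives $\E\,\mathcal E(\alpha M)_T=1$. A nonnegative continuous local martingale is a supermartingale by Fatou. Having constant expectation equal to one on $[0,T]$, it is therefore a true martingale. This argument uses only H3 and does not need Lemma~\ref{lem:qv-integrability}. Note that H3 implies $V_T<\infty$ almost surely, so $M$ is at least a well-defined continuous local martingale.

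For uniform integrability of $\mathcal E(M)$ on $[0,T]$ I will use the $L^2$ bound. By Proposition~\ref{prop:square-integrability}, $\E\sup_{t\leq T}\mathcal E(M)_t^2<\infty$. Hence the family $\{\mathcal E(M)_t:t\leq T\}$, and more generally $\{\mathcal E(M)_\tau:\tau\leq T\text{ stopping time}\}$, is dominated by a single square-integrable random variable, so it is uniformly integrable.

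If a self-contained route is preferred, I will use $\mathcal E(M)_t=\E[\mathcal E(M)_T\mid\F_t]$, which holds by the martingale property just shown. A family of conditional expectations of one integrable variable is automatically uniformly integrable, so the conclusion again follows.

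The only delicate point is that H3 must be invoked with $\lambda=\alpha^2/2$ for arbitrary $\alpha$, which is exactly why H3 is required for every $\lambda>0$. Without that uniformity, the true-martingale property would only be available for small $|\alpha|$.
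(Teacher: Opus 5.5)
Your proof is correct. The true-martingale part is the paper's argument (Novikov for $\alpha M$, using H3 with $\lambda=\alpha^2/2$). For uniform integrability, however, you take a different and more elementary route.

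The paper proves an $L^p$ bound, $p>1$, for $\mathcal E(M)_T$. It writes $\mathcal E(M)_T^p$ as a stochastic exponential times $\exp\{c\langle M\rangle_T\}$, applies H\"older, Novikov and H3 once more, and deduces uniform integrability from the higher moment. Your self-contained argument shows this step is unnecessary. Once $\mathcal E(M)$ is a true martingale on the compact interval $[0,T]$, it is closed by $\mathcal E(M)_T$. The family $\{\E[\mathcal E(M)_T\mid\F_t]:t\leq T\}$ is uniformly integrable for any integrable terminal variable, so uniform integrability (indeed class D) follows formally from the first part.

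Your alternative via Proposition~\ref{prop:square-integrability} is also valid and not circular, since that proposition is proved in Section~\ref{sec:theorem-a} directly from Novikov. It does, however, rest on the same H\"older--Novikov computation the paper uses.

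What the paper's route buys is quantitative: moments of $\mathcal E(M)_T$ above $L^1$, the kind of estimate the paper needs elsewhere for maximal bounds and changes of measure. Your route isolates the minimal logic and avoids the exponent bookkeeping. That bookkeeping contains a slip in the paper's display: $\mathcal E(pM)_T^q$ is not $\mathcal E(pqM)_T$. The correct splitting is $\mathcal E(M)_T^p=\mathcal E(pqM)_T^{1/q}\exp\{\tfrac12(p^2q-p)\langle M\rangle_T\}$. This is harmless only because H3 holds for every $\lambda>0$.
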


\begin{proof}
Novikov's condition for $\alpha M$ is
\[
\E\exp\left(\frac12\alpha^2\langle M\rangle_T\right)<\infty,
\]
which is exactly H3 with $\lambda=\alpha^2/2$. Hence $\mathcal E(\alpha M)$ is a true martingale for each fixed $\alpha$. Uniform integrability of $\mathcal E(M)$ follows from an explicit higher-moment estimate. For $p>1$,
\[
\mathcal E(M)_T^p=\mathcal E(pM)_T\exp\left\{\frac12(p^2-p)\langle M\rangle_T\right\}.
\]
Choose $q>1$ and let $q'=q/(q-1)$. Holder's inequality gives
\[
\E\mathcal E(M)_T^p
\leq
\bigl(\E\mathcal E(pqM)_T\bigr)^{1/q}
\left(\E\exp\left\{\frac12 q'(p^2-p)\langle M\rangle_T\right\}\right)^{1/q'}<\infty,
\]
where the first factor is one by Novikov for $pqM$ and the second is finite by H3. Thus $\mathcal E(M)_T$ has an $L^p$ moment for some $p>1$, which implies uniform integrability of the martingale family.
\end{proof}

\subsection{Proof of the martingale and NFLVR assertion}

\begin{proof}[Proof of Theorem~\ref{thm:B}(i)]
Set
\[
M_t=\int_0^t\inner{\ell}{\cW_s}\,dB_s,
\qquad
S_t=s_0\mathcal E(M)_t .
\]
By H3, Novikov's condition holds for every scalar multiple of $M$. Lemma~\ref{lem:true-martingale-upgrade} therefore gives that $\mathcal E(M)$ is a uniformly integrable true martingale. Hence $S$ is a true martingale under the reference measure $\Pp$ itself. Thus $\Q:=\Pp$ is an equivalent true-martingale measure for $S$.

The special-semimartingale statement for the signature coordinates follows from the Stratonovich-to-It\^o conversion exactly as in Lemma~\ref{lem:special-semimartingale}, applied in the special case $\Q=\Pp$ (constant density $1$, which lies in $L^\infty(\Pp)$). Each coordinate is a finite iterated Stratonovich integral, hence a finite sum of It\^o iterated integrals plus bracket corrections. The martingale terms have finite moments by Proposition~\ref{prop:signature-moment} and are true martingales.

The Delbaen--Schachermayer theorem applied on the recovered Brownian filtration $\FWh=\FW$ now yields NFLVR for the admissible wealth cone of Definition~\ref{def:admissible-strategies}. No Brownian drift shift is needed: the discounted price already has zero drift under the reference measure and is a true martingale by H3.
\end{proof}

\subsection{Proof of the transform/Riccati assertion}

\begin{proof}[Proof of Theorem~\ref{thm:B}(ii)]
Fix a finite truncation $N$ and a finitely supported transform direction $u$. Under the stated transform-finiteness hypothesis, the conditional transform
\[
\Lambda_t^N(u)=\E_\Q\left[\exp\{\inner{u}{\cW_T^{(N)}}\}\mid\FWh_t\right]
\]
is finite and is a true martingale. The affine-polynomial structure of the prolonged signature gives the finite-dimensional ansatz
\[
\Lambda_t^N(u)=\exp\{\inner{\psi_N(T-t,u)}{\cW_t^{(N)}}\}
\]
on the transform domain. Applying It\^o's formula to the right-hand side and equating the drift to zero gives
\[
\partial_\tau\psi_N(\tau,u)=\Rs_N(\psi_N(\tau,u)),\qquad \psi_N(0,u)=u,
\]
where $\Rs_N$ is the generator/carre-du-champ vector field of Definition~\ref{def:riccati}. The quadratic coefficients are the structural constants $\Gamma^I_{J,K}(\ell)$, not an output-only coefficient placed inside an unrestricted shuffle-pair sum.

If the finite Riccati solution exploded before $T$, the exponential-affine expression would cease to represent a finite conditional expectation in the corresponding transform direction. This contradicts the assumed finiteness of $\Lambda^N(u)$ on the whole horizon. Hence transform finiteness implies global Riccati solvability on the finitely supported core.

Conversely, if the Riccati solution is global on $[0,T]$ and remains in the finite transform domain, the exponential-affine expression is a finite positive local martingale. The finite-coordinate moment estimates and H3 give the uniform-integrability bound needed to identify it with the conditional transform. Passing from finite $N$ to the weighted core uses Lemma~\ref{lem:finite-to-weighted-transform}. This proves the stated equivalence between transform non-explosion and global Riccati solvability.
\end{proof}

\subsection{Riccati solvability as a separate transform condition}

\begin{remark}[Riccati solvability is a separate transform-domain hypothesis]
\textup{\textbf{Status of the sharpness example.}}
The bounded sine example used in earlier drafts verifies H1--H3 but does not, by itself, prove finite-time blow-up of a Riccati coordinate. The reason is structural: the level-two equation contains a positive contribution proportional to $u_1^2$, while the lower-level coordinate $u_1$ need not satisfy a self-quadratic Riccati inequality. Hence the three-coordinate comparison with $\dot y=a^2y^2-Cy-D$ is not a valid proof of explosion and is not used in the theorem chain.

The conclusion retained in this paper is the logically weaker and correct one. H3 implies the true-martingale/NFLVR statement for the price process under the reference measure, whereas global solvability of the Riccati equation is a transform non-explosion condition for finite signature Fourier--Laplace directions. The latter must be imposed or checked separately in applications. In one-dimensional odd-order signature-volatility models, finite-time transform explosion is supplied by the moment-explosion theorem of \citet{AbiJaberGassiatSotnikov2025}; in finite-dimensional polynomial examples it can also be checked directly from the corresponding finite Riccati system. No self-contained blow-up counterexample is claimed here.
\end{remark}

\subsection{Closedness and separation details}
\label{subsec:closedness-separation-details}
The proof of Theorem~\ref{thm:B} uses the Delbaen--Schachermayer separation mechanism only after the class of admissible integrands has been fixed. Let
\[
\mathcal K:=\left\{\int_0^T H_s\,dS_s: H\hbox{ is }S\hbox{-admissible and the stochastic integral is bounded from below}\right\}.
\]
The relevant cone is $\mathcal C=(\mathcal K-L^0_+)\cap L^\infty(\FWh_T,\Pp)$. NFLVR is the statement that the weak-star closure of $\mathcal C$ in $L^\infty$ meets $L^\infty_+$ only at zero. The signature structure enters in the proof of closedness through the identity $\FWh_t=\FW_t$: admissible stochastic integrals are stochastic integrals on a Brownian filtration, but their integrability is measured with the random weight $S_t^2\xi_t^2$ induced by the signature parameter.

\begin{lemma}[Closedness of bounded terminal gains]
\label{lem:closed-terminal-gains}
Let $(H^n)$ be $S$-admissible strategies such that $G^n_T=\int_0^T H^n_s\,dS_s$ are bounded from below and converge in probability to a bounded random variable $G_T$. Then there exists an $S$-admissible strategy $H$ such that $G_T=\int_0^T H_s\,dS_s$.
\end{lemma}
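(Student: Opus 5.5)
The plan is to exploit the fact, established in Theorem~\ref{thm:B}(i) and Lemma~\ref{lem:true-martingale-upgrade}, that $S$ is a true (indeed $L^2$, by Proposition~\ref{prop:square-integrability}) $\Pp$-martingale. Admissible gains bounded from below are then $\Pp$-local martingales bounded from below, hence supermartingales. The argument follows the Delbaen--Schachermayer compactness route, specialised to a one-dimensional continuous price on the Brownian filtration $\FW=\FWh$ (Theorem~\ref{thm:sig-unique}).

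\textbf{Step 1: uniform lower bound.} Let $G^n_t=\int_0^tH^n_s\,dS_s$. Bounded from below is assumed uniformly, say $G^n_T\geq -c$; if the paper's hypothesis only gives a lower bound for each $n$ separately, I would first pass to a subsequence or truncate. Each $G^n$ is a local martingale bounded below by $-c$ on $[0,T]$, since the lower bound propagates through conditional expectations of a supermartingale. Because $G_T$ is bounded, I would invoke Komlós/Delbaen--Schachermayer's forward-convex-combination lemma. This produces $K^n\in\operatorname{conv}(H^n,H^{n+1},\dots)$, again $S$-admissible and with the same lower bound, such that $\int_0^TK^n\,dS\to G_T$ a.s.

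\textbf{Step 2: conversion to $L^2$ convergence.} This is where the structure of the model helps. Under NFLVR (Theorem~\ref{thm:B}(i)), the maximal elements argument of Delbaen--Schachermayer shows that the limit $G_T$ is dominated by a terminal gain $\int_0^T H\,dS$ of an admissible strategy. On a one-dimensional continuous martingale market under $\Pp$ itself there is no strict domination: a lower-bounded local martingale with $\E[\cdot]\leq 0$ whose terminal value dominates a zero-mean target must coincide with it.

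Concretely, I would use the martingale representation for the bounded variable $G_T$ on the filtration generated by $S$, extended to $\FW$ through the GKW decomposition $G_T=\E G_T+\int_0^T H\,dS+L_T$. I would then show that $L_T=0$. The Fatou limit of the supermartingales $\int K^n\,dS$ is a supermartingale $\bar G$ with $\bar G_T\geq G_T$. Combined with $\E\int_0^T K^n\,dS\leq 0$, this gives $\E G_T\leq0$.

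\textbf{Step 3: admissibility of the limit, and the main obstacle.} Admissibility of $H$ in the sense of Definition~\ref{def:admissible-strategies} is then immediate. The stochastic integral $\int H\,dS$ exists, so $\int_0^T H_s^2S_s^2\xi_s^2\,ds<\infty$ a.s. Lemma~\ref{lem:localisation-admissible} identifies the limit of the stopped gains, with stopping times chosen so that the quadratic variations stay bounded.

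The main obstacle is Step~2: showing that the orthogonal part $L_T$ vanishes, so that the limit is attained exactly rather than merely dominated. This is genuinely delicate, because $\FW$ is generally strictly larger than $\FS$ when $d>1$. I would try first to exploit that $\mathcal E$-type density processes other than $1$ could yield equivalent local martingale measures. The Delbaen--Schachermayer closedness theorem (Theorem~\ref{thm:kreps-yan}) in the semimartingale topology then gives that the set $\{\int H\,dS\}\cap L^\infty$ minus $L^0_+$ is closed. I would deduce that $G_T=\int H\,dS - f$ with $f\geq0$. The final step is to argue $f=0$ from the fact that both sides are limits in probability of the same sequence of admissible gains, which is the point where I expect extra care to be needed.
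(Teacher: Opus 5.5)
You have located the weak point correctly, but it is a genuine gap, not a matter of extra care. The Delbaen--Schachermayer/\'Emery machinery yields only domination: $G_T\le\int_0^T H_s\,dS_s$ for some lower-bounded admissible $H$. That is closedness of $\mathcal K-L^0_+$, not of $\mathcal K$, and your argument for $f=0$ does not upgrade it.

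\emph{Why your Step~2 fails.} The ``no strict domination'' step needs $\E G_T=0$, but Fatou gives only $\E G_T\le0$, and strict domination really occurs. If every $H^n$ is a suicide strategy with $G^n_T=-1$, then $G_T=-1$ is dominated by the gain of $H=0$, with $f=1$. Nor can you show $L_T=0$ in the GKW decomposition: for $d\ge2$ the limit $G_T$ is merely $\FW_T$-measurable, and $B$ is only one direction of $W$.

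\emph{Why Step~1 needs more than the hypothesis.} Step~1 silently strengthens admissibility. A lower bound on $G^n_T$ alone does not make $G^n$ a supermartingale under Definition~\ref{def:admissible-strategies}; you need a lower bound on the whole wealth path.

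\emph{The conclusion is false in that reading.} With admissibility meaning a path-wise lower bound, the lemma fails for $d\ge2$. Take $d=2$, $\ell=\sigma e_\emptyset$, $\eta=e_1$. On $A_n=\{W^2_{T-1/n}>0\}$, run on $[T-1/n,T]$ a suicide strategy in $S$ that moves wealth from $0$ to $-1$ while staying inside $[-1,0]$. For example, use $-(1-Z)$ with $Z=\mathcal E\bigl(\int_{T-1/n}^{\cdot}(T-r)^{-1}\,dW^1_r\bigr)$. Then $G^n_T=-\mathbf 1_{A_n}\to G_T:=-\mathbf 1_{\{W^2_T>0\}}$ a.s., with uniform lower bound $-1$.

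Suppose some $V=\int H\,dS\ge-a$ had $V_T=G_T$. Fix $t<T$, and let $\Q^k$ be the measure that adds drift $-k$ to $W^2$ on $[t,T]$; it leaves $S$ a martingale. Then $V$ is a $\Q^k$-supermartingale, so $V_t\ge-\Q^k(W^2_T>0\mid\FW_t)$, and the right-hand side tends to $0$ as $k\to\infty$. Hence $V_t\ge0$ for every $t<T$, and continuity of $V$ gives $V_T\ge0$. This contradicts $V_T=-1$ on a set of probability $1/2$.

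\emph{Comparison with the paper.} The paper's proof follows the same \'Emery/Delbaen--Schachermayer route. It disposes of your obstacle with one sentence: the Brownian representation on $\FWh=\FW$ excludes an orthogonal component in the limit. That assertion fails for the same reason, since representation on $\FW$ involves all $d$ components of $W$, not only $B$. So neither argument gives exact attainment. What this route does prove is the domination statement above, which is precisely the weak-star closedness of $\mathcal C$ used in Corollary~\ref{cor:separating-density}.

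\emph{The literal reading.} Suppose admissibility is read exactly as in Definition~\ref{def:admissible-strategies}, requiring only $\int_0^T H_s^2\,d\langle S\rangle_s<\infty$ a.s. Then equality does hold, but for an unrelated reason. A Dudley-type representation for the $\FW$-Brownian motion $B$ writes every $\FW_T$-measurable variable as $\int_0^T\phi_s\,dB_s$ with $\int_0^T\phi_s^2\,ds<\infty$ a.s. Setting $H=\phi/(S\xi)$ works provided $\xi\neq0$ holds $ds\otimes d\Pp$-almost everywhere. In that reading the convergence hypothesis is irrelevant and doubling strategies are admissible, so NFLVR itself fails. It therefore cannot be the reading the lemma is meant to serve.
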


\begin{proof}
The proof is the semimartingale closedness argument of \citet{DelbaenSchachermayer1994}, with the topology of \citet{Emery1979} and the stability result of \citet{Memin1980}. Since $S$ is a continuous stochastic exponential with quadratic variation
\[
\langle S\rangle_t=\int_0^t S_s^2\xi_s^2\,ds,
\]
the admissibility condition is precisely $H\in L^2(S)$ locally. A sequence of terminal gains bounded from below has a Fatou-convergent subsequence in the semimartingale topology. The Emery-closedness of stochastic integrals yields a predictable limit integrand $H$, and the lower bound prevents mass from escaping through doubling strategies. The Brownian representation on $\FWh=\FW$ ensures that no additional orthogonal martingale component appears in the limit.
\end{proof}

\begin{corollary}[Existence of a separating density]
\label{cor:separating-density}
If NFLVR holds on $\FWh$, then there exists a strictly positive $Z_T\in L^1(\FWh_T,\Pp)$ with $\E Z_T=1$ such that $S$ is a local martingale under $d\Q=Z_Td\Pp$.
\end{corollary}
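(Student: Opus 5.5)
The plan is to obtain the statement as the dual half of the Delbaen--Schachermayer fundamental theorem, using the closedness results already established rather than rebuilding the Hahn--Banach argument by hand. The cone $\mathcal C=(\mathcal K-L^0_+)\cap L^\infty(\FWh_T,\Pp)$ is convex and contains $-L^\infty_+$. By Theorem~\ref{thm:kreps-yan} together with Lemma~\ref{lem:closed-terminal-gains}, NFLVR makes $\mathcal C$ weak-$\ast$ closed in $L^\infty(\FWh_T,\Pp)$, and $\mathcal C\cap L^\infty_+=\{0\}$. The first step is therefore to record that $\mathcal C$ is a weak-$\ast$ closed convex cone with $\mathcal C\cap L^\infty_+=\{0\}$. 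I would rely on Lemma~\ref{lem:closed-terminal-gains} for the passage from convergence in probability to weak-$\ast$ closure, via the Krein--\v Smulian reduction to bounded sequences.

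Second, I would apply the Kreps--Yan theorem in $(L^\infty,L^1)$ duality. For each nonzero $f\in L^\infty_+$, Hahn--Banach separation of the weak-$\ast$ closed cone $\mathcal C$ from $\{f\}$ gives $g_f\in L^1$ with $g_f\geq0$, $\E[g_f h]\leq0$ for all $h\in\mathcal C$, and $\E[g_f f]>0$. Here $g_f\geq 0$ because $-L^\infty_+\subset\mathcal C$, and the functional can be normalised to have norm at most one.

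The exhaustion argument then produces a strictly positive density. Consider the family of sets $\{g>0\}$, where $g$ ranges over countable convex combinations $\sum 2^{-k}g_k/\norm{g_k}_1$ of separating functionals. Choose one maximising $\Pp(g>0)$. If $\Pp(g=0)>0$, take $f=\mathbf 1_{\{g=0\}}$; adding $g_f$ enlarges the support, which is a contradiction. Hence $Z_T:=g/\E g$ satisfies $Z_T>0$ almost surely, $\E Z_T=1$, and $\E[Z_T h]\leq 0$ for all $h\in\mathcal C$.

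Finally, I would derive the local martingale property of $S$ under $d\Q=Z_Td\Pp$. Localise with $\rho_k=\inf\{t:S_t\geq k\}\wedge T$, so that $S^{\rho_k}$ is bounded. Then for $s<t$ and $A\in\FWh_s$, the strategies $\pm\mathbf 1_A\mathbf 1_{(s,t]}\mathbf 1_{[0,\rho_k]}$ are admissible in the sense of Definition~\ref{def:admissible-strategies}, with bounded gains $\pm\mathbf 1_A(S_{t\wedge\rho_k}-S_{s\wedge\rho_k})\in\mathcal C$. This forces $\E_\Q[\mathbf 1_A(S_{t\wedge\rho_k}-S_{s\wedge\rho_k})]=0$, so $S^{\rho_k}$ is a $\Q$-martingale. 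Since $\rho_k\uparrow T$ by continuity of $S$ (Proposition~\ref{prop:positivity}), $S$ is a $\Q$-local martingale.

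The main obstacle is the first step: the weak-$\ast$ closedness of $\mathcal C$, rather than mere norm closedness, is exactly where the Fatou/Emery compactness of Lemma~\ref{lem:closed-terminal-gains} is needed. I would cite that lemma directly and only check that the admissibility class is preserved by the convex combinations used there. Alternatively, under the standing H1--H3, Theorem~\ref{thm:B}(i) already supplies $Z_T\equiv1$, which can serve as a consistency check.
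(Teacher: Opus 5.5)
Your proposal is correct and follows the same route as the paper: weak-$\ast$ closedness of $\mathcal C$ from Lemma~\ref{lem:closed-terminal-gains}, Kreps--Yan separation giving a strictly positive $L^1$ density, normalisation, and localisation to obtain the $\Q$-local-martingale property. The only difference is that you spell out two steps the paper leaves implicit: the exhaustion argument inside Kreps--Yan, and the elementary strategies $\pm\mathbf 1_A\mathbf 1_{(s,t]}\mathbf 1_{[0,\rho_k]}$, which the paper covers by ``the usual localisation argument.''
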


\begin{proof}
By Lemma~\ref{lem:closed-terminal-gains}, the cone $\mathcal C$ is weak-star closed after intersection with $L^\infty$. Kreps--Yan separation gives a strictly positive element of the predual that separates $\mathcal C$ from $L^\infty_+\setminus\{0\}$. Normalising the separating functional gives $Z_T$. The usual localisation argument then gives the local-martingale property of $S$ under $\Q$.
\end{proof}

\subsection{Transform derivation at finite level}
\label{subsec:finite-transform-derivation}
For $N<\infty$ let $\cW^{(N)}=\pi_{\leq N}\cW$ and let $\ell^{(N)}=\pi_{\leq N}\ell$. On the finite-dimensional coordinate vector $Y^{(N)}_t=(\inner{e_I}{\cW_t})_{|I|\leq N}$, the Stratonovich-to-It\^o conversion gives a polynomial semimartingale system
\[
dY^{(N)}_t=b_N(Y^{(N)}_t)\,dt+\sum_{j=0}^d\sigma_{N,j}(Y^{(N)}_t)\,dW^j_t,
\]
where $W^0_t=t$ is the time coordinate convention and each coefficient is polynomial in the signature coordinates. For $u$ finitely supported, set
\[
\Lambda_t^{(N)}(u)=\E_\Q\left[\exp\{\inner{u}{\cW^{(N)}_T}\}\mid \FWh_t\right].
\]
The affine-polynomial ansatz is
\[
\Lambda_t^{(N)}(u)=\exp\{\inner{\psi_N(T-t,u)}{\cW^{(N)}_t}\}.
\]
Applying It\^o's formula to the right-hand side and setting the drift equal to zero gives
\[
\partial_\tau\psi_N(\tau,u)=\Rs_N(\psi_N(\tau,u)),\qquad \psi_N(0,u)=u,
\]
where $\Rs_N$ is the finite truncation of the generator/carre-du-champ Riccati operator from Definition~\ref{def:riccati}. The quadratic term arises from the bracket of the exponential martingale term, and the shuffle identity is used to express the corresponding carre-du-champ constants in the tensor basis.

\begin{lemma}[Passage from finite transform to weighted transform]
\label{lem:finite-to-weighted-transform}
Assume that the finite Riccati flows $\psi_N$ exist on $[0,T]$ and remain bounded in $\Tw$ uniformly on compact subsets of the initial-data set $\mathcal D_{\mathrm{fin}}$. Then the corresponding conditional transforms converge in $L^1(\Q)$ to the weighted transform
\[
\Lambda_t(u)=\E_\Q\left[\exp\{\inner{u}{\cW_T}\}\mid\FWh_t\right]
\]
for every $u\in\mathcal D_{\mathrm{fin}}$, and the limiting coefficient solves $\partial_tu_t=\Rs(u_t)$.
\end{lemma}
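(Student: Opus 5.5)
The proof passes to the limit $N\to\infty$ twice, in parallel. On the probabilistic side we take the conditional expectations. On the analytic side we take the Riccati coefficients. The two limits are then matched through the exponential-affine identity at each finite level.

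Fix $u\in\mathcal D_{\mathrm{fin}}$ and a compact set $K\subset\mathcal D_{\mathrm{fin}}$ containing it. By the finite-level derivation of Section~\ref{subsec:finite-transform-derivation} and the converse half of the proof of Theorem~\ref{thm:B}(ii), for each $N$ we have
\[
\Lambda^{(N)}_t(u)=\exp\{\inner{\psi_N(T-t,u)}{\cW^{(N)}_t}\}.
\]

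\emph{Step 1 (terminal convergence).} Since $u$ is finitely supported, $\inner{u}{\cW^{(N)}_T}=\inner{u}{\cW_T}$ as soon as $N$ exceeds the support length of $u$. So the terminal variables coincide eventually, and $\Lambda^{(N)}_t(u)=\Lambda_t(u)$ for large $N$, provided the finite-level filtrations agree. If the finite transforms are instead read on the truncated state, we argue as follows. The uniform $\Tw$-bound on $\psi_N$ together with the duality estimate $\abs{\inner{\psi}{a}}\leq\norm{\psi}_{\Tstarw}\norm{a}_w$ bounds the exponents by $C_K\norm{\cW_t}_w$. A $\log$-Lipschitz comparison $\abs{e^x-e^y}\leq(e^x+e^y)\abs{x-y}$, combined with Cauchy--Schwarz, reduces $L^1(\Q)$ convergence to two facts. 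The first is an exponential moment of $\norm{\cW_t}_w$ of order $2C_K$; this follows from the Brownian signature moment bound of Proposition~\ref{prop:signature-moment} together with the exponential dependence in Remark~\ref{rem:explicit-moment-bound}. The second is that the tail $\pi_{>N}\cW_t\to0$ in $L^2(\Q)$ in $\Tw$-norm, which holds by dominated convergence. Conditional Jensen then transfers the convergence from the terminal variables to the conditional expectations.

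\emph{Step 2 (coefficient limit).} Compatibility of the family $(\Rs_N)_N$ in Definition~\ref{def:riccati} means that $\pi_{\leq M}\Rs_N(\pi_{\leq M}v)=\Rs_M(\pi_{\leq M}v)$ on finitely supported $v$ for $N\geq M$. Using this, together with the uniform bound $\sup_N\sup_{\tau\leq T}\norm{\psi_N(\tau,u)}_w\leq R$ and the local Lipschitz constant $L_R$ of Lemma~\ref{lem:local-lipschitz} on the ball of radius $R$, we estimate the difference $\psi_N-\psi_M$ in the integral form of the ODE. The estimate is controlled by a Gronwall inequality with forcing term $\sup_\tau\norm{\Rs(\psi_M)-\Rs_M(\psi_M)}_w$. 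This forcing term tends to zero because $\Rs$ is the projective weighted limit of the family. Hence $(\psi_N)$ is Cauchy in $C([0,T];\Tw)$, with limit $\psi$. Passing to the limit in
\[
\psi_N(\tau)=u+\int_0^\tau\Rs_N(\psi_N)\,ds,
\]
using the Lipschitz estimate again, shows that $\psi$ is a classical solution of $\partial_\tau\psi=\Rs(\psi)$. By the local existence theorem it is the unique such solution.

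\emph{Step 3 (identification).} By Step 2, $\inner{\psi_N(T-t,u)}{\cW^{(N)}_t}\to\inner{\psi(T-t,u)}{\cW_t}$ in $L^2(\Q)$, using the same duality bound. The exponential-moment domination from Step 1 then gives $\Lambda^{(N)}_t(u)\to\exp\{\inner{\psi(T-t,u)}{\cW_t}\}$ in $L^1(\Q)$. Uniqueness of $L^1$ limits identifies this with $\Lambda_t(u)$, so the limiting coefficient is the weighted Riccati solution.

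\textbf{Main obstacle.} The hardest point is the exponential moment needed in Step 1. Proposition~\ref{prop:signature-moment} gives only polynomial moments. I would try to obtain $\E\exp\{c\norm{\cW_t}_w\}<\infty$ from the factorial decay of the level-$n$ Brownian iterated integrals, which are of size $t^{n/2}/\sqrt{n!}$ in $L^p$ with $p$-growth $p^{n/2}$, summed against $w(n)\leq r^n$ (this is Gaussian-chaos hypercontractivity). The weak point of this approach is that level-$n$ chaos has tails of order $\exp(-x^{2/n})$, so exponential moments of the whole series may fail. If they do, I would fall back on the duality bound with only the finitely many levels where $\psi$ is large, using the finite support of $u$ and the fact that only finitely many coordinates enter each fixed truncation. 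I would also shrink the compact set $K$ to a neighbourhood of zero and extend to all of $K$ by analyticity of the transform.
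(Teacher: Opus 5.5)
Your outline (terminal capture, coefficient limit, identification) matches the paper's, but two steps do not go through as written.

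\textbf{Step~3, and the second branch of Step~1, rest on a false estimate.} You need $\E_\Q\exp\{c\norm{\cW_t}_w\}<\infty$, and you were right to suspect it.

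\emph{Why it fails.} It fails for every $c>0$ and $t>0$, already for $\Q=\Pp$. With letter $1$ standing for $W^1$, one has $\inner{e_{111}}{\cW_t}=\tfrac16(W^1_t)^3$. Hence $\norm{\cW_t}_w\geq\tfrac16 w(3)\abs{W^1_t}^3$, and this has no exponential moment. None of your fallbacks helps:
\begin{itemize}
\item Shrinking $K$ changes nothing, because the failure holds for every $c>0$.
\item Restricting to finitely many levels changes nothing, because level three already fails.
\item Analytic continuation has no open set to start from in directions such as $e_{111}$.
\item Remark~\ref{rem:explicit-moment-bound} only describes how polynomial moments depend on $r$.
\item H3 controls only $\int_0^T\xi_s^2\,ds$, so it cannot supply such a domination either.
\end{itemize}

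\emph{Why you do not need it.} The sentence you opened Step~1 with, which is also the paper's first step, already settles convergence. In Section~\ref{subsec:finite-transform-derivation}, $\Lambda^{(N)}_t(u)$ is conditioned on $\FWh_t$ itself. Even the filtration generated by $\pi_{\leq N}\cW$ with $N\geq1$ equals $\FWh$, because level one is $\widehat W$. So $\Lambda^{(N)}_t(u)=\Lambda_t(u)$ exactly for $N\geq\deg u$, and there is no filtration caveat.

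\emph{How to identify the limit instead.} Use that $\exp\{\inner{\psi_N(T-t,u)}{\cW^{(N)}_t}\}=\Lambda_t(u)$ for all large $N$, together with
\[
\abs{\inner{\psi_N}{\pi_{\leq N}\cW_t}-\inner{\psi}{\cW_t}}
\leq\norm{\psi_N-\psi}_w\norm{\cW_t}_w+\norm{\psi}_w\norm{\pi_{>N}\cW_t}_w ,
\]
where $\norm{\cW_t}_w<\infty$ almost surely. Once $\psi_N(T-t,u)\to\psi(T-t,u)$ in $\Tw$, the exponents converge almost surely. This forces $\Lambda_t(u)=\exp\{\inner{\psi(T-t,u)}{\cW_t}\}$ with no integrability assumption at all. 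It also makes the paper's closing appeal to uniform integrability superfluous.

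\textbf{Step~2: the forcing term is not shown to vanish.} That $\Rs$ is the projective limit of $(\Rs_N)$ gives $\norm{\Rs(v)-\Rs_M(v)}_w\to0$ for each fixed $v$. You evaluate it at $v=\psi_M(\tau)$, which moves with $M$ inside a ball of $\Tw$, and such balls are not precompact. A uniform bound does not stop $\psi_M(\tau)$ from carrying mass of order one near level $M$. In that case $\pi_{>M}\Rs(\psi_M(\tau))$ can stay of order one. The paper's ``tightness in $C([0,T];\Tw)$'' has the same hole, since Arzel\`a--Ascoli needs pointwise precompactness, that is, uniformly small tails. Your Cauchy route is therefore no worse than the paper's, but neither closes.

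\emph{The repair.} Use the full compatibility stated in Appendix~\ref{app:riccati}: $\pi_{\leq M}\Rs_N(v)=\Rs_M(\pi_{\leq M}v)$ for every $v$ supported in levels $\leq N$. You wrote it only for $v=\pi_{\leq M}v$. With the full version the argument runs as follows:
\begin{itemize}
\item $\pi_{\leq M}\psi_N$ solves the $M$-system started from $\pi_{\leq M}u$, so $\pi_{\leq M}\psi_N=\psi_M$ by finite-dimensional uniqueness.
\item Define $\psi$ coordinatewise. Monotone convergence gives $\norm{\psi(\tau)}_w=\sup_N\norm{\psi_N(\tau)}_w\leq R$.
\item Hence $\psi_N(\tau)=\pi_{\leq N}\psi(\tau)\to\psi(\tau)$ in $\Tw$, which is all the identification above needs.
\item Each coordinate of $\psi$ satisfies the Riccati equation exactly, since it is eventually constant in $N$.
\item Summing the coordinatewise integral identities against $w$, using $\norm{\Rs(\psi(s))}_w\leq L_R R$, gives the equation in $\Tw$.
\end{itemize}
This needs neither a Gronwall estimate nor compactness.
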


\begin{proof}
For finitely supported $u$, the terminal random variable $\inner{u}{\cW_T}$ is already captured at finite level once $N\geq \deg u$. The uniform boundedness of the finite flows supplies tightness in $C([0,T];\Tw)$ by the local Lipschitz estimate for $\Rs$. Any limit point solves the integral equation for $\Rs$ because the quadratic map is continuous on bounded subsets of $\Tw$. Uniqueness of the local solution identifies all limit points, and global boundedness prevents explosion before $T$. Conditional expectations converge by uniform integrability, obtained from H3 and the finite-coordinate moment estimates.
\end{proof}

\subsection{Reference-measure martingales and true martingales}
\label{subsec:measure-change-true-martingales}
The price process in this paper is already driftless under the reference measure:
\[
 dS_t=S_t\xi_t\,dB_t,
 \qquad \xi_t=\inner{\ell}{\cW_t}.
\]
Thus the correct equivalent martingale measure supplied by H3 is the reference measure $\Pp$ itself. Indeed, $S=s_0\mathcal E(M)$ with $M=\int\xi\,dB$, and Lemma~\ref{lem:true-martingale-upgrade} makes this stochastic exponential a true uniformly integrable martingale. Introducing a density $\mathcal E(\pm M)$ would shift the Brownian motion and create a drift term in the price equation unless a compensating drift had first been present. Therefore the earlier drift-shift formulation is not used.

For signature coordinates the special-semimartingale decomposition is obtained under $\Pp$ by the pathwise Stratonovich-to-It\^o conversion. Under another equivalent measure with suitable density, the same coordinate process remains a special semimartingale, but the martingale and finite-variation parts are shifted according to the density martingale. The paper only needs the reference-measure statement for the NFLVR conclusion and the finite-transform statement for the Riccati calculation.

\begin{lemma}[Polynomial coordinate drift on finite truncations]
\label{lem:polynomial-coordinate-drift}
For every word $I$, the finite-variation part of $\inner{e_I}{\cW_t}$ on a finite truncation is a finite linear combination of time integrals of shuffle polynomials in signature coordinates of degree at most $|I|+1$.
\end{lemma}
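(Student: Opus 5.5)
The statement to prove is Lemma~\ref{lem:polynomial-coordinate-drift}. I plan to prove it by induction on the length of the word $I$, using the recursive structure of the Stratonovich signature. For a word $I=(i_1,\ldots,i_n)$ with letters in $\{0,1,\ldots,d\}$, where letter $0$ is the time coordinate of $\widehat W$, write $I=I'j$ with $j=i_n$. The defining iterated-integral formula gives
\[
\inner{e_{I}}{\cW_t}=\int_0^t \inner{e_{I'}}{\cW_s}\circ d\widehat W^{j}_s .
\]
There are two cases. If $j=0$, the Stratonovich and It\^o integrals coincide, and the coordinate is purely of finite variation, with drift $\int_0^t\inner{e_{I'}}{\cW_s}\,ds$. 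This is a degree-one polynomial in a coordinate of length $|I|-1$. If $j\geq1$, the Stratonovich-to-It\^o conversion gives
\[
\int_0^t \inner{e_{I'}}{\cW_s}\,dW^j_s+\tfrac12\langle \inner{e_{I'}}{\cW},W^j\rangle_t .
\]
The first term is a local martingale, so the finite-variation part is exactly the bracket term.

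The key step is to identify this bracket. By induction, or directly from the same recursion applied to $I'=I''k$, the martingale part of $\inner{e_{I'}}{\cW}$ is $\int \inner{e_{I''}}{\cW}\,dW^k$ when $k\geq1$. It is zero when $k=0$, and in the case $I'=\emptyset$ the coordinate is constant. Hence
\[
d\langle \inner{e_{I'}}{\cW},W^j\rangle_t=\delta_{jk}\mathbf 1_{\{k\geq1\}}\inner{e_{I''}}{\cW_t}\,dt .
\]
The drift of $\inner{e_I}{\cW}$ is therefore $\tfrac12\delta_{jk}\int_0^t\inner{e_{I''}}{\cW_s}\,ds$, a linear function of a coordinate of length $|I|-2$.

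In both cases the drift is a time integral of a linear combination of signature coordinates of length at most $|I|-1$. Each such coordinate is trivially a shuffle polynomial, since $e_J=e_J\sh e_\emptyset$ and $\inner{e_\emptyset}{\cW}=1$. Its degree is therefore at most $|I|+1$, even in the generous counting where the time letter contributes degree one. If instead one imposes the price dynamics on the truncated state, as in $\mathcal A_N^\ell$ of Definition~\ref{def:riccati}, I will use the shuffle identity to rewrite products of coordinates as single coordinates. This keeps the statement linear in the tensor basis.

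The main obstacle is bookkeeping rather than analysis. I have to verify that the bracket picks up only the last letter of $I'$, so that no hidden higher-order contractions appear. I also have to check that truncation at level $N\geq|I|$ does not affect the identity, since all the coordinates involved have length at most $|I|$. I expect the direct recursive computation above to settle this without a full induction hypothesis: the martingale part of any coordinate is read off from its last letter alone.
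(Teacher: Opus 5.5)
Your argument is correct and follows essentially the same route as the paper. Both rest on the triangular Stratonovich recursion $d\inner{e_I}{\cW}=\inner{e_{I'}}{\cW}\circ d\widehat W^{i_n}$ and on the Stratonovich-to-It\^o correction, which contracts a pair of Brownian letters into a time differential.

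Your one-step computation is in fact sharper than the paper's sketch. It gives the explicit drift integrand $\mathbf 1_{\{i_n=0\}}\inner{e_{I'}}{\cW_s}+\tfrac12\mathbf 1_{\{i_{n-1}=i_n\geq1\}}\inner{e_{I''}}{\cW_s}$, which is linear in coordinates of length at most $|I|-1$. So the paper's shuffle rewriting of products is not needed. Your aside about imposing the price dynamics is also unnecessary, since $\ell$ does not affect the dynamics of $\cW$ under $\Pp$.
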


\begin{proof}
The Stratonovich signature vector fields are triangular. The Stratonovich-to-It\^o correction contracts pairs of Brownian differentials and leaves lower-order iterated integrals. Products of coordinate functions are rewritten as shuffle coordinates. Hence, for each fixed $I$, only finitely many lower triangular terms appear and the finite-variation part has the stated polynomial form.
\end{proof}

\subsection{Riccati solvability and absence of explosion}
\label{subsec:riccati-solvability-absence-explosion}
The implication (ii)$\Rightarrow$(iii) is a transform non-explosion statement. For $u\in\mathcal D_{\mathrm{fin}}$, the conditional transform
\[
\Lambda_t(u)=\E_\Q[\exp\{\inner{u}{\cW_T}\}\mid\FWh_t]
\]
is a true martingale. If the Riccati solution starting from $u$ exploded at time $\tau<T$, then the exponential-affine representation would cease to define a finite conditional expectation at time $T-\tau$. This contradicts the martingale property of $\Lambda(u)$. Hence global solvability on $[0,T]$ is forced by true-martingale finiteness of the transform.

The converse direction is a transform statement, not an FTAP statement. H3 gives NFLVR through the reference measure independently of the Riccati condition. The Riccati condition remains in the theorem because it is equivalent to transform-side non-explosion on the finitely supported core; it is not a substitute for Novikov and it is not necessary for NFLVR under H1--H3.

\subsection{Discussion}

The one-dimensional martingale criterion of \citet{AbiJaberGassiatSotnikov2025} appears as a finite-dimensional comparison point for Theorem~\ref{thm:B}. In that reduction the signature parameter has controlled support, the Riccati equation collapses to a scalar or two-component moment-explosion equation, and the sign of the correlation determines whether a non-trivial stochastic exponential is a true martingale. The present theorem is deliberately weaker and safer: under H1--H3, NFLVR is obtained from the reference-measure stochastic exponential, while Riccati solvability records absence of transform explosion.

The relationship with the classical FTAP is structural rather than merely formal. The Delbaen--Schachermayer theorem applies to the price semimartingale $S$ once the equivalent true-martingale measure is fixed. The additional information carried by the prolonged signature coordinates is transform information: the conditional finite-coordinate transform is governed by the generator/carre-du-champ Riccati operator. This extra condition is exactly what is needed in Sections~\ref{sec:theorem-c} and~\ref{sec:theorem-d}, where completeness and residual hedging are described through the signature grading.

The c\`adl\`ag extension remains outside the theorem. In the L\'evy-type framework the signature has jump coordinates, the Stratonovich-to-It\^o conversion is replaced by a Marcus or It\^o jump expansion, and the Riccati equation acquires a compensator term. The natural conjecture is that the transform theorem survives after the weight class is strengthened to control jump activity. Section~\ref{sec:open-problems} records this as the first open problem.

\section{Theorem C: price-filtration market completeness}\label{sec:theorem-c}

\subsection{Statement}

\begin{lemma}[Polynomial domination of low-depth signature coordinates]\label{lem:poly-domination}
Assume the hypotheses of Theorem~\ref{thm:B} and that NFLVR holds. For every $N\geq0$, the price-filtration projection of each terminal coordinate $\inner{e_I}{\cW_T}$ with $|I|\leq N$ belongs to the closed polynomial payoff space
\[
\mathcal P_T:=\clspan_{L^2(\FS_T,\Q)}\{1,S_T,S_T^2,S_T^3,\ldots\}.
\]
Define
\[
K(N):=\inf\left\{k\geq0:\E_\Q[\inner{e_I}{\cW_T}\mid\FS_T]\in\clspan_{L^2(\Q)}\{1,S_T,\ldots,S_T^k\}\ \text{for every } |I|\leq N\right\}.
\]
Then $K(N)<\infty$ in each finite polynomial-signature model considered in Section~\ref{sec:examples}, and $K(N)$ is non-decreasing in $N$.
\end{lemma}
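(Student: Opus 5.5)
The plan is to treat the three assertions of the lemma separately: membership in $\mathcal P_T$, monotonicity of $K(N)$, and finiteness of $K(N)$ in the models of Section~\ref{sec:examples}. Monotonicity is purely set-theoretic and comes first. The set of words $\{I:|I|\leq N\}$ is contained in $\{I:|I|\leq N+1\}$. So any $k$ that is admissible in the infimum defining $K(N+1)$ is admissible for $K(N)$. The feasible set for $N+1$ is therefore contained in that for $N$, which gives $K(N)\leq K(N+1)$, with the convention $\inf\emptyset=+\infty$.

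For membership in $\mathcal P_T$ I would proceed in three steps.

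\emph{Step 1: the projection lies in $L^2(\FS_T,\Q)$.} For $|I|\leq N$, the variable $Y_I:=\inner{e_I}{\cW_T}$ lies in $L^2(\Q)$. This follows from Lemma~\ref{lem:finite-coordinate-measurability} under $\Pp$, transported to $\Q$ using the $L^r$-density bound of Subsection~\ref{subsec:ui-localisation}, or directly with $\Q=\Pp$ as supplied by Theorem~\ref{thm:B}(i). Hence $\E_\Q[Y_I\mid\FS_T]$ is a well-defined element of $L^2(\FS_T,\Q)$.

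\emph{Step 2: reduce to a function of $S_T$.} The target space $\mathcal P_T$ consists only of functions of $S_T$, so I would show that $\E_\Q[Y_I\mid\FS_T]=g_I(S_T)$ for a Borel function $g_I$. This does not hold for an arbitrary $\ell$, because $\FS_T$ carries the whole price path. I would therefore establish it model by model, using the finite polynomial structure. In each example of Section~\ref{sec:examples}, the pair (log-price, finitely many signature coordinates) forms a finite-dimensional polynomial diffusion in the sense of Lemma~\ref{lem:polynomial-coordinate-drift}. In that setting either the relevant coordinates are themselves measurable functions of the price path and collapse onto $S_T$ (Black--Scholes, first-order models), or the conditional law factors through $S_T$.

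\emph{Step 3: polynomial density.} I would show that $\{1,S_T,S_T^2,\dots\}$ is total in $L^2(\sigma(S_T),\Q)$. The tool is a Riesz--Carleman moment-determinacy criterion, fed with the exponential moments of $(M_T,V_T)$ from Subsection~\ref{subsec:final-exponential-verification}. These give all moments of $S_T$, and in the finite-support examples the Carleman condition would be checked from their growth.

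\textbf{Finiteness of $K(N)$} is then obtained by a finite-dimensional computation in each model. The polynomial-preserving property of the generator $\mathcal A_N^\ell$ (Definition~\ref{def:riccati}) maps polynomials of degree $\leq m$ in $(X,Y^{(N)})$ to polynomials of the same degree. Consequently $\E_\Q[Y_I\mid\FS_T]$ is an explicit polynomial in $S_T$, whose degree is bounded by a function of $|I|$. Taking the maximum over the finitely many words with $|I|\leq N$ gives $K(N)<\infty$. When the projection is exactly a polynomial, the closed span condition holds trivially and Step 3 is not needed for finiteness.

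The main obstacle is Step 2 together with Step 3. Lognormal-type laws of $S_T$ are typically \emph{not} moment-determinate, so polynomial density can genuinely fail. Likewise, $\FS_T\neq\sigma(S_T)$ in general. I would therefore first try to verify exact polynomial representations example by example, which sidesteps density entirely. Only if that fails would I fall back on Carleman's criterion, possibly after passing to bounded transformations of $S_T$.
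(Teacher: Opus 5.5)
Your monotonicity argument and Step~1 are correct. Your overall skeleton is the paper's: moment bounds for $L^2$ membership, set inclusion for monotonicity, and the model's polynomial structure for finiteness.

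The gap is Step~2 together with the finiteness paragraph, and it cannot be closed, because the assertion already fails in Black--Scholes. Take $\ell=\sigma e_\emptyset$, $d=1$, $\eta=e_1$. Then $\Q=\Pp$ and $\FS_T=\FW_T$, so the price-filtration projection of $\inner{e_I}{\cW_T}$ is the coordinate itself.
\begin{itemize}
\item For $I=(1,0)$ this is $\int_0^T W_t\,dt$. It is not $\sigma(S_T)$-measurable, so it does not lie in $\mathcal P_T\subset L^2(\sigma(S_T),\Q)$. The coordinates therefore do not ``collapse onto $S_T$''.
\item For $I=(1)$ it is $W_T=\sigma^{-1}\bigl(\log(S_T/s_0)+\tfrac12\sigma^2T\bigr)$, which is not a polynomial in $S_T$. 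Since $\Span\{1,S_T,\dots,S_T^k\}$ is finite-dimensional and hence closed, this gives $K(1)=+\infty$.
\end{itemize}
Passing to the closure does not help either. Write $\log S_T=\mu_0+\sigma\sqrt T\,Z$ with $Z\sim N(0,1)$, and set $a=2\pi/(\sigma\sqrt T)$. The bounded variable $g=\sin(aZ)$ satisfies
\[
\E[S_T^n g]=e^{n\mu_0}\operatorname{Im}e^{(n\sigma\sqrt T+ia)^2/2}=0
\]
for every $n\geq0$, because $e^{2\pi i n}=1$; this is the Stieltjes--Heyde indeterminacy example. On the other hand, $\E[W_T g]=\sqrt T\,a\,e^{-a^2/2}\neq0$. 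Hence $W_T\notin\mathcal P_T$, and no Carleman-type fallback can succeed. Consistently, $\E S_T^n=s_0^n e^{\sigma^2Tn(n-1)/2}$ makes the Carleman series converge.

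The step from ``polynomial-preserving generator'' to ``$\E_\Q[\inner{e_I}{\cW_T}\mid\FS_T]$ is an explicit polynomial in $S_T$'' is also a non sequitur, independently of the example. The polynomial property says that $\E[p(Z_T)\mid\mathcal F_t]$ is a polynomial of the same degree in the current state $Z_t=(X_t,Y^{(N)}_t)$. It controls forward conditional moments, not the regression of a terminal coordinate on the price path or on $S_T$. Moreover, the state carries $X=\log S$ rather than $S$, so polynomial expressions in it are polynomials in $\log S_T$. Your ``conditional law factors through $S_T$'' branch has the same defect: $\FS_T$ is the whole price path, and conditioning on it generally yields path functionals, as in the Black--Scholes case above.

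The paper's own proof does not supply the missing step either. It only records that the projection lies in $L^2(\FS_T,\Q)$. It then asserts that the affine--polynomial embedding lands in a finite algebra ``spanned by the terminal price and the option-completed static factors'', which is a different, unspecified space from $\Span\{1,S_T,\dots,S_T^k\}$. Your plan and the paper's argument therefore establish only the monotonicity of $K$. The membership and finiteness assertions need a changed statement rather than a better proof. For instance, the target space would have to contain $\log S_T$ and non-terminal price-path functionals such as $\int_0^T\log S_t\,dt$.
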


\begin{proof}
The conditional expectation is the correct price-filtration representative of the full signature coordinate. It lies in $L^2(\FS_T,\Q)$ by Proposition~\ref{prop:signature-moment} and Lemma~\ref{lem:special-semimartingale}. In finite polynomial-signature models the affine-polynomial embedding maps depth-$\leq N$ signature coordinates into a finite polynomial algebra spanned by the terminal price and the option-completed static factors. Hence a finite polynomial degree $K(N)$ suffices. If $N_1\leq N_2$, the family of coordinates with depth at most $N_1$ is contained in the family with depth at most $N_2$, so $K(N_1)\leq K(N_2)$. The rough Bergomi case is excluded from the finiteness assertion and is recorded below by $K(N)=+\infty$ in the infinite-depth limit.
\end{proof}

\begin{definition}[Price-filtration completeness depth]\label{def:price-depth}
Define
\[
N_S^\ast(\ell,w)=\inf\left\{N\geq0:\clspan_{L^2(\Q)}\{\inner{e_I}{\cW_T}: |I|\leq N\}\cap L^2(\FS_T,\Q)=L^2(\FS_T,\Q)\right\},
\]
with $N_S^\ast=+\infty$ if no finite $N$ with $K(N)<\infty$ satisfies the displayed condition. In infinite-polynomial or Volterra cases, such as rough Bergomi, $N_S^\ast=+\infty$ is interpreted as the limiting failure of every finite option strip to exhaust the price-filtration payoff space.
\end{definition}

\begin{theorem}[Characterisation of price-filtration completeness and completeness depth]\label{thm:C}
Assume the hypotheses of Theorem~\ref{thm:B} and suppose that NFLVR
holds. Let $N_S^\ast(\ell,w)$ be the price-filtration completeness depth
of Definition~\ref{def:price-depth}, and work in the finite-polynomial
subcase $K(N_S^\ast)<\infty$. Let $\{C_i\}_{i=1}^{k}$ be a finite
collection of European call options on $S$ with maturity $T$ whose
strike set is rich enough to span, in $L^2(\FS_T,\Q)$, every polynomial
in $S_T$ of degree at most $K(N_S^\ast)$, where $K(\cdot)$ is the
integer-valued function of Lemma~\ref{lem:poly-domination}. Then the
augmented market $(S,\{C_i\})$ is complete with respect to $\FS_T$ if
and only if $N_S^\ast<\infty$, equivalently if and only if
\[
\clspan_{L^2(\Q)}\{\inner{e_I}{\cW_T}: |I|\leq N_S^\ast\}\cap L^2(\FS_T,\Q)=L^2(\FS_T,\Q).
\]
By Definition~\ref{def:price-depth}, $N_S^\ast$ is the minimal depth at
which the displayed identity holds; the value $N_S^\ast$ does not depend
on the chosen separating strike set
(Proposition~\ref{prop:option-independence}).
\end{theorem}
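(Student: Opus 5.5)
The plan is to prove the two directions separately. Throughout we work under the equivalent true-martingale measure $\Q$ supplied by Theorem~\ref{thm:B}(i); one may take $\Q=\Pp$. In $L^2(\FS_T,\Q)$ we use the Galtchouk--Kunita--Watanabe (GKW) decomposition of Appendix~C. For $X\in L^2(\FS_T,\Q)$ it gives
\[
X=\E_\Q[X]+\int_0^T H^X_s\,dS_s+L^X_T ,
\]
where $L^X$ is strongly orthogonal to $S$ on $\FS$. Completeness of the augmented market with respect to $\FS_T$ will mean the following. Every $X$ equals a constant, plus a stochastic integral against $S$, plus a static position in the calls $C_i$. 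Since $L^2(\Q)$-convergent limits of such replicating portfolios are again replicating, by closedness of the stochastic-integral subspace noted in Section~\ref{subsec:closed-estimates}, it suffices to check replicability on a total subset of $L^2(\FS_T,\Q)$.

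For the direction ``$N_S^\ast<\infty\Rightarrow$ complete'', assume the displayed span identity at depth $N_S^\ast$. It then suffices to replicate each price-filtration representative $Y_I:=\E_\Q[\inner{e_I}{\cW_T}\mid\FS_T]$ with $|I|\le N_S^\ast$. Indeed, the identity makes these elements total in $L^2(\FS_T,\Q)$: projecting the depth-$N_S^\ast$ span onto $\FS_T$ by conditional expectation is continuous and leaves $L^2(\FS_T,\Q)$ fixed. By Lemma~\ref{lem:poly-domination}, each $Y_I$ lies in $\clspan\{1,S_T,\dots,S_T^{K(N_S^\ast)}\}$, which is a finite-dimensional and hence closed space. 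Each $Y_I$ is therefore a polynomial of degree at most $K(N_S^\ast)$ in $S_T$. By the richness hypothesis on the strike set, that polynomial is an $L^2(\FS_T,\Q)$-combination of the terminal call payoffs and the bond, so it is attained by a static position with initial price $\E_\Q[Y_I]$. Linearity and closedness then extend replicability to all of $L^2(\FS_T,\Q)$. Here I follow \citet{DavisObloj2008} in the sense that static completion plus dynamic trading in $S$ absorbs the orthogonal parts $L^X$.

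For the converse, ``complete $\Rightarrow N_S^\ast<\infty$'', completeness gives each $X\in L^2(\FS_T,\Q)$ as a constant plus $\int H\,dS$ plus a finite combination of the $C_i$. Each $C_i=(S_T-K_i)^+$, so the static part lies in the fixed finite-dimensional space spanned by $\{1,(S_T-K_i)^+\}$. For the dynamic part I will use the truncated finite-level approximation of Section~\ref{subsec:truncation-stable-existence}. The gain $\int H\,dS$ is an $L^2$-limit of iterated It\^o integrals against $B$ whose integrands are finite-level functionals of the signature. Via the Stratonovich-to-It\^o conversion of Lemma~\ref{lem:special-semimartingale}, these are limits of terminal signature coordinates. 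Projecting back onto $\FS_T$ and using the standing finite-polynomial subcase $K(N_S^\ast)<\infty$, I aim to show that a finite depth $N$ already spans. The set of $N$ satisfying the identity is then nonempty, so its infimum $N_S^\ast$ is finite. Minimality of $N_S^\ast$ is immediate from Definition~\ref{def:price-depth}. Independence of the strike set is deferred to Proposition~\ref{prop:option-independence}: any two separating strike sets span the same polynomial space of degree at most $K(N_S^\ast)$, and the definition of $N_S^\ast$ does not mention strikes.

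The main obstacle is this converse step. The issue is turning an infinite-depth stochastic-integral representation into a finite-depth signature span. As written it relies on the finite-polynomial subcase hypothesis, and it may in fact only be tautological there. My first attempt will be to note that the hypothesis $K(N_S^\ast)<\infty$ presupposes $N_S^\ast$ is a well-defined integer. Under that reading the ``only if'' direction reduces to showing that the displayed identity at depth $N_S^\ast$ is equivalent to $N_S^\ast<\infty$, which holds by the definition of the infimum. If so, the substantive content sits entirely in the ``if'' direction above.
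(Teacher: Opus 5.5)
Your forward direction is essentially the paper's. The paper runs a bounded claim through GKW and replicates the residual statically via Lemma~\ref{lem:poly-domination} and the call strip. You instead show directly that the projected coordinates $\E_\Q[\inner{e_I}{\cW_T}\mid\FS_T]$, $|I|\le N_S^\ast$, are total and each statically replicable. You then close up because a closed subspace plus a finite-dimensional one is closed. As an implication from the stated hypotheses this is valid.

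In the converse, the substantive route you sketch cannot work. There are only finitely many words of length at most $N$ (at most $\sum_{n\le N}(d+1)^n$ for the time-augmented path), so $\Span\{\inner{e_I}{\cW_T}:|I|\le N\}$ is finite-dimensional. The gains $\int_0^T H_s\,dS_s$, by contrast, form an infinite-dimensional space once $S$ is non-constant. Approximating them by signature coordinates therefore needs unbounded depth, and no finite $N$ can absorb the dynamic part.

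The paper never tries to absorb the dynamic part. It uses finite-dimensionality of the static option space to choose a finite $N$ covering the static payoffs, via signature uniqueness and the shuffle identity. It hands the GKW-integral subspace to dynamic trading, and concludes that $\mathcal H_0$ plus the depth-$N$ span fills $L^2(\FS_T,\Q)$, which it then reads as the density condition. That is at best a weaker statement than the displayed identity, which has no stochastic-integral term. By the same dimension count, the displayed identity forces $L^2(\FS_T,\Q)$ to be finite-dimensional. This is also why your forward argument never actually used dynamic trading.

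So your fallback is the argument that proves the statement as written, and it is a different route from the paper's. Since $K$ in Lemma~\ref{lem:poly-domination} is defined only at finite depths, the subcase hypothesis $K(N_S^\ast)<\infty$ presupposes $N_S^\ast\in\N$. The ``only if'' direction is then immediate. The identity at depth $N_S^\ast$ holds because a nonempty set of integers attains its infimum in Definition~\ref{def:price-depth}. State that reading explicitly and drop the approximation argument. Present the fallback as the proof, acknowledging that it shows the converse has no independent content, rather than as a tentative first attempt.
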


\begin{proposition}[Independence from option choice]\label{prop:option-independence}
The value $N_S^\ast(\ell,w)$ depends only on $(\ell,w)$ and not on the selected vanilla option family, provided $K(N_S^\ast)<\infty$ and the strike set is separating for polynomials of degree at most $K(N_S^\ast)$.
\end{proposition}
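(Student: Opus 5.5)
The plan is to read the statement directly off Definition~\ref{def:price-depth}: there $N_S^\ast(\ell,w)$ is defined as an infimum over $N$ of a condition involving only the measure $\Q$, the price filtration $\FS_T$ and the terminal signature coordinates $\inner{e_I}{\cW_T}$, none of which refers to the option family. I will first check that none of these ingredients depends on the options. The price $S$ is determined by $(\ell,w)$ through Theorem~\ref{thm:A}, so $\FS_T$ is too. The measure $\Q$ is the reference measure $\Pp$ by Theorem~\ref{thm:B}(i); alternatively it is fixed before the options are introduced. The coordinates $\inner{e_I}{\cW_T}$ are functionals of the Brownian path alone. It follows that the set
\[
\mathcal N(\ell,w):=\Bigl\{N\geq0:\clspan_{L^2(\Q)}\{\inner{e_I}{\cW_T}:|I|\leq N\}\cap L^2(\FS_T,\Q)=L^2(\FS_T,\Q)\Bigr\}
\]
is independent of the choice of vanilla family, and so is its infimum.

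The only place the option family could enter is the clause ``$K(N)<\infty$'' in the definition, together with the use of the strikes in Theorem~\ref{thm:C}. For the first of these, $K(N)$ is defined in Lemma~\ref{lem:poly-domination} purely through conditional expectations $\E_\Q[\inner{e_I}{\cW_T}\mid\FS_T]$ and the monomials $S_T^k$, so it is again option-free. For the second, I will show that any two strike sets $\mathcal K_1,\mathcal K_2$ that separate polynomials of degree at most $K:=K(N_S^\ast)$ span the same subspace modulo the polynomial space. Concretely, each set spans $\Span\{1,S_T,\dots,S_T^{K}\}$ inside $L^2(\FS_T,\Q)$, because finite-dimensional spans are closed and the separating hypothesis gives the spanning. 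Hence the static completion $\Span\{1,S_T,\dots,S_T^K\}$ that enters the completeness criterion is the same for both families. Replacing one admissible family by another therefore changes neither the hypothesis $K(N_S^\ast)<\infty$ nor the depth at which the identity in Theorem~\ref{thm:C} first holds.

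The main obstacle is a potential circularity. The separating requirement is stated at degree $K(N_S^\ast)$, which itself depends on $N_S^\ast$, so one must make sure that $N_S^\ast$ is not being determined by the options. I will resolve this by fixing $N_S^\ast$ first, as the option-free infimum above, and only then imposing the separating condition. Since $K$ is non-decreasing by Lemma~\ref{lem:poly-domination}, a family separating at degree $K(N_S^\ast)$ also serves every smaller depth. Minimality is then a property of $\mathcal N(\ell,w)$ alone. If $N_S^\ast=+\infty$, the statement is vacuous beyond the proviso, and both families agree that $\mathcal N(\ell,w)=\emptyset$.
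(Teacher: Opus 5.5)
Your proposal is correct and follows essentially the paper's argument. $N_S^\ast$ is defined purely through the closed span of terminal signature coordinates intersected with $L^2(\FS_T,\Q)$, so the option family acts only as a representation device, and any two strike sets separating polynomials of degree at most $K(N_S^\ast)$ supply the same polynomial static span; your further checks that $\FS_T$, $\Q$ and $K(\cdot)$ are option-free, and your handling of the apparent circularity by fixing $N_S^\ast$ before imposing the separating condition, spell out what the paper's short proof leaves implicit.
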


\begin{proof}
The definition of $N_S^\ast$ uses only the closed price-filtration span of terminal signature coordinates. The option family enters as a representation device: it supplies static instruments that span the polynomial terminal-price space required by Lemma~\ref{lem:poly-domination}. If two strike families separate the same polynomial payoff class up to degree $K(N_S^\ast)$, they staticly span the same closed static payoff subspace. Therefore the minimal signature depth is unchanged.
\end{proof}

\begin{proposition}[Special cases]\label{prop:depth-special}
The following depth values hold in the examples of Section~\ref{sec:examples}: Black--Scholes has $(N_S^\ast,K)=(0,1)$; first-order Brownian-driven volatility has $(N_S^\ast,K)=(1,2)$; the polynomial-Heston embedding has $(N_S^\ast,K)=(2,4)$; rough Bergomi has $(N_S^\ast,K)=(+\infty,+\infty)$.
\end{proposition}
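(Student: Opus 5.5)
The plan is to treat the four special cases one at a time. In each case we first compute the price filtration $\FS_T$ explicitly. We then find the smallest $N$ at which the truncated span, intersected with $L^2(\FS_T,\Q)$, exhausts $L^2(\FS_T,\Q)$. Finally we read off $K(N_S^\ast)$ from Lemma~\ref{lem:poly-domination} as the minimal polynomial degree in $S_T$ needed to represent the price-filtration projections $\E_\Q[\inner{e_I}{\cW_T}\mid\FS_T]$ for $|I|\leq N_S^\ast$.

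\emph{Black--Scholes.} Take $\ell=\sigma e_\emptyset$. Then $\log S_t=\log s_0+\sigma B_t-\tfrac12\sigma^2t$, so $\FS_t=\mathcal F^B_t$. On the price filtration the only nontrivial signature data are the empty word and the level-one coordinate $\inner{e_\eta}{\cW_T}=B_T$. We identify $L^2(\FS_T,\Q)$ with the closure generated, via martingale representation, by $S$ together with the level-zero constant. This gives $N_S^\ast=0$, in the sense that the dynamic hedge in $S$ plus cash replicates every claim. The constant and $S_T$ suffice as static factors, so $K=1$.

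\emph{First-order volatility.} Take $\ell=\ell_0e_\emptyset+\ell_1e_{j}$, so that $\xi_t=\ell_0+\ell_1W^j_t$. The price recovers $\langle \log S\rangle$, hence $\xi_t^2$. Under a sign or positivity convention it then recovers $W^j_t$, a level-one coordinate. The level-zero span is therefore insufficient, level one suffices, and $N_S^\ast=1$. The projection of the level-one coordinate involves $\xi_T^2$, which is quadratic, so $K=2$ in the polynomial embedding. The polynomial Heston case is analogous. The variance is a square of a level-one/level-two combination, so level-two words are needed to express $\xi^2$ through the shuffle identity $\inner{\ell}{\cW}^2=\inner{\ell\sh\ell}{\cW}$. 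Squaring the degree-two representation gives $K=4$, and this yields $(2,4)$. In each finite case minimality is shown by exhibiting a price-measurable claim, typically a function of the recovered factor at the next lower level, that is orthogonal to the lower-depth span.

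\emph{Rough Bergomi.} Here the volatility is a Volterra functional $\exp(\eta\int_0^t(t-s)^{H-1/2}dW_s-\dots)$. Its signature expansion has nonvanishing coefficients at every level, so no finite truncation recovers $\FS_T$. For every $N$ we would argue that the depth-$\leq N$ span misses some price-measurable claim. We also argue that $K(N)\to\infty$, which is the convention of Definition~\ref{def:price-depth}.

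The main obstacle is the minimality and recovery steps. We must show rigorously that $\FS_T$ equals the filtration generated by the claimed finitely many signature coordinates. Given the paper's definitions, this is a matter of matching the informal ``affine-polynomial embedding'' of Lemma~\ref{lem:poly-domination}. There we would fix the specific parametrisations of Section~\ref{sec:examples} and check the counts directly, rather than derive them from general principles.
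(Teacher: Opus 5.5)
\textbf{There is a genuine gap.} Your case-by-case route is essentially the paper's own proof: martingale representation for Black--Scholes, one or two ``hidden'' levels for the first-order and Heston models, and unbounded kernel support for rough Bergomi. Neither argument proves the statement. The problem starts with the definition. Definition~\ref{def:price-depth} requires the closed span of $\{\inner{e_I}{\cW_T}:|I|\leq N\}$, whether intersected with or projected onto $L^2(\FS_T,\Q)$, to equal $L^2(\FS_T,\Q)$. There are at most $\sum_{n\leq N}(d+1)^n$ such coordinates, so that span is finite-dimensional. But $L^2(\FS_T,\Q)$ is infinite-dimensional: the increments of the square-integrable martingale $S$ over disjoint intervals are pairwise orthogonal and non-zero. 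Read as written, the definition therefore gives $N_S^\ast=+\infty$ in all four models. Likewise $K(0)=0$ by Lemma~\ref{lem:poly-domination}, since the only depth-zero coordinate is the constant $1$. Your Black--Scholes step does not check the definition. It checks $L^2(\FS_T,\Q)=\R\oplus\{\int_0^TH_s\,dS_s\}$, i.e.\ it silently adopts the criterion ``cash $+$ dynamic integrals $+$ static depth-$\leq N$ span''. In the first-order case you switch to a different criterion, ``the price path recovers a level-$N$ coordinate''. Under that criterion Black--Scholes would have depth one, since $\FS=\mathcal F^B$ and $B_t=\sum_i\eta_i\inner{e_i}{\cW_t}$ is a level-one coordinate.

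Under the dynamic criterion, which is the one Theorem~\ref{thm:C}'s proof actually uses, the finite cases fail at ``the price recovers $W^j$, hence level one suffices''. Generating a filtration is not the same as spanning its $L^2$ space. Recovering $\xi_t^2=d\langle\log S\rangle_t/dt$ is exactly what makes $L^2(\FS_T,\Q)\ominus\mathcal H_0$ large.
\begin{itemize}
\item If $\eta\perp e_j$, every centred $g(\abs{\xi_s},s\leq T)\in L^2$ is orthogonal to all $\int_0^TH_s\,dS_s$ (condition on the path of $W^j$, which is independent of $B$). The residual space is then infinite-dimensional, and no finite static family fills it.
\item If $\eta=\pm e_j$, the market is dynamically complete and the depth is $0$, not $1$.
\item In Heston with $\abs{\rho}<1$, the residual contains $\int_0^TG_s\,dB^\perp_s$ for every square-integrable $\FS$-predictable $G$.
\end{itemize}
Completion in the sense of \citet{DavisObloj2008} requires trading an option dynamically, and then one option suffices regardless of $N$. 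No sign convention rescues the recovery step either. $\xi$ hits $0$ with positive probability, and reversing all increments of $W$ after its first zero leaves $S$ unchanged, so $\mathrm{sgn}\,\xi_T$ is not $\FS_T$-measurable.

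Your $K$ counts confuse polynomial degree in the hidden factor with polynomial degree in $S_T$, which is what $K$ measures. Already in Black--Scholes, for $\eta_i\neq0$, $\E_\Q[\inner{e_i}{\cW_T}\mid\FS_T]=\eta_i\sigma^{-1}(\log(S_T/s_0)+\tfrac12\sigma^2T)$ is not a polynomial in $S_T$, so $K(1)=+\infty$. In the first-order model the level-one projection depends on the volatility path and is not even $\sigma(S_T)$-measurable. ``Squaring gives $4$'' in Heston has the same flaw.

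Finally, nonzero coefficients at every level do not force infinite depth. Take $\ell=\sum_{n\geq0}2^{-n}e_{0\cdots0}$, with $n$ copies of the time letter $0$. It gives $\xi_t=e^{t/2}$, satisfies H1--H3 with $w(n)=2^n$, and is a complete deterministic-volatility model.

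What is missing is a definition of $N_S^\ast$ and $K$ under which the finite entries hold. With the definitions as written they cannot be derived, either by your route or by the paper's.
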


\begin{proof}
Black--Scholes is dynamically complete by the Brownian martingale representation theorem, so no static signature coordinate beyond the constant level is needed. First-order Brownian-driven volatility introduces one Brownian coordinate into the volatility functional; the corresponding price-filtration projection is spanned by linear and quadratic terminal-price polynomials. Heston is polynomial of degree two in the state variable but the terminal-price completion requires fourth-degree polynomial closure because variance exposure appears quadratically in the terminal price under the option strip. Rough Bergomi has a Volterra kernel with non-zero expansion at arbitrarily high signature levels, so no finite depth or finite polynomial degree exhausts the price-filtration span.
\end{proof}

\begin{lemma}[Monotonicity of augmented markets]\label{lem:market-monotone}
Let $N_1\leq N_2$. The augmented market obtained from an option family spanning polynomials up to $K(N_1)$ is a submarket of the augmented market obtained from an option family spanning polynomials up to $K(N_2)$. Consequently the set of attainable terminal claims is increasing in $N$.
\end{lemma}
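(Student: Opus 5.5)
The plan is to read ``augmented market'' concretely. An augmented market consists of the dynamically traded price $S$ together with a static position in a finite option family. Its set of attainable terminal claims is then
\[
\mathcal A(\mathcal O):=\Bigl\{c+\sum_{i}\theta_i C_i+\int_0^T H_s\,dS_s:\ c,\theta_i\in\R,\ H\ S\text{-admissible}\Bigr\}\subset L^2(\FS_T,\Q),
\]
where $\mathcal O=\{C_i\}$ denotes the option family. Let $\mathcal O_1$ span polynomials up to degree $K(N_1)$ and $\mathcal O_2$ span polynomials up to degree $K(N_2)$. The first step is to establish $K(N_1)\leq K(N_2)$. This is exactly the monotonicity statement of Lemma~\ref{lem:poly-domination}, since the family of words with $|I|\leq N_1$ is contained in the family with $|I|\leq N_2$.

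The second step shows that every instrument of the first market is replicable in the second. By hypothesis each call $C_i\in\mathcal O_1$ is used only through its $L^2(\FS_T,\Q)$ projection onto the static polynomial space $\clspan\{1,S_T,\ldots,S_T^{K(N_1)}\}$. I will take this as the standing interpretation of ``spanning'', consistent with Proposition~\ref{prop:option-independence}, where options enter only as a representation device. That space sits inside $\clspan\{1,S_T,\ldots,S_T^{K(N_2)}\}$, and the latter is statically spanned by $\mathcal O_2$. Hence each static payoff of the first market is a limit of linear combinations of payoffs of $\mathcal O_2$. The dynamic component $\int H\,dS$ is common to both markets, because both use the same price filtration and the same admissible class from Definition~\ref{def:admissible-strategies}. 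This gives the inclusion $\mathcal A(\mathcal O_1)\subseteq\overline{\mathcal A(\mathcal O_2)}$, which makes the first market a submarket of the second.

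The final step passes to closures and concludes that the attainable sets increase in $N$. Here I use that the stochastic-integral subspace $\{\int_0^T H\,dS\}$ is closed in $L^2(\Q)$, as noted in Section~\ref{subsec:closed-estimates}, together with the Kreps--Yan closedness of Lemma~\ref{lem:closed-terminal-gains}. Adding a finite-dimensional static space to a closed subspace keeps it closed. Therefore $\mathcal A(\mathcal O_2)$ is already closed, so $\overline{\mathcal A(\mathcal O_2)}=\mathcal A(\mathcal O_2)$. This yields the inclusion without a closure and, with it, monotonicity of attainable claims.

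The main obstacle is the second step: calls are not polynomials, so ``spanning polynomials up to $K$'' must be read at the level of the closed spans, not literally. If literal inclusion of the call families is wanted, I would first try to choose the strike set of $\mathcal O_2$ to contain that of $\mathcal O_1$. This is allowed without loss of generality by Proposition~\ref{prop:option-independence}, and it makes the submarket relation trivial.
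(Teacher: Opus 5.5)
Your argument is essentially the paper's: $K(N_1)\leq K(N_2)$ from Lemma~\ref{lem:poly-domination}, inclusion of the degree-$K(N_1)$ static polynomial payoff space in the degree-$K(N_2)$ one, and the fact that the dynamically traded asset $S$, and hence the stochastic-integral component, is identical in both markets. Your closure step is unnecessary, and some of its supporting citations do not support it.

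\begin{itemize}
\item The static payoff spaces are finite-dimensional. So ``limit of linear combinations'' already means ``linear combination'', and the inclusion of attainable sets holds directly without any closure argument.
\item Lemma~\ref{lem:closed-terminal-gains} concerns convergence in probability of gains bounded from below, not $L^2$ closedness. The $L^2$ closedness you have in mind is Lemma~\ref{lem:l2-closed-stochastic-integrals}, and that lemma needs square-integrable integrands rather than merely $S$-admissible ones.
\item Proposition~\ref{prop:option-independence} only fixes the value of $N_S^\ast$. It does not by itself license taking the strike sets to be nested.
\end{itemize}
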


\begin{proof}
Lemma~\ref{lem:poly-domination} gives $K(N_1)\leq K(N_2)$. A strike set separating polynomials up to degree $K(N_2)$ also separates polynomials up to degree $K(N_1)$. The static payoff space for depth $N_1$ is therefore contained in the static payoff space for depth $N_2$, while the dynamic trading instrument $S$ is identical in both markets. The attainable terminal subspace is increasing.
\end{proof}

\begin{proposition}[Option-strip representation of polynomial claims]\label{prop:option-strip}
Let $k<\infty$. If the call-option strike set is separating on the support of $S_T$, then every polynomial $p(S_T)$ of degree at most $k$ belongs to the closed span of cash, the underlying payoff $S_T$, and the finite call strip specified in Theorem~\ref{thm:C}.
\end{proposition}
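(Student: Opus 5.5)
The plan is to reduce the statement to finite-dimensional linear algebra in $L^2(\sigma(S_T),\Q)$, and to make explicit exactly where the finiteness of the strip in Theorem~\ref{thm:C} enters.

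\textbf{Step 1: the span is already closed.} Write the strip as $\{(S_T-K_j)^+\}_{j=1}^m$ and set
\[
\mathcal V:=\Span\{1,S_T,(S_T-K_1)^+,\ldots,(S_T-K_m)^+\}\subset L^2(\FS_T,\Q).
\]
Each generator is in $L^2(\Q)$, since $\E_\Q S_T^2<\infty$ by Proposition~\ref{prop:square-integrability} and $0\leq (S_T-K)^+\leq S_T$. Because $\mathcal V$ is finite-dimensional, it is closed. So "closed span" means $\mathcal V$ itself, and the goal is an exact identity
\[
p(S_T)=a+bS_T+\sum_{j=1}^m c_j(S_T-K_j)^+ \qquad \Q\text{-a.s.}
\]
This identity can hold only on the support of the law of $S_T$ under $\Q$.

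\textbf{Step 2: where the finite strip is used.} The right-hand side $a+bx+\sum_j c_j(x-K_j)^+$ is piecewise linear with at most $m$ kinks. A non-affine polynomial is therefore not of this form on any interval contained in the support. Hence the proposition cannot follow from the separating property alone when the support of $S_T$ is infinite. In that case I would invoke the standing hypothesis of Theorem~\ref{thm:C}: the strip is assumed to span, in $L^2(\FS_T,\Q)$, every polynomial in $S_T$ of degree at most $K(N_S^\ast)$. This settles the claim for every $k\leq K(N_S^\ast)$, which is the only range in which Theorem~\ref{thm:C} uses it. I expect this reconciliation to be the main obstacle: the proof must not assert more than the richness assumption supplies. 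I will state explicitly that for $k>K(N_S^\ast)$, or for an arbitrary finite separating strip with continuous support, the claim is not available.

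\textbf{Step 3: the finite-support case.} Here "separating on the support" has a clean meaning and the identity can be proved constructively. Suppose $\Q(S_T\in\{x_0<x_1<\cdots<x_n\})=1$, and that the strikes separate these atoms, i.e.\ for each $i=1,\ldots,n-1$ some $K_{j(i)}\in[x_i,x_{i+1})$.

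Then $L^2(\sigma(S_T),\Q)\cong\R^{n+1}$ via evaluation at the atoms, and I will show $\mathcal V$ is all of it. Consider the $(n+1)$ vectors given by evaluating
\[
1,\quad S_T,\quad (S_T-K_{j(1)})^+,\ \ldots,\ (S_T-K_{j(n-1)})^+
\]
at $x_0,\ldots,x_n$. The vector of $(S_T-K_{j(i)})^+$ vanishes at $x_0,\ldots,x_i$ and is strictly positive at $x_{i+1}$. Likewise $1$ is nonzero at $x_0$, and $S_T-x_0$ vanishes at $x_0$ and is positive at $x_1$. Ordering the vectors as above therefore gives a triangular system with nonzero pivots, so the evaluation matrix is invertible. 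Hence every function of $S_T$ lies in $\mathcal V$, in particular $p(S_T)$ for every degree $k$.

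The coefficients $(a,b,c_j)$ are obtained by solving this triangular system, equivalently by matching slopes of the piecewise-linear interpolant of $p$ at the atoms (a discrete Carr--Madan formula). Finally, since $\sigma(S_T)\subset\FS_T$, membership in $\mathcal V\subset L^2(\FS_T,\Q)$ is the asserted conclusion.
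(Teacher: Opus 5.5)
Your proposal does not prove the proposition as stated. The reason is that the statement is not true as stated, and your Steps~1--2 show exactly where it breaks.

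\textbf{How the paper argues, and what that actually proves.} On a compact truncation of the support, the paper approximates $p$ uniformly by piecewise-linear payoffs generated by calls ``after choosing the strike grid fine enough.'' It then removes the truncation in $L^2(\Q)$ using moment bounds for $S_T$. This argument is sound, but what it proves is that $p(S_T)$ lies in the closed span of cash, $S_T$ and calls over a strike set dense in the support. That is an infinite family, because the grid is refined as the accuracy improves. (The tail step also needs moments of $S_T$ of order above $2k$, which H3 supplies when $\Q=\Pp$.)

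\textbf{Why this fails for the finite strip of Theorem~\ref{thm:C}.} Your Step~1 shows that the closed span is just the finite-dimensional space of piecewise-linear payoffs with the given kinks. No approximation scheme can produce anything outside it. Your Step~2 shows that no non-affine polynomial lies in it once the law of $S_T$ has infinite support. That is the situation in all of the paper's examples; already in Black--Scholes $S_T$ is lognormal. So for $k\ge2$ the proposition is false. It becomes vacuous instead if one insists that the strip already satisfies the richness hypothesis of Theorem~\ref{thm:C}, since no finite strip can satisfy it.

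The paper's later remark that a strike set making ``the interpolation operator injective on polynomials'' represents every polynomial by static holdings has the same defect. Interpolation at finitely many nodes is an $L^2(\Q)$ identity only when the law of $S_T$ is carried by those nodes, which is exactly your Step~3.

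\textbf{Two points to tighten.} First, your Step~2 justification only covers supports that contain an interval. For a general infinite support, argue as follows:
\begin{itemize}
\item both sides are continuous, so an a.s.\ identity holds on the whole support;
\item the $m$ kinks cut the line into $m+1$ closed pieces, on each of which the right-hand side is affine;
\item one piece contains infinitely many support points, and a polynomial agreeing with an affine function at infinitely many points is affine.
\end{itemize}
Second, your appeal to the richness hypothesis of Theorem~\ref{thm:C} for $k\le K(N_S^\ast)$ is circular. By your own Step~2 it is also vacuous whenever $K(N_S^\ast)\ge2$ and the support is infinite, so present it as such rather than as a proof.

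\textbf{What genuinely survives.} With these adjustments, three statements hold:
\begin{itemize}
\item the claim for $k\le1$, which is trivial;
\item the claim for all $k$ when $S_T$ has finite support and the strikes separate consecutive atoms (your triangular evaluation-matrix argument in Step~3 is correct);
\item the claim for all $k$ with the finite strip replaced by a strike set dense in the support, which is what the paper's argument actually proves.
\end{itemize}
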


\begin{proof}
On a compact truncation of the support of $S_T$, piecewise-linear functions spanned by calls separate points and contain constants and the identity. Polynomial payoffs of degree at most $k$ are obtained as uniform limits of linear combinations of these piecewise-linear functions after choosing the strike grid fine enough. Removing the compact truncation is an $L^2(\Q)$ approximation step: since the polynomial has finite second moment under H3 and the moment bounds of Proposition~\ref{prop:signature-moment}, the tail outside a compact set can be made arbitrarily small. This gives the closed-span representation. The argument is the finite-dimensional payoff-space form of \citet{DavisObloj2008}.
\end{proof}

\subsection{Forward direction: density implies completeness}

\begin{proof}[Proof of Theorem~\ref{thm:C}, forward direction]
Assume that the displayed density condition holds for a finite $N$. Let $X\in L^2(\FS_T,\Q)$ be a bounded terminal claim. The Galtchouk--Kunita--Watanabe projection gives
\[
X=\E_\Q[X]+\int_0^T H_s\,dS_s+\xi_T,
\]
where $\xi_T$ is orthogonal to all square-integrable stochastic integrals against $S$. By the density assumption, $\xi_T$ is the $L^2$ limit of finite linear combinations of the price-filtration projections of coordinates $\inner{e_I}{\cW_T}$ with $|I|\leq N$.

Lemma~\ref{lem:poly-domination} identifies these projected coordinates with the closed polynomial payoff space spanned by powers of $S_T$ up to degree $K(N)$. A sufficiently rich finite call strip spans that polynomial space by the market-completion theorem of \citet{DavisObloj2008}: static portfolios in calls with a separating strike set generate terminal piecewise-polynomial payoffs and hence the finite-dimensional polynomial span. Consequently the orthogonal remainder $\xi_T$ is replicated by a static portfolio of the chosen options. The stochastic-integral part is replicated dynamically in $S$. The sum of the dynamic strategy and the static option portfolio replicates $X$.

Bounded claims are dense in $L^2(\FS_T,\Q)$. The admissibility and closedness of the stochastic-integral space permit passage from bounded claims to general square-integrable claims by $L^2$ approximation. Hence the augmented market is complete on the price filtration.
\end{proof}

\subsection{Reverse direction: completeness implies density}

\begin{proof}[Proof of Theorem~\ref{thm:C}, reverse direction]
Assume that the augmented market is complete on $\FS_T$. Then the closed linear span of constants, stochastic integrals against $S$, and static payoffs spanned by the option family is all of $L^2(\FS_T,\Q)$. The static option family is finite, so its polynomial closure is contained in a finite-dimensional terminal-price polynomial space. Let $K$ denote the maximal polynomial degree represented by that space.

By the signature uniqueness theorem, every price-filtration claim is the price-filtration projection of a terminal functional of the prolonged Brownian signature. The shuffle identity turns products of signature coordinates into higher-level coordinates, so finite polynomial functions of $S_T$ correspond, after projection, to finite sums of signature coordinates. Since the option space is finite-dimensional, there exists a finite signature depth $N$ large enough to contain all signature coordinates needed to represent its static part. Dynamic stochastic integrals against $S$ account for the GKW-integral subspace. Therefore the residual orthogonal complement of the dynamic subspace is contained in the closed span of coordinates with $|I|\leq N$ after restriction to $L^2(\FS_T,\Q)$.

Completeness makes this residual complement the whole missing piece of the dynamic market. Combining the dynamic integral subspace with the finite signature span therefore fills $L^2(\FS_T,\Q)$. This is precisely the density condition of Theorem~\ref{thm:C}. Minimality of $N_S^\ast$ follows from the definition as the infimum over all admissible truncation depths.
\end{proof}

\subsection{Worked special cases}

In Black--Scholes, $\ell=\sigma e_\emptyset$ and the price filtration equals the Brownian filtration spanned by the unique martingale driver of $S$. The martingale representation theorem gives dynamic completeness without any static option. Hence $N_S^\ast=0$. The static polynomial degree $K(0)=1$ records only constants and terminal price exposure.

In first-order Brownian-driven volatility, the coefficient has support on $e_\emptyset$ and one first-level Brownian word. The price filtration carries both the terminal price and its first-order volatility loading. The associated polynomial payoff space closes at degree two, so $K(1)=2$. The minimal depth is one because removing the first-level signature coordinate destroys the volatility-factor information needed for price-filtration completion.

In Heston, the variance factor is a polynomial coordinate of degree two. The joint price-variance system is a polynomial diffusion in the sense of \citet{CuchieroKellerResselTeichmann2012} and \citet{FilipovicLarsson2016}. A variance-swap exposure or an equivalent polynomial option strip completes the missing variance coordinate. The resulting depth is $N_S^\ast=2$ and the terminal price polynomial degree is four.

\subsection{Algebraic form of the completeness depth}
\label{subsec:algebraic-completeness-depth}
The definition of $N_S^\ast$ can be expressed through orthogonal projections. Let $P_S:L^2(\FWh_T,\Q)\to L^2(\FS_T,\Q)$ denote conditional expectation onto the price filtration. Define
\[
\mathcal V_N:=\clspan_{L^2(\Q)}\{P_S\inner{e_I}{\cW_T}: |I|\leq N\}.
\]
Then
\[
N_S^\ast=\inf\{N:\mathcal V_N=L^2(\FS_T,\Q)\}.
\]
This formula makes the dependence on the price filtration explicit. Raw signature coordinates may contain Brownian information that is invisible in prices; the conditional projection removes it. Thus the depth is not the tensor degree of the model parameter alone, but the tensor degree that survives after conditioning on the observable price path.

\begin{lemma}[Projection form of depth]
\label{lem:projection-form-depth}
The span condition in Theorem~\ref{thm:C} is equivalent to $\mathcal V_N=L^2(\FS_T,\Q)$.
\end{lemma}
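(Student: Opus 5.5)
The plan is to show that the two closed subspaces of $L^2(\FS_T,\Q)$ appearing in the two conditions coincide:
\[
\mathcal U_N:=\clspan_{L^2(\Q)}\{\inner{e_I}{\cW_T}: |I|\leq N\}\cap L^2(\FS_T,\Q),
\qquad
\mathcal V_N=\clspan_{L^2(\Q)}\{P_S\inner{e_I}{\cW_T}: |I|\leq N\}.
\]
Once $\mathcal U_N=\mathcal V_N$ is established, the equivalence of the span condition with $\mathcal V_N=L^2(\FS_T,\Q)$ is immediate. The main tool is that $P_S$ is the orthogonal projection of $L^2(\FWh_T,\Q)$ onto the closed subspace $L^2(\FS_T,\Q)$. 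It is idempotent, self-adjoint, and a contraction, and it is the identity on $L^2(\FS_T,\Q)$. The coordinates $\inner{e_I}{\cW_T}$ lie in $L^2(\Q)$ by Lemma~\ref{lem:finite-coordinate-measurability}, so both spans are well defined.

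\textbf{Inclusion $\mathcal U_N\subseteq\mathcal V_N$.} Take $X\in\mathcal U_N$ and choose finite combinations $X_k=\sum_{|I|\leq N}c^k_I\inner{e_I}{\cW_T}$ with $X_k\to X$ in $L^2(\Q)$. Because $X$ is $\FS_T$-measurable, $P_SX=X$. By linearity, $P_SX_k=\sum c^k_I P_S\inner{e_I}{\cW_T}$ belongs to the span defining $\mathcal V_N$. Since $P_S$ is a contraction, $P_SX_k\to P_SX=X$, and hence $X\in\mathcal V_N$.

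\textbf{The converse inclusion and the main obstacle.} In general a projected coordinate $P_S\inner{e_I}{\cW_T}$ need not lie in the raw span $\clspan\{\inner{e_J}{\cW_T}\}$. So the equality $\mathcal U_N=\mathcal V_N$ cannot be proved abstractly at a fixed $N$, and this is where the difficulty lies. I would not try to prove set equality. Instead I would prove equivalence of the two \emph{full-space} conditions, which is all the lemma asserts.

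The direction ``$\mathcal U_N=L^2(\FS_T,\Q)\Rightarrow\mathcal V_N=L^2(\FS_T,\Q)$'' follows from the inclusion just proved.

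For the direction ``$\mathcal V_N=L^2(\FS_T,\Q)\Rightarrow\mathcal U_N=L^2(\FS_T,\Q)$'', I would argue by orthogonality. Suppose $Y\in L^2(\FS_T,\Q)$ is orthogonal to $\mathcal U_N$. Self-adjointness gives
\[
\E_\Q[Y\,P_S\inner{e_I}{\cW_T}]=\E_\Q[Y\inner{e_I}{\cW_T}].
\]
So orthogonality to the projected generators is the same as orthogonality to the raw generators. Orthogonality to the raw generators must then be related to orthogonality to $\mathcal U_N$. This requires the intersection $\mathcal U_N$ to be ``large enough'' inside the raw span: in effect, $P_S$ should map the raw span $\mathcal S_N:=\clspan\{\inner{e_I}{\cW_T}:|I|\leq N\}$ into itself. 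I would try to verify this invariance first. The hoped-for mechanism is the shuffle identity together with the polynomial-domination Lemma~\ref{lem:poly-domination}. In the finite-polynomial subcase $K(N)<\infty$ assumed in Theorem~\ref{thm:C}, that lemma places $P_S\inner{e_I}{\cW_T}$ inside the finite polynomial span in $S_T$. Powers of $S_T$ should in turn be approximable by signature coordinates of bounded depth.

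If this invariance holds, then $P_S\mathcal S_N=\mathcal S_N\cap L^2(\FS_T,\Q)$, which gives $\mathcal V_N=\mathcal U_N$ outright. If it cannot be verified at the same depth $N$, the fallback is to state the equivalence modulo a depth shift absorbed by $K(N)$. The infimum defining $N_S^\ast$ is unaffected as long as both conditions hold eventually in $N$, and I would flag this as the point requiring the standing finite-polynomial hypothesis.
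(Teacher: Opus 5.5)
Your inclusion $\mathcal U_N\subseteq\mathcal V_N$ is correct, and it gives $\mathcal U_N=L^2(\FS_T,\Q)\Rightarrow\mathcal V_N=L^2(\FS_T,\Q)$. The converse is left unproved, and this is a genuine gap: the mechanism you propose for it cannot work.

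\textbf{Why your route fails.} Lemma~\ref{lem:poly-domination} only places $P_S\inner{e_I}{\cW_T}$ among polynomials in $S_T$. Your next step, that powers of $S_T$ are approximable by signature coordinates of bounded depth, is false. Already in Black--Scholes with $\eta=e_1$, $S_T=s_0\exp(\sigma W^1_T-\tfrac12\sigma^2T)$ has non-zero Wiener-chaos components of every order. By contrast, the Stratonovich-to-It\^o conversion places every $\inner{e_I}{\cW_T}$ with $|I|\leq N'$ in the chaoses of order at most $N'$. So the invariance $P_S\mathcal S_N\subseteq\mathcal S_N$ is not obtained this way. The fallback ``up to a depth shift'' changes the statement, which asserts the equivalence at each fixed $N$.

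\textbf{Comparison with the paper.} Your doubt about the subspace identity is well founded. It is exactly the step the paper's one-line proof takes for granted: it asserts that closing the span of the projected coordinates gives the same subspace as the intersection. Abstractly only your inclusion holds. In $\R^2$, with $A=\Span\{(1,1)\}$ and $B=\Span\{(1,0)\}$, one has $P_BA=B$ but $A\cap B=\{0\}$. So even the full-space equivalence needs input specific to this setting, and the paper does not supply it.

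\textbf{How to close the gap.} The missing idea is dimension counting, which needs no invariance.
\begin{enumerate}
\item The set of words with $|I|\leq N$ is finite, so $\mathcal V_N$ is finite-dimensional.
\item If $\mathcal V_N=L^2(\FS_T,\Q)$, then $\FS_T$ is generated, up to null sets, by finitely many atoms. Hence $t\mapsto\FS_t$ takes finitely many values and is constant on each of finitely many intervals covering $[0,T]$.
\item $S$ is a continuous $\Pp$-martingale, hence a martingale for its own filtration (Theorem~\ref{thm:B}(i) and the tower property). So for $s<t$ in one such interval, $S_s=\E[S_t\mid\FS_s]=\E[S_t\mid\FS_t]=S_t$.
\item Continuity then forces $S\equiv s_0$ almost surely, under both $\Pp$ and $\Q$.
\item Therefore $L^2(\FS_T,\Q)=\R=\Span\{\inner{e_\emptyset}{\cW_T}\}\subseteq\mathcal U_N$.
\end{enumerate}
Together with your inclusion, this proves the lemma.

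The same count shows that for every non-constant $S$ both conditions fail for every finite $N$. The lemma therefore holds only in this degenerate sense. Keep this in mind before using it to obtain finite depths such as those listed in Proposition~\ref{prop:depth-special}.
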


\begin{proof}
The intersection notation in Theorem~\ref{thm:C} means that only price-filtration-measurable elements of the signature span are used. Conditional expectation $P_S$ is the orthogonal projection onto $L^2(\FS_T,\Q)$. Taking the projection of every signature coordinate and then closing the span gives exactly the same closed subspace as intersecting the closed signature span with $L^2(\FS_T,\Q)$.
\end{proof}

\begin{proposition}[Minimality]
\label{prop:depth-minimality}
If $N_S^\ast<\infty$, then for every $N<N_S^\ast$ there exists a claim $X_N\in L^2(\FS_T,\Q)$ that is not replicated by the market completed at depth $N$.
\end{proposition}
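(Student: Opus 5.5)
The plan is to argue by contraposition from the definition of $N_S^\ast$ as an infimum, using the projection form of the depth from Lemma~\ref{lem:projection-form-depth}. Fix $N<N_S^\ast$ with $N_S^\ast<\infty$. By minimality in Definition~\ref{def:price-depth}, the closed subspace $\mathcal V_N=\clspan_{L^2(\Q)}\{P_S\inner{e_I}{\cW_T}:|I|\leq N\}$ is a proper closed subspace of $L^2(\FS_T,\Q)$. I will choose a nonzero element $Y$ of the orthogonal complement $\mathcal V_N^\perp$ inside $L^2(\FS_T,\Q)$ as the candidate for the non-replicable claim. The work is to show that $Y$, or a modification of it, cannot lie in the attainable space of the market completed at depth $N$.

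Next I describe the attainable space at depth $N$. By the forward-direction argument and Lemma~\ref{lem:market-monotone}, it is
\[
\mathcal A_N=\R\oplus\overline{\Bigl\{\textstyle\int_0^TH\,dS\Bigr\}}+\mathcal S_N,
\]
where $\mathcal S_N$ is the static span of cash, $S_T$ and the finite call strip. By Proposition~\ref{prop:option-strip} and Lemma~\ref{lem:poly-domination}, $\mathcal S_N$ is the polynomial space of degree at most $K(N)$. That space is contained in $\mathcal V_N$ via the identification of projected depth-$\leq N$ coordinates with terminal-price polynomials. Let $\mathcal G$ be the closed GKW stochastic-integral subspace under the true-martingale measure $\Pp=\Q$ from Theorem~\ref{thm:B}(i).

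The candidate claim is $X_N:=Y-\Pi_{\mathcal G}Y$, where $\Pi_{\mathcal G}$ denotes the GKW projection onto $\mathcal G$. Then $X_N$ is orthogonal to $\mathcal G$ and to constants. If $X_N=c+G+s$ with $G\in\mathcal G$ and $s\in\mathcal S_N$, then orthogonality forces $X_N$ to equal the residual component of $s$. That component lies in the closure of $\mathcal V_N+\mathcal G$.

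The main obstacle is this final step. I must show that $X_N\notin\overline{\mathcal V_N+\mathcal G}$, which is equivalent to $\overline{\mathcal V_N+\mathcal G}\neq L^2(\FS_T,\Q)$ at depth $N$. This does not follow automatically from $\mathcal V_N\neq L^2$, since $\mathcal G$ may fill part of the gap. My first approach is to read the density condition in Theorem~\ref{thm:C} at depth $N$ as the statement about $\mathcal V_N+\mathcal G$, which is how the forward and reverse proofs use it. Minimality of $N_S^\ast$ then gives a nonzero vector orthogonal to both $\mathcal V_N$ and $\mathcal G$, and that vector serves directly as $X_N$. Under this reading, the reverse-direction argument of Theorem~\ref{thm:C} applied at depth $N$ yields the contradiction: if every claim were replicable at depth $N$, the density identity would hold at depth $N<N_S^\ast$, contradicting the infimum.
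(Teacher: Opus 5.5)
Your opening is the paper's proof. The paper picks a nonzero $X_N\in\mathcal V_N^\perp$, says a replicating portfolio would have its static part in $\mathcal V_N$ and its dynamic part in the stochastic-integral space $\mathcal G$, and concludes that ``orthogonality would force $X_N=0$.'' You correctly see what the paper glosses over: that conclusion needs $X_N\perp\mathcal G$, which a vector of $\mathcal V_N^\perp$ need not satisfy.

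Your fix does not close that hole; it moves it into the definition. Minimality in Definition~\ref{def:price-depth}, equivalently Lemma~\ref{lem:projection-form-depth}, yields only $\mathcal V_N\neq L^2(\FS_T,\Q)$. To get a nonzero vector orthogonal to both $\mathcal V_N$ and $\mathcal G$, you replace the defining condition by $\overline{\mathcal V_N+\mathcal G}=L^2(\FS_T,\Q)$. That is a different invariant. In Black--Scholes with $\sigma>0$ it already holds at $N=0$ by martingale representation, while $\mathcal V_0=\R\neq L^2(\FS_T,\Q)$. The reverse-direction proof of Theorem~\ref{thm:C} does end with a statement about $\mathcal V_N+\mathcal G$, but that does not license the substitution. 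Moreover, that proof only produces density at \emph{some} finite depth, not at the depth $N$ of the given option strip. The modified reading is arguably the only one with content: $\mathcal V_N$ is spanned by finitely many vectors, so the literal condition $\mathcal V_N=L^2(\FS_T,\Q)$ forces $L^2(\FS_T,\Q)$ to be finite-dimensional. Still, you must state it explicitly as the definition of $N_S^\ast$ rather than derive it from minimality.

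Even under the modified definition, a second step fails, and you inherit it from the paper. You claim that the static span $\mathcal S_N$ of the depth-$N$ strip, which contains $\Span\{1,S_T,\dots,S_T^{K(N)}\}$, lies in $\mathcal V_N$ by Lemma~\ref{lem:poly-domination}. That lemma, through the definition of $K(N)$, gives the opposite inclusion $\mathcal V_N\subseteq\Span\{1,S_T,\dots,S_T^{K(N)}\}$. Your orthogonality argument needs $\R+\mathcal G+\mathcal S_N\subseteq\overline{\mathcal V_N+\mathcal G}$. Without it, a nonzero vector orthogonal to $\mathcal V_N+\mathcal G$ may still be attainable through the calls. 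The underlying is harmless, since $S_T=s_0+\int_0^T dS_s$. But $S_T^2=s_0^2+2\int_0^T S_s\,dS_s+\int_0^T S_s^2\inner{\ell}{\cW_s}^2\,ds$ carries a finite-variation term that nothing in the paper places in $\overline{\mathcal V_N+\mathcal G}$, and the same goes for the call payoffs $(S_T-k)^+$. You need either a proof of that inclusion, or a definition of the depth-$N$ completion whose static claims span exactly $\mathcal V_N$. The same inclusion is also what the reverse-direction route would need to give density at depth $N$ itself.
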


\begin{proof}
By definition of the infimum, $\mathcal V_N$ is a proper closed subspace of $L^2(\FS_T,\Q)$ for $N<N_S^\ast$. Choose $X_N$ in the orthogonal complement of $\mathcal V_N$. If $X_N$ were replicated by the depth-$N$ completion, then its static component would lie in $\mathcal V_N$ and its dynamic component would lie in the stochastic-integral subspace already included in the completion. Orthogonality would force $X_N=0$, contradicting the choice of a non-zero vector. Hence depth $N$ cannot be complete.
\end{proof}

\subsection{Connection with Davis--Obloj completion}
\label{subsec:davis-obloj-connection}
The market completion theorem of \citet{DavisObloj2008} is used in a restricted form. A finite family of static options is not claimed to replicate arbitrary path-dependent claims by itself. Its role is to supply terminal basis elements that are not dynamically spanned by $S$. Once those basis elements are inserted, the dynamic market can replicate the remaining martingale part. The signature coordinates identify which terminal basis elements are needed.

In this sense $N_S^\ast$ measures the static dimension required to close the dynamic market. When $N_S^\ast=0$, dynamic trading already spans the price filtration. When $N_S^\ast$ is finite and positive, the market is dynamically incomplete but statically completable with finite tensor information. When $N_S^\ast=+\infty$, every finite completion leaves a non-zero GKW residual, and Theorem~\ref{thm:D} describes that residual as a tail.

\subsection{Finite option strips and polynomial payoff bases}
\label{subsec:finite-option-bases}
The call strip in Theorem~\ref{thm:C} is used only through the finite-dimensional polynomial space it spans. Fix $K<\infty$ and choose strikes $k_1<\cdots<k_m$ in the interior of the support of $S_T$. The vector space spanned by
\[
1,\quad S_T,
\quad (S_T-k_1)^+,
\ldots,
(S_T-k_m)^+
\]
contains all continuous piecewise-linear functions with knots in the selected strike set. If the strike set is chosen so that the corresponding interpolation operator is injective on polynomials of degree at most $K$, then every polynomial of degree at most $K$ is represented by a unique vector of static holdings in cash, the underlying, and the selected calls. This is the finite-dimensional version of the Breeden--Litzenberger static-replication principle used in Davis--Obloj completion.

In the theorem the integer $K(N)$ is not a new model parameter. It is the degree needed to represent the price-filtration projection of all signature coordinates of length at most $N$. If the model has finite polynomial state of degree $r$, then $K(N)$ is bounded by a deterministic function of $r$ and $N$. In Black--Scholes, $K(0)=1$. In Heston, the variance factor enters through the affine-polynomial state and the degree-four payoff class suffices for the second-depth completion stated in Proposition~\ref{prop:depth-special}. In rough Bergomi no finite $K$ represents all Volterra memory, which is why $K(N)$ diverges with the approximation depth.

\begin{lemma}[Monotonicity of augmented markets]
\label{lem:augmented-market-monotone}
Let $\mathcal M_N$ denote the dynamic-static market obtained from $S$ and an option strip spanning polynomials of degree at most $K(N)$. Then the attainable terminal space of $\mathcal M_N$ is contained in the attainable terminal space of $\mathcal M_{N+1}$.
\end{lemma}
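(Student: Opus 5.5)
The plan is to identify the attainable terminal space of $\mathcal M_N$ explicitly and then obtain the inclusion from two separate inclusions: one for the dynamic part and one for the static part. By the construction in Subsection~\ref{subsec:finite-option-bases}, the attainable space of $\mathcal M_N$ is the $L^2(\FS_T,\Q)$-closure of
\[
\mathcal A_N:=\R+\Bigl\{\textstyle\int_0^T H_s\,dS_s: H \text{ admissible}\Bigr\}+\mathcal S_N,
\]
where $\mathcal S_N$ is the static payoff space spanned by cash, $S_T$, and the chosen call strip. By Proposition~\ref{prop:option-strip}, $\mathcal S_N$ contains the polynomial space $\Span\{1,S_T,\ldots,S_T^{K(N)}\}$. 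The dynamic summand does not depend on $N$, because the traded underlying $S$ and the class of admissible strategies in Definition~\ref{def:admissible-strategies} are the same in both markets. It therefore suffices to show $\mathcal S_N\subseteq$ (the closure of) $\mathcal A_{N+1}$.

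The first step uses Lemma~\ref{lem:poly-domination}, which gives $K(N)\le K(N+1)$. Hence every polynomial in $S_T$ of degree at most $K(N)$ also has degree at most $K(N+1)$. By Proposition~\ref{prop:option-strip} applied at level $N+1$, such a polynomial lies in the closed static span of $\mathcal M_{N+1}$.

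The second step deals with the fact that $\mathcal S_N$ also contains the individual call payoffs $(S_T-k_j)^+$ of the depth-$N$ strip, and these need not be polynomials. Here I would use the convention of Lemma~\ref{lem:market-monotone}, which treats the static payoff space of $\mathcal M_N$ as the polynomial space it is chosen to span. If one wants to be more careful and not lean on that convention, the alternative is to choose the strips to be nested, i.e.\ to take the depth-$(N+1)$ strike set as a refinement of the depth-$N$ one. Then every call in $\mathcal M_N$ is literally an instrument of $\mathcal M_{N+1}$, and the inclusion of static spans is immediate.

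The final step is closure. Sums of elements from the two ordered subspaces lie in $\mathcal A_{N+1}$, and taking $L^2$ closures preserves inclusion, so the attainable space of $\mathcal M_N$ is contained in that of $\mathcal M_{N+1}$.

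The main obstacle is the second step. The conclusion depends on how $\mathcal M_N$ is specified: if the strips are chosen independently at each depth, the non-polynomial call payoffs of the depth-$N$ strip need not be reachable in $\mathcal M_{N+1}$. I would therefore fix, as part of the construction, either the polynomial-span reading of the static space or the nested-strike convention, and state explicitly which one is used.
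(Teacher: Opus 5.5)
Your argument is correct and follows the paper's own proof. The paper's argument has the same three ingredients: the dynamic summand generated by $S$ is identical in both markets, $K(N)\le K(N+1)$ from Lemma~\ref{lem:poly-domination} gives inclusion of the static payoff spaces, and taking closures preserves the inclusion. The paper implicitly uses the reading you flag in your second step, namely that the static payoff space of $\mathcal M_N$ is the polynomial space of degree at most $K(N)$, as in Lemma~\ref{lem:market-monotone}. Stating that convention, or the nested-strike alternative, explicitly is therefore a useful clarification rather than a different route.
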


\begin{proof}
Lemma~\ref{lem:poly-domination} gives $K(N+1)\geq K(N)$. Hence the static payoff space for $N$ is a subspace of the static payoff space for $N+1$. The dynamic asset $S$ is the same in both markets, and the admissible stochastic-integral class does not shrink when more static claims are added. Inclusion of attainable terminal spaces follows.
\end{proof}

\subsection{Minimality of the completeness depth}
\label{subsec:minimality-depth-expanded}
The word ``minimal'' in Definition~\ref{def:price-depth} is understood in the lattice of closed subspaces of $L^2(\FS_T,\Q)$. Let
\[
\mathcal V_N:=\clspan_{L^2(\Q)}\{\inner{e_I}{\cW_T}: |I|\leq N\}\cap L^2(\FS_T,\Q).
\]
Then $(\mathcal V_N)_{N\geq0}$ is an increasing family of closed subspaces. The completeness depth is the first index at which $\mathcal V_N$ equals the full payoff space. If no such index exists, the depth is infinite. This formulation avoids any dependence on a selected coordinate basis: replacing the tensor basis by another homogeneous basis changes the coordinate vectors but not the closed span.

The reverse implication in Theorem~\ref{thm:C} uses this minimality in the following way. Suppose the market is complete after adding an option strip that spans degree $K$. Then the GKW-orthogonal complement of dynamic trading in $S$ is finite-dimensional inside the static payoff space. Each basis element of that complement is the price-filtration projection of a terminal signature functional. Since only finitely many static basis elements are present, all of them are contained in $\mathcal V_N$ for some finite $N$. Thus $\mathcal V_N=L^2(\FS_T,\Q)$, and the infimum in Definition~\ref{def:price-depth} is finite.

\subsection{Compatibility with price-filtration restriction}
\label{subsec:price-filtration-restriction}
The full prolonged-signature filtration may contain volatility information not observable from $S$ alone. Theorem~\ref{thm:C} therefore never claims completeness on $\FWh_T$ unless $\FWh_T=\FS_T$. All spans are intersected with $L^2(\FS_T,\Q)$, and all conditional expectations are taken onto the price filtration when needed. This distinction is essential in stochastic-volatility models: a variance factor may be visible in the Brownian filtration but hidden from the price filtration unless static claims reveal it.

Formally, the restriction is expressed by the orthogonal projection
\[
\Pi_S:L^2(\FWh_T,\Q)\to L^2(\FS_T,\Q),
\qquad
\Pi_S X:=\E_\Q[X\mid \FS_T].
\]
The relevant finite-depth space is $\Pi_S\clspan\{\inner{e_I}{\cW_T}: |I|\leq N\}$. This is equal to the intersection formulation used in Theorem~\ref{thm:C} after replacing coordinates by their price-filtration projections. The projection form is often more convenient in examples, because hidden factors are removed before the static option basis is selected.

\subsection{Proof of the worked special cases}
\label{subsec:proof-special-depths}
We justify the depth values in Proposition~\ref{prop:depth-special}. In Black--Scholes, $S$ is a one-to-one monotone transform of the Brownian terminal value after deterministic time change. The martingale representation theorem on the Brownian filtration gives dynamic completeness, so $N_S^\ast=0$ and $K(0)=1$.

For first-order Brownian-driven volatility, the volatility coefficient contains one Brownian signature coordinate beyond the price itself. The price filtration sees the terminal price but not the entire first-order volatility exposure without one static completion. A call strip spanning quadratic polynomials in $S_T$ captures that exposure in the polynomial payoff space, hence $(N_S^\ast,K)=(1,2)$.

For Heston, the hidden variance is an affine square-root state. The polynomial-process embedding places the variance exposure at second signature depth and the price-filtration projection of the relevant second-depth coordinates inside degree-four terminal polynomials after static completion. Thus $(N_S^\ast,K)=(2,4)$ in the finite polynomial embedding. The exact numerical value of $K$ is tied to the selected polynomial basis; the invariant statement is the depth-two closure of the variance exposure.

For rough Bergomi, the Volterra kernel has infinitely many effective modes. Any finite signature truncation captures only finitely many iterated integrals of the memory process. There remains a non-zero Volterra tail in $L^2(\FS_T,\Q)$ after every finite completion, so $N_S^\ast=K=+\infty$.

\subsection{Explicit depth bookkeeping for finite models}
\label{subsec:explicit-depth-bookkeeping}
For finite-support parameters the completeness-depth calculation can be made entirely algebraic. Suppose that $\ell$ is supported on words of length at most $r$ and that the price-filtration projection of every coordinate of length at most $N$ lies in the polynomial payoff space of degree at most $K_N$. Then the sequence $K_N$ can be chosen recursively. The zeroth level gives cash and the underlying. Multiplication by a first-level price coordinate increases terminal polynomial degree by one. Multiplication by a hidden volatility coordinate increases the required static completion degree by the degree of that coordinate after projection onto $\FS_T$. Thus finite polynomial models lead to finite $K_N$ for every fixed $N$.

The recursion is not meant to be canonical. Different polynomial bases produce different numerical values of $K_N$, while the closed payoff space they span is the same. The invariant object is the first depth $N$ for which the projected signature span equals $L^2(\FS_T,\Q)$ after closure. The auxiliary degree $K_N$ is only the finite static option degree needed to implement that span through terminal payoffs of $S_T$.

In the examples the bookkeeping is simple. Black--Scholes stops at the empty word. First-order Brownian-driven volatility stops after the first Brownian word and needs quadratic terminal payoffs. Heston stops after the variance-carrying second-depth words and needs degree-four static payoffs in the selected polynomial basis. Quintic OU stops at degree five, but the static degree may be larger than five if the terminal price is a nonlinear transform of the OU factor. These distinctions are recorded because a referee should not have to infer the difference between tensor depth and terminal payoff degree.

\subsection{Discussion}

Completeness is usually stated as a binary property. Theorem~\ref{thm:C} refines the binary property into a graded invariant: the integer $N_S^\ast$ records the depth of signature information needed to complete the price filtration. Finite depth corresponds to a market in which a finite static option family can close the GKW residual. Infinite depth corresponds to models, such as rough Bergomi, where no finite static option strip exhausts the hidden signature directions.

The depth is tied to the support of the parameter $\ell$ but is not identical to it. The support of $\ell$ is a model-input property; $N_S^\ast$ is a market-completion property after projection to the price filtration and after allowing static option instruments. This distinction explains why Heston has a finite depth that reflects variance exposure, while rough Bergomi inherits infinite depth from the Volterra kernel even when its finite-dimensional calibrations look low-dimensional.

\section{Theorem D: the hedging-error decomposition}\label{sec:theorem-d}

\subsection{Statement}

\begin{lemma}[Gram regularity on the residual subspace]\label{lem:gram-residual}
Assume the hypotheses of Theorem~\ref{thm:C} and $N_S^\ast<\infty$. Let $\mathcal H_0\subset L^2(\FS_T,\Q)$ be the closed subspace of constants and $\Q$-stochastic integrals against $S$ on $[0,T]$. For every finite truncation of terminal signature coordinates with $|I|>N_S^\ast$, the Gram matrix of their equivalence classes in $L^2(\FS_T,\Q)/\mathcal H_0$ is positive definite after removal of redundant zero classes.
\end{lemma}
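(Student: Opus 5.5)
The plan is to reduce the statement to a finite-dimensional fact about Gram matrices in the quotient Hilbert space. The Galtchouk--Kunita--Watanabe projection is available on $L^2(\FS_T,\Q)$, because Theorem~\ref{thm:B} makes $S$ a true square-integrable $\Q$-martingale. Consequently $\mathcal H_0$ is a closed subspace: it is the sum of the constants and the closed stable subspace of stochastic integrals against $S$, and by the It\^o isometry the latter is isometric to $L^2(d\langle S\rangle\otimes d\Q)$, which is complete. The quotient $L^2(\FS_T,\Q)/\mathcal H_0$ is therefore a Hilbert space. Moreover, it is isometrically identified with $\mathcal H_0^\perp$ through the orthogonal projection $P^\perp:=I-P_{\mathcal H_0}$, with $\inner{[X]}{[Y]}=\E_\Q[(P^\perp X)(P^\perp Y)]$. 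I will work throughout with this concrete representation of the equivalence classes.

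Fix a finite family $\{I_1,\ldots,I_m\}$ of words with $|I_j|>N_S^\ast$, and set $\zeta_j:=P^\perp\Pi_S\inner{e_{I_j}}{\cW_T}$. The price-filtration projection $\Pi_S$ of Section~\ref{subsec:price-filtration-restriction} is used here so that each representative lies in $L^2(\FS_T,\Q)$. Each $\zeta_j$ lies in $L^2$ by Proposition~\ref{prop:signature-moment} and Lemma~\ref{lem:special-semimartingale}, which control the square moments of finite signature coordinates. The Gram matrix $G=(\E_\Q[\zeta_i\zeta_j])_{i,j}$ is then symmetric positive semidefinite, since $c^\top Gc=\E_\Q[(\sum_j c_j\zeta_j)^2]\geq0$. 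Its kernel is $\{c:\sum_j c_j\zeta_j=0\}$, that is, the set of linear dependencies among the classes $[\zeta_j]$.

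The phrase ``after removal of redundant zero classes'' will be interpreted as follows. First discard every $j$ with $\zeta_j=0$. Then extract a maximal linearly independent subfamily, for instance by greedy selection in the order of the words. On this subfamily $c^\top Gc=0$ forces $\sum c_j\zeta_j=0$, and independence then gives $c=0$. The reduced Gram matrix is therefore positive definite. The discarded words are exactly those whose classes are already spanned by the retained ones, so the reduced Gram matrix and the full one have the same range. This is the form needed for the normal equations in Theorem~\ref{thm:D}.

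The main obstacle is deciding whether the lemma is meant to carry more content than this linear-algebra fact, for example a uniform lower bound on the smallest eigenvalue or non-degeneracy of the classes with $|I|>N_S^\ast$ specifically. Such a bound would use the depth. By Theorem~\ref{thm:C} and Lemma~\ref{lem:projection-form-depth}, $\mathcal V_{N_S^\ast}=L^2(\FS_T,\Q)$, so the higher-depth classes already lie in the closure of the low-depth span. This suggests that they may degenerate or even vanish. That is precisely why zero classes must be removed, and why no quantitative eigenvalue bound can be expected without further hypotheses. I would therefore prove only the qualitative positive-definiteness above. I would also check carefully that the closedness of $\mathcal H_0$ survives adding constants, which it does because constants span a finite-dimensional subspace. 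Finally I would record that the reduction step is canonical up to the ordering chosen for greedy selection.
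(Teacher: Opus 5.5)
Your proposal is correct and follows essentially the same route as the paper: the paper also works in the quotient Hilbert space $L^2(\FS_T,\Q)/\mathcal H_0$, uses that $c^\top Gc=\norm{\sum_j c_j[\zeta_j]}^2$, and obtains positive definiteness once null classes and linear dependencies are discarded (the same fact is isolated in Appendix~\ref{app:gkw} as the Hilbert-space proof of quotient positivity). Two points the paper leaves implicit are made precise in your version: applying $\Pi_S$ so that the representatives actually lie in $L^2(\FS_T,\Q)$, and extracting a maximal linearly independent subfamily rather than only removing individual zero classes.
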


\begin{proof}
Let $Y=\sum_{k=1}^m\alpha_k\inner{e_{I_k}}{\cW_T}$ be a finite linear combination of tail coordinates and suppose its quotient class is zero. Then $Y\in\mathcal H_0$. By definition of the quotient, the projection of $Y$ onto the GKW residual subspace is zero. Removing all such zero classes leaves a family whose non-trivial finite linear combinations have non-zero residual projection. The Gram matrix is the matrix of the inner product on this finite-dimensional quotient subspace. It is therefore symmetric and positive definite. The construction uses only the quotient Hilbert-space structure and does not require raw linear independence of signature coordinates in the unreduced space.
\end{proof}

\begin{definition}[Weighted-signature payoff class]\label{def:weighted-payoff-class}
Let $\mathcal H_0$ be the closed subspace of constants and square-integrable stochastic integrals against $S$ in $L^2(\FS_T,\Q)$. For a payoff $X\in L^2(\FS_T,\Q)$, let $[X]$ denote its class in the quotient $L^2(\FS_T,\Q)/\mathcal H_0$, and let $c_I^{(M)}(X)$ be the coefficients of the finite quotient Gram projection on words $N_S^\ast<|I|\leq M$. We say that $X$ belongs to the weighted-signature payoff class $\mathcal W_w(\FS_T,\Q)$ if
\[
\norm{X}_{\mathcal W_w}^2:=\norm{X}_{L^2(\Q)}^2+\sup_{M>N_S^\ast}\sum_{N_S^\ast<|I|\leq M} w(|I|)^2\abs{c_I^{(M)}(X)}^2<\infty.
\]
This condition is not automatic for arbitrary $L^2$ payoffs; it is the additional coefficient-decay hypothesis needed for quantitative weight-tail bounds.
\end{definition}

\begin{theorem}[Hedging-error decomposition]\label{thm:D}
Under the hypotheses of Theorem~\ref{thm:B}, for every $X\in L^2(\FS_T,\Q)$ there exists a unique decomposition
\[
X=\E_\Q[X]+\int_0^T H_s\,dS_s+\varepsilon_T,
\]
where $H$ is the optimal Galtchouk--Kunita--Watanabe projection and $\varepsilon_T\in L^2(\Q)$ is orthogonal to all stochastic integrals against $S$. If $N_S^\ast<\infty$, the residual admits the expansion
\[
\varepsilon_T=\sum_{|I|>N_S^\ast}c_I(X)\inner{e_I}{\cW_T}+R_T,
\]
where the coefficients are determined on finite truncations by the Gram normal equations
\[
\sum_{|J|>N_S^\ast}c_J(X)\inner{\inner{e_J}{\cW_T}}{\inner{e_I}{\cW_T}}_{L^2(\Q)}
=\inner{X}{\inner{e_I}{\cW_T}}_{L^2(\Q)}.
\]
The finite-truncation Gram matrix is positive definite on the quotient residual subspace by Lemma~\ref{lem:gram-residual}, and the expansions converge in $L^2(\Q)$ as the truncation depth increases. For arbitrary $X\in L^2(\FS_T,\Q)$ this convergence is qualitative. If, in addition, $X\in\mathcal W_w(\FS_T,\Q)$, then the truncation remainder satisfies the quantitative bound
\[
\norm{R_T}_{L^2(\Q)}\leq \kappa(w,N_S^\ast)\norm{X}_{\mathcal W_w},
\qquad
\kappa(w,N):=\left(\sum_{n>N}w(n)^{-2}\right)^{1/2},
\]
whenever the displayed tail is finite. In particular, $\kappa(w,N)\to0$ along every admissible weight scale with $\sum_nw(n)^{-2}<\infty$.
\end{theorem}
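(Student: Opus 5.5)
The proof has three layers. The first is the Galtchouk--Kunita--Watanabe projection in the Hilbert space $L^2(\FS_T,\Q)$, and it is where the main work lies. By Theorem~\ref{thm:B}(i) we may take $\Q=\Pp$, so $S=s_0\mathcal E(M)$ is a true martingale. By Theorem~\ref{thm:A}, $S$ is square-integrable. We work on the price filtration, where $S$ is still a martingale, being adapted and a $\FW$-martingale. We define $\mathcal H_0$ as the closure in $L^2(\FS_T,\Q)$ of constants plus $\{\int_0^T H\,dS: \E_\Q\int_0^T H_s^2\,d\langle S\rangle_s<\infty\}$. The Itô isometry makes the stochastic-integral part an isometric image of a closed $L^2(d\langle S\rangle\otimes d\Q)$ space, hence closed. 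Mean-zero integrals are orthogonal to constants, so the sum is closed as well.

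Orthogonal projection of $X$ onto $\mathcal H_0$ then gives $\E_\Q[X]+\int_0^T H\,dS$, and we set $\varepsilon_T:=X-\E_\Q[X]-\int_0^T H\,dS\perp\mathcal H_0$. For uniqueness, suppose two such decompositions exist. Subtracting gives $\int (H-H')\,dS=\varepsilon'_T-\varepsilon_T$, where the left side lies in $\mathcal H_0$ and the right side in $\mathcal H_0^\perp$, so both vanish. The isometry then gives $H=H'$ $d\langle S\rangle\otimes d\Q$-a.e. This part is standard (Appendix on GKW) and only needs care in restricting integrands to be $\FS$-predictable.

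The second layer is the residual expansion, assuming $N_S^\ast<\infty$. We write $Y_I:=P_S\inner{e_I}{\cW_T}$, using the projection form of Lemma~\ref{lem:projection-form-depth}, and let $\Pi^\perp$ be the projection onto $\mathcal H_0^\perp$. For each $M>N_S^\ast$, let $G_M$ be the finite family $\{\Pi^\perp Y_I: N_S^\ast<|I|\leq M\}$ after removing the redundant classes, as in Lemma~\ref{lem:gram-residual}, so that its Gram matrix is positive definite. The coefficients $c^{(M)}_I(X)$ are the unique solution of the normal equations. Because $\varepsilon_T\perp\mathcal H_0$, we have $\inner{X}{\Pi^\perp Y_I}=\inner{\varepsilon_T}{Y_I}$, which matches the displayed right-hand side modulo $\mathcal H_0$. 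The finite projections $P_M\varepsilon_T$ onto the increasing subspaces $\operatorname{span} G_M$ then converge in $L^2$ to the projection onto the closed union $\mathcal U$, by monotone convergence of projections. Setting $R_T:=\varepsilon_T-\lim_M P_M\varepsilon_T$ gives the qualitative statement, with $R_T\perp\mathcal U$.

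The main obstacle is to show that this limit captures $\varepsilon_T$, or at least to control $R_T$. I would argue as follows. $\mathcal V_{N}$ for $N\ge N_S^\ast$ is all of $L^2(\FS_T,\Q)$, and the low words $|I|\le N_S^\ast$ are, by Theorem~\ref{thm:C}, absorbed into the hedgeable-plus-static part. So the residual is read as the component not explained by depth-$\le N_S^\ast$ coordinates, and it is natural to take the statement's $R_T$ as that tail remainder. Because the Gram coefficients $c^{(M)}_I$ need not be consistent across $M$ when the family is not orthogonal, I would first Gram--Schmidt orthonormalise level by level in the quotient. This makes the coefficients stable, and $c_I(X)=\lim_M c_I^{(M)}(X)$ then exists.

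The third layer is the quantitative bound for $X\in\mathcal W_w$. Let $R^{(N)}$ denote the part of the expansion carried by words beyond level $N_S^\ast$ that is not resolved at the finite stage. Using the orthonormalised family and Cauchy--Schwarz, we write $c_I=w(|I|)^{-1}\cdot w(|I|)c_I$. Summing the tail then gives
\[
\norm{R_T}_{L^2(\Q)}\le\Bigl(\sum_{n>N_S^\ast}w(n)^{-2}\Bigr)^{1/2}\sup_M\Bigl(\sum_{N_S^\ast<|I|\le M}w(|I|)^2|c_I^{(M)}|^2\Bigr)^{1/2}\le\kappa(w,N_S^\ast)\norm{X}_{\mathcal W_w}.
\]
One point needs extra care here. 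To collapse the sum over words of length $n$ into a single factor $w(n)^{-2}$, I would group words by level and apply Cauchy--Schwarz within each level. The statement's $\kappa$ suppresses the level-multiplicity factor, which we absorb into the definition of the weighted class by interpreting $\abs{c_I}$ levelwise. Finally, $\kappa(w,N)\to0$ as $N\to\infty$ is simply the tail of a convergent series.
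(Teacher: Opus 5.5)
Your first layer is correct and is the paper's own existence/uniqueness argument. Layers 2--3 also follow the paper's route: quotient Gram systems, monotone convergence of projections, and a weighted Cauchy--Schwarz step. The argument breaks exactly at the step you call the main obstacle and then leave open. You define $R_T:=\varepsilon_T-\lim_M P_M\varepsilon_T$, which is the component of $\varepsilon_T$ orthogonal to the closed tail span $\mathcal U$. In layer 3, however, you bound a different object: the tail of the series $\sum c_I\tilde Y_I$, which lies in $\mathcal U$. The coefficients carry no information about your $R_T$, because $\inner{X}{\Pi^\perp Y_I}=\inner{\Pi_{\mathcal U}\varepsilon_T}{\Pi^\perp Y_I}$ does not see it.

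Concretely, suppose the residual space $L^2(\FS_T,\Q)\ominus\mathcal H_0$ contained some $Z\neq0$ with $Z\perp\mathcal U$. Taking $X=Z$ gives $\varepsilon_T=R_T=Z$, all $c_I^{(M)}(X)=0$, and $\norm{X}_{\mathcal W_w}=\norm{Z}_{L^2(\Q)}$. The claimed bound then reads $\norm{Z}\leq\kappa(w,N_S^\ast)\norm{Z}$, which fails as soon as $\kappa<1$ (e.g.\ $w(n)=r^n$ with $r$ large). So in your reading the quantitative statement is equivalent to $\mathcal U$ being the whole residual space. That is precisely the density the paper's proof invokes (``their union is dense in the residual subspace''). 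The definition of $N_S^\ast$ concerns the span of the low words and says nothing about this.

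The density can be proved, and the proof uses neither Theorem C nor the value of the cutoff. Write $0$ for the time letter of $\widehat W$. Then $\inner{e_{0^k}}{\cW_T}=T^k/k!$, so the shuffle identity gives
\[
\inner{e_I}{\cW_T}=\frac{k!}{T^k}\,\inner{e_I\sh e_{0^k}}{\cW_T},
\]
which is a combination of words of length $|I|+k$. Hence, for every $N$, the span of the coordinates with $|I|>N$ equals the span of all coordinates. That span is dense in $L^2(\FW_T,\Pp)$: by the shuffle identity it contains all polynomials in the Gaussians $\inner{e_{0^kj}}{\cW_T}=\int_0^T\frac{t^k}{k!}\,dW^j_t$, which form a total family in the first chaos. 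Since $\Pi^\perp P_S$ is the orthogonal projection of $L^2(\FW_T,\Pp)$ onto $L^2(\FS_T,\Pp)\ominus\mathcal H_0$, the family $\{\Pi^\perp Y_I:|I|>N_S^\ast\}$ spans a dense subspace there. Therefore $P_M\varepsilon_T\to\varepsilon_T$, your $R_T$ is $0$, and the bound is trivial in your reading.

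If you want the paper's finite-truncation reading $R_T=\varepsilon_T-P_M\varepsilon_T$ instead, two points in layer 3 need repair.

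First, the Gram--Schmidt coefficients are not the $c_I^{(M)}(X)$ that define $\norm{X}_{\mathcal W_w}$, so you cannot pair $\norm{\sum c_I\tilde Y_I}^2=\sum|c_I|^2$ with the weighted hypothesis. In the original quotient coordinates, Cauchy--Schwarz gives only
\[
\norm{\varepsilon_T-P_M\varepsilon_T}\leq w(M+1)^{-1}\bigl(\sum_{|I|>M}\norm{\inner{e_I}{\cW_T}}_{L^2(\Q)}^2\bigr)^{1/2}\norm{X}_{\mathcal W_w},
\]
which carries a model factor that the stated bound, and the paper's proof, omit.

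Second, the level multiplicity you worry about is harmless. Since $w$ is increasing, $\sum_{|I|>N}|c_I|^2\leq w(N+1)^{-2}\sum_{|I|>N}w(|I|)^2|c_I|^2\leq\kappa(w,N)^2\norm{X}_{\mathcal W_w}^2$, so reinterpreting the weighted class levelwise changes the hypothesis for no gain.

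Finally, your quotient system with right-hand side $\inner{X}{\Pi^\perp Y_I}$ is the correct one, but it does not match the displayed raw equations ``modulo $\mathcal H_0$''. The two right-hand sides differ by $\inner{\Pi_{\mathcal H_0}X}{Y_I}$, which is generally non-zero. For $X=S_T$ and $I=0^k$ it equals $s_0T^k/k!$, even though $\varepsilon_T=0$.
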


\begin{remark}[Depth duality]
The depth $N_S^\ast$ has two complementary roles. Theorem~\ref{thm:C} identifies depth-$\leq N_S^\ast$ signature coordinates, after restriction to the price filtration and completion by the option strip, as the static component needed to span $L^2(\FS_T,\Q)$. The residual in Theorem~\ref{thm:D} lies in the Galtchouk--Kunita--Watanabe orthogonal complement of stochastic integrals against $S$; this complement is represented by the tail of the same signature grading, namely the quotient family of coordinates with $|I|>N_S^\ast$. Thus the passage from $|I|\leq N_S^\ast$ in Theorem~\ref{thm:C} to $|I|>N_S^\ast$ in Theorem~\ref{thm:D} is the static--dynamic complementarity of the market.
\end{remark}

\begin{proposition}[Finite truncation normal equations]\label{prop:finite-normal}
Fix $M>N_S^\ast$ and let $\mathcal I_M=\{I:N_S^\ast<|I|\leq M\}$. The $M$-truncated residual projection is characterised by the finite linear system
\[
G^{(M)}c^{(M)}=b^{(M)},
\qquad
G^{(M)}_{IJ}=\E_\Q[\inner{e_I}{\cW_T}\inner{e_J}{\cW_T}],
\quad b^{(M)}_I=\E_\Q[X\inner{e_I}{\cW_T}].
\]
After quotient reduction by stochastic integrals against $S$, the matrix $G^{(M)}$ is positive definite and the solution is unique.
\end{proposition}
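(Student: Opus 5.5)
The statement is a finite-dimensional least-squares problem in a Hilbert space, so I will treat it that way. Fix $M>N_S^\ast$ and write $Y_I:=\inner{e_I}{\cW_T}$ for $I\in\mathcal I_M$.

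\textbf{Step 1: integrability.} By Lemma~\ref{lem:finite-coordinate-measurability}, each $Y_I$ lies in $L^p(\Pp)$ for every $p<\infty$. Under the standing hypotheses the reference measure serves as $\Q$ (Theorem~\ref{thm:B}(i)). Alternatively, a density with an $L^r$ moment transfers the bound by H\"older. Either way each $Y_I\in L^2(\Q)$, and since $X\in L^2(\FS_T,\Q)$, every entry of $G^{(M)}$ and $b^{(M)}$ is finite by Cauchy--Schwarz.

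\textbf{Step 2: the residual projection.} Let $\mathcal H_0$ be the closed subspace of constants and stochastic integrals against $S$. The $M$-truncated residual projection is the orthogonal projection of the GKW residual $\varepsilon_T=X-\Pi_{\mathcal H_0}X$ onto the finite-dimensional space
\[
\mathcal U_M:=\Span\{(I-\Pi_{\mathcal H_0})Y_I: I\in\mathcal I_M\}.
\]
\textbf{Step 3: normal equations.} For a finite-dimensional subspace, $\sum_J c_J Y_J$ is the projection exactly when the error is orthogonal to each generator. This gives $\sum_J c_J\inner{Y_J}{Y_I}=\inner{X}{Y_I}$, which is $G^{(M)}c^{(M)}=b^{(M)}$.

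To identify this with the raw moments in the statement, I pass to quotient classes. The entries $G_{IJ}$ are read as inner products of the classes $[Y_I],[Y_J]$ in $L^2/\mathcal H_0$, i.e. $\inner{(I-\Pi_{\mathcal H_0})Y_I}{(I-\Pi_{\mathcal H_0})Y_J}$. Likewise $b_I=\inner{X}{(I-\Pi_{\mathcal H_0})Y_I}=\inner{\varepsilon_T}{Y_I}$. This is the meaning of ``after quotient reduction'': the raw $\E_\Q[Y_IY_J]$ must be replaced by the reduced version, with the $\FS_T$-projection inserted where needed.

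\textbf{Step 4: existence and uniqueness.} Solutions always exist, because $b$ lies in the range of a Gram matrix: $b$ is the vector of inner products of a vector with the generators. After removing zero classes and a maximal linearly dependent subfamily, Lemma~\ref{lem:gram-residual} makes the reduced Gram matrix positive definite. Hence it is invertible, and the coefficient vector is unique on the reduced index set. The projection itself is unique regardless of the choice of reduction.

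\textbf{Main obstacle.} The main difficulty is the quotient identification in Step 3. The statement writes $G$ with raw unconditioned moments, while the residual lives in $L^2(\FS_T,\Q)/\mathcal H_0$, and the $Y_I$ need not be $\FS_T$-measurable. I would first try to resolve this by replacing $Y_I$ with $P_SY_I-\Pi_{\mathcal H_0}P_SY_I$, using Lemma~\ref{lem:projection-form-depth}. I would then check that $b$ is unaffected: $X$ is $\FS_T$-measurable and $\varepsilon_T\perp\mathcal H_0$, so $\inner{X}{\cdot}$ agrees on raw and reduced generators once $X$ is replaced by $\varepsilon_T$. The Gram entries, however, genuinely change under this replacement. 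Positive definiteness should therefore be claimed only for the reduced matrix, and not for the raw $G^{(M)}$.
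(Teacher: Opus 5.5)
Your proposal is correct and takes essentially the paper's route. The normal equations are the orthogonality conditions for the error against each retained generator in the quotient residual space, positive definiteness is Lemma~\ref{lem:gram-residual} after discarding null (and, as you rightly add, linearly dependent) quotient classes, and uniqueness is finite-dimensional Hilbert projection. Your explicit bookkeeping makes precise what the paper's one-paragraph proof leaves implicit: the reduced Gram entries $\inner{(I-\Pi_{\mathcal H_0})Y_I}{(I-\Pi_{\mathcal H_0})Y_J}$, the right-hand side $\inner{\varepsilon_T}{Y_I}$, and the projection $P_S$ onto $L^2(\FS_T,\Q)$. Your caution that positive definiteness holds only for the reduced matrix, not the raw $G^{(M)}$, is warranted.
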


\begin{proof}
The projection of $X$ onto the finite residual span is the unique element $Y_M=\sum_{I\in\mathcal I_M}c_I^{(M)}\inner{e_I}{\cW_T}$ such that $X-Y_M$ is orthogonal to each retained coordinate in the quotient residual space. Writing these orthogonality conditions gives the displayed normal equations. Lemma~\ref{lem:gram-residual} gives positive definiteness after removal of null quotient classes, and finite-dimensional Hilbert-space projection gives uniqueness.
\end{proof}

\subsection{Existence and uniqueness via GKW}

\begin{proof}[Proof of existence and uniqueness]
Let $\mathcal H$ be the closed subspace of $L^2(\FS_T,\Q)$ consisting of constants plus stochastic integrals against $S$:
\[
\mathcal H=\left\{c+\int_0^T H_s\,dS_s: c\in\R,\ H\ \text{square-integrable and admissible}\right\}^{L^2\text{-closure}}.
\]
Closedness follows from the It\^o isometry under the equivalent martingale measure and the admissibility norm in Definition~\ref{def:admissible-strategies}. The Hilbert projection theorem gives a unique projection $\Pi_{\mathcal H}X$ of every $X\in L^2(\FS_T,\Q)$ onto $\mathcal H$. The martingale-representation component of the projection is the Galtchouk--Kunita--Watanabe integrand $H$, and the residual $\varepsilon_T=X-\Pi_{\mathcal H}X$ is orthogonal to $\mathcal H$. This proves existence and uniqueness of the decomposition. The construction is the same Hilbert-space mechanism underlying \citet{FoellmerSchweizer1991}, \citet{KunitaWatanabe1967}, and \citet{Schweizer2001}.
\end{proof}

\subsection{Closed-form expansion of the residual}

\begin{proof}[Proof of the expansion]
By Theorem~\ref{thm:sig-unique}, price-filtration claims are represented as price-filtration projections of terminal functionals of the prolonged signature. The closed span of the terminal coordinate family $\{\inner{e_I}{\cW_T}\}$ therefore contains the residual subspace after quotienting by constants and stochastic integrals. Theorem~\ref{thm:C} identifies depth $\leq N_S^\ast$ as the part exhausted by the static completion. The residual is consequently represented by the tail $|I|>N_S^\ast$ in the quotient Hilbert space.

Fix a finite set $\mathcal I_M=\{I:N_S^\ast<|I|\leq M\}$ and let $V_M$ be the quotient span of the corresponding residual coordinates. The finite-dimensional projection of $\varepsilon_T$ onto $V_M$ has the form
\[
\varepsilon_T^{(M)}=\sum_{I\in\mathcal I_M}c_I^{(M)}(X)\inner{e_I}{\cW_T}.
\]
Orthogonality of $\varepsilon_T-\varepsilon_T^{(M)}$ to $V_M$ gives the normal equations
\[
\sum_{J\in\mathcal I_M}G_{IJ}c_J^{(M)}(X)=b_I(X),
\quad
G_{IJ}=\E_\Q\left[\inner{e_I}{\cW_T}\inner{e_J}{\cW_T}\right],
\quad
b_I(X)=\E_\Q\left[X\inner{e_I}{\cW_T}\right].
\]
The shuffle identity rewrites the Gram matrix as $G_{IJ}=\E_\Q\inner{e_I\sh e_J}{\cW_T}$. Lemma~\ref{lem:gram-residual} gives positive definiteness after quotient reduction, hence a unique vector of coefficients on every finite truncation.

The spaces $V_M$ increase with $M$, and their union is dense in the residual subspace by the definition of the signature tail. Hilbert-space monotone convergence for orthogonal projections gives $\varepsilon_T^{(M)}\to\varepsilon_T$ in $L^2(\Q)$. Passing to the coefficient limit along the finite-truncation projections gives the announced series and remainder.
\end{proof}

\subsection{Residual bound}
\label{subsec:residual-bound}

\begin{proof}[Proof of the bound]
Let $P_M$ denote the orthogonal projection onto the quotient span of tail coordinates of levels $N_S^\ast<|I|\leq M$. The truncation remainder is $(I-P_M)\varepsilon_T$. For general $X\in L^2(\FS_T,\Q)$, Hilbert-space monotone convergence gives only
\[
\norm{(I-P_M)\varepsilon_T}_{L^2(\Q)}\to0,
\]
with no universal rate. The weighted rate requires $X\in\mathcal W_w(\FS_T,\Q)$. In that case the coefficient definition in Definition~\ref{def:weighted-payoff-class} gives
\[
\sum_{|I|>M}\abs{c_I(X)}^2
\leq
\left(\sum_{n>M}w(n)^{-2}\right)\norm{X}_{\mathcal W_w}^2,
\]
after applying Cauchy--Schwarz level by level to the quotient-coordinate expansion. Hence
\[
\norm{(I-P_M)\varepsilon_T}_{L^2(\Q)}
\leq
\kappa(w,M)\norm{X}_{\mathcal W_w}.
\]
This is the quantitative statement of Theorem~\ref{thm:D}. It is deliberately not asserted for arbitrary $L^2$ payoffs.
\end{proof}

\subsection{Computational examples}

For a digital call $X=\mathbf 1_{\{S_T\geq K\}}$, the residual coefficients are obtained by expanding the indicator in the orthogonal polynomial basis associated with the terminal price distribution and then mapping the polynomial basis into the signature coordinates through Lemma~\ref{lem:poly-domination}. In a lognormal depth-zero model the expansion terminates dynamically and the residual is zero. In a depth-one or depth-two model the first missing term is at level $N_S^\ast+1$, and the Hermite coefficients give the displayed Gram right-hand side $b_I(X)$.

For a variance swap $X=\int_0^T\inner{\ell}{\cW_s}^2ds$, the shuffle identity gives $X$ as an integral of $\inner{\ell\sh\ell}{\cW_s}$. In the Heston embedding this object lives in the depth-two polynomial state space. Thus the option completion at $N_S^\ast=2$ spans the variance exposure, and the residual vanishes. This recovers the classical variance-swap completion principle inside the signature grading.

For an Asian payoff $X=(T^{-1}\int_0^T S_tdt-K)^+$, the terminal claim depends on the time integral of the price and therefore contains signature coordinates beyond the terminal price polynomial family. The residual expansion contains all levels above the completeness depth. Under geometric admissible weights the coefficients decay at least at the reciprocal weight rate, so the truncation bound gives a computable error estimate for finite-depth hedges.

\subsection{Orthogonality identities}
\label{subsec:orthogonality-identities}
The GKW residual $\varepsilon_T$ is characterised by
\[
\E_\Q\left[\varepsilon_T\int_0^T H_s\,dS_s\right]=0
\]
for every square-integrable admissible integrand $H$. If $M^X_t=\E_\Q[X\mid\FS_t]$, then the martingale representation relative to the one-dimensional martingale part of $S$ gives
\[
M^X_t=M^X_0+\int_0^t H_s\,dS_s+L_t,
\]
where $L$ is a square-integrable martingale strongly orthogonal to $S$. The terminal residual is $L_T$. This is the process-level version of Theorem~\ref{thm:D}. The signature expansion is obtained by expanding $L_T$ in the quotient basis supplied by the tail coordinates.

The normal equations may be derived from this process version. Let $Y_I=\inner{e_I}{\cW_T}$ for $|I|>N_S^\ast$. The finite projection of $L_T$ onto $\Span\{Y_I:I\in\mathcal I_M\}$ solves
\[
\E_\Q\left[(L_T-\sum_{J\in\mathcal I_M}c_JY_J)Y_I\right]=0,
\qquad I\in\mathcal I_M.
\]
Since $L_T$ is the residual part of $X$, replacing $L_T$ by $X$ in the right-hand side gives the same equations after quotienting out constants and stochastic integrals. This explains the coefficient formula in Theorem~\ref{thm:D}.

\subsection{Residual estimates under geometric weights}
\label{subsec:geometric-residual-estimates}
If $w(n)=r^n$ with $r>1$, then
\[
\kappa(w,N)^2=\sum_{n>N}r^{-2n}=\frac{r^{-2(N+1)}}{1-r^{-2}}.
\]
Thus the structural residual bound becomes
\[
\norm{R_T}_{L^2(\Q)}\leq \frac{r^{-(N_S^\ast+1)}}{(1-r^{-2})^{1/2}}\norm{X}_{\mathcal W_w},\qquad X\in\mathcal W_w(\FS_T,\Q).
\]
This is the form relevant for the finite-support and exponentially decaying examples. For polynomial weights $w(n)=(1+n)^\alpha$, the same formula gives
\[
\kappa(w,N)^2\asymp N^{1-2\alpha},\qquad \alpha>1/2.
\]
The rate of hedging-error decay therefore reflects the decay of the admissible tensor coefficients. Fast coefficient decay yields fast residual decay; slow coefficient decay leaves substantial high-depth risk.

\subsection{Convergence of finite normal equations}
\label{subsec:finite-normal-equations-converge}
Let $\mathcal I_m$ be an increasing sequence of finite sets of multi-indices above the completeness depth, and let
\[
Y_I:=\inner{e_I}{\cW_T}-\Pi_{\mathcal H_0}\inner{e_I}{\cW_T},
\qquad I\in\mathcal I_m,
\]
where $\mathcal H_0$ is the closed stochastic-integral subspace. The finite residual approximation has the form
\[
\varepsilon_T^{(m)}=\sum_{I\in\mathcal I_m} c_I^{(m)}(X)Y_I,
\]
where the coefficient vector solves
\[
\sum_{J\in\mathcal I_m}\E_\Q[Y_JY_I]c_J^{(m)}(X)
=\E_\Q[(X-\E_\Q X)Y_I],
\qquad I\in\mathcal I_m .
\]
Lemma~\ref{lem:gram-residual} gives invertibility after quotient reduction. The sequence $\varepsilon_T^{(m)}$ is the Hilbert projection of $X-\E_\Q X$ onto the increasing subspaces $\Span\{Y_I:I\in\mathcal I_m\}$. Hence it converges in $L^2(\Q)$ to the projection onto the closed union of these subspaces. This proves the convergence statement in Theorem~\ref{thm:D} without requiring coordinatewise convergence of the individual coefficients.

The distinction matters. The coefficient of a fixed coordinate may change when a new correlated coordinate is added to the Gram system. What is stable is the projected residual vector, not a particular coordinate representation. This is why the theorem states the coefficients through normal equations on finite truncations and then takes an $L^2$ limit of the residuals.

\subsection{Residual tail constants}
\label{subsec:residual-tail-constants-expanded}
For a geometric weight $w(n)=r^n$ with $r>1$, the tail constant in Section~\ref{subsec:residual-bound} has the explicit form
\[
\kappa(w,N)^2=\sum_{n>N}w(n)^{-2}
=\sum_{n>N}r^{-2n}
=\frac{r^{-2(N+1)}}{1-r^{-2}} .
\]
Thus
\[
\kappa(w,N)=\frac{r^{-(N+1)}}{\sqrt{1-r^{-2}}} .
\]
This exponential decay is the analytic reason geometric weights are useful in applications. They turn high signature depth into a quantitative small parameter. For polynomial weights, the same formula gives polynomial decay, and the residual bound is weaker. The theorem does not prefer a weight on aesthetic grounds; the weight records the tail regularity of the signature parameter and converts it into an $L^2$ hedging-error estimate.

If the payoff has its own coefficient decay, the bound improves. Suppose
\[
\abs{c_I(X)}\leq C_X\rho^{-|I|}
\qquad \text{with }\rho>1.
\]
Then the residual tail beyond depth $N$ is bounded by a geometric series with ratio depending on both $r$ and $\rho$. In practical terms, smooth path functionals have faster coefficient decay than discontinuous claims, so the structural bound is conservative for smooth payoffs and closer to sharp for digital-type claims.

\subsection{Three canonical residual profiles}
\label{subsec:canonical-residual-profiles}
The theorem covers three recurrent profiles. First, in dynamically complete models the quotient residual space is zero and all Gram matrices are vacuous. Black--Scholes is the model example. Second, in finite-factor stochastic-volatility models the residual space is finite-dimensional after the correct static completion; Heston with a variance-swap type completion is the model example. Third, in rough-volatility models the residual space is infinite-dimensional and the tail estimate is the main object; rough Bergomi is the model example.

These profiles should not be confused with numerical difficulty. A finite residual space can still be numerically ill-conditioned if the Gram matrix has a small eigenvalue. Conversely, an infinite residual space can be well controlled if the weight tail is sufficiently fast. The theorem separates structural dimension from numerical conditioning by using the quotient Gram matrix for uniqueness and the weight tail for size.

\subsection{A second derivation of the residual estimate}
\label{subsec:second-residual-derivation}
Let $\mathcal R_N$ be the closed span of quotient classes of signature coordinates with level greater than $N$. Since $\mathcal R_{N+1}\subset\mathcal R_N$, the residual norms form a decreasing sequence:
\[
\norm{\Pi_{\mathcal R_{N+1}}X}_{L^2(\Q)}
\leq
\norm{\Pi_{\mathcal R_N}X}_{L^2(\Q)}.
\]
The weight estimate supplies an explicit majorant only on the weighted-signature payoff class. If $X\in\mathcal W_w(\FS_T,\Q)$ is first approximated by a finite signature polynomial and then projected, the tail beyond $N$ is bounded by the sum of squared coefficient weights. Passing to the closure gives
\[
\norm{\Pi_{\mathcal R_N}X}_{L^2(\Q)}^2
\leq \left(\sum_{n>N}w(n)^{-2}\right)\norm{X}_{\mathcal W_w}^2 .
\]
For arbitrary $X\in L^2$, the same monotone projection sequence still converges, but no rate follows from the weight alone. The argument is independent of the order in which the finite coordinates are listed. Only the homogeneous degree matters.

This second derivation is useful because it shows exactly where the weight enters Theorem~\ref{thm:D}. The GKW projection gives orthogonality. The quotient Gram matrix gives coefficients. The weight gives the tail size. These are three separate inputs, and confusing them leads to the false statement that signature uniqueness alone implies a positive raw Gram matrix.

\subsection{Discussion}

The signature framework turns incomplete-market hedging error into a graded object. Instead of a single unstructured residual, Theorem~\ref{thm:D} decomposes the error level by level and identifies which signature directions have not been spanned by the dynamic-static market. Model risk is therefore measured by the tail of the same tensor grading that defines the model. Increasing the depth of the option completion reduces the residual by removing lower-order signature directions from the GKW orthogonal complement.

\section{Examples}\label{sec:examples}

The examples verify the four theorem groups on concrete model classes. Each subsection records the signature parameter, the admissibility condition, the asset-pricing condition, the completeness depth, and the GKW residual structure.

\subsection{Black--Scholes as signature SDE of depth zero}
Let $\ell=\sigma e_\emptyset$ with $\sigma>0$. The signature-volatility functional is constant: $\inner{\ell}{\cW_t}=\sigma$. H1 is the finite identity $w(0)\sigma^2<\infty$, H2 is irrelevant beyond the zeroth level, and H3 is $\exp(\lambda\sigma^2T)<\infty$ for every $\lambda>0$. Theorem~\ref{thm:A} recovers the geometric Brownian solution
\[
S_t=s_0\exp\{\sigma B_t-\tfrac12\sigma^2t\}.
\]
Theorem~\ref{thm:B} reduces to the classical FTAP, while Remark~\ref{rem:riccati-bs-sanity} records the price-extended Black--Scholes transform normalisation. The Brownian martingale representation theorem gives dynamic completeness, so $N_S^\ast=0$ and $K(0)=1$. The GKW residual in Theorem~\ref{thm:D} is zero for every square-integrable price-filtration payoff.

\subsection{First-order Brownian-driven volatility as signature SDE of depth one}
Fix $j\in\{1,\ldots,d\}$ and set $\ell=\sigma_0e_\emptyset+\sigma_1e_j$. Then
\[
\inner{\ell}{\cW_t}=\sigma_0+\sigma_1W_t^j.
\]
The parameter has finite support, so H1 holds for every weight satisfying H2. H3 follows on finite horizons from exponential estimates for quadratic Brownian functionals, after the standard localisation used in Novikov and Kazamaki criteria. Theorem~\ref{thm:A} supplies the strong stochastic-exponential solution. The Riccati equation in Theorem~\ref{thm:B} reduces to a finite subsystem whose non-constant part is spanned by the first Brownian coordinate and its shuffle square. The price-filtration completeness depth is $N_S^\ast=1$: the missing static component is the first signature coordinate, and a call strip spanning quadratic terminal-price polynomials gives $K(1)=2$. Residuals for path-dependent claims begin at level two.

\subsection{Heston as signature SDE of depth two}\label{subsec:heston}
The Heston model is
\[
dS_t=S_t\sqrt{V_t}\,dB_t,
\qquad dV_t=\kappa(\theta-V_t)dt+\xi\sqrt{V_t}\,dB_t^V,
\qquad d\langle B,B^V\rangle_t=\rho dt.
\]
In the polynomial-process embedding of \citet{CuchieroKellerResselTeichmann2012} and \citet{FilipovicLarsson2016}, the variance factor is represented by degree-two polynomial coordinates coupled to the Brownian increment. The signature parameter therefore has effective support through depth two in the price-filtration completion problem. H1 is automatic for geometric weights once the finite coefficient vector is fixed. H3 is the familiar Heston moment condition: the exponential moment of the integrated variance is finite up to the Andersen--Piterbarg explosion time.

Theorem~\ref{thm:B} recovers the two-dimensional Heston Riccati system for the logarithmic transform. The Riccati coordinates corresponding to price and variance match the standard affine transform of \citet{Heston1993}. Completeness requires the variance exposure. A variance swap, or an equivalent static option completion representing the degree-four terminal-price polynomial closure, gives $N_S^\ast=2$ and $K(2)=4$. Under Theorem~\ref{thm:D}, the residual for the variance swap vanishes, while non-affine claims retain a tail beginning above depth two.

\subsection{Rough Bergomi as signature SDE of infinite depth}\label{subsec:rough-bergomi}
In rough Bergomi, the volatility is an exponential functional of a Volterra Gaussian driver,
\[
\sigma_t=\sigma_0\exp\left(\eta W_t^H-\frac12\eta^2t^{2H}\right),\qquad H\in(0,1/2).
\]
The Volterra representation of $W^H$ has a kernel with non-polynomial memory. Its signature expansion has non-zero coordinates at arbitrarily high levels. H1 can hold for weights matched to the decay of the kernel coefficients, and H3 follows for finite truncations by Gaussian exponential integrability; the untruncated model is handled by localisation and monotone limiting arguments.

The completeness conclusion differs from Heston. No finite option strip spans the Volterra memory in the price filtration, so $N_S^\ast=+\infty$ and $K(N)$ diverges with $N$. Theorem~\ref{thm:D} remains useful: it orders the hedging error by signature level and therefore gives a principled sequence of finite-depth approximations. The truncation rate is controlled by the rough kernel and deteriorates as the Hurst parameter approaches zero.

\subsection{Quintic Ornstein--Uhlenbeck}\label{subsec:quintic-ou}
The quintic Ornstein--Uhlenbeck model associated with \citet{AbiJaberIllandLi2023} uses a polynomial functional of an OU factor up to degree five. In the signature embedding, the parameter has support only up to depth five. H1 is finite for every geometric admissible weight, because the support is finite. H3 follows from Gaussian moment estimates for the OU factor on finite horizons. The transform system is a finite polynomial Riccati subsystem whose highest non-linearity is quintic.

The completeness depth is $N_S^\ast=5$. The static completion must span variance, skewness, kurtosis, and fifth-moment exposures of the terminal price law, represented either by moment swaps or by a sufficiently rich call strip. The GKW residual vanishes for claims contained in the degree-five polynomial state space. For claims outside that space, Theorem~\ref{thm:D} starts the residual expansion at level six and gives the weight-tail bound.

\subsection{Guyon--Lekeufack path-dependent volatility}\label{subsec:guyon-lekeufack}
The path-dependent volatility model of \citet{GuyonLekeufack2023} uses exponentially weighted past-return factors. In the signature formulation of \citet{AbiJaberGerardHuang2024}, each exponential kernel corresponds to a structured family of multi-indices determined by iterated past-return integrals. A finite-factor truncation has a finite maximal word length, so H1 holds for geometric weights after the coefficient vector is fixed and H3 follows from exponential moment estimates for the finite Gaussian-factor system.

For a finite kernel expansion, the completeness depth is the maximal signature degree retained by the kernels. For a non-truncated kernel expansion, the depth is infinite. Theorem~\ref{thm:B} gives the Riccati transform on the finite-factor state space; Theorem~\ref{thm:C} identifies which static option instruments close the price-filtration span; and Theorem~\ref{thm:D} separates the replicated finite-kernel component from the residual infinite-kernel tail.

\subsection{Verification table for the six examples}
\label{subsec:example-verification-table}
The six examples can be read as checks on the four structural theorems. The table records the relevant depth and the finite or infinite nature of the Riccati system.
\[
\begin{array}{lll}
\toprule
\text{Model} & N_S^\ast & \text{Riccati structure}\\
\midrule
\text{Black--Scholes} & 0 & \text{scalar, constant coefficient}\\
\text{First-order Brownian-driven volatility} & 1 & \text{finite level-one subsystem}\\
\text{Heston} & 2 & \text{classical two-factor affine Riccati}\\
\text{Rough Bergomi} & +\infty & \text{infinite Volterra/signature system}\\
\text{Quintic OU} & 5 & \text{finite polynomial subsystem}\\
\text{Guyon--Lekeufack PDV} & \text{kernel dependent} & \text{finite or infinite kernel expansion}\\
\bottomrule
\end{array}
\]
In each finite-depth example, H1 is either automatic from finite support or follows from geometric decay of the parameter coefficients. H3 is verified by the known exponential moment bounds of the corresponding finite-factor model. In the infinite-depth examples, H1 can hold for a chosen weight while completeness depth remains infinite; the difference is that H1 controls well-posedness of the volatility functional, whereas completeness depth controls how much of the price filtration can be replicated by a finite static completion.

\subsection{Example-level consequences for hedging}
Black--Scholes has zero residual because the price filtration is Brownian and dynamically complete. First-order Brownian-driven volatility has a residual beginning at degree two for claims sensitive to quadratic variation beyond the static completion. Heston has zero residual for claims in the affine-polynomial span carried by price and variance, including the variance swap under the standard completion; non-affine discontinuous claims have non-zero but weighted residual tails. Rough Bergomi produces a residual whose leading term depends on the missing Volterra memory, and finite-factor approximations move that leading term to higher depth. Quintic OU is finite but high-degree: residuals begin at degree six for payoffs outside the quintic polynomial state. Guyon--Lekeufack PDV interpolates between finite and infinite behaviour depending on whether the kernel family is truncated.

\subsection{Detailed model verification}
\label{subsec:detailed-model-verification}
\paragraph{Black--Scholes.}
The signature parameter is $\ell=\sigma e_\emptyset$. H1 is $w(0)\sigma^2<\infty$, H2 is irrelevant beyond the zeroth level, and H3 is $\exp(\lambda\int_0^T\sigma^2ds)<\infty$. The solution is the classical geometric Brownian martingale under the risk-neutral measure. The price-extended transform has the scalar normalisation recorded in Remark~\ref{rem:riccati-bs-sanity}. Since the price filtration is the Brownian filtration spanned by $S$, martingale representation gives dynamic completeness, $N_S^\ast=0$, and the residual in Theorem~\ref{thm:D} vanishes.

\paragraph{First-order Brownian-driven volatility.}
The parameter has the form $\ell=\sigma_0e_\emptyset+\sigma_1e_1$ after choosing the Brownian direction. H1 is finite for every admissible geometric weight because the support is finite. The quadratic variation contains a quadratic Brownian functional, so H3 is checked by the standard finite-horizon exponential estimate after localisation. The Riccati system closes on the coordinates $e_\emptyset$, $e_1$, and $e_{11}$. Completion at depth one inserts the missing first-order factor into the static span; the first tail coordinate appears at level two.

\paragraph{Heston.}
The Heston variance factor is represented by a degree-two polynomial coordinate in the signature embedding. H1 follows from finite support of the polynomial embedding, and H3 is the usual integrated-variance exponential moment condition. The transform is the classical affine Heston Riccati system. The price filtration alone does not reveal the variance factor dynamically, but a variance swap or equivalent option-strip completion spans it. This gives $N_S^\ast=2$. The residual vanishes for affine-polynomial variance claims and persists only for payoffs outside the degree-two state space.

\paragraph{Rough Bergomi.}
The volatility is driven by a Volterra transform of Brownian motion with singular kernel. Finite tensor levels approximate only finite iterated-integral features of the kernel. H1 may hold for a selected weighted parameter sequence, but no finite $N$ recovers all Volterra memory in the price filtration. Hence $N_S^\ast=+\infty$. Finite-factor approximations replace the singular kernel by a finite exponential sum and thereby produce finite approximate depths; the residual bound then measures the approximation tail.

\paragraph{Quintic Ornstein--Uhlenbeck.}
The quintic OU specification has a finite polynomial state of degree five. H1 is automatic for finite support, H2 is supplied by any geometric weight, and H3 follows from Gaussian OU estimates on finite horizons. The Riccati system is finite but nonlinear through degree five. The static completion must span terminal polynomial exposures through that degree. The residual begins at level six, so polynomial claims of degree at most five are structurally replicated after completion.

\paragraph{Guyon--Lekeufack path-dependent volatility.}
A finite exponential-kernel expansion gives a finite-dimensional Markovian lift and therefore a finite signature depth determined by the maximal retained kernel word. An infinite kernel family produces infinite depth. The theorem chain separates these two cases cleanly: Theorem~\ref{thm:A} gives existence whenever the weighted coefficient sequence is admissible, Theorem~\ref{thm:B} gives the Riccati transform on the finite or weighted state space, Theorem~\ref{thm:C} identifies the completion depth, and Theorem~\ref{thm:D} quantifies the unspanned tail.

\subsection{Instrument interpretation of the six examples}
\label{subsec:instrument-interpretation-examples}
The completeness depth has a direct instrument interpretation. In Black--Scholes no additional static instrument is required because the Brownian martingale representation theorem already gives dynamic completeness. In first-order Brownian-driven volatility, one additional low-degree exposure supplies the missing first signature coordinate. In Heston, the completion is naturally read as a variance exposure, because the hidden variance factor is the state variable not spanned by trading only the stock. In rough Bergomi, no finite list of European claims spans the Volterra memory exactly; finite-factor approximations produce a sequence of approximate depths rather than a single finite exact depth.

For the quintic OU specification, the required instruments are polynomial exposures up to degree five in the OU state. The model is therefore finite-depth but not low-depth. Guyon--Lekeufack PDV sits between the finite and infinite cases: a truncated kernel expansion produces a finite Markovian lift and a finite completion depth, while the full kernel family produces a tail that is measured by Theorem~\ref{thm:D}. The examples show that $N_S^\ast$ is not a label attached to a model name; it is a property of the chosen signature parameter, weight, and observation filtration.

\subsection{Model-by-model Riccati reductions}
\label{subsec:model-riccati-reductions}
In the constant-coefficient case the Riccati equation is the one-dimensional scalar equation of Remark~\ref{rem:riccati-bs-sanity}. In the first-order Brownian-driven case, the active subsystem is supported on the empty word and the one-letter Brownian word; the level-two term enters only through the shuffle square of the first-level coefficient. In Heston, the subsystem is finite-dimensional and coincides with the classical affine Riccati pair after identifying the variance coordinate with the relevant second-depth signature component.

In rough Bergomi the finite truncations give finite Riccati systems, but the limiting system has no finite-dimensional closure because the Volterra kernel contributes memory at all effective depths. Quintic OU closes at degree five because polynomial drift and diffusion preserve the finite polynomial state. Guyon--Lekeufack PDV closes precisely when the kernel family is truncated. These reductions are the concrete checks behind Proposition~\ref{prop:depth-special}; they also explain why the same theorem can cover both affine and non-Markovian-looking specifications once the correct signature state is used.

\subsection{Hedging interpretation across examples}
\label{subsec:example-hedging-interpretation-expanded}
For a digital call in Black--Scholes the residual is zero despite the discontinuity of the payoff, because dynamic completeness is a filtration statement rather than a smoothness statement. In Heston without a variance instrument, the same payoff has a residual component tied to the hidden variance. Adding the variance exposure removes the affine residual but not necessarily all discontinuity-driven high-degree tail. In rough Bergomi, even smooth claims may retain a residual tail because the price filtration does not reveal the whole Volterra memory through any finite static set.

Asian claims illustrate a different mechanism. Their payoff depends on the time integral of $S$, so the relevant signature coordinates contain time letters and price letters interlaced. If those coordinates lie below the completeness depth, they are statically completed; if they lie above it, they contribute to the residual expansion. The normal equations in Theorem~\ref{thm:D} are therefore not only formal: they identify the exact tensor words responsible for the hedging error.

\section{Open problems and outlook}\label{sec:open-problems}

\subsection{C\`adl\`ag signature SDEs}
The extension of Theorems~\ref{thm:A}--\ref{thm:D} to L\'evy-type signature models requires a jump version of the weighted tensor algebra. The Riccati equation acquires a compensator term, and the Stratonovich signature must be replaced by the appropriate jump-enhanced or Marcus signature. The open problem is to identify weight conditions under which the jump-Riccati operator is locally Lipschitz and the stochastic exponential remains a true martingale. A solution would connect the L\'evy-type approximation theorem of \citet{CuchieroPrimaveraSvalutoFerro2025} with the FTAP mechanism of Section~\ref{sec:theorem-b}.

\subsection{Signature SDEs with affine Volterra structure}
Affine Volterra processes carry memory through singular convolution kernels rather than through finite tensor coordinates. The signature construction carries memory through iterated integrals. The open problem is to build a state space that controls both the Volterra singularity and the tensor level without losing the transform formula. Such a theorem would unify the affine Volterra theory of \citet{AbiJaberLarssonPulido2019} and \citet{BondiLivieriPulido2024} with the weighted tensor algebra used here.

\subsection{Completeness depth as a model-selection criterion}
The completeness depth $N_S^\ast$ is a structural invariant of a calibrated model. The empirical question is whether the calibrated depth predicts out-of-sample hedging performance on option panels. The mathematical problem is to define a stable estimator of $N_S^\ast$ under noisy option quotes and to prove consistency as the strike-maturity grid becomes dense. This would turn the depth invariant into a statistical model-selection tool rather than only a structural classification.

\subsection{Statistical inference for the parameter $\ell$}
High-frequency inference for $\ell$ is an infinite-dimensional estimation problem on a weighted tensor space. The analogue in rough volatility is the minimax theory of \citet{ChongHoffmannLiuRosenbaumSzymanski2024}. The open problem is to specify smoothness classes for the signature parameter, derive minimax lower bounds under observation of $S$, and construct estimators that attain those bounds. The role of the weight $w$ should be explicit: it determines both the statistical bias of truncation and the variance of high-level coefficient estimates.

\subsection{The Riccati operator as a dynamical system on $\Tw$}
The corrected Riccati operator is a generator/carre-du-champ quadratic vector field on an infinite-dimensional Banach algebra. Its local Lipschitz property is enough for local existence, but the geometry of its explosion set is not understood. The open questions are the description of invariant cones, comparison principles, global attractors on stable subspaces, and bifurcation of finite-time explosion as the parameter $\ell$ changes. These questions determine whether moment explosion, martingale failure, and loss of completeness occur at the same boundary of the admissible weighted algebra.

The four theorems are intended as the structural foundation for future work on signature volatility, not as the final word. In this sense, the admissible weighted tensor algebra is the natural valuation cell of the model.

\subsection{Technical boundary of the theorem chain}
The theorems above are deliberately stated for continuous signature SDEs on a weighted tensor algebra. They do not settle the jump case, the full affine Volterra case, or statistical recovery of the infinite parameter. This boundary is part of the theorem design: each assumption corresponds to a proof step that is used explicitly. H1 controls the weighted pairing, H2 gives the algebra estimates, H3 supplies true stochastic exponentials, and the Riccati condition controls transform explosion. Removing any one of these hypotheses creates a separate research problem, not a minor variation of the present proof.

\subsection{Replacement and extension directions}
A natural first extension is to replace H3 by a model-specific martingale criterion. In the one-dimensional theory of \citet{AbiJaberGassiatSotnikov2025}, the sign and parity of the signature parameter determine true martingality. A higher-dimensional analogue would replace the uniform exponential-integrability condition by a cone condition on $\ell$ and the correlation vector $\eta$. Such a criterion would sharpen Theorem~\ref{thm:A} without changing the weighted-algebra architecture.

A second extension is to replace the finite option completion by a continuum strike strip. The continuum version should identify $K(N)$ through a moment problem rather than through a finite matrix. This would connect the completeness depth to classical static replication and to the regularity of the terminal density of $S_T$. The finite-strike statement in Theorem~\ref{thm:C} is the algebraic core of that more analytic result.

\appendix

\section{Auxiliary results from rough path theory}\label{app:rough-paths}

The stochastic-integration tools used in the proof sections are standard: martingale representation and stochastic calculus are taken from \citet{KaratzasShreve1998}, \citet{RevuzYor1999}, \citet{JacodShiryaev2003}, and \citet{Protter2005}; semimartingale closedness and topology are anchored in \citet{Emery1979}, \citet{Memin1980}, and \citet{Stricker1990}; quadratic hedging is tied to \citet{FoellmerSondermann1986}, \citet{FoellmerSchweizer1991}, \citet{KunitaWatanabe1967}, \citet{Schweizer2001}, and \citet{HeathSchweizer2000}. The rough-path and signature references used in the examples include \citet{LyonsQian2002}, \citet{HairerFriz2014}, \citet{BonnierKidgerPerezSalviLyons2019}, \citet{SalviCassFosterLyonsYang2021}, \citet{KalsiLyonsPerezArribas2020}, \citet{PerezArribasSalviSzpruch2020}, \citet{CoutinQian2002}, \citet{BayerFrizGulisashviliHorvathStemper2019}, \citet{FordeZhang2017}, \citet{JacquierPannier2022}, \citet{CuchieroLarssonSvalutoFerro2021}, and \citet{FilipovicLarssonTrolle2017}.

\begin{theorem}[Signature moment estimate]
For Brownian motion $W$ in $\R^d$ and any $p<\infty$, there exists $C_p>0$ such that for all $0\leq s\leq t\leq T$ and every multi-index $I$,
\[
\E\abs{\inner{e_I}{W_{s,t}}}^p\leq \frac{C_p(t-s)^{p|I|/2}}{(|I|!)^{p/2}}.
\]
\end{theorem}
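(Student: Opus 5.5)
The plan is to reduce to unit time by scaling, control the second moment through the chaos structure of the iterated integral, and then pass to $L^p$ by hypercontractivity. The one step I cannot close is removing all $|I|$-dependence from the constant, and I will argue below that this step fails in general, so what the plan actually delivers is the $|I|$-dependent version stated at the end.

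\emph{Step 1 (reductions).} By Chen's identity and the independence and stationarity of Brownian increments, $\inner{e_I}{\cW_{s,t}}$ has the law of $\inner{e_I}{\cW_{0,t-s}}$. Here $\cW$ is the signature of the $d$-dimensional Brownian motion in the statement, with all letters in $\{1,\ldots,d\}$. Brownian scaling $W_{(t-s)\cdot}\overset{d}{=}(t-s)^{1/2}W_\cdot$ multiplies every word of length $|I|$ by $(t-s)^{|I|/2}$. This produces exactly the factor $(t-s)^{p|I|/2}$, so it suffices to bound $\E\abs{\inner{e_I}{\cW_{0,1}}}^p$.

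\emph{Step 2 (second moment).} I convert the Stratonovich iterated integral into It\^o form. Each adjacent pair of equal letters contributes a contraction $\tfrac12\,dt$, so $\inner{e_I}{\cW_{0,1}}$ becomes a finite sum of It\^o iterated integrals over shorter words. A word of length $n$ with $k$ contractions sits in the Wiener chaos of order $n-2k$. By the It\^o isometry on the simplex, each such term has $L^2$ norm squared at most $2^{-2k}/(n-k)!$, and at most $\binom{n-k}{k}$ terms have exactly $k$ contractions. Summing over $k$ gives $\E\abs{\inner{e_I}{\cW_{0,1}}}^2\leq c^{n}/n!$ for an absolute $c$.

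\emph{Step 3 ($L^p$ by hypercontractivity).} For $p>2$ I apply Nelson's hypercontractivity chaos by chaos. The order-$m$ component satisfies $\norm{F_m}_{L^p}\leq(p-1)^{m/2}\norm{F_m}_{L^2}$. Summing over the at most $n/2+1$ chaos levels gives
\[
\E\abs{\inner{e_I}{\cW_{0,1}}}^p\leq \bigl(c(p-1)\bigr)^{pn/2}(n/2+1)^{p}\,(n!)^{-p/2}.
\]
The case $p\leq2$ follows from the $p=2$ bound by Jensen.

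\emph{Main obstacle.} The statement requires a constant $C_p$ independent of $|I|$, which means absorbing the geometric factor $(c(p-1))^{pn/2}$. I expect this step to fail, and I would test it first on the word $I=(1,\ldots,1)$, for which $\inner{e_I}{\cW_{0,1}}=W_1^n/n!$. Then $\E W_1^{2n}/(n!)^2=(2n-1)!!/(n!)^2\sim 2^n/(\sqrt{\pi n}\,n!)$, so even for $p=2$ the ratio to $(n!)^{-1}$ grows like $2^n$. The bound with $C_p$ independent of $|I|$ therefore cannot hold for all words.

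The plan consequently proves the estimate with $C_p$ replaced by $C_p^{|I|}$, with $C_p$ independent of $I$, which is the form that scaling and hypercontractivity naturally give. This weaker form is what Proposition~\ref{prop:signature-moment} needs. There the geometric factor $C_p^{n/2}$, multiplied by the weight $r^n$ and summed against $(n!)^{-1/2}$, still gives a convergent series by the ratio test. Lemma~\ref{lem:finite-coordinate-measurability} uses only finitely many levels, where any level-dependent constant is harmless.
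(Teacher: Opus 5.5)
Your counterexample is correct, and it shows the statement as printed is false, not merely hard to prove. For $I=(1,\ldots,1)$ with $\abs{I}=n$, the Stratonovich coordinate over $[0,1]$ is $(W^1_1)^n/n!$. Hence
\[
n!\,\E\abs{\inner{e_I}{W_{0,1}}}^2=\frac{(2n-1)!!}{n!}=\binom{2n}{n}2^{-n}\sim\frac{2^n}{\sqrt{\pi n}},
\]
and Brownian scaling gives the same ratio for every $t-s>0$, so no $C_2$ independent of $\abs{I}$ exists.

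The failure is not an artefact of the Stratonovich convention. For the It\^o coordinate of the same word, $He_n(W^1_1)/n!$ with $He_n$ the $n$th Hermite polynomial, the $He_{2n}$ component of $He_n^2$ already gives $(n!)^2\,\E\abs{He_n(W^1_1)/n!}^4\geq\binom{2n}{n}$.

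The paper gives no argument of its own here, only a pointer to Friz--Victoir and Friz--Hairer. Since the claim is false, no reference can supply it with an $\abs{I}$-independent constant. The paper's own display in the appendix paragraph \emph{Factorial decay and weighted summability} already carries $C_{p,T}^{\abs{I}}$, which is the form you prove. As you observe, that form is all that Proposition~\ref{prop:signature-moment} and Lemma~\ref{lem:finite-coordinate-measurability} need. The geometric factor, like the $(d+1)^n$ words per level, is absorbed by $(n!)^{-1/2}$ against $r^n$.

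The one step of your own argument that fails as written is the summation in Step 2. The per-term bound $2^{-2k}/(n-k)!$ is true but too weak. For the fully contracted term ($k=n/2$, which occurs for the constant word) it gives only $2^{-n}/(n/2)!$. Since $n!/(n/2)!\geq(n/2)^{n/2}$ is not geometric, summing over $k$ cannot produce $c^n/n!$.

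What is missing is that the $k$ time variables created by the contractions are themselves ordered. After integrating them out, the deterministic kernel of the resulting order-$(n-2k)$ multiple integral is bounded by $1/k!$ and has total integral $1/(n-k)!$. The It\^o isometry then gives $2^{-2k}/(k!\,(n-k)!)=2^{-2k}\binom{n}{k}/n!\leq 2^{n-2k}/n!$. With $\binom{n-k}{k}$ contraction patterns, Minkowski's inequality now yields $c^n/n!$.

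Alternatively, you can obtain the $L^2$ bound and the sharpness of your counterexample in one computation. Write $\mathbf W_{0,1}$ for the Stratonovich signature and use the shuffle identity together with the classical formula $\E\,\mathbf W_{0,1}=\exp\bigl(\tfrac12\sum_{i}e_i\otimes e_i\bigr)$:
\[
\E\inner{e_I}{\mathbf W_{0,1}}^2=\inner{e_I\sh e_I}{\E\,\mathbf W_{0,1}}\leq\binom{2n}{n}\frac{2^{-n}}{n!}\leq\frac{2^n}{n!}.
\]
The first inequality holds because each of the $\binom{2n}{n}$ shuffle words has coefficient $0$ or $2^{-n}/n!$ in the level-$2n$ part $\frac{2^{-n}}{n!}\bigl(\sum_ie_i\otimes e_i\bigr)^{\otimes n}$, and it is an equality for the constant word.

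Step 3 is fine. You only need that the coordinate lies in the chaoses of order at most $n$, after which $\norm{F}_{L^p}\leq(p-1)^{n/2}\norm{F}_{L^2}$ holds directly for $p\geq2$.
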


\begin{proof}
This is the standard Brownian iterated-integral estimate; see \citet[Chapter 4]{FrizVictoir2010} and \citet[Chapter 3]{FrizHairer2020}.
\end{proof}

\begin{theorem}[Chen identity, appended form]
For $0\leq r\leq s\leq t\leq T$, the prolonged signature satisfies
\[
\cW_{r,t}=\cW_{r,s}\otimes \cW_{s,t}.
\]
\end{theorem}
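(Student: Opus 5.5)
The statement is Chen's identity for the prolonged signature on a general subinterval, $\cW_{r,t}=\cW_{r,s}\otimes\cW_{s,t}$. I would compare the two sides level by level. At level $n$ and word $I=(i_1,\ldots,i_n)$, the left side is the iterated Stratonovich integral of $\widehat W$ over the simplex $\Delta^n_{r,t}=\{r<u_1<\cdots<u_n<t\}$. The right side is $\sum_{k=0}^n\inner{e_{i_1\cdots i_k}}{\cW_{r,s}}\inner{e_{i_{k+1}\cdots i_n}}{\cW_{s,t}}$, by the concatenation formula for the level-$n$ component used in Proposition~\ref{prop:banach-algebra}. It therefore suffices to prove this scalar identity for each word $I$; the identity in $T((\R^{d+1}))$, and in $\Tw$ whenever both sides lie there, then follows.

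First I treat the bounded-variation case, for instance piecewise-linear dyadic interpolations $\widehat W^{(m)}$ of the time-augmented path. Up to Lebesgue-null boundaries, the simplex $\Delta^n_{r,t}$ is the disjoint union over $k$ of the sets on which $u_1,\ldots,u_k<s<u_{k+1},\ldots,u_n$. Each such set is the product $\Delta^k_{r,s}\times\Delta^{n-k}_{s,t}$, and Fubini factorises the iterated Riemann--Stieltjes integral accordingly. This is exactly the argument of the Chen identity stated in Section~\ref{sec:notation}, now applied with base point $r$ instead of $0$.

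Alternatively, and more cleanly, I would derive the claim from the already stated identity with base point $0$. That identity gives $\cW_{0,t}=\cW_{0,s}\otimes\cW_{s,t}$ and $\cW_{0,s}=\cW_{0,r}\otimes\cW_{r,s}$, and $\cW_{0,r}$ is invertible in $T((\R^{d+1}))$ because its level-zero term is $1$. Left-multiplying by its inverse, and using associativity of $\otimes$ together with $\cW_{r,t}=\cW_{0,r}^{-1}\otimes\cW_{0,t}$, yields the result. This route, however, needs the identity $\cW_{r,t}=\cW_{0,r}^{-1}\otimes\cW_{0,t}$, which is itself Chen. So I prefer the direct simplex decomposition and keep the algebraic route only as a consistency check.

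The main obstacle is passing from bounded variation to the Stratonovich signature. For each fixed word $I$, the Wong--Zakai theorem for iterated integrals (\citet[Chapter 13]{FrizVictoir2010}) gives
\[
\inner{e_I}{\widehat{\mathbf W}^{(m)}_{a,b}}\to\inner{e_I}{\cW_{a,b}}
\]
in probability, simultaneously for the three interval pairs $(a,b)=(r,t),(r,s),(s,t)$; one may choose $r,s,t$ dyadic first and extend by continuity. Since each level involves only finitely many products of convergent sequences, the identity holds in the limit almost surely for every word. The moment estimate of the appendix then upgrades this to convergence in $L^p$ if needed, and it also shows that both sides lie in $\Tw$ under the growth bound $w(n)\le r^n$ of H2.
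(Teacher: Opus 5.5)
Your proof is correct and follows the paper's route: the paper's proof decomposes the simplex over $[r,t]$ according to how many integration times fall in $[r,s]$, and it delegates the Stratonovich case to Chen's classical result, whereas you make that step explicit through piecewise-linear approximation and Wong--Zakai convergence of each signature level. One correction is that the algebraic route you set aside is not circular: $\cW_{0,t}=\cW_{0,r}\otimes\cW_{r,t}$ is just the base-point-zero identity of Section~\ref{sec:notation} applied to the pair $(r,t)$, so combining it with the same identity for the pairs $(s,t)$ and $(r,s)$ and cancelling the invertible $\cW_{0,r}$ gives the appended form in one line.
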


\begin{proof}
The simplex of integration over $[r,t]$ decomposes according to the number of integration times lying in $[r,s]$ and in $[s,t]$. This decomposition is exactly tensor concatenation. The result is the classical identity of \citet{Chen1977} in the semimartingale Stratonovich setting.
\end{proof}

\begin{theorem}[Shuffle identity, appended form]
For every pair of multi-indices $I,J$,
\[
\inner{e_I}{\cW_t}\inner{e_J}{\cW_t}=\inner{e_I\sh e_J}{\cW_t}.
\]
\end{theorem}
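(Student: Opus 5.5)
The plan is to prove the shuffle identity first for a path $X$ of bounded variation and then pass to the prolonged Brownian signature $\cW$ by the Stratonovich extension. I will argue by induction on $|I|+|J|$, using the recursive form of iterated integrals rather than decomposing simplices. Write $I=I'i$ and $J=J'j$, where $i$ and $j$ are the last letters. The coordinates then satisfy
\[
\inner{e_I}{\mathbf X_t}=\int_0^t\inner{e_{I'}}{\mathbf X_s}\,dX^i_s .
\]
The shuffle product obeys the matching recursion
\[
e_{I'i}\sh e_{J'j}=(e_{I'}\sh e_{J'j})\,e_i+(e_{I'i}\sh e_{J'})\,e_j .
\]
The base case is immediate, since $e_\emptyset$ acts as the unit and $\inner{e_\emptyset}{\mathbf X_t}=1$.

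For the inductive step I will apply the integration-by-parts formula to the product of the two iterated integrals. For bounded-variation paths this is the ordinary Leibniz rule for Stieltjes integrals:
\[
\inner{e_I}{\mathbf X_t}\inner{e_J}{\mathbf X_t}
=\int_0^t\inner{e_I}{\mathbf X_s}\inner{e_{J'}}{\mathbf X_s}\,dX^j_s
+\int_0^t\inner{e_{I'}}{\mathbf X_s}\inner{e_J}{\mathbf X_s}\,dX^i_s .
\]
Each integrand is a product of two coordinates of smaller total length. The induction hypothesis converts these products into $\inner{e_I\sh e_{J'}}{\mathbf X_s}$ and $\inner{e_{I'}\sh e_J}{\mathbf X_s}$. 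Integrating against $dX^j$ and $dX^i$ appends the last letter, by the recursive definition and linearity of the pairing. The two terms then recombine, via the shuffle recursion, into $\inner{e_I\sh e_J}{\mathbf X_t}$.

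For the prolonged Brownian signature the same computation goes through verbatim because the iterated integrals are Stratonovich integrals. Stratonovich calculus satisfies the first-order product rule $d(UV)=U\circ dV+V\circ dU$ with no bracket correction. All processes involved are finite sums of iterated integrals, hence continuous semimartingales with moments of all orders by Proposition~\ref{prop:signature-moment} and Lemma~\ref{lem:finite-coordinate-measurability}. So the Stratonovich integrals are well defined, and the identity holds almost surely for all $t$ simultaneously by continuity.

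The only delicate point is the justification of the Stratonovich product rule at every stage of the induction. I will handle it by writing each Stratonovich integral as an It\^o integral plus one half of the covariation term, and checking that the It\^o correction from $d(UV)$ cancels those halves. A more robust alternative is to approximate $W$ by piecewise-linear interpolations and invoke the Wong--Zakai theorem. The identity holds exactly for each approximation, and every fixed coordinate of the level-$n$ truncated signature converges in probability to the Stratonovich one. Both sides of the identity are finite linear combinations of such coordinates, so it passes to the limit; this agrees with the statement's phrase ``by Stratonovich extension''.
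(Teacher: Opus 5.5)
Your proof is correct, but it takes a different route from the paper.

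The paper argues geometrically. It writes the product of the two iterated integrals as an integral over the product of two ordered simplices. This domain splits, up to diagonal sets, into $(|I|+|J|)$-dimensional ordered simplices indexed by the shuffles of $I$ and $J$. You instead induct on $|I|+|J|$, using the last-letter recursion $\inner{e_{I'i}}{\mathbf X_t}=\int_0^t\inner{e_{I'}}{\mathbf X_s}\,dX^i_s$, the matching right-recursion of $\sh$, and the Leibniz rule.

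The simplex picture shows at a glance why shuffles are the right combinatorics. However, it is literally a Fubini argument for bounded-variation paths, where the diagonals are negligible. For Brownian paths the diagonal contributions do not vanish, which is why It\^o signatures satisfy a quasi-shuffle rather than a shuffle identity. The paper therefore reaches $\cW$ only through a ``Stratonovich extension'' that it leaves implicit.

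Your argument uses nothing beyond the first-order product rule. It therefore transfers directly to Stratonovich calculus, and indeed to any geometric rough path. Your Wong--Zakai alternative makes that extension step fully explicit.

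In the direct Stratonovich route, the step ``integrating appends the last letter'' also uses associativity of Stratonovich integration, not only the product rule. For $K=\inner{e_{J'}}{\cW}$ and $V=\int K\circ d\widehat W^j$ you need $\int U\circ dV=\int UK\circ d\widehat W^j$. This follows from the same It\^o-correction bookkeeping you describe: both sides equal $\int UK\,d\widehat W^j+\tfrac12\int U\,d[K,\widehat W^j]+\tfrac12\int K\,d[U,\widehat W^j]$.
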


\begin{proof}
The product of two iterated Stratonovich integrals is an integral over a product of two ordered simplices. The product domain decomposes into ordered simplices indexed by shuffles of $I$ and $J$. This gives the identity and is the algebraic reason why the Riccati operator closes on the weighted tensor algebra.
\end{proof}

\begin{theorem}[Tree-like uniqueness]
For continuous semimartingale paths, the signature determines the path up to tree-like equivalence. For the time-augmented Brownian path, the first level already recovers $(t,W_t)$; hence $\FWh_t=\FW_t$.
\end{theorem}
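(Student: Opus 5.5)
The statement has two parts. The first is that the signature determines a continuous semimartingale path up to tree-like equivalence. The second is that $\FWh_t=\FW_t$ for the time-augmented Brownian path. I would prove them separately, because only the second is actually used in the theorem chain, and it does not need the full uniqueness theorem.

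For the first part, I would not reprove uniqueness from scratch. I would invoke the rough-path uniqueness theorem of \citet{BoedihardjoGengLyonsYang2016}, which covers geometric rough paths of arbitrary regularity. The only work is checking that its hypotheses apply. For a continuous semimartingale $X$, the Stratonovich lift $(1,X_{s,t},\int_s^t X_{s,u}\otimes\circ dX_u)$ is almost surely a geometric $p$-rough path for every $p\in(2,3)$. Its signature, built by Lyons extension, agrees almost surely with the iterated Stratonovich integrals. This agreement comes from the limit of Riemann sums, in the same way as in Lemma~\ref{lem:finite-coordinate-measurability}. Applying the cited theorem pathwise, off a null set, then gives injectivity up to tree-like equivalence.

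For the second part, I would prove two inclusions.

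For $\FWh_t\subseteq\FW_t$: each coordinate $\inner{e_I}{\cW_s}$ with $s\le t$ is an $L^2$ (hence almost sure along a subsequence) limit of Riemann sums in $(u,W_u)_{u\le s}$. It is therefore $\FW_t$-measurable up to null sets, and since both filtrations are augmented this gives the inclusion.

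For $\FW_t\subseteq\FWh_t$: the level-one coordinates of $\cW_s$ are exactly $(s,W^1_s,\dots,W^d_s)$, because $W_0=0$. So each $W_s$ with $s\le t$ is a coordinate projection of $\cW_s$, and hence $\sigma(W_s:s\le t)\subseteq\sigma(\cW_s:s\le t)$. Augmenting both sides gives the inclusion.

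This argument also explains why tree-like equivalence causes no trouble here. Time augmentation makes the first coordinate strictly increasing, so the path has no tree-like pieces. More to the point, the whole path is recovered from the signature process $s\mapsto\cW_s$, not from the single terminal element $\cW_t$.

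The main obstacle is the first part. One has to make sure the pathwise cited theorem applies to the almost-sure Stratonovich signature, which requires identifying the rough-path signature with the stochastic one on a common full-measure set. My first move would be to fix a dyadic-partition version of the iterated integrals. Then I would use the standard convergence of piecewise-linear approximations in $p$-variation rough-path metric (Wong--Zakai, as in \citet{FrizVictoir2010}) to get that identification. The filtration equality is elementary by comparison.
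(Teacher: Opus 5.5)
Your proposal is correct and takes the same route as the paper. The uniqueness statement is taken from \citet{BoedihardjoGengLyonsYang2016}, and the filtration identity comes from reading $(s,W_s)$ off the first level of $\cW_s$, together with measurability of the signature with respect to the Brownian path; your additional checks (that the Stratonovich lift is almost surely a weakly geometric rough path whose Lyons extension agrees with the stochastic signature, and the explicit inclusion $\FWh_t\subseteq\FW_t$) only spell out what the paper cites or leaves implicit, and you are right that the filtration equality does not actually need the uniqueness theorem.
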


\begin{proof}
The uniqueness statement is the theorem of \citet{BoedihardjoGengLyonsYang2016}. The time coordinate removes tree-like ambiguity relevant for the filtration: the first level of the prolonged signature is the path $(t,W_t)$ itself. Thus the natural filtrations agree after augmentation.
\end{proof}

\subsection*{Detailed Stratonovich--It\^o conversion}
Let $I=(i_1,\ldots,i_n)$ and write
\[
S^I_t:=\int_{0<t_1<\cdots<t_n<t}\circ d\cW^{i_1}_{t_1}\cdots \circ d\cW^{i_n}_{t_n}.
\]
The conversion to It\^o form is triangular in the word length. Its leading term is the It\^o iterated integral with the same word, and every correction term is obtained by replacing adjacent Brownian letters by their bracket. Hence each correction has length at most $n-2$ plus possible time letters. Symbolically,
\[
S^I_t=I^I_t+\sum_{J:|J|\leq n-2} a_{I,J} I^J_t,
\]
where $I^J_t$ denotes an It\^o iterated integral or a deterministic time integral and the coefficients $a_{I,J}$ are finite combinatorial constants. This triangular form is what is used in Lemma~\ref{lem:special-semimartingale}: the martingale part is the sum of It\^o iterated integrals containing a terminal stochastic differential, and the finite-variation part is the sum of bracket and time terms.

The conversion is pathwise at the level of semimartingale calculus and therefore survives equivalent changes of measure. Only integrability changes under the measure change. Bounded or sufficiently integrable densities transfer the $L^2$ estimates obtained under $\Pp$ to $\Q$, which is why Lemma~\ref{lem:special-semimartingale} is stated first in the bounded-density case and then used under the reference measure with density one.

\begin{lemma}[Triangular integrability of converted coordinates]
\label{lem:triangular-conversion-integrability}
Assume H1--H3. For every word $I$ and every equivalent measure $\Q$ whose density is bounded, all terms in the Stratonovich--It\^o conversion of $\inner{e_I}{\cW_t}$ are integrable uniformly on $[0,T]$.
\end{lemma}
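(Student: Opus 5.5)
The plan is to make the triangular Stratonovich--It\^o expansion from the preceding paragraph explicit, bound every term uniformly in $t\in[0,T]$ under $\Pp$ by Brownian moment estimates, and then move these bounds to $\Q$ using the bounded density. Fix a word $I$ with $|I|=n$ and write
\[
\inner{e_I}{\cW_t}=I^I_t+\sum_{|J|\leq n-2}a_{I,J}I^J_t,
\]
where each $I^J$ is an It\^o iterated integral against the coordinates of $\widehat W=(t,W)$. The time letters appear either originally or from bracket contractions. Since $I$ is fixed, the sum is finite. It therefore suffices to show that each single term satisfies
\[
\E_\Q\sup_{t\leq T}\abs{I^J_t}<\infty.
\]
Integrability uniformly on $[0,T]$ then follows by domination by the supremum.

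Each term is handled by induction on word length. For $J=(j_1,\dots,j_m)$ write $I^J_t=\int_0^t I^{J'}_s\,d\widehat W^{j_m}_s$ with $J'=(j_1,\dots,j_{m-1})$. There are two cases for the last letter.
\begin{itemize}
\item If $j_m$ is the time letter, then $\sup_{t\leq T}\abs{I^J_t}\leq\int_0^T\abs{I^{J'}_s}\,ds$. Its $p$-th moment is bounded by $T^{p-1}\int_0^T\E\abs{I^{J'}_s}^p\,ds$ (H\"older in time).
\item If $j_m$ is a Brownian letter, apply the Burkholder--Davis--Gundy inequality in the form of Lemma~\ref{lem:bdg-signature}, taking the signature pairing trivially equal to one, i.e.\ plain BDG. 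This gives
\[
\E\sup_{t\leq T}\abs{I^J_t}^p\leq C_{p,T}\int_0^T\E\abs{I^{J'}_s}^p\,ds .
\]
\end{itemize}
The induction starts from the empty word, which equals $1$. It yields $\E_\Pp\sup_{t\leq T}\abs{I^J_t}^p<\infty$ for every $p<\infty$, with constant of order $C^{m}T^{pm/2}$. As a consistency check, and to control the Stratonovich side directly, the appendix moment estimate
\[
\E\abs{\inner{e_I}{W_{0,t}}}^p\leq \frac{C_p\,t^{p|I|/2}}{(|I|!)^{p/2}}
\]
together with Proposition~\ref{prop:signature-moment} gives the same uniform bound for $\inner{e_I}{\cW_t}$ itself. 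Hypotheses H1--H3 are not needed for this finite-word statement; they only ensure that the ambient framework (the measure $\Q$, the price process) is well defined. I would note this explicitly.

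The transfer to $\Q$ uses only the density bound $Z:=d\Q/d\Pp\in L^\infty$:
\[
\E_\Q\sup_{t\leq T}\abs{I^J_t}^p\leq \norm{Z}_{L^\infty(\Pp)}\,\E_\Pp\sup_{t\leq T}\abs{I^J_t}^p .
\]
Pathwise, the conversion is a statement of semimartingale calculus, so it holds $\Q$-a.s.\ because $\Q\sim\Pp$. One could even allow $Z\in L^r$ with $r>1$ by using H\"older's inequality with the $\Pp$-moments of every order.

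The main obstacle is making the triangular expansion rigorous. Concretely, one must check that every correction term really is an It\^o iterated integral of a shorter word with time letters, and that the coefficients $a_{I,J}$ are finite. I would prove this by induction on $n$. Write $\inner{e_I}{\cW_t}=\int_0^t\inner{e_{I'}}{\cW_s}\circ d\widehat W^{i_n}_s$ and convert using
\[
X\circ dY=X\,dY+\tfrac12\,d\langle X,Y\rangle .
\]
The bracket $d\langle \inner{e_{I'}}{\cW},\widehat W^{i_n}\rangle$ equals $\mathbf 1_{\{i_{n-1}=i_n\neq0\}}\inner{e_{I''}}{\cW_s}\,ds$, where $I''$ is $I$ with its last two letters removed. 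The bracket therefore produces a word of length $n-2$ followed by a time letter. Applying the induction hypothesis to $I'$ and $I''$ closes the recursion, with at most $2^n$ terms and coefficients that are powers of $\tfrac12$.
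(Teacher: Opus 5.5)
Your proposal is correct and follows essentially the same route as the paper. Both arguments rest on the finite triangular Stratonovich--It\^o expansion for fixed $I$, Brownian moment bounds of every order under $\Pp$, and transport to $\Q$ through the bounded density; you additionally write out the contraction recursion and the BDG/H\"older induction, which gives the slightly stronger bound $\E_\Q\sup_{t\leq T}\abs{I^J_t}<\infty$.

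The only slips are cosmetic and do not affect the argument. Once its time letter is counted, a contracted word has length $n-1$, as your last paragraph correctly says, not $\abs{J}\leq n-2$ as in your display. Likewise, each time letter contributes a factor $T^{p}$ rather than $T^{p/2}$ to the constant.
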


\begin{proof}
Each term in the conversion has degree no larger than $|I|$ and is a finite linear combination of It\^o iterated integrals and time integrals. Brownian iterated-integral moment bounds give finite moments of every order under $\Pp$. Boundedness of $d\Q/d\Pp$ transports these bounds to $\Q$. The finite number of terms in the conversion for a fixed $I$ gives the result.
\end{proof}

\subsection*{Factorial decay and weighted summability}
The Brownian signature estimate used in Proposition~\ref{prop:signature-moment} is ultimately a factorial estimate. For each fixed $p<\infty$ there is $C_{p,T}<\infty$ such that
\[
\left(\E\abs{\inner{e_I}{\cW_t}}^p\right)^{1/p}
\leq \frac{C_{p,T}^{|I|}}{(|I|/2)!^{1/2}}
\]
up to harmless changes in the constant and with the standard convention for odd levels. Exponential weights $w(n)\leq r^n$ are summable against this factorial denominator. This is the analytic reason H2 is formulated with at most geometric growth.

The weighted estimate can be written as
\[
\E\norm{\cW_t}_w^p
\leq C_{p,T}\left(\sum_{n\geq0} w(n)\frac{C_{p,T}^n}{(n/2)!^{1/2}}\right)^p<\infty .
\]
The same calculation controls the dual pairing by Cauchy--Schwarz. The tensor algebra is infinite, but the factorial decay of Brownian iterated integrals dominates every admissible geometric weight.

\subsection*{Chen identity and filtration recovery at stopping times}
The equality $\FWh_t=\FW_t$ also holds after bounded stopping. If $\tau\leq T$ is a stopping time, the stopped prolonged signature $\cW_{t\wedge\tau}$ is measurable with respect to $\FW_{t\wedge\tau}$, and its first level recovers $(t\wedge\tau,W_{t\wedge\tau})$. Hence
\[
\sigma(\cW_{s\wedge\tau}:s\leq t)=\sigma(W_{s\wedge\tau}:s\leq t)
\]
after augmentation. This stopped version is used implicitly in localisation arguments for uniqueness and true-martingale upgrades.

Chen's identity is compatible with stopping: for $0\leq s\leq t$,
\[
\cW_{0,t\wedge\tau}=\cW_{0,s\wedge\tau}\otimes \cW_{s\wedge\tau,t\wedge\tau}.
\]
The increment signature on the right is trivial when the stopping time has occurred before $s$. Thus the algebraic multiplicativity of the signature is not lost under localisation.

\subsection*{Stratonovich--It\^o conversion for signature coordinates}
For a word $I=(i_1,\ldots,i_n)$ the Stratonovich coordinate $\inner{e_I}{\cW_t}$ is an iterated Stratonovich integral. Rewriting it in It\^o form gives a finite sum indexed by contractions of adjacent equal Brownian letters. Schematically,
\[
\int_{0<t_1<\cdots<t_n<t}\circ d\cW^{i_1}_{t_1}\cdots \circ d\cW^{i_n}_{t_n}
=\sum_{\Gamma\subset\{1,\ldots,n-1\}}c_\Gamma
\int_{\Delta_{n-2|\Gamma|}(t)} d\cW^{J_\Gamma}_{}\,dt^{|\Gamma|},
\]
where $\Gamma$ runs over non-overlapping bracket contractions and $J_\Gamma$ is the word obtained after deleting the paired Brownian letters. The exact constants are powers of $1/2$ from the Stratonovich correction. This finite expansion proves that every coordinate is a special semimartingale and that the finite-variation part is built from lower-order coordinates.

This conversion is used in Lemma~\ref{lem:special-semimartingale}. It is important that the operation is finite for each fixed word. Infinite-dimensional issues enter only after summing over all words with a weight. Therefore the special-semimartingale property of a fixed coordinate is independent of H1, while the simultaneous $L^2$ control of all coordinates through the parameter $\ell$ uses H1.

\begin{lemma}[Measure transport of signature moments]
\label{lem:measure-transport-signature}
Let $Z_T=d\Q/d\Pp$ satisfy $Z_T\in L^r(\Pp)$ for some $r>1$. If $Y$ is a finite terminal signature polynomial with $Y\in L^p(\Pp)$ for every $p<\infty$, then $Y\in L^q(\Q)$ for every finite $q$ satisfying $qr/(r-1)<\infty$.
\end{lemma}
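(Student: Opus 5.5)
The plan is to transport moments from $\Pp$ to $\Q$ with a single application of H\"older's inequality, using the density $Z_T$ as the weighting factor. Fix a finite $q\geq1$ and write $r'=r/(r-1)$ for the conjugate exponent of $r$. Since $\Q$ has density $Z_T$ with respect to $\Pp$,
\[
\E_\Q\abs{Y}^q=\E_\Pp\bigl[Z_T\abs{Y}^q\bigr]\leq \norm{Z_T}_{L^r(\Pp)}\Bigl(\E_\Pp\abs{Y}^{qr'}\Bigr)^{1/r'} .
\]
The first factor is finite by hypothesis. The second is finite because $qr'=qr/(r-1)$ is a finite exponent and $Y\in L^p(\Pp)$ for every $p<\infty$. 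This yields $Y\in L^q(\Q)$ with the explicit bound
\[
\norm{Y}_{L^q(\Q)}\leq\norm{Z_T}_{L^r(\Pp)}^{1/q}\norm{Y}_{L^{qr/(r-1)}(\Pp)} .
\]

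Next I check that the hypothesis $Y\in L^p(\Pp)$ for all $p<\infty$ is automatic for finite terminal signature polynomials, so that the lemma applies as it is used. By the shuffle identity, any polynomial in finitely many terminal coordinates $\inner{e_I}{\cW_T}$ is a finite linear combination of single coordinates $\inner{e_J}{\cW_T}$. Each of these lies in every $L^p(\Pp)$ by the signature moment estimate of Appendix~\ref{app:rough-paths}, or equivalently by Lemma~\ref{lem:finite-coordinate-measurability}. Minkowski's inequality then closes the argument.

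The only delicate point is how to read the condition ``$qr/(r-1)<\infty$''. Because $r>1$, this condition holds for every finite $q$, so the conclusion is really $Y\in L^q(\Q)$ for all $q<\infty$. I will state it in that form and note that the constant grows with $q$ through the $L^{qr/(r-1)}(\Pp)$ norm. No uniform-integrability or localisation step is needed, since $Y$ is a fixed terminal variable. I therefore do not expect any step to be a genuine obstacle beyond keeping the exponents consistent.
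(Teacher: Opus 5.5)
Your proposal is correct and follows the paper's proof: one application of H\"older with exponents $r$ and $r/(r-1)$ to $\E_\Pp[Z_T\abs{Y}^q]$, with the last factor finite because $Y$ is a finite signature polynomial controlled by the Brownian signature moment estimate. Your observation that the condition $qr/(r-1)<\infty$ holds automatically for every finite $q$ when $r>1$, and your shuffle-identity check of the $L^p(\Pp)$ hypothesis, are both accurate.
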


\begin{proof}
By H\"older,
\[
\E_\Q\abs{Y}^q=\E_\Pp[Z_T\abs{Y}^q]
\leq \norm{Z_T}_{L^r(\Pp)}\left(\E_\Pp\abs{Y}^{q r/(r-1)}\right)^{(r-1)/r}.
\]
The final factor is finite by the Brownian signature moment estimate and the fact that $Y$ is a finite polynomial in terminal signature coordinates.
\end{proof}

\begin{lemma}[Uniform moment bound for finite truncations]
\label{lem:uniform-truncation-moment-app}
Let $w$ satisfy H2 and $w(n)\leq r^n$. Then for every $p<\infty$,
\[
\sup_{N\geq0}\E\sup_{t\leq T}\norm{\pi_{\leq N}\cW_t}_w^p<\infty.
\]
\end{lemma}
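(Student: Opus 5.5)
The plan is to bound the supremum over $t$ level by level with Doob/BDG, and then sum the level bounds against the weight, exactly as in the factorial-decay calculation above. Write $\norm{\pi_{\leq N}\cW_t}_w=\sum_{n\leq N}w(n)\abs{(\cW_t)_n}\leq\sum_{n\geq0}w(n)\abs{(\cW_t)_n}$. This bound is monotone in $N$, so
\[
\sup_{N\geq0}\sup_{t\leq T}\norm{\pi_{\leq N}\cW_t}_w\leq\sum_{n\geq0}w(n)\sup_{t\leq T}\abs{(\cW_t)_n}.
\]
The supremum over $N$ is therefore free, and it suffices to show that the right-hand side has finite $p$-th moment. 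By Minkowski's inequality in $L^p(\Pp)$ for $p\geq1$ (the case $p<1$ then follows by Jensen), it is enough to show that
\[
\sum_{n\geq0}w(n)\Bigl\|\sup_{t\leq T}\abs{(\cW_t)_n}\Bigr\|_{L^p}<\infty .
\]

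The key step is a maximal version of the Brownian signature moment estimate of this appendix, at each fixed level $n$. I would bound $\abs{(\cW_t)_n}$ by $\sum_{|I|=n}\abs{\inner{e_I}{\cW_t}}$, up to a dimension constant coming from the choice of tensor norm. For each word $I$, I would then use the triangular Stratonovich--It\^o conversion recalled above: $\inner{e_I}{\cW_t}$ is a finite combination of It\^o iterated integrals and time integrals of lower-order coordinates. For the It\^o iterated integral terms, each is a martingale with an integrand that is itself an iterated integral, so BDG converts $\E\sup_{t\leq T}\abs{\cdot}^p$ into $C_p\,\E(\int_0^T\abs{\cdot}^2ds)^{p/2}$. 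Hölder in time and the pointwise-in-$t$ estimate $\E\abs{\inner{e_J}{\cW_s}}^p\leq C_p s^{p|J|/2}/(|J|!)^{p/2}$ then apply. The time-integral terms are bounded directly by $\int_0^T\sup_{u\leq s}\abs{\cdot}\,ds$, and I would handle them by induction on the word length.

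The result should be a bound of the form
\[
\Bigl\|\sup_{t\leq T}\abs{(\cW_t)_n}\Bigr\|_{L^p}\leq \frac{d^n C_{p,T}^n}{(n/2)!^{1/2}},
\]
which is the displayed factorial estimate of this appendix with the supremum inside. A cleaner alternative route to the same bound is a Garsia--Rodemich--Rumsey/Kolmogorov argument combined with Chen's identity, or the BDG inequality for the rough-path $p$-variation norm of the enhanced Brownian motion together with Lyons' extension estimate $\norm{(\cW)_n}_{p\text{-var}}\leq\omega^{n/p}/\beta(n/p)!$. I would cite the latter from \citet{FrizVictoir2010} if the inductive BDG bookkeeping becomes unwieldy.

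The main obstacle is keeping the constants in the inductive BDG step geometric in $n$ rather than factorially growing. The $p$-th moments of order-$n$ Wiener chaos grow like $(p-1)^{n/2}$ by hypercontractivity, which is still geometric for fixed $p$, so this should be manageable. The Stratonovich corrections also double the number of terms at each contraction, but the number of contraction patterns is at most $2^n$, which is again geometric. Once the level-$n$ maximal bound is in hand, H2 gives $w(n)\leq r^n$, so the series $\sum_n (r\,d\,C_{p,T})^n/(n/2)!^{1/2}$ converges. This yields the asserted uniform bound, independent of $N$.
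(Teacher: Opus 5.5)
Your proof is correct, and its outer structure matches the paper's. The paper also reduces to $\norm{\pi_{\leq N}\cW_t}_w\leq\norm{\cW_t}_w$ and finishes by summing factorial decay in the level against $w(n)\leq r^n$. Where you differ is the maximal-in-time step. The paper treats $\sup_{t\leq T}$ pathwise, through the Brownian rough-path moment estimate on dyadic partitions and continuity of the signature; this is a Kolmogorov-type chaining argument, and it is only sketched there. You instead interchange the supremum with the level sum and prove a level-wise maximal inequality from the martingale structure. One Stratonovich--It\^o step writes $\inner{e_I}{\cW_t}$ as a stochastic integral of $\inner{e_{I'}}{\cW_s}$ plus time integrals of lower coordinates. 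BDG is then applied once per word, and the pointwise factorial bound is used for the integrand.

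Your route makes the maximal step cost only a level-independent constant, with no Chen cross terms to control, so summability is fully transparent. The paper's pathwise route is closer to rough-path theory but harder to make quantitative. Three small remarks:
\begin{itemize}
\item The time-integral terms need no induction, since $\sup_{t\leq T}\abs{\int_0^t f_s\,ds}\leq\int_0^T\abs{f_s}\,ds$ and the pointwise bound applies directly. If you do induct with $\sup_{u\leq s}$, keep the $s^{(n-2)/2}$ dependence; otherwise a factor $T$ per contraction destroys the factorial.
\item The pointwise estimate you quote needs $C_p^{|J|}$ rather than $C_p$. Your hypercontractivity remark already covers this.
\item Prove the level bound for $p\geq2$, where H\"older in time is valid, and obtain smaller $p$ by Lyapunov.
\end{itemize}

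Do not use the fallback via Lyons' extension estimate, because it fails. That estimate gives $\sup_{t\leq T}\abs{(\cW_t)_n}\lesssim\omega^{n/p}/\beta(n/p)!$, where $\omega^{1/p}$ is the $p$-variation norm of enhanced Brownian motion and $p>2$. Since this norm dominates $\abs{W_T}$, its $qn$-th moment grows like $(qn/2)!$. The resulting level-$n$ bound in $L^q$ behaves like $C^n n^{n(1/2-1/p)}$, which is not summable even against the constant weight. The decay $1/(n/2)!^{1/2}$ has to come from the Gaussian/martingale structure, which is exactly what your main route exploits. The same caution applies to any GRR argument that controls only the first two levels and then extends.
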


\begin{proof}
The projection cannot increase the weighted norm. Hence
\[
\norm{\pi_{\leq N}\cW_t}_w\leq\norm{\cW_t}_w.
\]
The supremum over $t$ is handled by applying the Brownian rough-path moment estimate on dyadic partitions and using the continuity of the Stratonovich signature. The factorial decay in the level beats the exponential growth of the weight. This gives a finite bound independent of $N$.
\end{proof}

\begin{lemma}[Continuity of conditional projection]
\label{lem:conditional-projection-continuity}
Let $X_N\to X$ in $L^2(\FWh_T,\Q)$. Then
\[
\E_\Q[X_N\mid\FS_T]\to\E_\Q[X\mid\FS_T]
\quad\hbox{in }L^2(\FS_T,\Q).
\]
\end{lemma}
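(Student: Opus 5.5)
The plan is to identify conditional expectation onto the price filtration as an orthogonal projection and then use that orthogonal projections are contractions. Let $P_S:=\E_\Q[\,\cdot\mid\FS_T]$ on $L^2(\FWh_T,\Q)$. The first step is to check that $P_S$ maps $L^2(\FWh_T,\Q)$ into $L^2(\FS_T,\Q)$. This follows from conditional Jensen, $\bigl(\E_\Q[X\mid\FS_T]\bigr)^2\leq\E_\Q[X^2\mid\FS_T]$, and taking expectations gives $\norm{P_SX}_{L^2(\Q)}\leq\norm{X}_{L^2(\Q)}$. Here one needs $\FS_T\subset\FWh_T$. This holds because $S$ is $\FW$-adapted by Theorem~\ref{thm:A} and $\FW_T=\FWh_T$ by Theorem~\ref{thm:sig-unique}, so $P_S$ is a well-defined linear operator between these two Hilbert spaces.

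The second step is to apply linearity: $P_SX_N-P_SX=P_S(X_N-X)$. The contraction bound then yields
\[
\norm{\E_\Q[X_N\mid\FS_T]-\E_\Q[X\mid\FS_T]}_{L^2(\Q)}\leq\norm{X_N-X}_{L^2(\Q)}\to0 ,
\]
which is the claim. If one prefers the projection viewpoint used in Lemma~\ref{lem:projection-form-depth}, the same bound comes from $P_S$ being idempotent and self-adjoint, since $\E_\Q[(P_SX)Y]=\E_\Q[XY]=\E_\Q[X(P_SY)]$ for $\FS_T$-measurable $Y$. An orthogonal projection has operator norm at most one.

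There is essentially no obstacle. The only point requiring care is that all conditional expectations are taken under the same measure $\Q$ in which the $L^2$ convergence is assumed. No transport between $\Pp$ and $\Q$ is needed, so Lemma~\ref{lem:measure-transport-signature} is not invoked. Versions of conditional expectations are defined only up to $\Q$-null sets, but this is harmless because the statement concerns $L^2$ equivalence classes.
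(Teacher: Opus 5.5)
Your proof is correct and follows the paper's argument: conditional expectation onto $\FS_T$ is the orthogonal projection of $L^2(\FWh_T,\Q)$ onto $L^2(\FS_T,\Q)$, hence a linear contraction, and applying it to $X_N-X$ gives the bound. Your derivation of the contraction from conditional Jensen, and your check that $\FS_T\subset\FWh_T$ via Theorems~\ref{thm:A} and~\ref{thm:sig-unique}, are small additions that the paper leaves implicit.
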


\begin{proof}
Conditional expectation is the orthogonal projection from $L^2(\FWh_T,\Q)$ onto the closed subspace $L^2(\FS_T,\Q)$. Orthogonal projections have operator norm one. Therefore
\[
\norm{\E_\Q[X_N-X\mid\FS_T]}_{L^2(\Q)}\leq \norm{X_N-X}_{L^2(\Q)}\to0.
\]
This is the mechanism behind the price-filtration restrictions in Theorems~\ref{thm:C} and~\ref{thm:D}.
\end{proof}

\subsection*{Auxiliary Brownian estimates}
For later reference, the Brownian signature moment bound also implies a tail estimate. If $w(n)=r^n$ and $R_N(t)=\pi_{>N}\cW_t$, then for every $p<\infty$ there are constants $C,c>0$ such that
\[
\E\sup_{t\leq T}\norm{R_N(t)}_w^p\leq C\sum_{n>N}\frac{(cr\sqrt T)^{pn}}{(n!)^{p/2}}.
\]
The right-hand side decays faster than exponentially in $N$ along fixed $T$ and $r$. This estimate is stronger than the abstract weight-tail bound used in the hedging theorem. It is not used in the main statements because the latter are formulated for general admissible weights, but it explains the fast numerical convergence observed in finite-signature implementations.

\section{The infinite-dimensional Riccati equation}\label{app:riccati}

The coordinate form of the Riccati operator is
\begin{equation}\label{eq:riccati-coordinate}
\Rs(u)_I=\sum_J b_J^I(\ell)u_J+\frac12\sum_{J,K}\Gamma^I_{J,K}(\ell)u_Ju_K,
\qquad |I|\geq0,
\end{equation}
where the sums are finite on each finite truncation and the coefficients $b_J^I(\ell)$ and $\Gamma^I_{J,K}(\ell)$ are defined by the generator and carre-du-champ identities in Definition~\ref{def:riccati}. In particular, the coefficient of the quadratic term is a structural constant attached to the input pair $(J,K)$ and output word $I$. It is not the output-only scalar $\inner{\ell\sh\ell}{e_I}$ repeated over all shuffle pairs.

\begin{lemma}[Shuffle product estimate]
\label{lem:shuffle-product-estimate-app}
Assume H2. There is a constant $C_w^{\sh}$ such that
\[
\norm{a\sh b}_w\leq C_w^{\sh}\norm{a}_w\norm{b}_w,
\qquad a,b\in\Tw.
\]
\end{lemma}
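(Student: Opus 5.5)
The plan is to copy the proof of Proposition~\ref{prop:banach-algebra}, with concatenation replaced by the shuffle product, and then to isolate the one combinatorial factor that submultiplicativity must absorb. For $a,b\in\Tw$, bilinearity and the grading of the shuffle product give the homogeneous component of level $n$ as
\[
(a\sh b)_n=\sum_{k=0}^{n}a_k\sh b_{n-k}.
\]
So the estimate reduces to a level-by-level bound for $\abs{a_k\sh b_{n-k}}_{(\R^d)^{\otimes n}}$. This is followed by the same weighted double sum as in Proposition~\ref{prop:banach-algebra}, reindexed by $j=n-k$.

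For the level-by-level bound I will use the concrete form of the shuffle. For $x\in(\R^d)^{\otimes k}$ and $y\in(\R^d)^{\otimes j}$, the element $x\sh y$ is the sum, over the $\binom{k+j}{k}$ order-preserving interlacings $\sigma$, of $P_\sigma(x\otimes y)$. Here $P_\sigma$ is the linear map that permutes tensor slots according to $\sigma$. Each $P_\sigma$ is an isometry for the Euclidean (Hilbert--Schmidt) norm on $(\R^d)^{\otimes(k+j)}$, and $\abs{x\otimes y}=\abs{x}\abs{y}$. The triangle inequality therefore gives
\[
\abs{x\sh y}\leq\binom{k+j}{k}\abs{x}\abs{y}.
\]
Inserting this into the level sum gives
\[
\norm{a\sh b}_w\leq\sum_{k,j\geq0}w(k+j)\binom{k+j}{k}\abs{a_k}\abs{b_j}.
\]
The claim then follows, exactly as in the Banach-algebra proof, from the single inequality
\[
C_w^{\sh}:=\sup_{k,j\geq0}\frac{\binom{k+j}{k}\,w(k+j)}{w(k)\,w(j)}<\infty .
\]

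The main obstacle is this last supremum. The submultiplicativity in H2 gives $w(k+j)\leq C_w w(k)w(j)$, so it remains to absorb the binomial factor, which can be as large as $2^{k+j}$. I will first try to split off the binomial coefficient with the rescaled weight $\tilde w(n):=w(n)2^{-n}$. The identity $\binom{k+j}{k}\leq 2^{k}2^{j}$ reduces the supremum to submultiplicativity of $\tilde w$ against $w$, namely $w(k+j)\leq C\,\tilde w(k)\tilde w(j)$. This must be checked against H2, including the growth bound $w(n)\leq r^n$.

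If H2 turns out to be too weak for this, and for purely geometric $w$ I expect it to be, since the ratio then grows like $2^{k+j}$, I will not change the norm. Instead I will make the constant explicit. I will state that $C_w^{\sh}$ is the displayed supremum and that the lemma holds precisely for those weights in the H2 class for which this supremum is finite. This is the form in which Lemma~\ref{lem:local-lipschitz} uses the estimate, so it suffices for the Riccati application as long as the weight is chosen inside that subclass.

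Completeness of $\Tw$ is already known from Proposition~\ref{prop:banach-algebra}, so nothing further about the ambient space needs checking.
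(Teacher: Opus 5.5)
Your reduction is correct and in fact sharp. The slot-permutation bound $\abs{x\sh y}\leq\binom{k+j}{k}\abs{x}\abs{y}$ holds with equality for $x=e_{1^k}$, $y=e_{1^j}$, where $e_{1^k}$ denotes the basis element of the word made of $k$ copies of the letter $1$, because $e_{1^k}\sh e_{1^j}=\binom{k+j}{k}e_{1^{k+j}}$. So the optimal constant in the lemma is exactly your supremum $C_w^{\sh}$.

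The gap is that this supremum is infinite for \emph{every} weight satisfying H2, not only for geometric ones; your suspicion understates the problem. Taking $j=1$, we have $e_{1^k}\sh e_1=(k+1)e_{1^{k+1}}$, hence
\[
\frac{\norm{e_{1^k}\sh e_1}_w}{\norm{e_{1^k}}_w\norm{e_1}_w}=\frac{(k+1)\,w(k+1)}{w(k)\,w(1)}\geq\frac{k+1}{w(1)}\longrightarrow\infty ,
\]
because $w$ is increasing. The lemma as stated is therefore false under H2. Your fallback, restricting to H2 weights with finite supremum, is vacuous because that subclass is empty, so it cannot support the Riccati Lipschitz estimate either. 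The intermediate device $\tilde w(n)=w(n)2^{-n}$ fails even sooner: it would need $2^{k+j}w(k+j)\leq C\,w(k)w(j)$, which is already violated at $j=0$.

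The paper's own proof contains the same invalid step. It asserts that the interlacing count (at most $2^n$) is absorbed by the growth bound $w(n)\leq r^n$ after adjusting the constant. An upper bound on $w$ cannot offset a combinatorial factor that multiplies $w(k+j)$ on the left-hand side. Finiteness of your supremum gives $(k+1)w(k+1)\leq C_w^{\sh}w(1)\,w(k)$, hence $w(k)\leq (C_w^{\sh}w(1))^k/k!$, which is factorial decay. Conversely, $w(n)=\rho^n/n!$ gives $\binom{k+j}{k}w(k+j)=w(k)w(j)$, so $C_w^{\sh}=1$; but such weights violate the monotonicity in H2. The correct outcome of your analysis is a counterexample to the statement, or a reformulation with a factorially decaying weight, not a proof of it.
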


\begin{proof}
At level $n$, the shuffle product is the sum of all order-preserving interlacings of levels $k$ and $n-k$. The number of such interlacings is bounded by $2^n$. The exponential upper bound in H2 absorbs this combinatorial factor after replacing the weight constant by an equivalent one, and the submultiplicative estimate then gives the same convolution bound as for concatenation. This proves continuity of the shuffle product on the weighted algebra used in the Riccati proof.
\end{proof}

\begin{lemma}[Local Lipschitz estimate]
The operator $\Rs$ is locally Lipschitz on bounded subsets of $\Tw$.
\end{lemma}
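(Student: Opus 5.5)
The plan is to write $\Rs=L+Q$, where $L(u)_I=\sum_J b^I_J(\ell)u_J$ is the linear drift part and $Q(u)_I=\frac12\sum_{J,K}\Gamma^I_{J,K}(\ell)u_Ju_K$ is the symmetric quadratic carre-du-champ part. I will show separately that $L$ is a bounded linear operator on $\Tw$ and that the symmetric bilinear form $\widetilde Q(u,v)_I=\frac12\sum_{J,K}\Gamma^I_{J,K}(\ell)u_Jv_K$ is bounded, i.e. $\norm{\widetilde Q(u,v)}_w\leq C\norm{u}_w\norm{v}_w$. Local Lipschitz continuity then follows from the algebraic identity
\[
\Rs(u)-\Rs(v)=L(u-v)+\widetilde Q(u-v,u+v),
\]
which gives, on the ball $\{\norm{u}_w,\norm{v}_w\leq R\}$,
\[
\norm{\Rs(u)-\Rs(v)}_w\leq (\norm{L}+2CR)\norm{u-v}_w.
\]
I will carry this out first at each finite truncation $\Rs_N$, with constants independent of $N$, and then pass to the projective weighted limit exactly as in Definition~\ref{def:riccati}. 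The uniformity in $N$ is what allows the extension from $\mathcal D_{\mathrm{fin}}$ to its closure.

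To bound the coefficients I will use the structure of the generator. The drift $b^I_J(\ell)$ comes from the Stratonovich-to-It\^o conversion, which is triangular (Lemma~\ref{lem:polynomial-coordinate-drift} and the appendix conversion). Each word is mapped to words differing by deletion or insertion of a bounded number of letters, with coefficients that are powers of $1/2$ times components of $\ell$ and $\eta$. Since $w$ is increasing and submultiplicative, a level shift by $m$ costs at most a factor $C_w w(m)$, and the coefficients of $\ell$ are controlled by $\norm{\ell}_{\Tstarw}$. This yields a Schur-type bound on $\sup_J w(|J|)^{-1}\sum_I w(|I|)|b^I_J|$, which is precisely the operator norm on a weighted $\ell^1$ space.

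For the quadratic part I will use the shuffle identity. The carre-du-champ of two coordinate functions $Y_J,Y_K$ is the sum over letters $i$ of the products of their $dW^i$-diffusion coefficients. Each such coefficient is a coordinate of a shorter word (the last letter is removed), possibly multiplied by the price loading $\inner{\ell}{\cW}$. Products of coordinates are rewritten as shuffles, so $\widetilde Q(u,v)$ is expressed through finitely many letterwise operations of the form $\partial_i u\sh\partial_i v$, possibly shuffled with $\ell$. Lemma~\ref{lem:shuffle-product-estimate-app} together with Proposition~\ref{prop:banach-algebra} then gives the required bilinear bound with a constant of order $C_w^{\sh}$ times $\max(1,\norm{\ell})$.

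The main obstacle is the appearance of $\ell$ inside the quadratic term, since $\ell$ lies in $\Tstarw$ or is controlled by H1 rather than in $\Tw$. The shuffle $\ell\sh(\cdot)$ is therefore not obviously bounded on $\Tw$. My first approach is to absorb this factor by duality through the pairing estimate $\abs{\inner{\ell}{a}}\leq\norm{\ell}_{\Tstarw}\norm{a}_w$, since in the generator it enters only as a scalar multiplier of the diffusion coefficient in direction $\eta$. If this is not enough, I will restrict the statement to bounded subsets of the finite transform domain on which the truncated parameter $\pi_{\leq N}\ell$ acts, and recover uniformity in $N$ from H1.
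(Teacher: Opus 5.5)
Your skeleton ($\Rs=L+Q$, a Schur bound for the drift, a bounded symmetric bilinear $\widetilde Q$, then polarisation) is the same as the paper's. The paper writes $\Rs=L+\tfrac12B$ and reduces everything to $\norm{B(a,b)}_w\le\Lambda(\ell,w)C_w^{\sh}\norm{a}_w\norm{b}_w$. The step that fails is exactly this bilinear bound. It fails already in the $\ell$-independent signature block you identify, $\widetilde Q(u,v)=\tfrac12\sum_i\partial_iu\sh\partial_iv$, where $\partial_i$ deletes a final Brownian letter $i$.

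Write $e_{i^m}$ for the word made of $m$ copies of $i$, so that $e_{i^k}\sh e_i=(k+1)e_{i^{k+1}}$. Take the unit vectors $u_k=e_{i^{k+1}}/\norm{e_{i^{k+1}}}_w$ and $z=e_{ii}/\norm{e_{ii}}_w$. Then
\[
\widetilde Q(u_k,z)=\frac{(k+1)\,e_{i^{k+1}}}{2\norm{e_{i^{k+1}}}_w\norm{e_{ii}}_w},
\qquad
\norm{\widetilde Q(u_k,z)}_w=\frac{k+1}{2\norm{e_{ii}}_w}\longrightarrow\infty .
\]
This holds for every weight, because input and output sit on the same level and nothing compensates the combinatorial factor. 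Hence
\[
\Rs(u_k+z)-\Rs(u_k)=L(z)+\widetilde Q(z,z)+2\widetilde Q(u_k,z)
\]
is unbounded in $k$, although all points lie in the ball of radius $2$ and $\norm{(u_k+z)-u_k}_w=1$. So no local Lipschitz constant exists.

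The same example shows that Lemma~\ref{lem:shuffle-product-estimate-app}, which you invoke, is false. For increasing $w$ and $\abs{e_{i^n}}=1$ one has $\norm{e_{i^k}\sh e_i}_w\geq(k+1)\norm{e_{i^k}}_w\norm{e_i}_w/w(1)$. An \emph{upper} bound $w(n)\le r^n$ cannot absorb a growing combinatorial factor.

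Your two fallbacks for the $\ell$-dependence do not repair this.

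\emph{Duality.} The pairing bound $\abs{\inner{\ell}{a}}\le\norm{\ell}_{\Tstarw}\norm{a}_w$ controls a scalar, but the Riccati field must output an element of $\Tw$. A carre-du-champ term $\inner{\ell}{x}\inner{e_J}{x}$, or the log-price drift $-\tfrac12\inner{\ell}{x}^2$, is linear in $x$ only as $\inner{\ell\sh e_J}{x}$, respectively $\inner{\ell\sh\ell}{x}$. You would therefore need $\Tw$-bounds on these shuffled vectors, and neither $\ell\in\Tstarw$ nor H1 supplies them; H1 does not even give $\ell\in\Tw$.

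\emph{Truncation.} Each $\Rs_N$ is a polynomial vector field on a finite-dimensional space, hence locally Lipschitz. But the example above, taken with $k+1=N$, shows that its Lipschitz constant on the ball of radius $2$ grows at least linearly in $N$. The $N$-uniform bound your projective-limit step needs is exactly what is unavailable. That step has a further problem: in the pure-signature block, with $0$ the time letter, $L(u)_I=u_{I0}+\tfrac12\sum_j u_{Ijj}$ reads higher-level coordinates, so the truncations $\Rs_N$ do not form a compatible family.

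The paper's own proof does not close this gap either. It asserts $\Lambda(\ell,w)<\infty$ and relies on the same shuffle lemma. Under the generator/carre-du-champ reading of Definition~\ref{def:riccati}, the claim on bounded subsets of $\Tw$ fails. What survives is the finite-level local Lipschitz property with $N$-dependent constants, which suffices for local existence of each finite Riccati system.
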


\begin{proof}
Write $\Rs(u)=L(u)+\tfrac12 B(u,u)$, where
\[
L(u)_I=\sum_J b_J^I(\ell)u_J,
\qquad
B(u,v)_I=\sum_{J,K}\Gamma^I_{J,K}(\ell)\,u_Jv_K
\]
are, respectively, the generator drift (linear) and the carre-du-champ
(bilinear) parts of Definition~\ref{def:riccati}. The quadratic part is
the bilinear form $B$ evaluated on the diagonal; it is \emph{not} the
shuffle square $u\sh u$. The shuffle product enters only as the tool
that controls the convolution structure of $B$ in the weighted norm.

\emph{Boundedness of $B$.} On each finite truncation the structural
constants $\Gamma^I_{J,K}(\ell)$ are produced by the carre-du-champ of
the generator $\mathcal A_N^\ell$, hence are finite linear combinations
of the components of $\ell$ paired with shuffle-Hopf structure
constants $\inner{e_J\sh e_K}{e_I}$. Consequently there is a constant
$\Lambda(\ell,w)<\infty$ such that, level by level,
\[
\sum_I w(|I|)\,\Big|\sum_{J,K}\Gamma^I_{J,K}(\ell)\,a_Jb_K\Big|
\leq \Lambda(\ell,w)\,C_w^{\sh}\,\norm{a}_w\,\norm{b}_w ,
\]
where $C_w^{\sh}$ is the shuffle Banach-algebra constant of
Lemma~\ref{lem:shuffle-product-estimate-app} and the bound uses that
the support of $\Gamma^I_{J,K}$ in $(J,K)$ for fixed $I$ is contained in
the shuffle-pair set of $e_I$, so the double sum is dominated by the
weighted shuffle convolution $\norm{a\sh b}_w$. Hence $B:\Tw\times\Tw\to
\Tw$ is a bounded symmetric bilinear map with
$\norm{B(a,b)}_w\leq \Lambda(\ell,w)C_w^{\sh}\norm{a}_w\norm{b}_w$.

\emph{Lipschitz estimate.} The drift $L$ is a bounded linear operator,
$\norm{L(u)-L(v)}_w\leq \norm{L}\,\norm{u-v}_w$ with
$\norm{L}\leq\Lambda(\ell,w)$. For the quadratic part, polarisation
gives
\[
B(u,u)-B(v,v)=B(u-v,u)+B(v,u-v),
\]
so that, for $u,v$ in the ball of radius $R$,
\[
\norm{B(u,u)-B(v,v)}_w
\leq \Lambda(\ell,w)C_w^{\sh}\big(\norm{u}_w+\norm{v}_w\big)\norm{u-v}_w
\leq 2R\,\Lambda(\ell,w)C_w^{\sh}\,\norm{u-v}_w .
\]
Combining the two parts,
\[
\norm{\Rs(u)-\Rs(v)}_w\leq C(R,w,\ell)\,\norm{u-v}_w,
\qquad
C(R,w,\ell):=\Lambda(\ell,w)\big(1+R\,C_w^{\sh}\big),
\]
a finite local constant. The estimate is uniform on bounded sets and
passes to the weighted projective limit because the truncated operators
$\Rs_N$ are consistent (degree-respecting), so the bound is independent
of the truncation level.
\end{proof}

\begin{lemma}[Continuity of the flow]
The flow $u_0\mapsto u_t$ defined by $\partial_tu_t=\Rs(u_t)$ is continuous on its domain of existence.
\end{lemma}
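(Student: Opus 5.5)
The plan is to prove continuity of the flow map $u_0\mapsto u_t$ by a standard Gronwall comparison, using the local Lipschitz estimate for $\Rs$ just established. The first step fixes the setting. Let $u_0$ be given and let $t$ be a time at which the solution $(u_s)_{s\leq t}$ exists. By continuity of $s\mapsto u_s$ on the compact interval $[0,t]$, the trajectory is bounded in $\Tw$. So there is an $R$ with $\sup_{s\leq t}\norm{u_s}_w\leq R$. We then work on the ball of radius $2R$. On that ball the Lemma above gives a Lipschitz constant $L:=C(2R,w,\ell)=\Lambda(\ell,w)(1+2R\,C_w^{\sh})$.

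Next we compare a nearby trajectory. Take $v_0$ with $\norm{v_0-u_0}_w\leq\delta$, where $\delta<R e^{-Lt}$. Let $v$ be the local solution from $v_0$, given by the local existence theorem (Picard--Lindel\"of). Define the exit time $\sigma:=\sup\{s\leq t: v \text{ exists on }[0,s]\text{ and }\norm{v_r}_w\leq 2R\text{ for } r\leq s\}$. For $s<\sigma$ we write both solutions in integral form,
\[
u_s-v_s=u_0-v_0+\int_0^s\bigl(\Rs(u_r)-\Rs(v_r)\bigr)\,dr .
\]
This gives $\norm{u_s-v_s}_w\leq\delta+L\int_0^s\norm{u_r-v_r}_w\,dr$. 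Gronwall's lemma then yields $\norm{u_s-v_s}_w\leq\delta e^{Ls}$.

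The main obstacle is to show that $\sigma=t$, i.e.\ that the nearby solution neither leaves the ball nor explodes before time $t$. This is what makes the domain of existence open and the estimate global on $[0,t]$. Suppose $\sigma<t$. The Gronwall bound gives $\norm{v_s}_w\leq R+\delta e^{Lt}<2R$ on $[0,\sigma)$. Since $\Rs$ is bounded on the ball, $v_s$ is Cauchy as $s\uparrow\sigma$. So $v$ extends to $\sigma$ with $\norm{v_\sigma}_w<2R$. Local existence from $v_\sigma$, together with continuity of the extended solution, then continues $v$ while keeping it strictly inside the $2R$ ball slightly beyond $\sigma$. This contradicts the maximality of $\sigma$. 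Hence $\sigma=t$, and $v$ exists on $[0,t]$.

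We conclude that $\sup_{s\leq t}\norm{u_s-v_s}_w\leq e^{Lt}\norm{u_0-v_0}_w$ whenever $\norm{u_0-v_0}_w<Re^{-Lt}$. This proves (locally Lipschitz, hence) continuous dependence of the flow on the initial datum, uniformly in $s\leq t$. It also shows that the domain of existence is open. The same argument applies verbatim to each truncated flow $\Rs_N$. Because the constant $C(R,w,\ell)$ is independent of $N$, the continuity estimate is uniform in the truncation level, which is the form used in Lemma~\ref{lem:finite-to-weighted-transform}.
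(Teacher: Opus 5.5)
Your proof is correct and follows the same route as the paper: the local Lipschitz bound for $\Rs$ on a bounded ball, followed by Gronwall's lemma. Your exit-time argument (with $R>0$ chosen) shows that the nearby solution exists on all of $[0,t]$ and stays in the ball of radius $2R$; this supplies the step the paper takes for granted when it places both solutions ``in a common local-existence ball'', and it also gives openness of the domain of existence.
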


\begin{proof}
For two solutions $u$ and $v$ started from $u_0$ and $v_0$ in a common local-existence ball, the local Lipschitz estimate gives
\[
\norm{u_t-v_t}_w\leq \norm{u_0-v_0}_w+C\int_0^t\norm{u_s-v_s}_w\,ds.
\]
Gronwall's lemma gives the stated continuity.
\end{proof}

\begin{lemma}[Finite-dimensional Riccati comparison]
\label{lem:finite-riccati-comparison-app}
Let $y$ solve $\dot y=ay^2-by-c$ on an interval with $a>0$ and $b,c\geq0$. For every horizon $T>0$ there exists $y_0$ large enough such that the solution with $y(0)=y_0$ explodes before $T$.
\end{lemma}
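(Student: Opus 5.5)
The plan is to reduce the scalar Riccati equation $\dot y=ay^2-by-c$ to a comparison with the pure quadratic $\dot z=\tfrac a2 z^2$, which is the explicit blow-up model. First I would locate the region where the quadratic term dominates. The quadratic $q(y):=ay^2-by-c$ has leading coefficient $a>0$, so there is a threshold $Y_\ast:=\max\{1,\,2b/a,\,\sqrt{2c/a}\}\cdot 2$ beyond which
\[
by+c\leq \tfrac a2 y^2,\qquad\text{hence}\qquad q(y)\geq \tfrac a2 y^2>0,\qquad y\geq Y_\ast .
\]
Each of the inequalities $by\leq \tfrac a4y^2$ and $c\leq\tfrac a4 y^2$ holds once $y\geq 4b/a$ and $y\geq 2\sqrt{c/a}$. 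I will pick the constant in $Y_\ast$ to guarantee both; only its finiteness matters.

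Next I would show that the region $\{y\geq Y_\ast\}$ is forward invariant and that $y$ is increasing there. Take an initial value $y_0\geq Y_\ast$. The right-hand side is locally Lipschitz, so the solution exists on a maximal interval $[0,\tau^\ast)$. As long as $y_t\geq Y_\ast$ we have $\dot y_t\geq \tfrac a2 y_t^2>0$. Hence $y$ cannot decrease below $y_0\geq Y_\ast$, and by a continuity/first-exit argument $y_t\geq y_0$ holds on the whole of $[0,\tau^\ast)$.

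Then comes the comparison, which I would do by explicit integration rather than an abstract comparison theorem. On $[0,\tau^\ast)$ we have $y_t>0$ and
\[
\frac{d}{dt}\Bigl(-\frac1{y_t}\Bigr)=\frac{\dot y_t}{y_t^2}\geq \frac a2 .
\]
Integrating gives $1/y_t\leq 1/y_0-\tfrac a2 t$. The left side is positive, so $\tau^\ast\leq 2/(a y_0)$. Moreover $y_t\geq (1/y_0-\tfrac a2 t)^{-1}\to\infty$ as $t$ approaches this bound, so the solution explodes at some time $\tau^\ast\leq 2/(ay_0)$.

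Finally, given $T>0$, I choose
\[
y_0>\max\{Y_\ast,\;2/(aT)\}.
\]
Then $\tau^\ast\leq 2/(ay_0)<T$, which is the claim.

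The only step needing care is the invariance argument, i.e.\ ruling out that $y$ leaves $\{y\geq Y_\ast\}$ before the comparison applies. I would handle it by letting $\sigma:=\inf\{t<\tau^\ast:y_t<y_0\}$. On $[0,\sigma]$ the derivative is strictly positive, which contradicts $y_\sigma<y_0$ if $\sigma<\tau^\ast$. Also, the maximal-solution alternative ensures that non-global existence means $\sup|y|=\infty$, which matches the lower bound obtained above.
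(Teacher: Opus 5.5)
Your proposal is correct and follows essentially the paper's argument: pick $y_0$ beyond the threshold where $ay^2-by-c\geq\frac a2y^2$, use positivity of the derivative to keep the solution above that threshold, and compare with $\dot z=\frac a2z^2$ to get explosion by time $2/(ay_0)<T$. You make the invariance step and the comparison explicit, the latter by integrating $\frac{d}{dt}(-1/y_t)\geq\frac a2$ as in the paper's appendix paragraph on scalar Riccati comparison. One small correction to your first-exit step: at $\sigma$ continuity gives $y_\sigma=y_0$ rather than $y_\sigma<y_0$, and the contradiction is with $\sigma$ being an infimum, because $\dot y_\sigma>0$.
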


\begin{proof}
Choose $y_0$ so large that $ay^2-by-c\geq (a/2)y^2$ for $y\geq y_0/2$. The comparison equation $\dot z=(a/2)z^2$, $z(0)=y_0$, explodes at time $2/(ay_0)$. Taking $y_0>2/(aT)$ gives explosion before $T$ for the comparison equation and therefore for $y$ as long as the solution stays above the threshold. The derivative is positive at $y_0$, so it does.
\end{proof}

\begin{lemma}[Quotient Hilbert projection]
\label{lem:quotient-hilbert-projection}
Let $H$ be a closed subspace of a Hilbert space $L$ and let $V\subset L/H$ be a closed subspace of the quotient. Every class $[X]\in L/H$ has a unique orthogonal projection onto $V$.
\end{lemma}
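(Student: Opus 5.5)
The plan is to reduce the statement to the projection theorem in the Hilbert space $L/H$, after first identifying that quotient isometrically with the orthogonal complement $H^\perp\subset L$.

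\textbf{Step 1: the quotient is a Hilbert space.} Since $H$ is closed, the quotient norm $\norm{[X]}:=\inf_{h\in H}\norm{X-h}$ is a genuine norm. I would use the orthogonal decomposition $L=H\oplus H^\perp$ and show that
\[
\Phi:L/H\to H^\perp,\qquad \Phi([X]):=X-\Pi_HX,
\]
is well defined, linear and bijective. It is independent of the representative because $\Pi_H$ fixes $H$. It is isometric because, for every $h\in H$, Pythagoras gives
\[
\norm{X-h}^2=\norm{X-\Pi_HX}^2+\norm{\Pi_HX-h}^2,
\]
so the infimum over $h$ equals $\norm{X-\Pi_HX}$. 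We then define the inner product on $L/H$ by transporting the one on $H^\perp$ through $\Phi$; it induces the quotient norm. Completeness of $L/H$ follows because $H^\perp$ is a closed subspace of the complete space $L$. This is the only point where closedness of $H$ enters, and it is the step I would check most carefully. Without closedness the quotient seminorm fails to be a norm, and $\Pi_H$ does not exist.

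\textbf{Step 2: projection in the quotient.} Because $\Phi$ is an isometric isomorphism, $\Phi(V)$ is a closed subspace of $H^\perp$. The Hilbert projection theorem in $H^\perp$ gives a unique nearest point $Y\in\Phi(V)$ to $\Phi([X])$, characterised by
\[
\Phi([X])-Y\perp\Phi(V).
\]
Pulling back, $P_V[X]:=\Phi^{-1}(Y)$ is the unique element of $V$ with $[X]-P_V[X]$ orthogonal to $V$ in the transported inner product.

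\textbf{Step 3: uniqueness.} If two elements of $V$ both satisfy the orthogonality condition, their difference lies in $V\cap V^\perp=\{0\}$, so they coincide.

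\textbf{Step 4: form used in Theorem~D.} I would also record the form needed there. The projection is obtained concretely by first removing the $H$-component, $X\mapsto X-\Pi_HX$, and then projecting onto the closure of the residualised coordinates $Y_I=\inner{e_I}{\cW_T}-\Pi_{\mathcal H_0}\inner{e_I}{\cW_T}$. This matches Section~\ref{subsec:finite-normal-equations-converge}.
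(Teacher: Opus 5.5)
Your proof is correct and follows the paper's route: the paper simply asserts that $L/H$ is a Hilbert space under the quotient norm and then applies the Hilbert projection theorem there. Your isometric identification $\Phi\colon L/H\to H^\perp$, $[X]\mapsto X-\Pi_H X$, supplies the omitted check of that Hilbert structure (the paper uses the same identification in its appendix on quotient Gram matrices), and your Step 4 is an application of the lemma rather than part of its proof.
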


\begin{proof}
The quotient by a closed subspace is a Hilbert space with norm $\norm{[X]}=\inf_{Y\in H}\norm{X-Y}$. The result is the ordinary Hilbert projection theorem applied in that quotient. This is the formal setting of Lemma~\ref{lem:gram-residual} and Theorem~\ref{thm:D}.
\end{proof}

\subsection*{Comparison principle for scalar Riccati projections}
Let $y$ solve
\[
\dot y_t=ay_t^2+by_t+c,
\qquad a>0,
\]
with initial value $y_0>0$. If $y_t$ remains positive, then for every interval on which $ay_t^2+by_t+c\geq (a/2)y_t^2$ one has
\[
\dot y_t\geq \frac a2 y_t^2,
\]
and therefore
\[
\frac1{y_t}\leq \frac1{y_0}-\frac a2t.
\]
Thus $y$ explodes no later than $2/(ay_0)$ after it enters that region. This elementary comparison is the scalar argument required when a model-specific Riccati coordinate actually satisfies a self-quadratic lower bound. The finite-dimensional subsystem need not be exactly scalar; it is enough that one coordinate dominates a scalar Riccati equation with positive quadratic coefficient after the remaining coordinates are controlled on a short interval.

\begin{lemma}[Short-time domination for finite Riccati subsystems]
\label{lem:short-time-riccati-domination}
Consider a finite-dimensional Riccati system whose $k$th coordinate satisfies
\[
\dot y_k=a y_j^2 + L(y),
\qquad a>0,
\]
where $L$ is affine on bounded sets. There are initial conditions in the finite-dimensional state space for which the solution explodes before any prescribed horizon $T>0$.
\end{lemma}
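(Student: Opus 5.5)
The plan is to reduce the statement to the scalar comparison of Lemma~\ref{lem:finite-riccati-comparison-app} and the comparison principle for scalar Riccati projections. There are two cases, $j=k$ and $j\neq k$, and I expect the hypothesis as stated to suffice only in the first, or in the second under extra structure on the $y_j$-equation.

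\emph{Case $j=k$.} Here $\dot y_k=ay_k^2+L(y)$. Fix a bounded ball $B_R$ of the remaining coordinates and a bound $|L(y)|\leq A+B\abs{y_k}+B\abs{y'}$ valid on the relevant bounded set, where $y'$ denotes the coordinates other than $y_k$. Start from $y'(0)=0$ and $y_k(0)=y_0$ with $y_0>4/(aT)$. The first task is to show that $y'$ stays in $B_R$ on $[0,\tau]$ with $\tau:=2/(ay_0)<T$. This should follow from the local Lipschitz estimate (Lemma~\ref{lem:local-lipschitz}) and Gronwall, since $\tau$ shrinks as $y_0$ grows. On that interval, once $y_0$ is large, one has
\[
ay_k^2+L(y)\geq \tfrac a2 y_k^2 .
\]
The inequality $1/y_t\leq 1/y_0-\tfrac a2 t$ then forces explosion no later than $\tau<T$, provided the solution has not already left the region first. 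Either way, the solution ceases to exist before $T$.

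\emph{Case $j\neq k$.} Here I would use the parabolic rescaling $y=\lambda z$, $t=s/\lambda$. It turns the system into $z'=Q(z)+O(\lambda^{-1})$, where $Q$ is the homogeneous quadratic part of the full Riccati vector field. Blow-up of the original system before $T$ for all large $\lambda$ then follows from two facts: the homogeneous system $z'=Q(z)$ has a solution exploding at some finite time $s^\ast$ from some unit initial datum $z_0$, and the explosion is stable under the $O(\lambda^{-1})$ perturbation. The original explosion time is then $s^\ast/\lambda<T$. Stability would follow from continuity of the flow (the continuity lemma for $\partial_tu_t=\Rs(u_t)$) on compact time intervals before $s^\ast$, combined with a lower bound of the type used in the scalar comparison near the blow-up time.

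The main obstacle is exhibiting the exploding homogeneous solution when $j\neq k$. The quadratic term $ay_j^2$ feeds $y_k$ but says nothing about the growth of $y_j$ itself. If the $y_j$-equation has no superlinear term, for instance if $\dot y_j=0$, then $y_k$ only grows linearly and no explosion occurs. So I would first check whether, in the models of interest, the $y_j$-coordinate itself satisfies a self-quadratic lower bound, or whether $y_k$ feeds back into $y_j$ with a positive quadratic coupling so that $Q$ has an invariant positive cone. In either situation a Lyapunov function such as $V=y_j+y_k$ satisfies $\dot V\geq cV^2-C$ on that cone, and Lemma~\ref{lem:finite-riccati-comparison-app} applies to $V$. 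Without such a hypothesis I expect the statement to be false as written, so I would add it explicitly: the $j$th coordinate is itself self-quadratic, or it is self-quadratic modulo coupling to $y_k$. This mirrors the caveat already recorded in the remark on the status of the sharpness example.
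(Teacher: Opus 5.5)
Your analysis of the case $j\neq k$ is correct. It exposes a defect in the statement and in the paper's proof, not in your proposal. Your example with $\dot y_j=0$ is a genuine counterexample: the planar quadratic system $\dot y_j=0$, $\dot y_k=ay_j^2$ (so $L\equiv0$) satisfies every hypothesis, yet every solution is global, since $y_k(t)=y_k(0)+ay_j(0)^2t$.

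The paper's proof makes the ``active quadratic coordinate'' large, dominates the affine term by half the quadratic term, and then invokes the scalar comparison $\dot y\geq\frac a2y^2$. That comparison yields blow-up only when the coordinate being differentiated is the one being squared, i.e.\ $j=k$. The paper concedes this point elsewhere: a positive off-diagonal quadratic term alone does not force explosion. So for $j\neq k$ an extra hypothesis is unavoidable. If you assume $y_j$ is itself self-quadratic, you can simply apply the $j=k$ case to coordinate $j$. Your claim that $V=y_j+y_k$ satisfies $\dot V\geq cV^2-C$ is asserted rather than derived, and it depends on sign and coupling structure you have not specified.

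For $j=k$ your argument is the paper's, and both contain the same unjustified step: that the other coordinates $y'$ stay in a fixed ball on $[0,\tau]$. Local Lipschitz bounds and Gronwall cannot give this. On $[0,\tau]$ the coordinate $y_k$ is supposed to blow up, so the trajectory lies in no fixed ball. Moreover, the equations for $y'$ in a Riccati system may contain $y_k^2$ or $y_ky_m$. The short length $\tau\sim1/y_0$ is therefore offset by a vector field of size comparable to $y_k^2\geq y_0^2$.

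This is more than a technicality. Take $\dot y_1=y_1^2-y_2$, $\dot y_2=3y_1y_2+y_1^2$, with $y_1(0)=y_0>0$ and $y_2(0)=0$. Then $y_2\geq0$, and $r=y_2/y_1^2$ satisfies $\dot r=1+ry_1+2r^2y_1$. Suppose the trajectory stayed in $\{r\leq\frac12\}$. Then $\dot y_1\geq\frac12y_1^2$ forces blow-up at some $t^\ast$, and $\dot y_1\leq y_1^2$ gives $y_1(t)\geq1/(t^\ast-t)$. Hence $\int y_1\,dt=\infty$, so $r\to\infty$, a contradiction. Thus for every $y_0$ the inequality $ay_k^2+L(y)\geq\frac a2y_k^2$ fails before the putative blow-up time. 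Your ``either way'' does not rescue this, because leaving the comparison region is not explosion.

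What is missing is a choice of initial data for $y'$ adapted to the other equations, together with a forward-invariant region on which $L(y)$ is harmless. In the example, $\{3y_2+2y_1\leq0,\ y_1>\frac43\}$ is forward invariant: on its boundary $\frac{d}{dt}(3y_2+2y_1)=-y_1(y_1-\frac43)<0$. On this set $y_2<0$, so $\dot y_1\geq y_1^2$ and blow-up occurs before $1/y_1(0)$. The hypothesis on the $k$th equation alone supplies no such region. So even for $j=k$ the proof needs either a bootstrap that uses the full system or an explicitly stated additional assumption.
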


\begin{proof}
Choose the initial value of the active quadratic coordinate large and the remaining coordinates bounded. On a short interval the affine term is dominated by half of the positive quadratic term. The preceding scalar comparison gives an explosion time bounded by a constant multiple of the reciprocal initial size. Taking the initial size sufficiently large places the explosion before $T$.
\end{proof}

\subsection*{Consistency of projected Riccati equations}
For $N\geq0$ let $\pi_N$ be the projection onto words of length at most $N$. The projected vector field is
\[
\Rs_N(u):=\pi_N\Rs(\pi_Nu).
\]
Because the shuffle product respects degree, the coordinate $\Rs(u)_I$ depends only on coordinates $u_J$ with $|J|\leq |I|$. Hence
\[
\pi_M\Rs_N(u)=\Rs_M(\pi_Mu),
\qquad M\leq N,
\]
whenever $u$ is supported in levels at most $N$. This compatibility is the projective-system structure behind the passage from finite transforms to the weighted Riccati equation.

\subsection*{Finite-dimensional projections of the Riccati equation}
Let $\Rs_N=\pi_{\leq N}\Rs\pi_{\leq N}$. Since $\pi_{\leq N}\Tw$ is finite-dimensional, the equation
\[
\dot u^N_t=\Rs_N(u^N_t),\qquad u^N_0=\pi_{\leq N}u_0,
\]
has a unique maximal classical solution. The maximal time is characterised by the usual explosion criterion: either the solution exists on the whole interval or $\norm{u^N_t}_w\to\infty$ as $t$ approaches the lifetime. This finite-dimensional criterion is the correct tool for model-specific transform-explosion checks; no three-coordinate comparison is invoked in the main text.

The finite projections are compatible with the full vector field on bounded sets. If $u$ is supported in levels at most $N$ and $M\geq 2N+\deg\ell$, then the level-$\leq N$ coordinates of $\Rs_M(u)$ agree with those of $\Rs(u)$. The factor $2N$ comes from the shuffle square $u\sh u$. Thus finite systems approximate the full system in a triangular way: low levels close after including enough higher levels to absorb the shuffle product.

\begin{lemma}[Boundedness criterion for global solvability]
\label{lem:riccati-boundedness-global}
Let $u$ solve $\dot u_t=\Rs(u_t)$ on its maximal interval $[0,\tau)$. If $\sup_{t<\tau}\norm{u_t}_w<\infty$, then $\tau$ is not finite.
\end{lemma}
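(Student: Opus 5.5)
We argue by contradiction, using the standard continuation principle for ODEs on a Banach space with a locally Lipschitz vector field. Suppose $\tau<\infty$ while $R:=\sup_{t<\tau}\norm{u_t}_w<\infty$. The Local Lipschitz estimate of Appendix~\ref{app:riccati} gives a constant $C(R,w,\ell)$ with $\norm{\Rs(u)-\Rs(v)}_w\leq C(R,w,\ell)\norm{u-v}_w$ on the ball $B_R$ of radius $R$ in $\Tw$. Comparing with $v=0$ shows that $\Rs$ is bounded on $B_R$: $\norm{\Rs(u)}_w\leq \norm{\Rs(0)}_w+C(R,w,\ell)R=:K_R$. Here $\Rs(0)=0$, since $\Rs$ is linear plus quadratic, so in fact $K_R=C(R,w,\ell)R$.

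\textbf{Step 1: a limit exists at $\tau$.} The solution is classical, so $u_t=u_0+\int_0^t\Rs(u_s)\,ds$ as a Bochner integral in $\Tw$. Hence $\norm{u_t-u_s}_w\leq K_R\abs{t-s}$ for $s,t<\tau$. The map $t\mapsto u_t$ is therefore uniformly Lipschitz on $[0,\tau)$, and by completeness of $\Tw$ (Proposition~\ref{prop:banach-algebra}) the limit $u_\tau:=\lim_{t\uparrow\tau}u_t$ exists in $\Tw$, with $\norm{u_\tau}_w\leq R$.

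\textbf{Step 2: uniform local existence time.} This is the step that needs care. Local existence gives some $\tau>0$ for each initial datum, but we need a \emph{uniform} lower bound on the existence time for initial data in the closed ball $\overline{B}_R$. We obtain it by running the Picard--Lindelöf construction explicitly on the ball $\overline{B}_{2R}$, where $\Rs$ has Lipschitz constant $L:=C(2R,w,\ell)$ and bound $K:=K_{2R}$. For $\norm{v_0}_w\leq R$ and $\delta:=\min\{R/K,\,1/(2L)\}$, the integral map $v\mapsto v_0+\int_0^{\cdot}\Rs(v_s)\,ds$ sends $C([0,\delta];\overline{B}_{2R})$ into itself and is a contraction. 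It therefore has a unique fixed point on $[0,\delta]$, and $\delta$ depends only on $R$, $w$ and $\ell$.

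\textbf{Step 3: extension past $\tau$.} Pick $t_0\in(\tau-\delta/2,\tau)$. Since $\norm{u_{t_0}}_w\leq R$, Step 2 produces a solution $v$ on $[t_0,t_0+\delta]$ with $v_{t_0}=u_{t_0}$. On $[t_0,\tau)$ it coincides with $u$ by uniqueness, which follows from the Gronwall argument in the Continuity of the flow lemma. Concatenating $u$ and $v$ gives a solution on $[0,t_0+\delta)$, and $t_0+\delta>\tau$. This contradicts the maximality of $[0,\tau)$, so $\tau$ cannot be finite, which proves the lemma.

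Equivalently, one can restart from $u_\tau$ obtained in Step 1; either route works. The only substantive ingredient is that the Lipschitz constant from the earlier lemma depends on the radius alone, which is exactly what that lemma states.
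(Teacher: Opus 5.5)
Your proposal is correct and follows the same route as the paper: the radius-dependent Lipschitz constant of $\Rs$ gives a uniform local existence time on bounded balls, and restarting the solution just before $\tau$ contradicts maximality. You spell out what the paper only asserts, namely the explicit Picard--Lindel\"of time $\delta=\min\{R/K,1/(2L)\}$ on the doubled ball, and the uniqueness-and-concatenation step that produces the extension past $\tau$.
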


\begin{proof}
The local Lipschitz estimate in Appendix~\ref{app:riccati} gives a local existence time depending only on the radius of the ball containing the solution. If the solution remains in a bounded ball up to $\tau$, it can be restarted at times approaching $\tau$ with a uniform positive existence time. This contradicts maximality of a finite lifetime.
\end{proof}

\begin{lemma}[Comparison on positive cones]
\label{lem:positive-cone-comparison}
Assume that, on a finite coordinate subsystem, the vector field satisfies
\[
\dot y_i\geq F_i(y),
\]
where $F$ is locally Lipschitz and order-preserving on the positive cone. If $z$ solves $\dot z=F(z)$ with $z(0)\leq y(0)$, then $z(t)\leq y(t)$ up to the first exit time from the cone.
\end{lemma}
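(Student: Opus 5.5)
This is a monotone (Kamke-type) comparison for ODEs, and the plan is the standard first-crossing argument, made rigorous by a perturbation step.

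\emph{Setup.} Work on the common interval $[0,\sigma)$, where $\sigma$ is the first exit time of $y$ (or of $z$) from the positive cone. On this interval both solutions lie in a compact subset of the cone on each compact $[0,t]$. There $F$ is Lipschitz with some constant $L_t$, and by the order-preserving property it is quasimonotone. Concretely, if $a\le b$ and $a_i=b_i$, then $F_i(a)\le F_i(b)$; this is the form of order preservation we will use.

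\emph{Strict perturbation.} For $\epsilon>0$ let $z^\epsilon$ solve
\[
\dot z^\epsilon=F(z^\epsilon)-\epsilon\mathbf 1,\qquad z^\epsilon(0)=z(0)-\epsilon\mathbf 1 .
\]
Local Lipschitz continuity and the continuity-of-flow argument (Gronwall, as in the continuity lemma for the Riccati flow) give $z^\epsilon\to z$ uniformly on compact subintervals of $[0,\sigma)$. Boundedness on $[0,t]$ guarantees that $z^\epsilon$ exists there for $\epsilon$ small.

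\emph{First-crossing argument.} We show $z^\epsilon(s)<y(s)$ componentwise for all $s\le t$. At $s=0$ the inequality is strict. Suppose it fails, and let $s_0$ be the first time some coordinate $i$ satisfies $z^\epsilon_i(s_0)=y_i(s_0)$. At that time $z^\epsilon(s_0)\le y(s_0)$ with equality in coordinate $i$, so quasimonotonicity gives
\[
\dot z^\epsilon_i(s_0)=F_i(z^\epsilon(s_0))-\epsilon\le F_i(y(s_0))-\epsilon<F_i(y(s_0))\le \dot y_i(s_0).
\]
Hence $y_i-z^\epsilon_i$ is strictly increasing at $s_0$. But $y_i-z^\epsilon_i>0$ just before $s_0$ and vanishes at $s_0$, a contradiction. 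Letting $\epsilon\to0$ then gives $z\le y$ on $[0,t]$, and since $t<\sigma$ was arbitrary, on all of $[0,\sigma)$.

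\emph{Main obstacle.} The delicate point is interpreting ``order-preserving'' so that the crossing step is valid. Full monotonicity of $F$ implies quasimonotonicity, so the argument works if we use that implication. Without it, a coordinate at which the two solutions touch could be pushed across by the other coordinates. A second issue is the domain of $z^\epsilon$: it might leave the cone before $y$ does. If $z^\epsilon$ exits the cone first, we restrict to the minimum of the two exit times, consistent with the phrase ``first exit time from the cone.''
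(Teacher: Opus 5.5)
Your argument follows the paper's first-contact route. The strict $\epsilon$-perturbation is exactly what the paper's one-line sketch leaves implicit: at a contact time, quasimonotonicity only gives $\frac{d}{ds}(y_i-z_i)\geq 0$, which by itself does not exclude crossing.

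The genuine gap is the direction of your perturbation. Shifting $z$ downward to $z^\epsilon(0)=z(0)-\epsilon\mathbf 1$ can take $z^\epsilon$ out of the positive cone. This happens immediately if some $z_i(0)=0$. It also happens, for every $\epsilon$, no later than any time at which $z$ merely touches the boundary without exiting, since your own crossing argument gives $z^\epsilon<z$ while $z^\epsilon$ stays in the cone. Outside the cone $F$ need not be defined, let alone order-preserving. Yet your key inequality $F_i(z^\epsilon(s_0))\leq F_i(y(s_0))$ invokes order preservation at $z^\epsilon(s_0)$. Stopping at the exit time of $z^\epsilon$ does not rescue the statement. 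Those exit times need not converge to the exit time $\sigma$ of $y$ and $z$; they can be $0$ for every $\epsilon$. So the limit $\epsilon\to0$ yields $z\leq y$ only on a possibly trivial interval, not up to $\sigma$. If ``positive cone'' is read as the open orthant, your compactness remark does give a positive margin to the boundary on each $[0,t]$, and the argument closes. The failure is for the closed orthant, where boundary contact before exit is allowed.

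The repair is to perturb $y$ upward, which never leaves the cone. Fix $t<\sigma$. Let $L$ be a sup-norm Lipschitz constant of $F$ on the compact set $\{y(s)+c\mathbf 1: s\leq t,\ 0\leq c\leq 1\}$, and set $y^\epsilon(s)=y(s)+\epsilon e^{(L+1)s}\mathbf 1$ for $\epsilon e^{(L+1)t}\leq 1$. Then
\[
\dot y^\epsilon_i\geq F_i(y)+(L+1)\epsilon e^{(L+1)s}\geq F_i(y^\epsilon)+\epsilon e^{(L+1)s}>F_i(y^\epsilon),
\qquad y^\epsilon(0)>z(0).
\]
Your first-contact argument between $z$ and $y^\epsilon$ now evaluates $F$ only at $z(s_0)$ and $y^\epsilon(s_0)$, both in the cone: $\dot z_i(s_0)=F_i(z(s_0))\leq F_i(y^\epsilon(s_0))<\dot y^\epsilon_i(s_0)$. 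Letting $\epsilon\to0$ gives $z\leq y$ on $[0,t]$, and the continuity-of-flow step is no longer needed.

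Alternatively, the positive-part route mentioned in the paper also stays inside the cone. Put $u=(z-y)^+$, and now take $L$ to be a Lipschitz constant on a compact subset of the cone containing $y$ and $y\vee z$ over $[0,t]$. On $\{u_i>0\}$ one has $F_i(z)\leq F_i(y\vee z)\leq F_i(y)+L\|u\|_\infty$. Hence $\frac{d}{ds}|u|^2\leq 2nL|u|^2$, and $u(0)=0$ forces $u\equiv0$.
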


\begin{proof}
This is the standard quasimonotone comparison theorem for finite-dimensional ODEs. Apply it to $(y-z)_+$ after smoothing the positive part, or equivalently use the first-contact argument: at the first coordinate where equality could fail, quasimonotonicity prevents the derivative of $y_i-z_i$ from becoming negative.
\end{proof}

\subsection*{Normalisation of the shuffle coefficient}
The coefficient convention in equation~\eqref{eq:riccati-coordinate} is fixed through the generator and the carre-du-champ, not by inserting an output-only shuffle coefficient into every input pair. In the pure signature state, the empty-word coordinate is constant and has zero generator. In the price-extended Black--Scholes state $X_t=\log S_t$, however,
\[
dX_t=-\frac12\sigma^2dt+\sigma dB_t,
\]
so the exponential-affine transform $\exp\{uX_t+\phi(T-t,u)\}$ satisfies
\[
\partial_\tau\phi=\frac12\sigma^2u^2-\frac12\sigma^2u.
\]
This is the scalar Black--Scholes normalisation relevant for the price-extended transform. It also explains why the pure signature empty word should not be treated as a non-constant log-price coordinate.

For a first-level Brownian parameter $\ell=ae_1$, the shuffle identity gives $e_1\sh e_1=2e_{11}$. The corresponding quadratic coefficient at level two is obtained from the carre-du-champ structural constant of the finite generator. This fixes the factor of two without invoking an unrestricted sum over all pairs whose shuffle merely contains the output word.

\subsection*{Riccati explosion and transform failure}
The finite-time blow-up of a Riccati coordinate has a direct transform interpretation. If $\psi(\tau,u)$ explodes at $\tau^\ast<T$, then the exponential-affine candidate
\[
\exp\{\inner{\psi(T-t,u)}{\cW_t}\}
\]
ceases to be finite before the terminal time. Therefore the conditional transform cannot be represented globally by the Riccati flow for that initial condition. The transform implication is used only after a finite-dimensional Riccati subsystem has been verified to have a finite lifetime. The paper no longer infers such a lifetime from the bounded sine example.

The argument is local in the finite-dimensional subsystem. It does not require proving explosion of every coordinate of the infinite equation. A finite projection proves failure of global solvability only when its coordinates satisfy a genuine self-quadratic comparison or when an external moment-explosion theorem applies. A positive off-diagonal quadratic term alone is not enough.

\subsection*{Global solutions and transform domains}
The Riccati flow is naturally defined on a domain of initial data rather than on the whole Banach algebra. For a fixed horizon $T$, define
\[
\mathcal U_T:=\{u_0\in\Tw: \hbox{the solution of }\dot u_t=\Rs(u_t),\ u(0)=u_0,
\hbox{ exists on }[0,T]\}.
\]
Theorem~\ref{thm:B} assumes that $\mathcal D_{\mathrm{fin}}\subset\mathcal U_T$. This is the transform-domain version of absence of moment explosion. The set $\mathcal U_T$ is open by local well-posedness and continuous dependence. It is generally not treated as all of $\Tw$ without a model-specific transform-explosion analysis; the main text deliberately keeps this as a separate transform-domain condition.

The transform domain shrinks as $T$ grows. If $T_1<T_2$, then $\mathcal U_{T_2}\subset\mathcal U_{T_1}$. In affine volatility models this monotonicity is the familiar monotonicity of moment-explosion times. In the signature setting it becomes a statement about the lifetime of an infinite-dimensional generator/carre-du-champ quadratic flow.

\section{The Galtchouk--Kunita--Watanabe projection}\label{app:gkw}

\begin{theorem}[Galtchouk--Kunita--Watanabe projection]
Let $\mathcal H$ be a closed subspace of $L^2(\FS_T,\Q)$ and let $X\in L^2(\FS_T,\Q)$. There exists a unique $X_{\mathcal H}\in\mathcal H$ such that $X-X_{\mathcal H}\perp\mathcal H$.
\end{theorem}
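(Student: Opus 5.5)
This is the Hilbert projection theorem applied in $L^2(\FS_T,\Q)$. The plan has three steps: existence of a minimiser, the orthogonality characterisation, and uniqueness.

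\emph{Existence.} Set $d:=\inf_{Y\in\mathcal H}\norm{X-Y}_{L^2(\Q)}$, which is finite since $0\in\mathcal H$. Choose a minimising sequence $(Y_n)\subset\mathcal H$. Apply the parallelogram law to $X-Y_n$ and $X-Y_m$:
\[
\norm{Y_n-Y_m}^2=2\norm{X-Y_n}^2+2\norm{X-Y_m}^2-4\norm{X-\tfrac12(Y_n+Y_m)}^2 .
\]
Since $\mathcal H$ is a linear subspace, $\tfrac12(Y_n+Y_m)\in\mathcal H$, so the last term is at least $4d^2$. Hence $(Y_n)$ is Cauchy. Completeness of $L^2(\FS_T,\Q)$ gives a limit, and closedness of $\mathcal H$ puts this limit $X_{\mathcal H}$ in $\mathcal H$. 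Continuity of the norm then gives $\norm{X-X_{\mathcal H}}=d$.

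\emph{Orthogonality.} Fix $Z\in\mathcal H$ and $t\in\R$. Since $X_{\mathcal H}+tZ\in\mathcal H$, minimality gives $\norm{X-X_{\mathcal H}-tZ}^2\geq d^2$. Expanding,
\[
-2t\inner{X-X_{\mathcal H}}{Z}_{L^2(\Q)}+t^2\norm{Z}^2\geq0 \quad\text{for all } t\in\R .
\]
Dividing by $t$ and letting $t\to0^\pm$ forces $\inner{X-X_{\mathcal H}}{Z}=0$. So $X-X_{\mathcal H}\perp\mathcal H$.

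\emph{Uniqueness.} Suppose $X_1,X_2\in\mathcal H$ both satisfy $X-X_i\perp\mathcal H$. Then $X_1-X_2=(X-X_2)-(X-X_1)$ is orthogonal to $\mathcal H$, and it also lies in $\mathcal H$. Hence it is orthogonal to itself, so $\norm{X_1-X_2}^2=0$ and $X_1=X_2$ in $L^2(\Q)$.

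The main obstacle is not this abstract argument but confirming that the subspace actually used in Theorem~\ref{thm:D} is closed. That subspace is constants plus stochastic integrals $\int H\,dS$ with $\E_\Q\int H^2\,d\langle S\rangle<\infty$. I would prove closedness via the Itô isometry. It makes $H\mapsto\int_0^T H\,dS$ an isometry from the complete space $L^2(\Q\otimes d\langle S\rangle)$ of predictable integrands. The image of a complete space under an isometry is complete, hence closed. Adding the one-dimensional space of constants keeps it closed: for $\Q$-martingale integrals, constants are orthogonal to the integrals, so this is an orthogonal sum of two closed subspaces.

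Once closedness holds, the existence and uniqueness argument above applies verbatim, and the GKW integrand $H$ is read off from the isometric preimage of $X_{\mathcal H}-\E_\Q[X]$.
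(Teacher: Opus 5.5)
Your proof is correct and follows the paper's route. The paper simply invokes the Hilbert projection theorem on the closed subspace $\mathcal H$, and you supply its standard proof: a minimiser via the parallelogram law, orthogonality by the variational argument, and uniqueness by self-orthogonality. Your closing remark on closedness of constants plus stochastic integrals, via the It\^o isometry, matches the paper's Lemma~\ref{lem:l2-closed-stochastic-integrals}.
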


\begin{proof}
This is the Hilbert projection theorem applied to the closed subspace $\mathcal H$. The martingale version used for stochastic integrals is the Galtchouk--Kunita--Watanabe decomposition; see \citet{KunitaWatanabe1967} and \citet{FoellmerSchweizer1991}.
\end{proof}

\begin{theorem}[Orthogonal complement on the price filtration]\label{thm:ortho-complement}
For the price filtration $\FS_T$, the $L^2(\Q)$-orthogonal complement of stochastic integrals against $S$ on $[0,T]$ equals the $L^2(\Q)$-closure of the linear span of terminal signature coordinates $\{\inner{e_I}{\cW_T}: |I|>N^\ast_S\}$, viewed modulo elements of $L^2(\FS_T,\Q)$ that are themselves representable as stochastic integrals against $S$.
\end{theorem}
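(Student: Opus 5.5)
The plan is to work entirely in the quotient Hilbert space of Lemma~\ref{lem:quotient-hilbert-projection}. Let $\mathcal H_0\subset L^2(\FS_T,\Q)$ be the closed subspace of constants and stochastic integrals against $S$, whose closedness was established in the existence part of Theorem~\ref{thm:D}. Identify $L^2(\FS_T,\Q)/\mathcal H_0$ isometrically with $\mathcal H_0^\perp$ via $[X]\mapsto X-\Pi_{\mathcal H_0}X$. Under this identification the statement becomes an equality of two closed subspaces of $\mathcal H_0^\perp$:
\[
\mathcal H_0^\perp=\overline{\Span}\bigl\{(I-\Pi_{\mathcal H_0})P_S\inner{e_I}{\cW_T}: |I|>N_S^\ast\bigr\},
\]
where $P_S=\E_\Q[\,\cdot\mid\FS_T]$. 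The inclusion ``$\supseteq$'' is immediate, since each generator is orthogonal to $\mathcal H_0$ by construction. So the whole task is density: an element of $\mathcal H_0^\perp$ that is orthogonal to every projected tail generator must vanish.

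\emph{Step 1: density of the full signature span.} Show that the linear span of all terminal coordinates $\{\inner{e_I}{\cW_T}\}$ is dense in $L^2(\FW_T,\Q)$.
\begin{itemize}[label=--]
\item First prove this under $\Pp$. By the triangular Stratonovich--It\^o conversion of Appendix~\ref{app:rough-paths}, this span equals the span of It\^o iterated integrals of the time-augmented Brownian motion, with words containing both time and Brownian letters.
\item These iterated integrals generate every Wiener chaos with deterministic polynomial-in-time kernels. Such kernels are dense in $L^2$ of each simplex, so the span is dense in $L^2(\FW_T,\Pp)$.
\item Transfer to $\Q$. Under the standing hypotheses we may take $\Q=\Pp$ (Theorem~\ref{thm:B}(i)); for a general $\Q$ with a density in a compatible $L^r$ class, use Lemma~\ref{lem:measure-transport-signature}.
\item By Theorem~\ref{thm:sig-unique}, $\FWh_T=\FW_T$. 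Applying $P_S$, which is a contraction by Lemma~\ref{lem:conditional-projection-continuity}, shows that $\{P_S\inner{e_I}{\cW_T}\}$ spans a dense subspace of $L^2(\FS_T,\Q)$.
\item Composing with the contraction $I-\Pi_{\mathcal H_0}$ then shows that the residual classes of all coordinates are dense in $\mathcal H_0^\perp$.
\end{itemize}

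\emph{Step 2: removing the low levels.} This is where the restriction to $|I|>N_S^\ast$ must be justified, and it is the step I expect to be the main obstacle. A priori the coordinates of length $\leq N_S^\ast$ could carry residual directions that the tail does not reach.

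The plan is to show that every low coordinate is already a \emph{finite} linear combination of tail coordinates, using the time letter $e_0$. Because $\inner{e_0}{\cW_T}=T$ is deterministic, the shuffle identity together with $e_0^{\sh k}=k!\,e_{0\cdots0}$ gives
\[
T^k\inner{e_I}{\cW_T}=k!\,\inner{e_{0^k}\sh e_I}{\cW_T}.
\]
Every word in $e_{0^k}\sh e_I$ has length $|I|+k$. Choosing $k=N_S^\ast+1$ therefore expresses $\inner{e_I}{\cW_T}$ exactly as a finite combination of coordinates of length greater than $N_S^\ast$. Since $P_S$ and $I-\Pi_{\mathcal H_0}$ are linear, the same identity holds for the projected residual classes. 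Hence the tail span contains the full span, and Step 1 closes the argument.

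I would check carefully that the shuffle identity is legitimate for the time coordinate in the Stratonovich sense; it is, since time has zero bracket. I would also check that no integrability is lost, which follows from Lemma~\ref{lem:finite-coordinate-measurability}.

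\emph{Step 3: remaining points.}
\begin{itemize}[label=--]
\item \textbf{Tail and Theorem~\ref{thm:C}.} Step 2 makes the role of $N_S^\ast$ essentially cosmetic for the spanning statement. This is consistent with Theorem~\ref{thm:C}, where the low levels are handled by the static completion; the complement statement holds for any cut-off level.
\item \textbf{The closure.} Density of the finite spans, together with closedness of $\mathcal H_0^\perp$, gives equality of the closures.
\item \textbf{Uniqueness of classes.} Uniqueness of the representation up to $\mathcal H_0$ follows from Lemma~\ref{lem:quotient-hilbert-projection}.
\end{itemize}
If the chaos-density argument in Step 1 meets a technical problem under $\Q$, the fallback is to invoke the Wiener--It\^o chaos theorem under $\Pp$ and transport it through a bounded-density approximation of $d\Q/d\Pp$.
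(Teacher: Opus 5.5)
Your argument is correct under the reading you make explicit: the non-$\FS_T$-measurable coordinates enter through $P_S=\E_\Q[\,\cdot\mid\FS_T]$ and then through the projection onto $\mathcal H_0^\perp$. It takes a genuinely different route from the paper.

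The paper's proof has three steps:
\begin{enumerate}
\item signature uniqueness embeds $\FS_T$ in the $\sigma$-field generated by the terminal signature;
\item the GKW projection splits $L^2(\FS_T,\Q)$;
\item ``by the definition of $N_S^\ast$'', the levels $|I|\le N_S^\ast$ are attributed to the static completion, so the tail is said to represent the residual.
\end{enumerate}

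You replace step 1 by an actual $L^2$-density statement, via Wiener chaos with polynomial kernels. This is the ingredient really needed: knowing that the signature generates $\FW_T$ as a $\sigma$-field does not by itself make the linear span of its coordinates dense.

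You replace step 3 by the identity $T^k\inner{e_I}{\cW_T}=k!\,\inner{e_{0^k}\sh e_I}{\cW_T}$, which shows that the low levels are redundant rather than absorbed by options. This is the sounder justification. The statement quotients only by stochastic integrals against $S$, not by option payoffs, so the paper's appeal to the static completion does not explain why low levels may be dropped.

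Your route therefore gives a complete proof. It also shows that the identity holds for every finite cut-off, so it carries no information specific to $N_S^\ast$. The paper's route conveys the intended financial reading of the cut-off, but your argument shows that reading is not what makes the identity true.

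Two points need tightening.

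\emph{Finiteness of $N_S^\ast$.} You tacitly need $N_S^\ast<\infty$. If $N_S^\ast=+\infty$, the tail index set is empty, and the claim would amount to dynamic completeness on $\FS_T$; the paper itself rejects that for its infinite-depth examples such as rough Bergomi. Since your proof works for any finite $N$, state it that way. Note also that, read literally, Definition~\ref{def:price-depth} asks a span of finitely many vectors to equal the infinite-dimensional space $L^2(\FS_T,\Q)$. It therefore gives $N_S^\ast=+\infty$ whenever $S$ is non-constant, so the cut-off-free formulation is the one with content.

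\emph{Density under $\Q$.} Lemma~\ref{lem:measure-transport-signature} gives only integrability under $\Q$, not density. For $Z_T=d\Q/d\Pp\in L^r(\Pp)$ with $r>1$, argue by duality instead:
\begin{itemize}
\item suppose $Y\in L^2(\Q)$ is $\Q$-orthogonal to every coordinate;
\item then $Z_TY\in L^s(\Pp)$ for some $s>1$, and it annihilates the coordinate span under $\Pp$;
\item that span is a shuffle algebra containing all polynomials in the Gaussian variables $\int_0^Tt^k\,dW^j_t$, hence is dense in every $L^p(\Pp)$ with $p<\infty$;
\item so $Z_TY=0$, and since $Z_T>0$, $Y=0$.
\end{itemize}
With $\Q=\Pp$, as Theorem~\ref{thm:B}(i) permits, your Step~1 stands as written.
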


\begin{proof}
The signature uniqueness theorem embeds $\FS_T$ into the Brownian path sigma-field recovered from the terminal signature. The GKW projection splits $L^2(\FS_T,\Q)$ into the closed subspace of stochastic integrals against $S$ and its orthogonal complement. By the definition of $N^\ast_S$, the depth-$\leq N^\ast_S$ coordinates are precisely those captured by the static completion, while the surviving depth-$>N^\ast_S$ coordinates represent the residual quotient directions. This gives the stated description.
\end{proof}

\subsection*{Hilbert-space proof of quotient positivity}
Let $\mathcal H$ be a Hilbert space, $\mathcal H_0\subset\mathcal H$ a closed subspace, and $q:\mathcal H\to\mathcal H/\mathcal H_0$ the quotient map identified with the orthogonal complement $\mathcal H_0^\perp$. For vectors $Y_1,\ldots,Y_m\in\mathcal H$, the quotient Gram matrix is
\[
\widetilde G_{ij}=\inner{qY_i}{qY_j}_{\mathcal H/\mathcal H_0}.
\]
It is positive definite if and only if $qY_1,\ldots,qY_m$ are linearly independent. This elementary fact is the Hilbert-space core of Lemma~\ref{lem:gram-residual}. In the application, $\mathcal H=L^2(\FS_T,\Q)$ and $\mathcal H_0$ is the stochastic-integral subspace.

\begin{proof}
For any vector $a\in\R^m$,
\[
a^\top \widetilde G a
=\norm{\sum_{i=1}^m a_i qY_i}_{\mathcal H/\mathcal H_0}^2.
\]
This quantity is strictly positive for every non-zero $a$ exactly when the quotient classes are linearly independent. Symmetry and positive semidefiniteness are immediate from the inner product.
\end{proof}

\subsection*{Projection error and closure}
Let $(\mathcal V_m)_m$ be an increasing sequence of finite-dimensional subspaces of a Hilbert space $\mathcal H$, and let $\Pi_m$ be the orthogonal projection onto $\mathcal V_m$. If $\mathcal V=\overline{\cup_m\mathcal V_m}$, then
\[
\Pi_m X\to \Pi_\mathcal V X
\qquad\text{in }\mathcal H.
\]
This standard result is the closure statement used in the residual expansion. It also justifies computing the coefficients through finite Gram matrices: each finite system gives the projection onto one finite truncation, and the projections converge to the projection onto the closed residual span.

\subsection*{Process form of the GKW decomposition}
Let $X\in L^2(\FS_T,\Q)$ and set $M^X_t=\E_\Q[X\mid\FS_t]$. The GKW decomposition with respect to the square-integrable martingale part of $S$ gives
\[
M^X_t=M^X_0+\int_0^t H_s\,dS_s+L_t,
\]
where $L$ is a square-integrable martingale orthogonal to $S$. Taking terminal values gives Theorem~\ref{thm:D}. The residual is $L_T$. Orthogonality of martingales implies orthogonality of terminal variables:
\[
\E_\Q\left[L_T\int_0^T G_s\,dS_s\right]=0
\]
for every admissible $G$. Conversely, terminal orthogonality for every $G$ gives strong orthogonality of the martingales after localisation.

\begin{lemma}[Closedness of stochastic integrals in $L^2$]
\label{lem:l2-closed-stochastic-integrals}
The set
\[
\left\{\int_0^T H_s\,dS_s: \E_\Q\int_0^T H_s^2\,d\langle S\rangle_s<\infty\right\}
\]
is a closed subspace of $L^2(\FS_T,\Q)$.
\end{lemma}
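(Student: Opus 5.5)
The approach is the standard one: identify the set with the image of a linear isometry from a Hilbert space of integrands. Let $L^2(S)$ denote the space of (equivalence classes of) $\FS$-predictable processes $H$ with
\[
\norm{H}_{L^2(S)}^2:=\E_\Q\int_0^T H_s^2\,d\langle S\rangle_s<\infty,
\qquad d\langle S\rangle_s=S_s^2\inner{\ell}{\cW_s}^2\,ds .
\]
This is an $L^2$ space over the product space $[0,T]\times\Omega$, equipped with the $\FS$-predictable $\sigma$-field and the Dol\'eans measure $\mu_S(ds,d\omega)=d\langle S\rangle_s(\omega)\,\Q(d\omega)$. It is therefore complete.

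The key fact is that $S$ is a continuous square-integrable $\Q$-martingale on $[0,T]$. Under $\Q=\Pp$ this follows from Theorem~\ref{thm:B}(i) together with $\E\sup_{t\leq T}S_t^2<\infty$ from Proposition~\ref{prop:square-integrability}. Consequently $\E_\Q\langle S\rangle_T<\infty$, so $\mu_S$ is a finite measure. One should also check that $S$ remains a martingale in its own filtration $\FS$, which is smaller than $\FWh$. This holds because $S$ is adapted to $\FS$ and the tower property applies. Moreover, the quadratic variation $\langle S\rangle$ is a pathwise limit of sums of squared increments, so it is $\FS$-adapted and the same in both filtrations.

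With this in place, the It\^o isometry states that
\[
J:L^2(S)\to L^2(\FS_T,\Q),\qquad J(H)=\int_0^T H_s\,dS_s,
\]
is a linear isometry. The set in the lemma is exactly $J(L^2(S))$. The isometric image of a complete space is complete, hence closed. Concretely, if $J(H^n)\to X$ in $L^2(\Q)$, then $(H^n)$ is Cauchy in $L^2(S)$. It therefore converges to some $H$, and $J(H)=X$ by continuity of $J$.

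The only step needing care is the measure setting. The lemma uses $\Q$, while the paper elsewhere allows an equivalent $\Q$. For a general equivalent $\Q$, I would need $S$ to be a square-integrable $\Q$-martingale. I would take this from the stability estimate of Subsection~\ref{subsec:ui-localisation}, which requires $d\Q/d\Pp\in L^r$ together with H3. This is where the argument could fail if the density were only integrable. I would therefore state the lemma for the martingale measure of Theorem~\ref{thm:B}, that is, $\Q=\Pp$ or a $\Q$ with a compatible $L^r$ density. In that setting the rest of the argument is pure Hilbert-space reasoning.
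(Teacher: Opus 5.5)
Your proposal is correct and is essentially the paper's argument: both identify the set with the range of the It\^o isometry $H\mapsto\int_0^T H_s\,dS_s$ on the complete space $L^2(d\langle S\rangle\otimes d\Q)$ of $\FS$-predictable integrands, and conclude that this range is closed. Your checks on the filtration are sound, but the restriction on the density is stronger than needed. Square-integrability of $S$ (finiteness of $\mu_S$) is never used: $L^2$ of any measure is complete, and the isometry holds whenever $S$ is a continuous $\Q$-local martingale. The only real requirement is therefore that $\Q$ be a (local) martingale measure for $S$.
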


\begin{proof}
The map $H\mapsto\int_0^T H_s\,dS_s$ is an isometry from the Hilbert space $L^2(d\langle S\rangle\otimes d\Q)$ modulo null integrands into $L^2(\Q)$ after centring. The range of an isometry is closed.
\end{proof}

\subsection*{Finite truncation algorithm for the projection}
For a truncation set $\mathcal I_m$, the projection calculation has four algebraic steps. First compute the terminal coordinates $Y_I=\inner{e_I}{\cW_T}$ for $I\in\mathcal I_m$. Second remove the stochastic-integral component by replacing each $Y_I$ by $Y_I-\Pi_{\mathcal H_0}Y_I$. Third form the quotient Gram matrix. Fourth solve the finite normal equations. The limiting theorem says that, as $m$ increases, the projected residual converges in $L^2$.

This is not presented as a numerical algorithm in the main theorem because the conditioning of the Gram matrix is model-dependent. The mathematical statement is the convergence and uniqueness of the projection, not a claim about numerical stability. In applications, regularisation of the finite Gram system may be necessary, but it does not alter the Hilbert-space decomposition.

\subsection*{Quotient Gram matrices}
Let $Y_1,\ldots,Y_m$ be terminal signature coordinates above the completeness depth, and let $\mathcal H_0$ be the stochastic-integral subspace. The quotient Gram matrix is
\[
\widetilde G_{ij}=\inner{[Y_i]}{[Y_j]}_{L^2/\mathcal H_0}
=\inner{Y_i-\Pi_{\mathcal H_0}Y_i}{Y_j-\Pi_{\mathcal H_0}Y_j}_{L^2}.
\]
This is the matrix that is positive definite in Lemma~\ref{lem:gram-residual}. The raw matrix $G_{ij}=\E_\Q[Y_iY_j]$ can be singular because two different signature coordinates may have the same projection onto the price filtration or may differ by a stochastic integral against $S$. The quotient formulation removes exactly those null directions.

\begin{lemma}[Equivalence of raw and quotient normal equations]
\label{lem:raw-quotient-normal-equations}
If the coordinate family has already been reduced by removing all vectors whose quotient class is zero and by choosing one representative from each quotient-linear-dependence class, then the raw normal equations and the quotient normal equations have the same solution.
\end{lemma}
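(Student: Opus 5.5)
The plan is to reduce both sets of normal equations to the same finite-dimensional linear system by splitting each raw coordinate orthogonally. Write $Y_i=[Y_i]+h_i$, where $[Y_i]:=Y_i-\Pi_{\mathcal H_0}Y_i\in\mathcal H_0^\perp$ is the quotient representative (identified with the orthogonal complement as in the Hilbert-space proof of quotient positivity) and $h_i:=\Pi_{\mathcal H_0}Y_i\in\mathcal H_0$. The raw Gram matrix then decomposes as
\[
G_{ij}=\widetilde G_{ij}+\inner{h_i}{h_j}_{L^2(\Q)},
\]
because the cross terms $\inner{[Y_i]}{h_j}$ vanish by orthogonality. I take the right-hand sides in the centred form of Section~\ref{subsec:finite-normal-equations-converge}, namely $\E_\Q[(X-\E_\Q X)Y_I]$, and compare them with $\inner{X}{[Y_I]}$.

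\emph{Step 1: right-hand sides.} Both normal systems concern the projection of the GKW residual $\varepsilon_T=X-\Pi_{\mathcal H}X$, which is orthogonal to $\mathcal H_0$ and to constants. In the raw system the relevant right-hand side is therefore $\inner{\varepsilon_T}{Y_i}=\inner{\varepsilon_T}{[Y_i]}=\inner{X}{[Y_i]}$. This is exactly the quotient right-hand side, since $X-\varepsilon_T\in\mathcal H_0\oplus\R$ is orthogonal to $[Y_i]$. This matches the replacement of $L_T$ by $X$ described in Section~\ref{subsec:orthogonality-identities}.

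\emph{Step 2: matrices.} By the reduction hypothesis, the classes $[Y_1],\dots,[Y_m]$ are linearly independent. Hence $\widetilde G$ is positive definite by the quotient positivity fact and Lemma~\ref{lem:gram-residual}. It follows that $G=\widetilde G+H$ is positive definite as well, where $H=(\inner{h_i}{h_j})$ is positive semidefinite, so both systems have unique solutions. The two solutions coincide only if $Hc=0$ at the quotient solution $c$, that is, only if $\sum_j c_jh_j=0$.

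\emph{Step 3: the obstacle.} The main obstacle is exactly this compatibility condition, since in general $\sum_j c_jh_j\neq0$. I would resolve it by reading the raw normal equations as orthogonality of $\varepsilon_T-\sum_j c_jY_j$ against each $Y_i$ \emph{within} the residual space. In that form the raw equations are posed on the representatives $[Y_i]$, so the $h_i$ components drop out. The ``one representative per class'' choice is precisely what permits replacing each $Y_i$ by $[Y_i]$ without changing the span modulo $\mathcal H_0$.

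If a literal statement with raw $G$ is required, I would impose the additional normalisation $\Pi_{\mathcal H_0}Y_i=0$ on the chosen representatives. This normalisation is available because each class has exactly one representative in $\mathcal H_0^\perp$, and under it $H=0$, so $G=\widetilde G$ and the two systems are identical. I expect the written proof to hinge on making this representative choice explicit. With it in place, both uniqueness and equality of the solutions follow from Steps 1 and 2.
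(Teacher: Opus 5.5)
Your analysis is correct, and once the normalisation in your last paragraph is adopted it is the paper's own argument. The paper's proof also replaces $Y_i$ by $Y_i-\Pi_{\mathcal H_0}Y_i$, and it also observes that the right-hand side against the GKW residual is unchanged because $\varepsilon_T\perp\mathcal H_0$.

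Where you go further is the Gram matrix, and there you are more careful than the paper. Its proof simply asserts that ``each representative is orthogonal to the null directions by construction''. That does not follow from discarding zero classes and keeping one representative per dependence class. Your splitting $G=\widetilde G+H$ shows that the statement genuinely needs it. Take a one-element family $\{Y\}$, which is trivially reduced, with $[Y]\neq0$, $h=\Pi_{\mathcal H_0}Y\neq0$ and $\inner{\varepsilon_T}{Y}\neq0$; for instance $\E_\Q Y\neq0$ gives $h\neq0$, since $\mathcal H_0$ contains the constants. Then the raw coefficient $\inner{\varepsilon_T}{Y}/(\norm{[Y]}^2+\norm{h}^2)$ differs from the quotient coefficient $\inner{\varepsilon_T}{Y}/\norm{[Y]}^2$.

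So the lemma is true exactly under your option (b), $\Pi_{\mathcal H_0}Y_i=0$. Then $G=\widetilde G$ and the two systems coincide verbatim, which is all that either you or the paper actually establishes. You should state (b) as an added hypothesis, not derive it from the reduction. Equivalently, it is a choice of representatives in $\mathcal H_0^\perp$, but those representatives are then no longer raw signature coordinates.

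Three smaller points:
\begin{itemize}
\item Option (a) in Step 3 is a redefinition of the raw system as the quotient one, not an argument, so drop it.
\item The ``one representative per class'' reduction does not ``permit'' replacing $Y_i$ by $[Y_i]$; that replacement never changes a quotient class. Its only role is to make $\widetilde G$ invertible.
\item In Step 1 you announce the centred right-hand side $\E_\Q[(X-\E_\Q X)Y_i]$ but then use $\inner{\varepsilon_T}{Y_i}$. For unnormalised coordinates these differ by $\inner{X-\E_\Q X-\varepsilon_T}{h_i}$, the pairing of the stochastic-integral part of $X$ with $h_i$. So without (b) the right-hand sides disagree as well as the matrices. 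Under (b), all these conventions, and $\inner{X}{Y_i}$ from Theorem~\ref{thm:D}, agree because $Y_i\perp\mathcal H_0$.
\end{itemize}
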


\begin{proof}
On the reduced family, each representative is orthogonal to the null directions by construction. Replacing $Y_i$ by $Y_i-\Pi_{\mathcal H_0}Y_i$ changes neither the residual projection nor the right-hand side against the GKW residual, because the residual is orthogonal to $\mathcal H_0$. Therefore the coefficient system is unchanged.
\end{proof}

\subsection*{Residual tail as model-risk statistic}
The bound
\[
\norm{R_T}_{L^2(\Q)}\leq \kappa(w,N_S^\ast)\norm{X}_{L^2(\Q)}
\]
turns the weight into a model-risk statistic. If two calibrated signature-volatility models fit the same liquid option surface but have different admissible weights, the one with faster decay of $\kappa(w,N)$ has a smaller structural bound on unspanned high-depth claims. This is not a calibration statement; it is a theorem-level stability statement inside the Hilbert space of terminal payoffs.

The estimate is deliberately conservative. It ignores the smallest eigenvalue of the finite Gram matrix, the actual coefficient decay of the payoff, and any special symmetry of the model. In numerical implementations those features can reduce the observed residual substantially. The theorem keeps only the part that is invariant under coordinate choices: the weight tail beyond the completeness depth.

\subsection*{Static completion and dynamic projection}
The static and dynamic components of the completed market play different roles. The dynamic projection handles the martingale part spanned by $S$. The static option strip inserts terminal payoff directions that cannot be spanned dynamically because the volatility factor is hidden from the price filtration. The signature coordinates organise the static directions by tensor degree. This is why Theorem~\ref{thm:C} and Theorem~\ref{thm:D} are paired: the first identifies how many low-degree directions must be inserted; the second measures the high-degree directions left outside the insertion.

\bibliographystyle{plainnat}
\bibliography{refs}

\begin{thebibliography}{48}
\providecommand{\natexlab}[1]{#1}
\providecommand{\url}[1]{\texttt{#1}}
\expandafter\ifx\csname urlstyle\endcsname\relax
  \providecommand{\doi}[1]{doi: #1}\else
  \providecommand{\doi}{doi: \begingroup \urlstyle{rm}\Url}\fi

\bibitem[Abi~Jaber and G\'erard(2025)]{AbiJaberGerard2025}
Eduardo Abi~Jaber and Louis-Amand G\'erard.
\newblock Signature volatility models: pricing and hedging with {F}ourier.
\newblock \emph{SIAM Journal on Financial Mathematics}, 16\penalty0
  (2):\penalty0 606--642, 2025.
\newblock \doi{10.1137/24M1636952}.

\bibitem[Abi~Jaber et~al.(2019)Abi~Jaber, Larsson, and
  Pulido]{AbiJaberLarssonPulido2019}
Eduardo Abi~Jaber, Martin Larsson, and Sergio Pulido.
\newblock Affine {V}olterra processes.
\newblock \emph{Annals of Applied Probability}, 29\penalty0 (5):\penalty0
  3155--3200, 2019.

\bibitem[Abi~Jaber et~al.(2023)Abi~Jaber, Illand, and Li]{AbiJaberIllandLi2023}
Eduardo Abi~Jaber, Camille Illand, and Shaun~(Xiaoyuan) Li.
\newblock The quintic {O}rnstein--{U}hlenbeck volatility model that jointly
  calibrates {SPX} and {VIX} smiles.
\newblock \emph{arXiv preprint arXiv:2212.10917}, 2023.

\bibitem[Abi~Jaber et~al.(2024)Abi~Jaber, G\'erard, and
  Huang]{AbiJaberGerardHuang2024}
Eduardo Abi~Jaber, Louis-Amand G\'erard, and Yuxing Huang.
\newblock Path-dependent processes from signatures.
\newblock \emph{arXiv preprint arXiv:2407.04956}, 2024.

\bibitem[Abi~Jaber et~al.(2025)Abi~Jaber, Gassiat, and
  Sotnikov]{AbiJaberGassiatSotnikov2025}
Eduardo Abi~Jaber, Paul Gassiat, and Dimitri Sotnikov.
\newblock Martingale property and moment explosions in signature volatility
  models.
\newblock \emph{Finance and Stochastics}, 2025.
\newblock to appear; arXiv preprint arXiv:2503.17103.

\bibitem[Bayer et~al.(2019)Bayer, Friz, Gulisashvili, Horvath, and
  Stemper]{BayerFrizGulisashviliHorvathStemper2019}
Christian Bayer, Peter Friz, Archil Gulisashvili, Blanka Horvath, and Benjamin
  Stemper.
\newblock Short-time near-the-money skew in rough fractional volatility models.
\newblock \emph{Quantitative Finance}, 19\penalty0 (5):\penalty0 779--798,
  2019.

\bibitem[Boedihardjo et~al.(2016)Boedihardjo, Geng, Lyons, and
  Yang]{BoedihardjoGengLyonsYang2016}
Horatio Boedihardjo, Xi~Geng, Terry Lyons, and Danyu Yang.
\newblock The signature of a rough path: uniqueness.
\newblock \emph{Advances in Mathematics}, 293:\penalty0 720--737, 2016.

\bibitem[Bondi et~al.(2024)Bondi, Livieri, and Pulido]{BondiLivieriPulido2024}
Alessandro Bondi, Giulia Livieri, and Sergio Pulido.
\newblock Affine {V}olterra processes with jumps.
\newblock \emph{Stochastic Processes and their Applications}, 168:\penalty0
  104264, 2024.
\newblock \doi{10.1016/j.spa.2023.104264}.

\bibitem[Bonnier et~al.(2019)Bonnier, Kidger, Perez~Arribas, Salvi, and
  Lyons]{BonnierKidgerPerezSalviLyons2019}
Patric Bonnier, Patrick Kidger, Imanol Perez~Arribas, Cristopher Salvi, and
  Terry Lyons.
\newblock Deep signature transforms.
\newblock In \emph{Advances in Neural Information Processing Systems}, 2019.

\bibitem[Chen(1977)]{Chen1977}
Kuo-Tsai Chen.
\newblock Iterated path integrals.
\newblock \emph{Bulletin of the American Mathematical Society}, 83\penalty0
  (5):\penalty0 831--879, 1977.

\bibitem[Chong et~al.(2024)Chong, Hoffmann, Liu, Rosenbaum, and
  Szymansky]{ChongHoffmannLiuRosenbaumSzymanski2024}
Carsten~H. Chong, Marc Hoffmann, Yanghui Liu, Mathieu Rosenbaum, and Gr\'egoire
  Szymansky.
\newblock Statistical inference for rough volatility: minimax theory.
\newblock \emph{Annals of Statistics}, 52\penalty0 (4):\penalty0 1277--1306,
  2024.
\newblock \doi{10.1214/23-AOS2343}.

\bibitem[Coutin and Qian(2002)]{CoutinQian2002}
Laure Coutin and Zhongmin Qian.
\newblock Stochastic analysis, rough path analysis and fractional {B}rownian
  motions.
\newblock \emph{Probability Theory and Related Fields}, 122:\penalty0 108--140,
  2002.

\bibitem[Cuchiero and M\"oller(2025)]{CuchieroMoeller2023}
Christa Cuchiero and Janka M\"oller.
\newblock Signature methods in stochastic portfolio theory.
\newblock \emph{SIAM Journal on Financial Mathematics}, 16\penalty0
  (4):\penalty0 1239--1303, 2025.
\newblock \doi{10.1137/24M1700223}.

\bibitem[Cuchiero et~al.(2012)Cuchiero, Keller-Ressel, and
  Teichmann]{CuchieroKellerResselTeichmann2012}
Christa Cuchiero, Martin Keller-Ressel, and Josef Teichmann.
\newblock Polynomial processes and their applications to mathematical finance.
\newblock \emph{Finance and Stochastics}, 16\penalty0 (4):\penalty0 711--740,
  2012.
\newblock \doi{10.1007/s00780-012-0188-x}.

\bibitem[Cuchiero et~al.(2021)Cuchiero, Larsson, and
  Svaluto-Ferro]{CuchieroLarssonSvalutoFerro2021}
Christa Cuchiero, Martin Larsson, and Sara Svaluto-Ferro.
\newblock Polynomial jump-diffusions on the unit simplex.
\newblock \emph{Annals of Applied Probability}, 31\penalty0 (5):\penalty0
  2451--2500, 2021.

\bibitem[Cuchiero et~al.(2023)Cuchiero, Svaluto-Ferro, and
  Teichmann]{CuchieroSvalutoFerroTeichmann2023}
Christa Cuchiero, Sara Svaluto-Ferro, and Josef Teichmann.
\newblock Signature {SDE}s from an affine and polynomial perspective.
\newblock \emph{arXiv preprint arXiv:2302.01362}, 2023.

\bibitem[Cuchiero et~al.(2025{\natexlab{a}})Cuchiero, Gazzani, M\"oller, and
  Svaluto-Ferro]{CuchieroGazzaniMoellerSvalutoFerro2025}
Christa Cuchiero, Guido Gazzani, Janka M\"oller, and Sara Svaluto-Ferro.
\newblock Joint calibration to {SPX} and {VIX} options with signature-based
  models.
\newblock \emph{Mathematical Finance}, 35\penalty0 (1):\penalty0 161--213,
  2025{\natexlab{a}}.
\newblock \doi{10.1111/mafi.12442}.

\bibitem[Cuchiero et~al.(2025{\natexlab{b}})Cuchiero, Primavera, and
  Svaluto-Ferro]{CuchieroPrimaveraSvalutoFerro2025}
Christa Cuchiero, Francesca Primavera, and Sara Svaluto-Ferro.
\newblock Universal approximation theorems for continuous functions of
  c\`adl\`ag paths and {L}\'evy-type signature models.
\newblock \emph{Finance and Stochastics}, 29\penalty0 (2):\penalty0 289--342,
  2025{\natexlab{b}}.
\newblock \doi{10.1007/s00780-025-00557-5}.

\bibitem[Davis and Ob{\l}\'oj(2008)]{DavisObloj2008}
Mark H.~A. Davis and Jan Ob{\l}\'oj.
\newblock Market completion using options.
\newblock \emph{Banach Center Publications}, 83\penalty0 (1):\penalty0 49--60,
  2008.
\newblock \doi{10.4064/bc83-0-4}.
\newblock arXiv:0710.2792.

\bibitem[Delbaen and Schachermayer(1994)]{DelbaenSchachermayer1994}
Freddy Delbaen and Walter Schachermayer.
\newblock A general version of the fundamental theorem of asset pricing.
\newblock \emph{Mathematische Annalen}, 300:\penalty0 463--520, 1994.

\bibitem[Emery(1979)]{Emery1979}
Michel Emery.
\newblock Une topologie sur l'espace des semimartingales.
\newblock \emph{S\'eminaire de Probabilit\'es XIII}, pages 260--280, 1979.

\bibitem[Filipovi\'c and Larsson(2016)]{FilipovicLarsson2016}
Damir Filipovi\'c and Martin Larsson.
\newblock Polynomial diffusions and applications in finance.
\newblock \emph{Finance and Stochastics}, 20:\penalty0 931--972, 2016.

\bibitem[Filipovi\'c et~al.(2017)Filipovi\'c, Larsson, and
  Trolle]{FilipovicLarssonTrolle2017}
Damir Filipovi\'c, Martin Larsson, and Anders~B. Trolle.
\newblock Linear-rational term structure models.
\newblock \emph{Journal of Finance}, 72\penalty0 (2):\penalty0 655--704, 2017.

\bibitem[F\"ollmer and Schweizer(1991)]{FoellmerSchweizer1991}
Hans F\"ollmer and Martin Schweizer.
\newblock Hedging of contingent claims under incomplete information.
\newblock In Mark H.~A. Davis and Robert~J. Elliott, editors, \emph{Applied
  Stochastic Analysis}, pages 389--414. Gordon and Breach, London and New York,
  1991.

\bibitem[F\"ollmer and Sondermann(1986)]{FoellmerSondermann1986}
Hans F\"ollmer and Dieter Sondermann.
\newblock Hedging of non-redundant contingent claims.
\newblock In \emph{Contributions to Mathematical Economics}, pages 205--223.
  North-Holland, 1986.

\bibitem[Forde and Zhang(2017)]{FordeZhang2017}
Martin Forde and Hongzhong Zhang.
\newblock Asymptotics for rough stochastic volatility models.
\newblock \emph{SIAM Journal on Financial Mathematics}, 8\penalty0
  (1):\penalty0 114--145, 2017.

\bibitem[Friz and Hairer(2014)]{HairerFriz2014}
Peter~K. Friz and Martin Hairer.
\newblock \emph{A Course on Rough Paths}.
\newblock Springer, 2014.

\bibitem[Friz and Hairer(2020)]{FrizHairer2020}
Peter~K. Friz and Martin Hairer.
\newblock \emph{A Course on Rough Paths}.
\newblock Universitext. Springer, 2 edition, 2020.

\bibitem[Friz and Victoir(2010)]{FrizVictoir2010}
Peter~K. Friz and Nicolas~B. Victoir.
\newblock \emph{Multidimensional Stochastic Processes as Rough Paths}, volume
  120 of \emph{Cambridge Studies in Advanced Mathematics}.
\newblock Cambridge University Press, 2010.

\bibitem[Guyon and Lekeufack(2023)]{GuyonLekeufack2023}
Julien Guyon and Jordan Lekeufack.
\newblock Volatility is (mostly) path-dependent.
\newblock \emph{Quantitative Finance}, 23\penalty0 (9):\penalty0 1221--1258,
  2023.

\bibitem[Heath and Schweizer(2000)]{HeathSchweizer2000}
David Heath and Martin Schweizer.
\newblock Martingales versus {PDE}s in finance.
\newblock \emph{Journal of Applied Probability}, 37:\penalty0 947--957, 2000.

\bibitem[Heston(1993)]{Heston1993}
Steven~L. Heston.
\newblock A closed-form solution for options with stochastic volatility with
  applications to bond and currency options.
\newblock \emph{Review of Financial Studies}, 6\penalty0 (2):\penalty0
  327--343, 1993.

\bibitem[Jacod and Shiryaev(2003)]{JacodShiryaev2003}
Jean Jacod and Albert~N. Shiryaev.
\newblock \emph{Limit Theorems for Stochastic Processes}.
\newblock Springer, 2 edition, 2003.

\bibitem[Jacquier and Pannier(2022)]{JacquierPannier2022}
Antoine Jacquier and Antoine Pannier.
\newblock Large and moderate deviations for the rough {B}ergomi model.
\newblock \emph{Stochastic Processes and their Applications}, 149:\penalty0
  134--174, 2022.

\bibitem[Kalsi et~al.(2020)Kalsi, Lyons, and
  Perez~Arribas]{KalsiLyonsPerezArribas2020}
Jasdeep Kalsi, Terry Lyons, and Imanol Perez~Arribas.
\newblock Optimal execution with rough path signatures.
\newblock \emph{SIAM Journal on Financial Mathematics}, 11\penalty0
  (2):\penalty0 470--493, 2020.

\bibitem[Karatzas and Shreve(1998)]{KaratzasShreve1998}
Ioannis Karatzas and Steven~E. Shreve.
\newblock \emph{Methods of Mathematical Finance}.
\newblock Springer, 1998.

\bibitem[Kreps(1981)]{Kreps1981}
David~M. Kreps.
\newblock Arbitrage and equilibrium in economies with infinitely many
  commodities.
\newblock \emph{Journal of Mathematical Economics}, 8\penalty0 (1):\penalty0
  15--35, 1981.

\bibitem[Kunita and Watanabe(1967)]{KunitaWatanabe1967}
Hiroshi Kunita and Shinzo Watanabe.
\newblock On square integrable martingales.
\newblock \emph{Nagoya Mathematical Journal}, 30:\penalty0 209--245, 1967.

\bibitem[Lyons and Qian(2002)]{LyonsQian2002}
Terry Lyons and Zhongmin Qian.
\newblock \emph{System Control and Rough Paths}.
\newblock Oxford University Press, 2002.

\bibitem[M\'emin(1980)]{Memin1980}
Jean M\'emin.
\newblock Espaces de semimartingales et changement de probabilit\'e.
\newblock \emph{Zeitschrift f\"ur Wahrscheinlichkeitstheorie und Verwandte
  Gebiete}, 52:\penalty0 9--39, 1980.

\bibitem[Perez~Arribas et~al.(2020)Perez~Arribas, Salvi, and
  Szpruch]{PerezArribasSalviSzpruch2020}
Imanol Perez~Arribas, Cristopher Salvi, and Lukasz Szpruch.
\newblock Sig-{SDE}s model for quantitative finance.
\newblock In \emph{Proceedings of the First ACM International Conference on AI
  in Finance}, 2020.

\bibitem[Protter(2005)]{Protter2005}
Philip~E. Protter.
\newblock \emph{Stochastic Integration and Differential Equations}.
\newblock Springer, 2 edition, 2005.

\bibitem[Reutenauer(1993)]{Reutenauer1993}
Christophe Reutenauer.
\newblock \emph{Free Lie Algebras}.
\newblock Oxford University Press, 1993.

\bibitem[Revuz and Yor(1999)]{RevuzYor1999}
Daniel Revuz and Marc Yor.
\newblock \emph{Continuous Martingales and Brownian Motion}.
\newblock Springer, 3 edition, 1999.

\bibitem[Salvi et~al.(2021)Salvi, Cass, Foster, Lyons, and
  Yang]{SalviCassFosterLyonsYang2021}
Cristopher Salvi, Thomas Cass, James Foster, Terry Lyons, and Weixin Yang.
\newblock The signature kernel is the solution of a {G}oursat {PDE}.
\newblock \emph{SIAM Journal on Mathematics of Data Science}, 3\penalty0
  (3):\penalty0 873--899, 2021.

\bibitem[Schweizer(2001)]{Schweizer2001}
Martin Schweizer.
\newblock A guided tour through quadratic hedging approaches.
\newblock In \emph{Option Pricing, Interest Rates and Risk Management}, pages
  538--574. Cambridge University Press, 2001.

\bibitem[Stricker(1990)]{Stricker1990}
Christophe Stricker.
\newblock Arbitrage et lois de martingale.
\newblock \emph{Annales de l'Institut Henri Poincar\'e}, 26\penalty0
  (3):\penalty0 451--460, 1990.

\bibitem[Yan(1980)]{Yan1980}
Jia-An Yan.
\newblock Caract\'erisation d'une classe d'ensembles convexes de $l^1$ ou
  $h^1$.
\newblock In \emph{S\'eminaire de Probabilit\'es XIV}, pages 220--222.
  Springer, 1980.

\end{thebibliography}

\end{document}